\documentclass[sigconf,screen,nonacm]{acmart}
\AtBeginDocument{%
  }

\usepackage{mathtools}
\usepackage{stmaryrd}
\mathtoolsset{showonlyrefs=true}
\usepackage{tikz,pgfplots}
\usetikzlibrary{cd}
\tikzset{%
	symbol/.style={%
		draw=none,
		every to/.append style={%
			edge node={node [sloped, allow upside down, auto=false]{$#1$}}}
	}
}
\usepackage{mathpartir}
\usepackage{ifthen}
\usepackage[bibliography=common]{apxproof}

\usepackage{macros.basic}
\renewcommand{\category}[1]{\mathbb{#1}}
\usepackage{macros.language}
\usepackage{macros.notation}
\usepackage{macros.aux}

\AtEndPreamble{%
\theoremstyle{acmdefinition}

\newtheorem{remark}[theorem]{Remark}}

\begin{document}

\title{On Complete Categorical Semantics for Effect Handlers}

\author{Satoshi Kura}
\email{satoshikura@acm.org}
\orcid{0000-0002-3954-8255}
\affiliation{%
  \institution{Waseda University}
  \city{Tokyo}
  \country{Japan}
}

\begin{abstract}
	Soundness and completeness with respect to equational theories for programming languages are fundamental properties in the study of categorical semantics.
	However, completeness results have not been established for programming languages with algebraic effects and handlers, which raises a question of whether the commonly used models in the literature, i.e., free model monads generated from algebraic theories, are the only valid semantic models for effect handlers.
	In this paper, we show that this is not the case.
	We identify the precise characterizations of categorical models of effect handlers that allow us to establish soundness and completeness results with respect to a certain equational theory for effect handling constructs.
	Notably, this allows us to capture not only free monad models but also the CPS semantics for effect handlers as models of the calculus.
\end{abstract}

\begin{CCSXML}
<ccs2012>
   <concept>
       <concept_id>10003752.10010124.10010131.10010137</concept_id>
       <concept_desc>Theory of computation~Categorical semantics</concept_desc>
       <concept_significance>500</concept_significance>
       </concept>
 </ccs2012>
\end{CCSXML}

\ccsdesc[500]{Theory of computation~Categorical semantics}

\keywords{denotational semantics, effect handlers, completeness, continuation-passing style}

\maketitle

\section{Introduction}
In categorical semantics of programming languages and lambda calculi, soundness and completeness are criteria of whether a given definition of models precisely captures the intended class of models for a calculus.
Soundness ensures that syntactically derivable equations $M = N$ between two terms of the calculus are semantically valid, meaning that the definition of models imposes necessary conditions on models.
This property is usually regarded as a minimal requirement for a valid definition of models.
Conversely, completeness ensures that all semantically valid equations between terms are syntactically derivable, meaning that the definition of models does not impose too strong restrictions on models.
Although completeness results are not necessarily as common as soundness results, they have been established for various calculi, such as the simply typed lambda calculus \cite{Lambek1986} and the computational lambda calculus \cite{MoggiLICS1989}, among many others.

This fundamental question of soundness and completeness has received little attention for programming languages with algebraic effects~\cite{PlotkinFoSSaCS2001,PlotkinApplCategStruct2003} and handlers~\cite{PlotkinESOP2009,PlotkinLMCS2013}.
Based on the fact that many computational effects can be represented by algebraic theories consisting of operations and equational axioms, handlers for algebraic effects provide a modular mechanism to implement those operations.
The de facto standard categorical semantics for effect handler calculi is given by free model monads generated from algebraic theories, or equivalently from Lawvere theories \cite{PlotkinESOP2009,PlotkinLMCS2013,AhmanPOPL2018}.
Free monads generated solely from operations are also used as models~\cite{BauerLMCS2014,ForsterJFunctProg2019,KammarICFP2013}, since enforcing such equational axioms on actual implementations of effect handlers is not very common.
Although these models admit soundness results with respect to a certain equational theory for effect handler calculi \cite[Section~7]{PlotkinESOP2009} \cite[Section~6]{BauerLMCS2014}, completeness results have not been established to the best of our knowledge.

The lack of completeness results implies that free model monads (or, in the absence of equational axioms, free monads) may not be the only valid semantic models for effect handlers.
Indeed, there seems to be an important example of models that are not free model monads, which is the CPS semantics for effect handlers \cite{HillerstromFSCD2017}.
For computational lambda calculus, it is known that the CPS transformation corresponds to the interpretation by the continuation monad \cite{FuhrmannInformationandComputation2004}.
By analogy, it is expected that the CPS semantics for effect handlers is modelled by a certain continuation-like monad.
To capture such models, we need to consider a more general notion of models for effect handlers.
Another motivation for generalizing the notion of models comes from the application of fibrational logical relations \cite{Hermida1993} to effect handlers.
The idea of fibrational logical relations is to construct logical relations as liftings of models in the base category $\category{B}$ to the total category $\category{E}$ along a fibration $p : \category{E} \to \category{B}$.
Several constructions of such liftings have been studied~\cite{KatsumataCSL2005,KatsumataPOPL2014,KammarMFPS2018} for lambda calculi without effect handlers.
To apply these techniques to effect handlers, imposing unnecessary restrictions on models may become an obstacle.

In this paper, we identify the precise characterizations of categorical models of effect handlers such that we can establish soundness and completeness results.
We consider two variants of effect handler calculi as targets of our study.
One is a basic calculus for deep effect handlers based on \cite{BauerLMCS2014}, which does not incorporate equational axioms from algebraic theories.
The other is an effect handler calculus based on \cite{LuksicJFunctProg2020}, which extends the former calculus by incorporating equational axioms.

We characterize models of these effect handler calculi by identifying the necessary and sufficient structure to interpret the effect handling construct $\handlewithto{M}{H}{x}{N}$.
For the effect handler calculus without equational axioms, the typing rule for this construct states that if (i) $M$ is a computation of type $A ! \Sigma$ where $A$ is a type of return values and $\Sigma$ is a signature for operations, (ii) $H$ is a handler for the operations in $\Sigma$ with return type $C$, and (iii) $N$ is a computation of type $C$ that takes the result of handling $M$ as $x : A$, then $\handlewithto{M}{H}{x}{N}$ is a computation of type $C$.
To give an interpretation of $\handlewithto{M}{H}{x}{N}$ from the interpretation of $M$, $H$, and $N$, we naturally need a morphism of the following form.
\[ \mathbf{handle}_{\Sigma, A, C} : T_{\Sigma} \interpret{A} \times \HBundle{\Sigma}{\interpret{C}} \times \interpret{A \to C} \to \interpret{C} \]
Here, we use a family of monads $\{ T_{\Sigma} \}_{\Sigma}$ indexed by signatures to interpret computation types $A ! \Sigma$ as $T_{\Sigma} \interpret{A}$, and $\HBundle{\Sigma}{\interpret{C}}$ denotes a suitable interpretation of handlers of type $\Sigma \Rightarrow C$.
Instead of using this morphism directly, we simplify it by eliminating $A$ and use the following family of morphisms indexed by $\Sigma$ and $C$ as the structure to interpret effect handlers:
\begin{equation}
	\mathbf{handle}_{\Sigma, C} : \HBundle{\Sigma}{\interpret{C}} \times T_{\Sigma} \interpret{C} \to \interpret{C} \label{eq:handle-structure}
\end{equation}
Then, we impose appropriate conditions on $\mathbf{handle}_{\Sigma, C}$ so that the interpretation of $\handlewithto{M}{H}{x}{N}$ satisfies desirable equations.
The resulting models of the effect handler calculus admit soundness and completeness results.
Notably, this notion of models allows us to capture not only free monad models but also the CPS semantics for effect handlers as models of the calculus.

We adopt a similar strategy for the effect handler calculus with equational axioms.
In this setting, computation types are extended to $A ! \Sigma / \mathcal{E}$ where $\mathcal{E}$ is a set of equational axioms for the operations in $\Sigma$, as in \cite{LuksicJFunctProg2020}.
Moreover, a computation of type $A ! \Sigma / \mathcal{E}$ can be handled only by handlers that respect the equations in $\mathcal{E}$.
To capture this situation, we need to modify the structure in~\eqref{eq:handle-structure} so that $\mathbf{handle}_{\Sigma, C}$ is defined only for handlers that respect the equations in $\mathcal{E}$.
This is technically more involved than the former calculus, but we can prove soundness and completeness results in this setting as well.

Although fibrational logical relations for effect handlers are currently left for future work, we expect that our results are a first step toward that direction.
Logical relations for effect handlers have been studied in \cite{BiernackiPOPL2018} using biorthogonal closure.
This suggests that fibrational logical relations can be constructed by adapting the $\top\top$-lifting \cite{KatsumataCSL2005} to effect handlers, which would require a flexible notion of models for effect handlers as we provide in this paper.

Our contributions are summarized as follows.
\begin{itemize}
	\item We give the first sound and complete categorical semantics for a simply typed calculus of deep effect handlers.
	\item We show that our semantics subsumes not only free monads but also CPS semantics for effect handlers.
	\item We extend the framework to effect handler calculus with equational axioms \cite{LuksicJFunctProg2020} and prove soundness and completeness for it.
\end{itemize}

The outline of the paper:
In Section~\ref{sec:preliminaries}, we review some preliminary notions about categorical semantics of lambda calculi with computational effects.
In Section~\ref{sec:syntax}, we introduce $\EffectHandlerCalculus$, a simply typed calculus of deep effect handlers without equational axioms, and present the equational theory for $\EffectHandlerCalculus$.
In Section~\ref{sec:semantics}, we define categorical models for $\EffectHandlerCalculus$ and prove soundness and completeness results.
In Section~\ref{sec:examples}, we show that free monad models and CPS semantics for effect handlers are instances of our models.
In Section~\ref{sec:with-effect-theories}, we extend the framework to effect handler calculus with equational axioms.
After discussing related work in Section~\ref{sec:related-work}, we conclude the paper in Section~\ref{sec:conclusion}.
Omitted proofs and details are provided in the appendix.

\section{Preliminaries}
\label{sec:preliminaries}
We introduce basic notions and notations for the semantics of the fine-grain call-by-value \cite{LevyInformationandComputation2003}, which is the basis of the calculus for effect handlers presented in Section~\ref{sec:syntax}.

\paragraph{Strong monads and strong Kleisli triples}
Let $\category{C}$ be a cartesian category.
A \emph{strong monad} on $\category{C}$ is a quadruple $(T, \eta^T, \mu^T, \strength^T)$ consisting of a functor $T : \category{C} \to \category{C}$ and natural transformations $\eta^T : \identity{\category{C}} \to T$ (unit), $\mu^T : T^2 \to T$ (multiplication), and $\strength^T_{X,Y} : X \times T Y \to T(X \times Y)$ (strength), satisfying the standard monad and strength laws.
We often say that $T$ is a strong monad, leaving the other components of the quadruple implicit.

It is known~\cite{McDermottMSFP2022} that a strong monad $(T, \eta^T, \mu^T, \strength^T)$ on a cartesian category $\category{C}$ is equivalent to a \emph{strong Kleisli triple} $(T, \eta^T, ({-})^{\sharp})$.
Since checking the laws for strong Kleisli triples is often easier than checking laws for $(T, \eta^T, \mu^T, \strength^T)$, we sometimes use strong Kleisli triples to define strong monads.

\begin{definition}\label{def:strong-kleisli-triple}
	A \emph{strong Kleisli triple} on a cartesian category $\category{C}$ is a triple $(T, \eta^T, ({-})^{\sharp})$ consisting of the following data:
	\begin{itemize}
		\item A mapping on objects $X \mapsto T X$.
		\item For each object $X$, a \emph{unit} morphism $\eta^T_X : X \to T X$.
		\item For each morphism $f : X \times Y \to T Z$, a morphism $f^{\sharp} : X \times T Y \to T Z$, called \emph{strong Kleisli extension} of $f$.
	\end{itemize}
	These data are required to satisfy the following equations:
	\begin{align}
		(\eta^T_X \comp \leftunitor)^{\sharp} &= \leftunitor &&: 1 \times T X \to T X \\
		f^{\sharp} \comp (X \times \eta^T_Y) &= f &&: X \times Y \to T Z \\
		g^{\sharp} \comp (W \!\times\! f^{\sharp}) \comp \associator &= (g^{\sharp} \comp (W \!\times\! f) \comp \associator)^{\sharp} \mkern-18mu &&: (W \times X) \times T Y \to T Z
	\end{align}
	where $\leftunitor_X : 1 \times X \to X$ and $\associator_{X, Y, Z} : (X \times Y) \times Z \to X \times (Y \times Z)$ are the canonical isomorphisms.
\end{definition}

\paragraph{Kleisli categories and Kleisli exponentials}
We write $\category{C}_T$ for the \emph{Kleisli category} of a monad $T$ on a category $\category{C}$.
There is a canonical adjunction $J^T \dashv K^T : \category{C}_T \to \category{C}$ associated with the Kleisli category.
Concretely, these functors are defined by $J^T X = X$ and $K^T X = T X$ for objects and $J^T f = \eta^T \comp f$ and $K^T g = \mu^T \comp T g$ for morphisms.
We sometimes omit superscripts $T$ in $J^T$ and $K^T$ when the monad $T$ is clear from the context.
Function types in fine-grain call-by-value are interpreted by \emph{Kleisli exponentials}.
\begin{definition}[Kleisli exponentials]
	Let $T$ be a strong monad on a cartesian category $\category{C}$.
	A \emph{Kleisli exponential} $\KleisliExp{T}{Y}{Z}$ is defined by the natural isomorphism $\category{C}_T(J(X \times Y), Z) \cong \category{C}(X, \KleisliExp{T}{Y}{Z})$.
	We write the \emph{currying} as $\Lambda : \category{C}(X \times Y, K Z) \cong \category{C}_T(J(X \times Y), Z) \cong \category{C}(X, \KleisliExp{T}{Y}{Z})$ and the \emph{evaluation morphism} as $\eval = \Lambda^{-1}(\identity{}) : (\KleisliExp{T}{X}{Y}) \times X \to K Y$.
\end{definition}
Note that if $\category{C}$ is cartesian closed, then an exponential $\exponential{Y}{T Z}$ from $Y$ to $T Z$ in $\category{C}$ gives a Kleisli exponential $\KleisliExp{T}{Y}{Z}$.

\paragraph{Models of the fine-grain call-by-value}
A \emph{$\lambda_c$-model} consists of a cartesian category $\category{C}$ with a strong monad $T$ on $\category{C}$ and Kleisli exponentials $\KleisliExp{T}{Y}{Z}$ for each $Y$ and $Z$.
The terminology reflects the fact that $\lambda_c$-models provide sound and complete semantics for Moggi's computational $\lambda$-calculus (also known as $\lambda_c$-calculus) \cite{MoggiLICS1989}.
Moreover, \citet{LevyInformationandComputation2003} showed that $\lambda_c$-models are also sound and complete for the fine-grain call-by-value.
They further showed that $\lambda_c$-models are equivalent to \emph{closed Freyd categories}, but we do not use this equivalence in this paper.
We use $\lambda_c$-models as a basis of our denotational semantics for effect handlers.

\section{An Effect Handler Calculus}
\label{sec:syntax}

Following~\cite{BauerLMCS2014,KammarICFP2013,HillerstromFSCD2017,LuksicJFunctProg2020}, we define a calculus of effect handlers, which we call $\EffectHandlerCalculus$.
To focus on the core ideas, we first provide a minimal calculus in this section and then discuss several extensions in Section~\ref{sec:calculus-extensions} and Section~\ref{sec:with-effect-theories}.

\subsection{Types and Terms}
Following the syntax of fine-grain call-by-value \cite{LevyInformationandComputation2003}, values and computations are distinguished in our calculus.
We define \emph{value terms} $V, W$; \emph{computation terms} $M, N$; and \emph{handlers} $H$ as follows.
\begin{align}
	V, W \quad&\coloneqq\quad x \mid \lambda x. M \mid \langle V_1, \dots, V_n \rangle \mid \pi_i\ V \\
	M, N \quad&\coloneqq\quad V\ W \mid \return{V} \mid \letin{x}{M}{N} \\
	&\qquad \mid \mathtt{op}(V) \mid \handlewithto{M}{H}{x}{N} \\
	H \quad&\coloneqq\quad \emptyset \mid \{ \mathtt{op}(x, k) \mapsto M \} \cup H
\end{align}
Value terms include variables $x$, lambda abstraction $\lambda x. M$, tuple $\langle V_1, \dots, V_n \rangle$, and projection $\pi_i\ V$.
Computation terms include function application $V\ W$, returning a value $\return{V}$, let-expression $\letin{x}{M}{N}$, operation invocation $\mathtt{op}(V)$, and effect handling construct (or \emph{handle-with} expression) $\handlewithto{M}{H}{x}{N}$.
In handle-with, operations in $M$ are handled by $H$, and the result value is bound to $x$ in $N$.
An effect handler $H$ is defined as a finite set of operation clauses $\mathtt{op}(x, k) \mapsto M$.
We separate operation clauses in $H$ from return clauses $N$, as it simplifies the presentation of the equational theory for effect handlers in Section~\ref{sec:equational-theory-summary}.
Although this syntax is less common in the recent literature on effect handlers, the original syntax for effect handling~\cite{PlotkinLMCS2013} also separates operation clauses from return clauses.
We define substitution $W[V/x]$, $M[V/x]$, and $H[V/x]$ of value terms $V$ for variables $x$ as usual.

\begin{figure}
	\begin{mathpar}
		\inferrule{
			\Gamma \vdash V : A
		}{
			\Gamma \vdash \return{V} : A ! \Sigma
		}
		\and
		\inferrule{
			\Gamma \vdash M : A ! \Sigma \\
			\Gamma, x : A \vdash N : B ! \Sigma
		}{
			\Gamma \vdash \letin{x}{M}{N} : B ! \Sigma
		}
		\and
		\inferrule{
			(\mathtt{op} : A \rightarrowtriangle B) \in \Sigma \\
			\Gamma \vdash V : A
		}{
			\Gamma \vdash \mathtt{op}(V) : B ! \Sigma
		}
		\and
		\inferrule{
			\Gamma \vdash M : A ! \Sigma \\
			\Gamma \vdash H : \Sigma \Rightarrow C \\
			\Gamma, x : A \vdash N : C
		}{
			\Gamma \vdash \handlewithto{M}{H}{x : A}{N} : C
		}
		\and
		\inferrule{ }{
			\Gamma \vdash \emptyset : \emptyset \Rightarrow C
		}
		\and
		\inferrule{
			\Gamma, x : A_{\mathtt{op}}, k : B_{\mathtt{op}} \to C \vdash M_{\mathtt{op}} : C \\
			\Gamma \vdash H : \Sigma \Rightarrow C
		}{
			\Gamma \vdash \{ \mathtt{op}(x, k) \mapsto M_{\mathtt{op}} \} \cup H : \{ \mathtt{op} : A_{\mathtt{op}} \rightarrowtriangle B_{\mathtt{op}} \} \cup \Sigma \Rightarrow C
		}
	\end{mathpar}
	\caption{Selected typing rules. See Appendix~\ref{sec:full-typing-rules} for full typing rules.}
	\label{fig:selected-typing-rules}
\end{figure}

We define a simple type system for $\EffectHandlerCalculus$ as follows.
Types are defined as follows:
\begin{align}
	&\text{value types} &A, B \quad&\coloneqq\quad A \to C \mid \prod_{i = 1}^n A_i \\
	&\text{computation types} &C, D \quad&\coloneqq\quad A ! \Sigma \\
	&\text{signatures} &\Sigma \quad&\coloneqq\quad \emptyset \mid \{ \mathtt{op} : A \rightarrowtriangle B \} \cup \Sigma \\
	&\text{handler types} &H \quad&\coloneqq\quad \Sigma \Rightarrow C
\end{align}
Here, $n \ge 0$ is a natural number, and $\mathtt{op}$ is an operation symbol.
We sometimes write $\prod_{i = 1}^n A_i = A_1 \times \dots \times A_n$ and $\prod_{i = 1}^0 A_i = \UnitType$ for product types.
Signatures $\Sigma$ are finite sets of operation symbols $\mathtt{op}$ with their argument types and return types.
The order of operation symbols in $\Sigma$ does not matter.
The computation type $A ! \Sigma$ is the type of computations that return a value of type $A$ and may perform operations in signature $\Sigma$.
The handler type $\Sigma \Rightarrow C$ is the type of handlers that handle operations in signature $\Sigma$ and transform computations of type $A ! \Sigma$ to those of type $C$.
As usual, contexts $\Gamma = x_1 : A_1, \dots, x_n : A_n$ are defined as a list of pairs of variables and value types.
Figure~\ref{fig:selected-typing-rules} shows typing rules involving signature and handlers.

\subsection{Equational Theory}\label{sec:equational-theory-summary}

We consider judgements for typed equations of the following forms:
\[ \Gamma \vdash V = W : A, \qquad \Gamma \vdash M = N : C, \qquad \Gamma \vdash H = H' : \Sigma \Rightarrow C \]
These judgements are defined as the least congruence relation containing the equations in Figures~\ref{fig:beta-eta-monad-equations} and~\ref{fig:effect-handler-equations}.
We omit obvious rules for reflexivity, symmetry, transitivity, and congruence for each term constructor; a complete description of these rules can be found in Appendix~\ref{sec:equational-theory}.
The rules in Figure~\ref{fig:beta-eta-monad-equations} correspond to the standard $\beta$-/$\eta$-laws for lambda abstraction and product types, and to the monad laws for let-expressions, respectively.

The most important part of our equational theory is the equations for handle-with in Figure~\ref{fig:effect-handler-equations}.
Here, we consider three equations.
\eqref{eq:handle-return} states that if the handled computation returns a value, the return clause is executed.
\eqref{eq:handle-let} states that handling a let-expression is equivalent to handling the first computation and then handling the second computation with the result of the first computation.
\eqref{eq:handle-op} states that when an operation is invoked in a handled computation, the corresponding operation clause in the handler is executed.

Note that the equations in Figure~\ref{fig:effect-handler-equations} differ from those considered in \cite{BauerLMCS2014}.
Our equational theory corresponds to the \emph{fine-grained operational semantics} for effect handlers studied in \cite{SieczkowskiICFP2023,KarachaliasOOPSLA2021}, while the equational theory in~\cite{BauerLMCS2014} corresponds to the traditional \emph{context-capturing operational semantics}.
The equations corresponding to context-capturing operational semantics are as follows.
\begin{align}
	&\handlewithto{\return{V}}{H}{x}{N} = N[V/x] \\
	&\handlewithto{\mathcal{C}[\mathtt{op}(V)]}{H}{x}{N} \\
	&\quad= M_{\mathtt{op}}\![V/x, \lambda y. \mathcal{C}[\return{y}]/k]
	\label{eq:context-capturing}
\end{align}
Here, $\mathcal{C}$ is an evaluation context and $(\mathtt{op}(x, k) \mapsto M_{\mathtt{op}}) \in H$ is the operation clause for $\mathtt{op}$.
We adopt the fine-grained version because it makes the interaction between handle-with and the monad structure (i.e., $\mathtt{return}$ and $\mathtt{let}$) clearer than the context-capturing version.
The relationship between these two styles of operational semantics is studied in \cite{SieczkowskiICFP2023}.
We will see in Section~\ref{sec:operational-semantics} that our equational theory is consistent with the context-capturing operational semantics.

\begin{figure}
	\begin{align}
		(\lambda x. M)\ V\ &=\ M[V/x]
		&V\ &=\ \lambda x. (V\ x) \\
		\pi_i\ \langle V_1, \dots, V_n \rangle \ &=\ V_i
		&V\ &=\ \langle \pi_1\ V, \dots, \pi_n\ V \rangle
	\end{align}
	\begin{align}
		&\letin{x}{\return{V}}{M} \quad=\quad M[V/x] \\
		&\letin{x}{M}{\return{x}} \quad=\quad M \\
		&\letin{y}{(\letin{x}{M}{N})}{L} \\
		&\quad=\quad \letin{x}{M}{\letin{y}{N}{L}}
	\end{align}
	\caption{The $\beta$ and $\eta$ laws for lambda abstraction and product types, and the monad laws for let-expressions.}
	\label{fig:beta-eta-monad-equations}
\end{figure}

\begin{figure}
	\begin{align}
		&\handlewithto{\return{V}}{H}{x}{M} \quad=\quad M[V / x] \\
		\tag{\textsc{Handle-Ret}} \label{eq:handle-return} \\
		&\handlewithto{(\letin{x}{L}{M})}{H}{y}{N} \\
		&\quad=\quad \handlewithto{L}{H}{x}{\handlewithto{M}{H}{y}{N}} \\ \tag{\textsc{Handle-Let}} \label{eq:handle-let} \\
		&\handlewithto{\mathtt{op}(V)}{H}{x}{M} \quad=\quad M_{\mathtt{op}}[V / x, \lambda x. M / k] \\
		&\qquad\qquad \text{where $(\mathtt{op}(x, k) \mapsto M_{\mathrm{op}}) \in H$} \tag{\textsc{Handle-Op}} \label{eq:handle-op}
	\end{align}
	\caption{Equations for effect handlers.}
	\label{fig:effect-handler-equations}
\end{figure}

\subsection{Operational Semantics}
\label{sec:operational-semantics}

As a sanity check of the equational theory, we show that it is consistent with a context-capturing small-step operational semantics.
\begin{definition}\label{def:operational-semantics}
	We define a relation $\leadsto$ on closed computation terms as the least relation satisfying the rules below.
	\begin{align}
		\mathcal{C}[M] &\leadsto \mathcal{C}[N] \qquad \text{if $M \leadsto N$} \\
		\letin{x}{\return{V}}{M} &\leadsto M[V/x] \\
		(\lambda x. M)\ V &\leadsto M[V/x] \\
		\handlewithto{\return{V}}{H}{x}{M} &\leadsto M[V / x] \\
		\handlewithto{\mathcal{C}[\mathtt{op}(V)]}{H}{x}{M} &\leadsto \\
		&\mkern-180mu M_{\mathtt{op}}[V/x,\lambda y. \handlewithto{\mathcal{C}[\return{y}]}{H}{x}{M}/k] \\
		&\mkern-160mu \text{if $\mathcal{C}$ does not contain handle-with} \label{eq:operational-handle-operation}
	\end{align}
	Here, evaluation contexts $\mathcal{C}$ are defined as follows:
	\[ \mathcal{C} \quad\coloneqq\quad [] \mid \letin{x}{\mathcal{C}}{N} \mid \handlewithto{\mathcal{C}}{H}{x}{M} \]
\end{definition}

As expected, the operational semantics satisfies type soundness.
\begin{proposition}[type soundness]\label{prop:type-soundness}
	If $\vdash M : C$ is a closed computation term, then either $M$ is a normal form, or there exists a computation term $N$ such that $M \leadsto N$ and $\vdash N : C$.
	Here, normal forms are computation terms of the form (a) $\return{V}$, (b) $\mathcal{C}[\mathtt{op}(V)]$ where $\mathcal{C}$ does not contain handle-with, or (c) $\mathcal{C}[(\pi_i\ V)\ W]$ where $\mathcal{C}$ may contain handle-with.
	\qed
\end{proposition}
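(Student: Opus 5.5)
The proof splits into the two standard parts: progress and subject reduction. Both go by induction on the typing derivation of $\vdash M : C$, and the cases follow the shape of $M$.

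\emph{Auxiliary lemmas.} I would first establish three routine lemmas.
\begin{itemize}
\item A substitution lemma: if $\Gamma, x : A \vdash M : C$ and $\Gamma \vdash V : A$, then $\Gamma \vdash M[V/x] : C$. The same holds for values and handlers, and the proof is by simultaneous induction on derivations.
\item A decomposition lemma for evaluation contexts. If $\vdash \mathcal{C}[M] : C$, then there is a type $C'$ with $\vdash M : C'$, and $\vdash \mathcal{C}[M'] : C$ holds for every $\vdash M' : C'$. Moreover, if $\mathcal{C}$ contains no handle-with, then $C'$ and $C$ have the same signature $\Sigma$. This holds because the let rule keeps $\Sigma$ fixed.
\item A canonical forms lemma for closed values. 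A closed value of type $A \to C$ is either $\lambda x.M$ or of the form $\pi_i\,V$. A closed value of product type is a tuple or a projection. There are no variables, since the context is empty.
\end{itemize}
The projection case is exactly why normal form (c) allows a stuck $(\pi_i\,V)\,W$. The calculus has no $\beta$-reduction rule for $\pi_i\langle\dots\rangle$ in $\leadsto$, so I will not try to reduce such terms.

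\emph{Progress.} I would prove a strengthened statement by induction on $M$: every closed well-typed $M$ is either reducible or one of the forms (a)--(c).
\begin{itemize}
\item \textbf{$\return{V}$.} This is form (a).
\item \textbf{$\mathtt{op}(V)$.} This is form (b) with $\mathcal{C}=[]$.
\item \textbf{$V\,W$.} By canonical forms, $V$ is either a $\lambda$, in which case the term $\beta$-reduces, or a projection, which gives form (c).
\item \textbf{$\letin{x}{M_1}{N}$.} Apply the induction hypothesis to $M_1$. If $M_1$ reduces, the whole term reduces by the context rule. If $M_1$ is $\return{V}$, the let rule fires. If $M_1$ has form (b) or (c), extend the context by $\letin{x}{[]}{N}$; this preserves the side condition of (b), because no handle-with is added.
\item \textbf{$\handlewithto{M_1}{H}{x}{N}$.} Again split on $M_1$.
  \begin{itemize}
  \item If $M_1$ reduces, use the context rule.
  \item If $M_1$ is a return, use the return rule for handle-with.
  \item If $M_1$ has form (c), extend the context; form (c) permits handle-with in $\mathcal{C}$.
  \item If $M_1 = \mathcal{C}[\mathtt{op}(V)]$ with $\mathcal{C}$ handle-free, the decomposition lemma puts $\mathtt{op}$ in the signature $\Sigma$ of $M_1$. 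The handler typing rules show that $H : \Sigma \Rightarrow C$ contains exactly one clause for each operation in $\Sigma$ (induction on the handler derivation), so the clause for $\mathtt{op}$ exists and rule \eqref{eq:operational-handle-operation} fires.
  \end{itemize}
\end{itemize}

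\emph{Subject reduction.} This is by induction on the derivation of $M \leadsto N$.
\begin{itemize}
\item The context rule follows from the decomposition lemma.
\item The let, $\beta$ and handle-return rules follow from inversion plus the substitution lemma.
\item The main case, and the main obstacle, is the operation-handling rule. Let $\mathtt{op} : A_{\mathtt{op}} \rightarrowtriangle B_{\mathtt{op}}$. By the decomposition lemma, $\vdash \mathtt{op}(V) : B_{\mathtt{op}} ! \Sigma$ and $y : B_{\mathtt{op}} \vdash \mathcal{C}[\return{y}] : A ! \Sigma$. This uses a weakened form of the decomposition lemma, with a context $y : B_{\mathtt{op}}$ allowed, so I would state that lemma for open terms from the start. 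Hence $\lambda y.\,\handlewithto{\mathcal{C}[\return{y}]}{H}{x}{M}$ has type $B_{\mathtt{op}} \to C$. Inverting the handler typing gives $x : A_{\mathtt{op}}, k : B_{\mathtt{op}} \to C \vdash M_{\mathtt{op}} : C$. Two applications of the substitution lemma then give the result type $C$.
\end{itemize}

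The delicate points are the signature bookkeeping in the decomposition lemma and the handler-inversion lemma. Both are straightforward inductions.
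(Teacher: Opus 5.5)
Your proposal is correct and follows essentially the same route as the paper: progress by induction on $M$ with the same case split (canonical forms for application, the induction hypothesis on the subcomputation for let and handle-with), then type preservation, where the only interesting case is the operation-handling rule. You are more explicit than the paper about the auxiliary lemmas (decomposition with signature preservation for handle-free contexts, and handler inversion), and you rightly place the existence of the clause for $\mathtt{op}$ in the progress argument, whereas the paper mentions it only under preservation.
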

\begin{appendixproof}[Proof of Proposition~\ref{prop:type-soundness}]
	We show that computation term $\vdash M : C$ is either a normal form or reducible by induction on the structure of $M$.
	\begin{itemize}
		\item If $M = V\ W$, then $V : A \to C$ is of the form $\lambda x. N$, $\pi_i\ V$, or a variable by typing rules.
		Since $M$ is closed, $V$ cannot be a variable.
		If $M = (\lambda x. N)\ W$, then it is reducible.
		If $M = (\pi_i\ V)\ W$, then it is a normal form.
		\item If $M = \letin{x}{N_1}{N_2}$, by the induction hypothesis, $N_1$ is either a normal form or reducible.
		If $N_1$ is reducible, then so is $M$ because $\mathcal{C} = \letin{x}{[]}{N_2}$ is an evaluation context.
		If $N_1$ is a normal form of the form $\return{V}$, then $M$ is reducible.
		If $N_1$ is a normal form of the form $\mathcal{C}[\mathtt{op}(V)]$ or $\mathcal{C}[(\pi_i\ V)\ W]$, then $M$ is also a normal form.
		\item If $M = \handlewithto{N_1}{H}{x}{N_2}$, then by the induction hypothesis, $N_1$ is either a normal form or reducible.
		If $N_1$ is reducible, then so is $M$ because $\mathcal{C} = \handlewithto{[]}{H}{x}{N_2}$ is an evaluation context.
		If $N_1$ is a normal form of the form $\return{V}$ or $\mathcal{C}[\mathtt{op}(V)]$, then $M$ is reducible.
		If $N_1$ is a normal form of the form $\mathcal{C}[(\pi_i\ V)\ W]$, then $M$ is also a normal form.
		\item Other cases are normal forms.
	\end{itemize}
	Next, we show type preservation: if $\Gamma \vdash M : C$ and $M \leadsto N$, then $\Gamma \vdash N : C$.
	Most cases are straightforward.
	When $M = \handlewithto{\mathcal{C}[\mathtt{op}(V)]}{H}{x}{N}$ and $\mathcal{C}$ does not contain handle-with, the signature for $\mathcal{C}[\mathtt{op}(V)]$ and that for $\mathtt{op}(V)$ are the same, so the corresponding operation clause exists in $H$.
\end{appendixproof}

\begin{remark}\label{rem:pattern-matching-product}
	In the definition of $\leadsto$, we have not included rules for decomposing tuples (e.g., $\pi_1\ \langle V_1, V_2 \rangle \leadsto V_1$), which is why we have normal forms of the form $\mathcal{C}[(\pi_i\ V)\ W]$ in Proposition~\ref{prop:type-soundness}.
	There are two possible approaches to avoid such normal forms; (a) adding reduction rules for value terms of the form $\pi_i\ V$ or (b) removing projection from the definition of value terms and adding pattern matching for tuples as computation terms, as in \cite{HillerstromFSCD2017}.
	We discuss both approaches in the appendix.
	Although the latter approach (b) is more natural in terms of operational semantics, we have chosen to include projection in value terms, as it is more standard when discussing completeness of categorical semantics \cite{LevyInformationandComputation2003}.
	This minor discrepancy between common practice of categorical semantics and operational semantics for dealing with tuples is orthogonal to our main target, i.e., effect handlers, and hence we do not pursue this issue further.
\end{remark}

\begin{toappendix}
Below, we consider two approaches for avoid normal forms of the form $(\pi_i\ V)\ W$.
The first one is to the operational semantics with reduction rules for value terms.
\begin{definition}\label{def:operational-semantics-with-value-reduction}
	We extend Definition~\ref{def:operational-semantics} by adding the following rule:
	\[ V\ W \leadsto V'\ W \qquad \text{if $V \leadsto^v V'$} \]
	where value term reduction $\leadsto^v$ is defined as follows:
	\begin{align}
		\pi_i\ \langle V_1, \dots, V_n \rangle &\leadsto^v V_i \\
		\mathcal{D}[V] &\leadsto^v \mathcal{D}[W] \quad \text{if $V \leadsto^v W$ where $\mathcal{D} \coloneqq [] \mid \pi_i\ \mathcal{D}$}
	\end{align}
\end{definition}

\begin{lemma}\label{lem:value-term-type-sound}
	For any closed value term $\vdash V : A$, either $V$ is a normal form or there exists a value term $W$ such that $V \leadsto^v W$.
	In the latter case, we have $\vdash W : A$.
	Here, normal forms for value terms are variables $x$, lambda abstractions $\lambda x. M$, and tuples $\langle V_1, \dots, V_n \rangle$.
\end{lemma}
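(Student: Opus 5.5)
We argue by induction on the structure of the closed value term $V$ (equivalently, on the typing derivation of $\vdash V : A$), following the same pattern as the progress part of Proposition~\ref{prop:type-soundness}. We prove progress and preservation separately. For progress the cases are as follows.
\begin{itemize}
	\item A variable cannot occur as a closed term, so this case is vacuous. It is listed among the normal forms anyway.
	\item A lambda abstraction $\lambda x. M$ is already a normal form.
	\item A tuple $\langle V_1, \dots, V_n \rangle$ is already a normal form.
\end{itemize}
This leaves $V = \pi_i\ V'$, which is the only interesting case.

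For $V = \pi_i\ V'$, inversion of the typing rule for projection gives $\vdash V' : \prod_{j=1}^n A_j$ with $A = A_i$ and $1 \le i \le n$. The induction hypothesis then says that $V'$ either reduces or is a normal form.
\begin{itemize}
	\item If $V' \leadsto^v W'$, then the context rule with $\mathcal{D} = \pi_i\ []$ gives $\pi_i\ V' \leadsto^v \pi_i\ W'$.
	\item If $V'$ is a normal form, we need a canonical-forms observation: a closed normal form of product type must be a tuple. Lambda abstractions have function type and cannot have product type, and variables do not occur in closed terms. Hence $V' = \langle V_1, \dots, V_n \rangle$, and $\pi_i\ V' \leadsto^v V_i$.
\end{itemize}

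For preservation we induct on the derivation of $V \leadsto^v W$.
\begin{itemize}
	\item In the base rule $\pi_i\ \langle V_1, \dots, V_n \rangle \leadsto^v V_i$, inverting the tuple and projection rules gives $\vdash V_i : A_i = A$.
	\item In the context rule $\mathcal{D}[V] \leadsto^v \mathcal{D}[W]$, we first do a small induction on $\mathcal{D}$. Typing of $\mathcal{D}[V]$ is determined by the type of $V$ alone, so we can replace $V$ by any $W$ of the same type. Combining this with the induction hypothesis $\vdash W : B$, where $B$ is the type of $V$, gives the result.
\end{itemize}

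The main step that needs care is the canonical-forms observation. Its justification rests on the typing rules being syntax-directed, so that inversion applies. The full rules are given in Appendix~\ref{sec:full-typing-rules}, and we assume there is no subsumption rule that would break inversion.

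A second subtlety is that $V'$ in $\pi_i\ V'$ may itself be a projection, as in $\pi_1\ (\pi_2\ \langle\dots\rangle)$. This is handled by the nested contexts $\mathcal{D} \coloneqq \pi_i\ \mathcal{D}$. The structural induction covers it because the induction hypothesis is applied to $V'$, and the single-layer context $\pi_i\ []$ composed with the inner reduction produces exactly a context of this form.
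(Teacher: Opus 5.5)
Your proposal is correct and follows the paper's proof. Both argue by induction on $V$, and in the only nontrivial case $\pi_i\ V'$ both use the congruence rule when $V'$ reduces, or the canonical-forms observation that a closed normal form of product type is a tuple. Your separate preservation argument, by induction on the derivation of $\leadsto^v$, proves the typing claim $\vdash W : A$, which the paper's proof does not address.
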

\begin{proof}
	By induction on the structure of $V$.
	It suffices to show that value terms of the form $\pi_i\ V$ is reducible.
	If $V$ is a normal form, then it is of the form $\langle V_1, \dots, V_n \rangle$ by typing rules, so $\pi_i\ V$ is reducible.
	If $V$ is reducible, then so is $\pi_i\ V$ by the definition of $\leadsto^v$.
\end{proof}

\begin{proposition}
	If we use the extended operational semantics in Definition~\ref{def:operational-semantics-with-value-reduction}, then type soundness (Proposition~\ref{prop:type-soundness}) holds with normal forms being computation terms of the form $\return{V}$ or $\mathcal{C}[\mathtt{op}(V)]$ where $\mathcal{C}$ does not contain handle-with.
\end{proposition}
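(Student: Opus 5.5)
We prove progress and preservation separately, reusing the argument for Proposition~\ref{prop:type-soundness} as much as possible. Type preservation for the extended relation $\leadsto$ is essentially already done. All computation-level rules are treated exactly as before. The only new rule is $V\ W \leadsto V'\ W$ with $V \leadsto^v V'$. For this rule, Lemma~\ref{lem:value-term-type-sound} gives $\vdash V' : A \to C$, so $\vdash V'\ W : C$. Closure under evaluation contexts $\mathcal{C}$ then follows by the usual induction on $\mathcal{C}$.

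For progress we redo the induction on the structure of a closed computation term $\vdash M : C$. The claim is that $M$ either reduces, or is of the form $\return{V}$, or is of the form $\mathcal{C}[\mathtt{op}(V)]$ with $\mathcal{C}$ free of handle-with. The only case that changes is $M = V\ W$. By Lemma~\ref{lem:value-term-type-sound}, $V$ is either reducible or a normal value.

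\begin{itemize}
\item If $V \leadsto^v V'$, then $M \leadsto V'\ W$ by the new rule.
\item If $V$ is a normal value, it is closed, so it is not a variable. Since it has function type, the typing rules say it is not a tuple either. Hence $V = \lambda x. N$, and $M$ reduces by $\beta$.
\end{itemize}

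Consequently no application is ever a normal form.

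The let and handle-with cases go through as in the earlier proof. The induction hypothesis now only yields the normal forms $\return{V}$ and $\mathcal{C}[\mathtt{op}(V)]$ with $\mathcal{C}$ free of handle-with, so the subcases involving $\mathcal{C}[(\pi_i\ V)\ W]$ disappear.
\begin{itemize}
\item For $\letin{x}{N_1}{N_2}$: a $\return{}$ normal form reduces, and an operation normal form stays normal with the larger context $\letin{x}{\mathcal{C}}{N_2}$, which is still free of handle-with.
\item For handle-with: both kinds of normal form reduce. In the operation case, typing guarantees that the required clause exists in $H$, as already observed in the proof of Proposition~\ref{prop:type-soundness}.
\end{itemize}
The remaining cases, $\return{V}$ and $\mathtt{op}(V)$, are normal forms directly, taking $\mathcal{C} = []$ for the latter.

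The main point requiring care is the canonical-forms step: a closed normal value of type $A \to C$ must be a $\lambda$-abstraction. This follows by inspecting the typing rules, since tuples only have product types and closedness excludes variables. Once that step is settled, everything else is routine bookkeeping.
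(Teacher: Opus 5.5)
Your proof is correct and follows essentially the paper's route: the key step is Lemma~\ref{lem:value-term-type-sound}, which makes every closed application reducible and so removes the stuck normal forms $\mathcal{C}[(\pi_i\ V)\ W]$. The paper states this as a one-line reduction to Proposition~\ref{prop:type-soundness} (the extended relation contains the old one, and that third class of normal forms now reduces by the new rule inside evaluation contexts), whereas you re-run the progress induction and make preservation for the new rule explicit; both are valid, and yours is slightly more self-contained.
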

\begin{proof}
	By Lemma~\ref{lem:value-term-type-sound}, computation terms of the form $\mathcal{C}[(\pi_i\ V)\ W]$ are now reducible.
\end{proof}

The second approach is to modify the syntax of $\EffectHandlerCalculus$ (Section~\ref{sec:syntax}) so that decomposition of tuples occurs only in computation terms.
This approach is more natural in terms of operational semantics and taken in \cite{HillerstromFSCD2017}.
\begin{definition}\label{def:syntax-with-pattern-matching}
	We modify the syntax of value and computation terms of $\EffectHandlerCalculus$ as follows.
	\begin{align}
		V, W \quad&\coloneqq\quad x \mid \lambda x. M \mid \langle V_1, \dots, V_n \rangle \\
		M, N \quad&\coloneqq\quad V\ W \mid \return{V} \mid \letin{x}{M}{N} \mid \mathtt{op}(V) \mid \handlewithto{M}{H}{x}{N} \mid \patternmatch{V}{x_1, \dots, x_n}{M}
	\end{align}
	Typing rules for pattern matching are defined in the obvious way.
	We also define the operational semantics $\leadsto$ as follows.
	\begin{align}
		\mathcal{C}[M] &\leadsto \mathcal{C}[N] \quad \text{if $M \leadsto N$} \\
		\letin{x}{\return{V}}{M} &\leadsto M[V/x] \\
		(\lambda x. M)\ V &\leadsto M[V/x] \\
		\handlewithto{\return{V}}{H}{x}{M} &\leadsto M[V / x] \\
		\handlewithto{\mathcal{C}[\mathtt{op}(V)]}{H}{x}{M} &\leadsto M_{\mathtt{op}}[V/x,\lambda y. \handlewithto{\mathcal{C}[\return{y}]}{H}{x}{M}/k] \\
		&\qquad\text{if $\mathcal{C}$ does not contain handle-with} \\
		\patternmatch{\langle V_1, \dots, V_n \rangle}{x_1, \dots, x_n}{M} &\leadsto M[V_1/x_1, \dots, V_n/x_n]
	\end{align}
	Here, evaluation contexts $\mathcal{C}$ are defined as follows:
	\[ \mathcal{C} \quad\coloneqq\quad [] \mid \letin{x}{\mathcal{C}}{N} \mid \handlewithto{\mathcal{C}}{H}{x}{M} \]
\end{definition}
The main difference from the original syntax is that projection $\pi_i\ V$ is removed from value terms and added as pattern matching for tuples $\patternmatch{V}{x_1, \dots, x_n}{M}$ in computation terms.
Using this modified definition, we can show the following type soundness result, which is more natural than Proposition~\ref{prop:type-soundness}.

\begin{proposition}
	Consider the modified syntax and operational semantics in Definition~\ref{def:syntax-with-pattern-matching}.
	Then, type soundness (Proposition~\ref{prop:type-soundness}) holds with normal forms being computation terms of the form $\return{V}$ or $\mathcal{C}[\mathtt{op}(V)]$ where $\mathcal{C}$ does not contain handle-with.
\end{proposition}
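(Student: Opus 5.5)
We follow the same two-part structure as the proof of Proposition~\ref{prop:type-soundness}: progress, then type preservation. The only new ingredient is the pattern-matching construct $\patternmatch{V}{x_1, \dots, x_n}{M}$, together with the removal of projections from value terms.

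\emph{Progress.} We prove by induction on the typing derivation of a closed computation $\vdash M : C$ that $M$ is either of the form $\return{V}$, or of the form $\mathcal{C}[\mathtt{op}(V)]$ with $\mathcal{C}$ free of handle-with, or reducible. First we record a canonical forms lemma for closed values. Since value terms are now only variables, lambda abstractions and tuples, and a closed value cannot be a variable, inversion of the typing rules gives:
\begin{itemize}
	\item a closed value of type $A \to C$ is a lambda abstraction $\lambda x. N$;
	\item a closed value of type $\prod_{i=1}^n A_i$ is a tuple $\langle V_1, \dots, V_n \rangle$.
\end{itemize}
This is precisely where the problematic normal forms $(\pi_i\ V)\ W$ disappear.

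We then treat the computation cases.
\begin{itemize}
	\item An application $V\ W$ reduces by $\beta$, since $V$ is a lambda abstraction.
	\item A pattern match $\patternmatch{V}{x_1, \dots, x_n}{M}$ reduces by the new rule, since $V$ is a tuple of the correct arity.
	\item The constructs $\return{V}$ and $\mathtt{op}(V)$ are normal forms of the allowed shapes; for $\mathtt{op}(V)$ we take $\mathcal{C} = []$.
	\item For $\letin{x}{N_1}{N_2}$ and $\handlewithto{N_1}{H}{x}{N_2}$, we apply the induction hypothesis to $N_1$ exactly as in Proposition~\ref{prop:type-soundness}, except that the third case, $\mathcal{C}[(\pi_i\ V)\ W]$, no longer arises.
	\item In the handle case, if $N_1 = \mathcal{C}[\mathtt{op}(V)]$ with $\mathcal{C}$ free of handle-with, the handler rule fires.
	\item In the let case with the same $N_1$, the term $\letin{x}{\mathcal{C}[\mathtt{op}(V)]}{N_2}$ is again of the form $\mathcal{C}'[\mathtt{op}(V)]$ with $\mathcal{C}' = \letin{x}{\mathcal{C}}{N_2}$, which still contains no handle-with.
\end{itemize}

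\emph{Preservation.} We show that $\Gamma \vdash M : C$ and $M \leadsto N$ imply $\Gamma \vdash N : C$, by induction on the derivation of $\leadsto$. The tools are a standard substitution lemma (typed values substituted for variables preserve typing of values, computations and handlers) and a context lemma (replacing the hole of $\mathcal{C}$ by a term of the same type preserves typing). The new pattern-matching rule follows from the substitution lemma after inverting the typing rule for the tuple. All the other rules are handled as in Proposition~\ref{prop:type-soundness}.

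The main obstacle is the handler rule~\eqref{eq:operational-handle-operation}, where two things must be checked.
\begin{itemize}
	\item \emph{The clause exists in $H$.} Since $\mathcal{C}$ contains no handle-with, it consists only of let-frames. Each let-frame preserves the signature $\Sigma$, so $\mathtt{op}(V)$ is typed in $\Sigma$, and hence $H : \Sigma \Rightarrow C$ has a clause for $\mathtt{op}$.
	\item \emph{The continuation has the right type.} We must show $\lambda y. \handlewithto{\mathcal{C}[\return{y}]}{H}{x}{M}$ has type $B_{\mathtt{op}} \to C$. Here $\return{y} : B_{\mathtt{op}} ! \Sigma$ is well typed, so the context lemma gives $\mathcal{C}[\return{y}] : A ! \Sigma$. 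This requires weakening $H$ and $M$ by the fresh variable $y$.
\end{itemize}
With both facts established, the substitution lemma applied to the clause $M_{\mathtt{op}}$ completes the case.
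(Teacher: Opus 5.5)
Your proof is correct and takes essentially the same approach as the paper. The paper likewise reuses the progress/preservation argument of Proposition~\ref{prop:type-soundness}, noting that a closed value of product type is a tuple, so pattern matching always reduces and the $\mathcal{C}[(\pi_i\ V)\ W]$ normal forms disappear; your version just spells out the canonical-forms, context, substitution and weakening steps, and the checks for the handler rule, in more detail.
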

\begin{proof}
	The proof is similar to that of Proposition~\ref{prop:type-soundness}.
	The main differences are that we can exclude normal forms of the form $\mathcal{C}[(\pi_i\ V)\ W]$ and that we need to consider the case for pattern matching as follows.
	Suppose $M = \patternmatch{V}{x_1, \dots, x_n}{N}$.
	Since $V$ is a closed value term of type $\prod_{i = 1}^n A_i$, it is of the form $\langle V_1, \dots, V_n \rangle$ by typing rules.
	Thus, $M$ is reducible.
\end{proof}
\end{toappendix}

The reduction relation $\leadsto$ is sound with respect to the equational theory we consider in Section~\ref{sec:equational-theory-summary}.
This gives a justification that our equational theory in Section~\ref{sec:equational-theory-summary} is reasonable.
\begin{proposition}\label{prop:operational-equational-soundness}
	If $M \leadsto N$, then $\Gamma \vdash M = N : C$.
	\qed
\end{proposition}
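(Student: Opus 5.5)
We prove the statement by induction on the derivation of $M \leadsto N$, taking $\Gamma$ empty (or weakening as needed) and $C$ the type of $M$. Type preservation from the proof of Proposition~\ref{prop:type-soundness} guarantees that $N$ has the same type, so the typed equation makes sense.

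\emph{Congruence and $\beta$-rules.} For the rule $\mathcal{C}[M] \leadsto \mathcal{C}[N]$, we first show that $\Gamma \vdash M = N : C'$ implies $\Gamma \vdash \mathcal{C}[M] = \mathcal{C}[N] : C$. This follows by induction on $\mathcal{C}$, using the congruence rules for $\mathtt{let}$ and handle-with. The rules for $\letin{x}{\return{V}}{M}$, for $(\lambda x. M)\ V$ and for $\handlewithto{\return{V}}{H}{x}{M}$ are then direct instances of the monad $\beta$-law, the function $\beta$-law and \eqref{eq:handle-return}, respectively.

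\emph{The operation-handling rule \eqref{eq:operational-handle-operation}.} This is the only nontrivial case. We prove, by induction on an evaluation context $\mathcal{C}$ not containing handle-with, the following key lemma:
\[ \handlewithto{\mathcal{C}[\mathtt{op}(V)]}{H}{x}{M} = \handlewithto{\mathtt{op}(V)}{H}{y}{\handlewithto{\mathcal{C}[\return{y}]}{H}{x}{M}}. \]
\begin{itemize}
\item \emph{Base case $\mathcal{C} = []$.} The right-hand side becomes $\handlewithto{\mathtt{op}(V)}{H}{y}{\handlewithto{\return{y}}{H}{x}{M}}$. By \eqref{eq:handle-return} under congruence, the inner term equals $M[y/x]$, and the whole term then equals the left-hand side by $\alpha$-renaming.
\item \emph{Step case $\mathcal{C} = \letin{z}{\mathcal{C}'}{L}$.} Apply \eqref{eq:handle-let} to get
\[ \handlewithto{\mathcal{C}'[\mathtt{op}(V)]}{H}{z}{\handlewithto{L}{H}{x}{M}}. \]
Then use the induction hypothesis with return clause $\handlewithto{L}{H}{x}{M}$. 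Finally, fold back the inner term $\handlewithto{\mathcal{C}'[\return{y}]}{H}{z}{\handlewithto{L}{H}{x}{M}}$ into $\handlewithto{\mathcal{C}[\return{y}]}{H}{x}{M}$ using \eqref{eq:handle-let} in reverse.
\end{itemize}
Applying \eqref{eq:handle-op} to the right-hand side of the lemma then gives $M_{\mathtt{op}}[V/x, \lambda y. \handlewithto{\mathcal{C}[\return{y}]}{H}{x}{M}/k]$, which is exactly the reduct.

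The main obstacle is this lemma. The care needed is in bookkeeping: tracking variable capture and freshness of the bound variables $y$ and $z$ when substituting into the return clause, and checking that the handler $H$ and the types are preserved at each step. The restriction that $\mathcal{C}$ contains no handle-with is essential, since it makes \eqref{eq:handle-let} the only rule needed to peel off the context.
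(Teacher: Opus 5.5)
Your proof is correct and follows the paper's: induction on the reduction derivation, with the only nontrivial case handled by a lemma proved by induction on the handle-free evaluation context, using \eqref{eq:handle-let} in the step case and \eqref{eq:handle-return} and \eqref{eq:handle-op} for the rest. The one difference is cosmetic. You apply \eqref{eq:handle-op} once at the end, while the paper applies it in the base case and carries $M_{\mathtt{op}}[\dots]$ through the induction; your ordering avoids the paper's implicit appeal to substituting provably equal values for $k$ inside $M_{\mathtt{op}}$, since your fold-back step sits directly under the handle-with congruence rule.
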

\begin{toappendix}
We prove Proposition~\ref{prop:operational-equational-soundness}.
Before the proof, we show the following lemma.
\begin{lemma}\label{lem:operational-handle-operation-sound}
	If $\mathcal{C}$ does not contain handle-with, then the following equation is derivable.
	\[ \handlewithto{\mathcal{C}[\mathtt{op}(V)]}{H}{x}{M} \quad=\quad M_{\mathtt{op}}[V/x,\lambda y. \handlewithto{\mathcal{C}[\return{y}]}{H}{x}{M}/k] \]
\end{lemma}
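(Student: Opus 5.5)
We argue by induction on the structure of the evaluation context $\mathcal{C}$. Since $\mathcal{C}$ contains no handle-with, it is built only from the holes $[]$ and $\letin{z}{\mathcal{C}'}{L}$. The statement is generalised over the return clause, replacing $M$ by an arbitrary computation $N'$ with the bound variable $x$; we then apply the induction hypothesis to a different return clause. Throughout we use Barendregt's convention: bound variables ($y$, $z$, and the $x, k$ of $M_{\mathtt{op}}$) are chosen fresh, so the substitutions commute with the context.

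\emph{Base case $\mathcal{C} = []$.} The left-hand side is $\handlewithto{\mathtt{op}(V)}{H}{x}{M}$. By \eqref{eq:handle-op} it equals $M_{\mathtt{op}}[V/x, \lambda x. M/k]$. The right-hand side is $M_{\mathtt{op}}[V/x, \lambda y. \handlewithto{\return{y}}{H}{x}{M}/k]$. By \eqref{eq:handle-return} the body $\handlewithto{\return{y}}{H}{x}{M}$ equals $M[y/x]$, so the function is $\lambda y. M[y/x]$, which is $\alpha$-equivalent to $\lambda x. M$. Congruence of equality under substitution into $M_{\mathtt{op}}$ closes this case.

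\emph{Inductive case $\mathcal{C} = \letin{z}{\mathcal{C}'}{L}$.} By \eqref{eq:handle-let} the left-hand side equals
\[ \handlewithto{\mathcal{C}'[\mathtt{op}(V)]}{H}{z}{(\handlewithto{L}{H}{x}{M})}. \]
The induction hypothesis, taken with return clause $\handlewithto{L}{H}{x}{M}$ and bound variable $z$, rewrites this to
\[ M_{\mathtt{op}}[V/x, \lambda y. \handlewithto{\mathcal{C}'[\return{y}]}{H}{z}{(\handlewithto{L}{H}{x}{M})}/k]. \]
Applying \eqref{eq:handle-let} backwards inside the $\lambda$-body, via congruence, turns the body into $\handlewithto{(\letin{z}{\mathcal{C}'[\return{y}]}{L})}{H}{x}{M}$. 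That computation is $\handlewithto{\mathcal{C}[\return{y}]}{H}{x}{M}$, which is exactly the right-hand side.

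The only real subtlety is the generalisation. The induction hypothesis has to be stated for all return clauses, since the return clause changes in the inductive step. We also have to be careful with variable hygiene: $z$ must not be free in $H$ or $M$, and $y$ must not be free in $\mathcal{C}$, so that \eqref{eq:handle-let} applies and the substitution for $k$ passes through the context correctly. Everything else is a routine application of congruence rules. Proposition~\ref{prop:operational-equational-soundness} then follows by induction on $\leadsto$, using this lemma for the case~\eqref{eq:operational-handle-operation} and congruence under evaluation contexts for the remaining cases.
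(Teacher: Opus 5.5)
Your proof is correct and follows essentially the same route as the paper. Both argue by induction on $\mathcal{C}$: the base case uses \eqref{eq:handle-op} and \eqref{eq:handle-return}, and the $\mathtt{let}$ case uses \eqref{eq:handle-let}, then the induction hypothesis with the new return clause $\handlewithto{L}{H}{x}{M}$, then \eqref{eq:handle-let} backwards under the $\lambda$. Your explicit generalisation over return clauses simply makes precise what the paper's proof uses implicitly.
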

\begin{proof}
	By induction on the structure of evaluation contexts $\mathcal{C}$.
	If $\mathcal{C} = []$, it follows from \eqref{eq:handle-op} and \eqref{eq:handle-return}.
	\begin{align}
		&\handlewithto{\mathtt{op}(V)}{H}{x}{M} \\
		&= M_{\mathtt{op}}[V/x,\lambda x. M/k] \\
		&= M_{\mathtt{op}}[V/x,\lambda y. \handlewithto{\return{y}}{H}{x}{M}/k]
	\end{align}
	If $\mathcal{C} = \letin{y}{\mathcal{C}}{N}$, it follows from the induction hypothesis and \eqref{eq:handle-let}.
	\begin{align}
		&\handlewithto{\letin{y}{\mathcal{C}[\mathtt{op}(V)]}{N}}{H}{x}{M} \\
		&= \handlewithto{\mathcal{C}[\mathtt{op}(V)]}{H}{y}{\handlewithto{N}{H}{x}{M}} \\
		&= M_{\mathtt{op}}[V/x,\lambda z. \handlewithto{\mathcal{C}[\return{z}]}{H}{y}{\handlewithto{N}{H}{x}{M}}/k] \\
		&= M_{\mathtt{op}}[V/x,\lambda z. \handlewithto{\letin{y}{\mathcal{C}[\return{z}]}{N}}{H}{x}{M}/k]
		\qedhere
	\end{align}
\end{proof}
\end{toappendix}
\begin{appendixproof}[Proof of Proposition~\ref{prop:operational-equational-soundness}]
	The case for $\mathcal{C}[M] \leadsto \mathcal{C}[N]$ follows from congruence rules and the induction hypothesis.
	The case for \eqref{eq:operational-handle-operation} follows from Lemma~\ref{lem:operational-handle-operation-sound}.
	The other cases follow from the corresponding equations in the equational theory.
\end{appendixproof}

\begin{toappendix}
\section{Typing Rules}
\label{sec:full-typing-rules}

Contexts are defined as follows.
\[ \Gamma \coloneqq \emptyctx \mid \Gamma, x : A \]

Value terms $\Gamma \vdash V : A$
\begin{mathpar}
	\inferrule{
		(x : A) \in \Gamma
	}{
		\Gamma \vdash x : A
	}
	\and
	\inferrule{
		\Gamma, x : A \vdash M : C
	}{
		\Gamma \vdash \lambda x. M : A \to C
	}
	\and
	\inferrule{
		\forall i, \Gamma \vdash V_i : A_i
	}{
		\Gamma \vdash \langle V_1, \dots, V_n \rangle : \prod_{i = 1}^n A_i
	}
	\and
	\inferrule{
		\Gamma \vdash V : \prod_{i = 1}^n A_i
	}{
		\Gamma \vdash \pi_i\ V : A_i
	}
\end{mathpar}

Computation terms $\Gamma \vdash M : C$
\begin{mathpar}
	\inferrule{
		\Gamma \vdash V : A \to C \\
		\Gamma \vdash W : A
	}{
		\Gamma \vdash V\ W : C
	}
	\and
	\inferrule{
		\Gamma \vdash V : A
	}{
		\Gamma \vdash \return{V} : A {!} E
	}
	\and
	\inferrule{
		\Gamma \vdash M : A ! E \\
		\Gamma, x : A \vdash N : B ! E
	}{
		\Gamma \vdash \letin{x}{M}{N} : B ! E
	}
	\and
	\inferrule{
		(\mathtt{op} : A \rightarrowtriangle B) \in E \\
		\Gamma \vdash V : A
	}{
		\Gamma \vdash \mathtt{op}(V) : B ! E
	}
	\and
	\inferrule{
		\Gamma \vdash M : A ! E \\
		\Gamma \vdash H : E \Rightarrow C \\
		\Gamma, x : A \vdash N : C
	}{
		\Gamma \vdash \handlewithto{M}{H}{x}{N} : C
	}
\end{mathpar}

Handlers $\Gamma \vdash H : E \Rightarrow C$
\begin{mathpar}
	\inferrule{ }{
		\Gamma \vdash \emptyset : \emptyset \Rightarrow C
	}
	\and
	\inferrule{
		\Gamma, x : A_{\mathtt{op}}, k : B_{\mathtt{op}} \to C \vdash M_{\mathtt{op}} : C \\
		\Gamma \vdash H : E \Rightarrow C
	}{
		\Gamma \vdash \{ \mathtt{op}(x, k) \mapsto M_{\mathtt{op}} \} \cup H : \{ \mathtt{op} : A_{\mathtt{op}} \rightarrowtriangle B_{\mathtt{op}} \} \cup E \Rightarrow C
	}
\end{mathpar}

\section{Equational Theory}
\label{sec:equational-theory}

We consider the following judgment form:
\[ \Gamma \vdash V = W : A \qquad \Gamma \vdash M = N : C \qquad \Gamma \vdash H = H' : \Sigma \Rightarrow C \]
We implicitly use $\alpha$-conversion of bound variables in derivation rules.
We have four groups of derivation rules.
\begin{itemize}
	\item reflexivity, symmetry, transitivity, and congruence rules
	\item $\beta$ and $\eta$ rules for lambda abstraction and product types
	\item monad laws for let-expressions
	\item rules for effect handlers
\end{itemize}

For value types, reflexivity, symmetry, and transitivity rules are given as follows.
We have similar rules for computation terms and handlers.
\begin{mathpar}
	\inferrule[Val-Refl]{
		\Gamma \vdash V : A
	}{
		\Gamma \vdash V = V : A
	}
	\and
	\inferrule[Val-Symm]{
		\Gamma \vdash V = W : A
	}{
		\Gamma \vdash W = V : A
	}
	\and
	\inferrule[Val-Trans]{
		\Gamma \vdash U = V : A \\
		\Gamma \vdash V = W : A
	}{
		\Gamma \vdash U = W : A
	}
\end{mathpar}

Congruence rules are given for each term constructor.
We do not list all congruence rules here for brevity.
For example, the congruence rule for lambda abstraction is given as follows.
\begin{mathpar}
	\inferrule[Abs-Congr]{
		\Gamma, x : A \vdash M = N : C
	}{
		\Gamma \vdash \lambda x. M = \lambda x. N : A \to C
	}
\end{mathpar}

Other derivation rules are given as follows.
\begin{mathpar}
	\inferrule[Abs-Beta]{
		\Gamma, x : A \vdash M : B ! \Sigma \\
		\Gamma \vdash V : A
	}{
		\Gamma \vdash (\lambda x. M)\ V = M[V/x] : B ! \Sigma
	}
	\and
	\inferrule[Abs-Eta]{
		\Gamma \vdash V : A \to C \\
		\text{$x$ fresh}
	}{
		\Gamma \vdash V = \lambda x. (V\ x) : A \to C
	}
	\and
	\inferrule[Prod-Beta]{
		\forall i, \quad \Gamma \vdash V_i : A_i
	}{
		\Gamma \vdash \pi_i\ \langle V_1, \dots, V_n \rangle = V_i
	}
	\and
	\inferrule[Prod-Eta]{
		\Gamma \vdash V : \prod_{i = 1}^n A_i
	}{
		\Gamma \vdash V = \langle \pi_1\ V, \dots, \pi_n\ V \rangle : \prod_{i = 1}^n A_i
	}
	\and
	\inferrule[Ret-Let]{
		\Gamma \vdash V : A \\
		\Gamma, x : A \vdash M : B ! \Sigma
	}{
		\Gamma \vdash \letin{x}{\return{V}}{M} = M[V/x] : B ! \Sigma
	}
	\and
	\inferrule[Let-Let]{
		\Gamma \vdash M : A_1 ! \Sigma \\
		\Gamma, x : A_1 \vdash N : A_2 ! \Sigma \\
		\Gamma, y : A_2 \vdash L : A_3 ! \Sigma \\
	}{
		\Gamma \vdash \letin{y}{(\letin{x}{M}{N})}{L} = \letin{x}{M}{\letin{y}{N}{L}} : A_3 ! \Sigma
	}
	\and
	\inferrule[Let-Ret]{
		\Gamma \vdash M : A ! \Sigma
	}{
		\Gamma \vdash \letin{x}{M}{\return{x}} = M : A ! \Sigma
	}
	\and
	\inferrule[Handle-Ret]{
		\Gamma \vdash V : A \\
		\Gamma, x : A \vdash M : C \\
		\Gamma \vdash H : \Sigma \Rightarrow C
	}{
		\Gamma \vdash \handlewithto{\return{V}}{H}{x}{M} = M[V / x] : C
	}
	\and
	\inferrule[Handle-Let]{
		\Gamma \vdash H : \Sigma \Rightarrow C \\
		\Gamma \vdash L : A ! \Sigma \\
		\Gamma, x : A \vdash M : B ! \Sigma \\
		\Gamma, y : B \vdash N : C
	}{
		{\begin{aligned}
			\Gamma &\vdash \handlewithto{(\letin{x}{L}{M})}{H}{y}{N} \\
			&= \handlewithto{L}{H}{x}{\handlewithto{M}{H}{y}{N}} : C
		\end{aligned}}
	}
	\and
	\inferrule[Handle-Op]{
		(\mathtt{op} : A_{\mathtt{op}} \rightarrowtriangle B_{\mathtt{op}}) \in \Sigma \\
		(\mathtt{op}(x, k) \mapsto M_{\mathtt{op}}) \in H \\
		\Gamma \vdash H : \Sigma \Rightarrow C \\
		\Gamma, x : B_{\mathtt{op}} \vdash M : C \\
		\Gamma \vdash V : A_{\mathtt{op}}
	}{
		\Gamma \vdash \handlewithto{\mathtt{op}(V)}{H}{x}{M} = M_{\mathtt{op}}[V / x, \lambda x. M / k] : C
	}
	\and
	\inferrule[Handler-Congr]{
		\Gamma \vdash H = H' : \Sigma \Rightarrow C \\
		\Gamma, x : A_{\mathtt{op}}, k : B_{\mathtt{op}} \to C \vdash M_{\mathrm{op}} = N_{\mathrm{op}} : C
	}{
		\Gamma \vdash \{ \mathtt{op}(x, k) \mapsto M_{\mathrm{op}} \} \cup H = \{ \mathtt{op}(x, k) \mapsto N_{\mathrm{op}} \} \cup H' : \{ \mathtt{op} : A_{\mathtt{op}} \rightarrowtriangle B_{\mathtt{op}} \} \cup \Sigma \Rightarrow C
	}
\end{mathpar}

\section{Basic Properties}
\label{sec:basic-properties}
\begin{lemma}\label{lem:weakening}
	For any value term $V$, the following rule is admissible:
	\begin{mathpar}
		\inferrule{
			\Gamma, \Delta \vdash V : A
		}{
			\Gamma, x : B, \Delta \vdash V : A
		}
	\end{mathpar}
	The same holds for computation terms and handlers.
\end{lemma}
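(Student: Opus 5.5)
The plan is a simultaneous structural induction on typing derivations for value terms, computation terms and handlers, since the three judgement forms are mutually defined. For a derivation of $\Gamma, \Delta \vdash V : A$ (resp.\ $\Gamma, \Delta \vdash M : C$, $\Gamma, \Delta \vdash H : \Sigma \Rightarrow C$), we build a derivation of the same judgement in the extended context $\Gamma, x : B, \Delta$. Because $\Delta$ is arbitrary, the statement is already general enough for the induction to go through binders. We assume, as usual with implicit $\alpha$-conversion, that the new variable $x$ is fresh for $\Gamma$, $\Delta$ and all variables bound in the term.

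The variable rule is the base case. If $(y : A) \in \Gamma, \Delta$, then also $(y : A) \in \Gamma, x : B, \Delta$, since $x$ is fresh and so cannot shadow $y$. The axiom $\Gamma,\Delta \vdash \emptyset : \emptyset \Rightarrow C$ is likewise immediate.

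The rules with no binders are handled directly by the induction hypotheses on the premises. These are tuples, projections, application, $\return{V}$ and $\mathtt{op}(V)$; the side condition $(\mathtt{op} : A \rightarrowtriangle B) \in \Sigma$ does not mention the context.

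The rules that extend the context are where the generality in $\Delta$ is needed. These are $\lambda y. M$, $\letin{y}{M}{N}$, $\handlewithto{M}{H}{y}{N}$, and the handler clause rule with premise $\Gamma, \Delta, y : A_{\mathtt{op}}, k : B_{\mathtt{op}} \to C \vdash M_{\mathtt{op}} : C$. For each of them we apply the induction hypothesis with $\Delta$ replaced by $\Delta, y : A$ (or by $\Delta, y : A_{\mathtt{op}}, k : B_{\mathtt{op}}\to C$). This gives the premise in context $\Gamma, x : B, \Delta, y : A$, and we then reapply the same rule.

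The only delicate point is the freshness convention, which ensures that a bound variable does not coincide with $x$. We handle it by $\alpha$-renaming bound variables before applying the induction. This is permitted since typing is invariant under $\alpha$-conversion, and the paper adopts implicit $\alpha$-conversion. With that in place, every case is routine, and the same induction simultaneously covers computation terms and handlers as claimed.
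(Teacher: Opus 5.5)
Your proposal is correct and matches the paper's proof: a simultaneous induction on typing derivations for value terms, computation terms and handlers. You generalise over $\Delta$ so that binders ($\lambda$, let, handle-with, operation clauses) go through, and you take the usual freshness/$\alpha$-conversion convention for $x$.
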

\begin{proof}
	Straightforward by (simultaneous) induction on type derivations.
\end{proof}

\begin{definition}
	A \emph{substitution} is a finite mapping $\sigma$ from variables to value terms.
	The application of a substitution $\sigma$ to a term $M$, denoted by $M[\sigma]$, is defined by replacing each free occurrence of a variable $x$ in $M$ with $\sigma(x)$ if $x$ is in the domain of $\sigma$; otherwise, $x$ remains unchanged.
	Concrete definitions by induction on the structure of terms are standard and omitted here.
	If the domain of $\sigma$ is $\{ x_1, \dots, x_n \}$, and $\sigma(x_i) = V_i$ for each $i = 1, \dots, n$, then we often write $M[\sigma]$ as $M[V_1/x_1, \dots, V_n/x_n]$.
\end{definition}

\begin{definition}
	Let $\sigma$ be a substitution.
	We write $\sigma : \Gamma \to \Delta$ if the domain of $\sigma$ is the set of variables in the context $\Delta$ and for every $x : A$ in the context $\Delta$, we have $\Gamma \vdash \sigma(x) : A$.
\end{definition}

\begin{lemma}
	For any value term $V$, the following rule is admissible:
	\begin{mathpar}
		\inferrule{
			\Delta \vdash V : B \\
			\sigma : \Gamma \to \Delta
		}{
			\Gamma \vdash V[\sigma] : B
		}
	\end{mathpar}
	The same holds for computation terms and handlers.
\end{lemma}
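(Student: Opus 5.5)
The substitution lemma follows by simultaneous induction on the typing derivations of value terms, computation terms and handlers, proving the three statements together. The induction is on the derivation of $\Delta \vdash V : B$ (respectively $\Delta \vdash M : C$ and $\Delta \vdash H : \Sigma \Rightarrow C$), and it is carried out for an arbitrary $\Gamma$ and an arbitrary $\sigma : \Gamma \to \Delta$. Quantifying over all $\Gamma$ and $\sigma$ in the induction hypothesis is essential for the binder cases.

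The base case is the variable rule. If $V = x$ with $(x : B) \in \Delta$, then $x$ lies in the domain of $\sigma$ and $V[\sigma] = \sigma(x)$. By the definition of $\sigma : \Gamma \to \Delta$ we have $\Gamma \vdash \sigma(x) : B$.

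The binder-free rules are immediate from the induction hypothesis, because substitution commutes with the term constructor. These are tuples, projections, application, $\mathtt{return}$, operation invocation $\mathtt{op}(V)$ (the side condition on $\Sigma$ is untouched by $\sigma$), and the empty handler. For each, we apply the hypothesis to every premise and reassemble the rule.

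The cases with binders need an extended substitution. These are $\lambda x. M$, the body $N$ of $\letin{x}{M}{N}$, the return clause of $\handlewithto{M}{H}{x}{N}$, and the operation clauses $\mathtt{op}(x, k) \mapsto M_{\mathtt{op}}$.

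1. By $\alpha$-conversion, assume the bound variables ($x$, and also $k$ for operation clauses) are fresh for both $\Gamma$ and $\Delta$.
2. Define $\sigma' = \sigma \cup \{x \mapsto x\}$, and check that $\sigma' : \Gamma, x : A \to \Delta, x : A$.
   - For $y$ in $\Delta$, we need $\Gamma, x : A \vdash \sigma(y) : A_y$. This follows from $\Gamma \vdash \sigma(y) : A_y$ by Lemma~\ref{lem:weakening}, using freshness of $x$.
   - For $x$ itself, the variable rule gives $\Gamma, x : A \vdash x : A$.
3. Apply the induction hypothesis to the premise with $\sigma'$. Then note that $M[\sigma'] = M[\sigma]$ with $x$ left bound, which is exactly how capture-avoiding substitution under a binder is defined.
4. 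For operation clauses, do the same with both $x$ and $k$ added to $\sigma$, weakening twice.

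The main (mild) obstacle is this binder step. It requires the hypothesis to be stated for arbitrary $\Gamma$ and $\sigma$, the freshness convention, and weakening (Lemma~\ref{lem:weakening}) applied pointwise to every value in the range of $\sigma$. Everything else is routine.
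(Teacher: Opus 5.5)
Your proof is correct and takes the same approach as the paper: simultaneous induction on the typing derivations. In the binder cases you extend $\sigma$ with $z \mapsto z$ and use weakening (Lemma~\ref{lem:weakening}), which is exactly the device the paper spells out in its proof of Lemma~\ref{lem:subst-assoc}.
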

\begin{proof}
	The proof is by (simultaneous) induction on type derivations of $V$.
\end{proof}

\begin{definition}
	We define the composition of substitutions as follows.
	Given two substitutions $\sigma_1$ and $\sigma_2$, their composition $\sigma_2 \comp \sigma_1$ is the substitution defined by $(\sigma_2 \comp \sigma_1)(x) \coloneqq \sigma_2(x)[\sigma_1]$.
\end{definition}

\begin{lemma}\label{lem:subst-assoc}
	Let $\sigma_1 : \Gamma_1 \to \Gamma_2$ and $\sigma_2 : \Gamma_2 \to \Gamma_3$ be substitutions, and $\Gamma_3 \vdash V : A$ be a value term.
	Then, the following definitional equality holds:
	\[ V[\sigma_2 \comp \sigma_1] = V[\sigma_2][\sigma_1]. \]
	The same holds for computation terms and handlers.
\end{lemma}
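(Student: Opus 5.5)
We prove the equation by simultaneous structural induction on the term $V$ (and likewise on computation terms $M$ and handlers $H$, since the three syntactic classes are mutually defined), using the concrete clause-by-clause definition of substitution. The induction hypothesis is taken to hold for \emph{all} pairs of substitutions $\sigma_1 : \Gamma_1 \to \Gamma_2$, $\sigma_2 : \Gamma_2 \to \Gamma_3$. This generality is needed because the contexts change when we pass under binders.

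The base case is a variable $x$ with $(x : A) \in \Gamma_3$. Here $x[\sigma_2 \comp \sigma_1] = (\sigma_2 \comp \sigma_1)(x) = \sigma_2(x)[\sigma_1] = x[\sigma_2][\sigma_1]$ by the definition of composition. Since $\sigma_2 : \Gamma_2 \to \Gamma_3$ has domain exactly the variables of $\Gamma_3$, no ``unchanged variable'' case arises. The binder-free constructors are immediate from the induction hypothesis on their immediate subterms: tuples $\langle V_1, \dots, V_n \rangle$, projections $\pi_i\ V$, applications $V\ W$, $\return{V}$, $\mathtt{op}(V)$, and the empty handler $\emptyset$.

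The binding constructors carry the content. These are $\lambda x. M$, $\letin{x}{M}{N}$, $\handlewithto{M}{H}{x}{N}$, and handler clauses $\mathtt{op}(x, k) \mapsto M$. Take $\lambda x. M$ with $\Gamma_3, x : A \vdash M : C$. Using $\alpha$-conversion (which the equational theory permits implicitly), we choose $x$ fresh for $\Gamma_1, \Gamma_2, \Gamma_3$ and for the free variables in the ranges of $\sigma_1, \sigma_2$. Substitution under the binder is defined by extending with the identity on $x$: $(\lambda x. M)[\sigma] = \lambda x. M[\sigma, x/x]$. We then form the extended substitutions $(\sigma_1, x/x) : \Gamma_1, x : A \to \Gamma_2, x : A$ and $(\sigma_2, x/x) : \Gamma_2, x : A \to \Gamma_3, x : A$. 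Typing of $\sigma_i(y)$ in the extended contexts follows from weakening (Lemma~\ref{lem:weakening}). The key auxiliary fact is
\[ (\sigma_2, x/x) \comp (\sigma_1, x/x) = (\sigma_2 \comp \sigma_1, x/x). \]
On $x$ this holds because $x[\sigma_1, x/x] = x$. On any other $y$ it holds because $\sigma_2(y)[\sigma_1, x/x] = \sigma_2(y)[\sigma_1]$, as $x$ is not free in $\sigma_2(y)$ by freshness. With this identity, the induction hypothesis on $M$ gives
\[ (\lambda x. M)[\sigma_2 \comp \sigma_1] = \lambda x. M[\sigma_2][\sigma_1] = (\lambda x. M)[\sigma_2][\sigma_1]. \]
The cases $\mathtt{let}$ and handle-with are the same, with one binder in $N$. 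Handler clauses extend by the two variables $x$ and $k$ at once.

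The main obstacle is bookkeeping rather than mathematics. We must fix the definition of capture-avoiding substitution so that it extends $\sigma$ by the identity on bound variables. We must also justify that the freshness choice can be made uniformly. If substitution is instead defined only up to $\alpha$-equivalence, the equation is read as equality of $\alpha$-classes, and the extension lemma above is where that convention is used.
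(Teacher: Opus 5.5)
Your proposal is correct and follows essentially the same route as the paper. Both use simultaneous structural induction, handling each binder by extending both substitutions with the identity on the bound variable (typed via weakening) and applying the induction hypothesis to the extended substitutions. Your explicit identity $(\sigma_2, x/x) \comp (\sigma_1, x/x) = (\sigma_2 \comp \sigma_1, x/x)$, justified by freshness, is exactly the step the paper leaves implicit when it rewrites $M[\sigma_2 \comp \sigma_1]$ as $M[\sigma_2[z \mapsto z] \comp \sigma_1[z \mapsto z]]$.
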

\begin{proof}
	We prove the statement for value terms, computation terms and handlers simultaneously.
	The proof is by induction on $V$.
	Most cases are straightforward.
	If a term constructor involves variable binding, we use weakening (Lemma~\ref{lem:weakening}) to extend $\sigma : \Gamma \to \Delta$ to $\sigma[z \mapsto z] : \Gamma, z : B \to \Delta, z : B$ where $z$ is a variable bound in $V$.
	As is standard, $\sigma[z \mapsto W]$ is the substitution defined as follows.
	\[ \sigma[z \mapsto W](x) \coloneqq \begin{cases}
		W & \text{if } x = z \\
		\sigma(x) & \text{otherwise}
	\end{cases} \]
	For example, if $V = \lambda z : B. M$, then by applying the induction hypothesis to $\sigma_1[z \mapsto z] : \Gamma_1, z : B \to \Gamma_2, z : B$, $\sigma_2[z \mapsto z] : \Gamma_2, z : B \to \Gamma_3, z : B$, and $\Gamma_3, z : B \vdash M : C$, we have
	\begin{align}
		M[\sigma_2 \comp \sigma_1] &= M[\sigma_2[z \mapsto z] \comp \sigma_1[z \mapsto z]] \\
		&= M[\sigma_2[z \mapsto z]][\sigma_1[z \mapsto z]] \\
		&= M[\sigma_2][\sigma_1]
	\end{align}
	By the congruence rule for lambda abstraction and by definition of substitution, we have
	\[ (\lambda z. M)[\sigma_2 \comp \sigma_1] = (\lambda z. M)[\sigma_2][\sigma_1]. \qedhere \]
\end{proof}

\end{toappendix}

\section{Denotational Semantics}
\label{sec:semantics}
We extend the categorical semantics of fine-grain call-by-value (i.e., $\lambda_c$-models in Section~\ref{sec:preliminaries}) to interpret effect handlers.

\subsection{Models and Interpretations}
We first informally discuss the structures needed to interpret signatures, computation types, effect handlers, and handle-with in $\EffectHandlerCalculus$.
Then, we formally define models of $\EffectHandlerCalculus$.

\paragraph{Semantics of signatures}
We define a \emph{semantic signature} in a category $\category{C}$ as a partial function $S$ from operation symbols to pairs of objects in $\category{C}$ such that the domain of $S$ is finite.
We often use set-like notation to denote a semantic signature:
\[ S \quad=\quad \{ \mathtt{op}_1 : A_1 \rightarrowtriangle B_1,\quad \mathtt{op}_2 : A_2 \rightarrowtriangle B_2, \dots \} \]
We also write $(\mathtt{op} : A \rightarrowtriangle B) \in S$ if $S(\mathtt{op}) = (A, B) \in \category{C}^2$.
Let $\SemanticEffectType{\category{C}}$ be the collection of all semantic signatures in a category $\category{C}$.
We will see in Definition~\ref{def:interpretation-lambda-eff} that a (syntactic) signature $\Sigma$ is interpreted as a semantic signature $\interpret{\Sigma} \in \SemanticEffectType{\category{C}}$.

\paragraph{Semantics of computation types}
Computation types $A ! \Sigma$ are interpreted using a strong monad: $\interpret{A ! \Sigma} = T_{\interpret{\Sigma}} \interpret{A}$.
Here, we need a \emph{family} of strong monads $\{ T_S \}_{S \in \SemanticEffectType{\category{C}}}$ indexed by semantic signatures rather than a single strong monad, since sequencing two computations is defined only when their signatures are the same (recall Figure~\ref{fig:selected-typing-rules}).
The interaction between different signatures only occurs when we handle effects, which we will discuss below.

\paragraph{Semantics of effect handlers}
By the typing rule for handlers, a handler of type $\Sigma \Rightarrow C$ consists of a finite family of terms $\lambda x. \lambda k. M_{\mathtt{op}} : A_{\mathtt{op}} \to (B_{\mathtt{op}} \to C) \to C$ for each $(\mathtt{op} : A_{\mathtt{op}} \rightarrowtriangle B_{\mathtt{op}}) \in \Sigma$.
Thus, we define its semantic counterpart as follows.
\begin{definition}
	Let $\category{C}$ be a cartesian category and $T$ be a strong monad on $\category{C}$ such that $\category{C}$ has Kleisli exponentials for $T$.
	For any semantic signature $S$ and $X \in \category{C}_{T}$, we define $\HBundle{S}{X} \in \category{C}$ by
	\[ \HBundle{S}{X} \quad\coloneqq\quad \prod_{(\mathtt{op} : A_{\mathtt{op}} \rightarrowtriangle B_{\mathtt{op}}) \in S} (\KleisliExp{T}{A_{\mathtt{op}} \times (\KleisliExp{T}{B_{\mathtt{op}}}{X})}{X}). \]
	Note that $\HBundle{S}{X}$ can be seen as the object of $S$-algebra structures on $X$ in the Kleisli category $\category{C}_T$.
\end{definition}

Then, an effect handler $\Gamma \vdash H : \Sigma \Rightarrow A ! \Sigma'$ is interpreted as $\interpret{H} : \interpret{\Gamma} \to \HBundle{\interpret{\Sigma}}{J \interpret{A}}$ where $J : \category{C} \to \category{C}_{T_{\interpret{\Sigma'}}}$.

\paragraph{Interpretation of handle-with}
The interpretation of a handle-with expression should be constructed from the interpretation of its sub-terms. By the typing rule (Figure~\ref{fig:selected-typing-rules}), this is achieved if we have a morphism of the form
\[ \interpret{A ! \Sigma} \times \interpret{\Sigma \Rightarrow B ! \Sigma'} \times \interpret{A \to B ! \Sigma'} \to \interpret{B ! \Sigma'} \]
for each value type $A, B$ and signature $\Sigma, \Sigma'$.
As we have discussed above, $\interpret{\Sigma \Rightarrow B ! \Sigma'}$ and $\interpret{A ! \Sigma}$ correspond to $\HBundle{\interpret{\Sigma}}{J \interpret{B}}$ and $T_{\interpret{\Sigma}} \interpret{A}$, respectively.
By factoring the above morphism through the evaluation morphism $T_{\interpret{\Sigma}} \interpret{A} \times (\KleisliExp{T_{\interpret{\Sigma'}}}{\interpret{A}}{J \interpret{B}}) \to T_{\interpret{\Sigma}} T_{\interpret{\Sigma'}} \interpret{B}$, we can eliminate $A$ and instead consider a morphism of the following form:
\[ \mathbf{handle} : \HBundle{\interpret{\Sigma}}{J \interpret{B}} \times T_{\interpret{\Sigma}} T_{\interpret{\Sigma'}} \interpret{B} \to T_{\interpret{\Sigma'}} \interpret{B} \]

The morphism $\mathbf{handle}$ is required to satisfy certain equations to capture the equations for effect handlers in Figure~\ref{fig:effect-handler-equations}, as we will formalize in Definition~\ref{def:effect-handler-model}.

\paragraph{Models and interpretations of $\EffectHandlerCalculus$}
By summarizing the discussions above, we define models of $\EffectHandlerCalculus$ as follows.

\begin{figure*}[tb]
	\centering
	\begin{tikzcd}
		\HBundle{S}{X} \times K^{T_{S'}} X \ar[r, "\identity{} \times \eta^{T_S}"] \ar[rd, swap, "\pi_2"] & \HBundle{S}{X} \times T_S K^{T_{S'}} X \ar[d, "\mathbf{handle}_{S, S', X}"] \\
		& K^{T_{S'}} X
	\end{tikzcd}
	\hspace{5em}
	\begin{tikzcd}
		\HBundle{S}{X} \times {T_S}^2 K^{T_{S'}} X \ar[r, "\identity{} \times \mu^{T_S}"] \ar[d, "{\langle \pi_1, \strength^{T_S} \rangle}"] & \HBundle{S}{X} \times T_S K^{T_{S'}} X \ar[dd, "\mathbf{handle}_{S, S', X}"] \\
		\HBundle{S}{X} \times T_S (\HBundle{S}{X} \times T_S K^{T_{S'}} X) \ar[d, "\identity{} \times T_S \mathbf{handle}_{S, S', X}"] \\
		\HBundle{S}{X} \times T_S K^{T_{S'}} X \ar[r, "\mathbf{handle}_{S, S', X}"] & K^{T_{S'}} X
	\end{tikzcd} \\[2em]
	\begin{tikzcd}
		\HBundle{S}{X} \times (A_{\mathtt{op}} \times (\KleisliExp{T_{S'}}{B_{\mathtt{op}}}{X})) \ar[r, "\identity{} \times \MixedAlgOp{S'}{X}{\interpret{\mathtt{op}}^T_S}"] \ar[d, "(\pi_{\mathtt{op}} \times \identity{})"] & \HBundle{S}{X} \times T_S K^{T_{S'}} X \ar[d, "\mathbf{handle}_{S, S', X}"] \\
		(\KleisliExp{T_{S'}}{(A_{\mathtt{op}} \times (\KleisliExp{T_{S'}}{B_{\mathtt{op}}}{X}))}{X}) \times (A_{\mathtt{op}} \times (\KleisliExp{T_{S'}}{B_{\mathtt{op}}}{X})) \ar[r, "\eval"] & K^{T_{S'}} X
	\end{tikzcd}
	\caption{Commutative diagrams for \eqref{eq:handle-unit}, \eqref{eq:handle-multiplication}, and \eqref{eq:handle-operation}.}
	\label{fig:effect-handling-strong-monad}
\end{figure*}

\begin{definition}[$\EffectHandlerCalculus$-model]\label{def:effect-handler-model}
	A \emph{$\EffectHandlerCalculus$-model} $(\category{C}, T, \interpret{{-}}, \mathbf{handle})$ is a tuple consisting of the following data:
	\begin{itemize}
		\item A cartesian category $\category{C}$.
		\item A family of strong monads $\{ T_S \}_{S \in \SemanticEffectType{\category{C}}}$ on $\category{C}$ indexed by semantic signatures.
		\item For each semantic signature $S \in \SemanticEffectType{\category{C}}$ and $(\mathtt{op} : A_{\mathtt{op}} \rightarrowtriangle B_{\mathtt{op}}) \in S$, a morphism $\interpret{\mathtt{op}}^T_S : A_{\mathtt{op}} \to T_S B_{\mathtt{op}}$ in $\category{C}$, which we often simply write as $\interpret{\mathtt{op}}$.
		\item For each semantic signature $S, S' \in \SemanticEffectType{\category{C}}$ and $X \in \category{C}_{T_{S'}}$, a morphism in $\category{C}$ of the following form.
		\[ \mathbf{handle}^T_{S, S', X} : \HBundle{S}{X} \times T_S K^{T_{S'}} X \to K^{T_{S'}} X \]
	\end{itemize}
	These data are required to satisfy the following conditions.
	\begin{itemize}
		\item The category $\category{C}$ has Kleisli exponentials for each $T_{S}$.
		\item The morphism $\mathbf{handle}^T_{S, S', X}$ satisfies the following three equations.
		\begin{align}
			&\mathbf{handle}_{S, S', X} \comp (\identity{} \times \eta) \quad=\quad \pi_2 \\
			&\qquad: \HBundle{S}{X} \times K^{T_{S'}} X \to K^{T_{S'}} X \tag{handle-unit} \label{eq:handle-unit} \\
			&\mathbf{handle}_{S, S', X} \comp (\identity{} \times \mu) \\
			&\quad=\quad \mathbf{handle}_{S, S', X} \comp (\identity{} \times T_S \mathbf{handle}_{S, S', X}) \comp \langle \pi_1, \strength \rangle \\
			&\qquad\qquad: \HBundle{S}{X} \times {T_S}^2 K^{T_{S'}} X \to K^{T_{S'}} X \tag{handle-mult} \label{eq:handle-multiplication} \\
			&\mathbf{handle}_{S, S', X} \comp (\identity{} \times \MixedAlgOp{S'}{X}{\interpret{\mathtt{op}}^T_S}) \quad=\quad \eval \comp (\pi_{\mathtt{op}} \times \identity{}) \\
			&\quad: \HBundle{S}{X} \times (A_{\mathtt{op}} \times (\KleisliExp{T_{S'}}{B_{\mathtt{op}}}{X})) \to K^{T_{S'}} X \tag{handle-op} \label{eq:handle-operation}
		\end{align}
	\end{itemize}
	Here, $\pi_{\mathtt{op}} : \HBundle{S}{X} \to \KleisliExp{T_{S'}}{(A_{\mathtt{op}} \times (\KleisliExp{T_{S'}}{B_{\mathtt{op}}}{X}))}{X}$ is the projection, and $\MixedAlgOp{S'}{X}{\interpret{\mathtt{op}}^T_S}$ is a morphism defined as follows.
	\begin{align}
		&\MixedAlgOp{S'}{X}{\interpret{\mathtt{op}}^T_S} \quad\coloneqq\quad T_S \eval \comp \strength^{T_S} \comp \braiding \comp (\interpret{\mathtt{op}}^T_S \times \identity{}) \\
		&\quad\qquad:\quad A_{\mathtt{op}} \times (\KleisliExp{T_{S'}}{B_{\mathtt{op}}}{X}) \to T_S K^{T_{S'}} X
	\end{align}
	Here, $\braiding_{X, Y} : X \times Y \to Y \times X$ is the canonical isomorphism in $\category{C}$.
	Figure~\ref{fig:effect-handling-strong-monad} shows the commutative diagrams corresponding to the above three equations.
	Note that $\MixedAlgOp{S'}{X}{\interpret{\mathtt{op}}^T_S}$ is defined similarly to the algebraic operation corresponding to a generic effect~\cite{PlotkinApplCategStruct2003}, but it mixes two different signatures $S$ and $S'$: if $S = S'$, then $\mu^{T_S} \comp \MixedAlgOp{S}{X}{\interpret{\mathtt{op}}^T_S} : A_{\mathtt{op}} \times (\KleisliExp{T_{S}}{B_{\mathtt{op}}}{X}) \to K^{T_{S}} X$ is the algebraic operation corresponding to the generic effect $\interpret{\mathtt{op}}^T_S : A_{\mathtt{op}} \to T_S B_{\mathtt{op}}$.
\end{definition}

One way to understand the three equations is to see that they correspond to the three equations for effect handlers in Figure~\ref{fig:effect-handler-equations}.
Another way is to see the morphism $\mathbf{handle}^T_{S, S', X}$ as a mapping from an effect handler to an Eilenberg--Moore $T_S$-algebra on $K^{T_{S'}} X$.
Then, two equations \eqref{eq:handle-unit} and \eqref{eq:handle-multiplication} correspond to the laws for Eilenberg--Moore algebras.
The last equation \eqref{eq:handle-operation} states that the interpretation of an operation $\mathtt{op}$ in the Eilenberg--Moore algebra coincides with the interpretation of the corresponding operation clause in the handler.
We will further discuss the relation to Eilenberg--Moore algebras in Section~\ref{sec:em-algebras}.

\begin{definition}[interpretation]\label{def:interpretation-lambda-eff}
	Let $(\category{C}, T, \interpret{{-}}, \mathbf{handle})$ be a $\EffectHandlerCalculus$-model.
	We define the interpretation of $\EffectHandlerCalculus$ as follows.
	The interpretation of value types $\interpret{A} \in \category{C}$ is defined similarly to that for fine-grain call-by-value: function types and product types are interpreted by Kleisli exponentials and products, respectively.
	\[ \interpret{A \to B ! \Sigma} \coloneqq \KleisliExp{T_{\interpret{\Sigma}}}{\interpret{A}}{J^{} \interpret{B}} \qquad
	\interpret{\prod_{i = 1}^n A_i} \coloneqq \prod_{i=1}^n \interpret{A_i} \]
	Computation types are interpreted as $\interpret{A ! \Sigma} \coloneqq J^{T_{\interpret{\Sigma}}} \interpret{A} \in \category{C}_{T_{\interpret{\Sigma}}}$.
	Signatures are interpreted as semantic signatures in $\category{C}$ in the obvious way.
	\[ \interpret{\Sigma} \coloneqq \{ \mathtt{op} : \interpret{A_{\mathtt{op}}} \rightarrowtriangle \interpret{B_{\mathtt{op}}} \mid (\mathtt{op} : A_{\mathtt{op}} \rightarrowtriangle B_{\mathtt{op}}) \in \Sigma \} \]
	The interpretation of value types naturally induces the interpretation of contexts $\interpret{x_1 : A_1, \dots, x_n : A_n} \coloneqq \interpret{A_1} \times \dots \times \interpret{A_n} \in \category{C}$.
	The interpretation of terms and handlers is defined by simultaneous induction on their typing derivations.
	\begin{align}
		&\interpret{\Gamma \vdash V : A} &&: \interpret{\Gamma} \to \interpret{A} \\
		&\interpret{\Gamma \vdash M : A ! \Sigma} &&: \interpret{\Gamma} \to T_{\interpret{\Sigma}} \interpret{A} \\
		&\interpret{\Gamma \vdash H : \Sigma \Rightarrow A ! \Sigma'} &&: \interpret{\Gamma} \to \HBundle{\interpret{\Sigma}}{J^{T_{\interpret{\Sigma'}}} \interpret{A}}
	\end{align}
	We omit most cases for brevity (see Appendix~\ref{sec:proof-of-soundness} for details) and only give the interpretation related to effect handlers here.
	The interpretation of $\Gamma \vdash \mathtt{op}(V) : A ! \Sigma$ is defined by the corresponding morphism in the $\EffectHandlerCalculus$-model: $\interpret{\mathtt{op}(V)} = \interpret{\mathtt{op}}^T_{\Sigma} \comp \interpret{V}$.
	The interpretation of a handler $\Gamma \vdash H : \Sigma \Rightarrow A ! \Sigma'$ is defined by tupling the interpretations of its operation clauses:
	\[ \Lambda \interpret{M_{\mathtt{op}}} : \interpret{\Gamma} \to \KleisliExp{T_{\interpret{\Sigma'}}}{(\interpret{A_{\mathtt{op}}} \times (\KleisliExp{T_{\interpret{\Sigma'}}}{\interpret{B_{\mathtt{op}}}}{J \interpret{A}}))}{J \interpret{A}} \]
	where $(\mathtt{op}(x, k) \mapsto M_{\mathtt{op}}) \in H$.
	Finally, the interpretation of handle-with $\Gamma \vdash \handlewithto{M}{H}{x}{N} : A ! \Sigma'$ is defined using the morphism $\mathbf{handle}_{\interpret{\Sigma}, \interpret{\Sigma'}, J \interpret{A}}$ of the effect-handling strong monad as follows.
	\begin{align}
		&\interpret{\handlewithto{M}{H}{x}{N}} \\
		&\coloneqq\quad \mathbf{handle}_{\interpret{\Sigma}, \interpret{\Sigma'}, J \interpret{A}} \comp \tupling{\interpret{H}}{T_{\interpret{\Sigma}} \interpret{N} \comp \strength^{T_{\interpret{\Sigma}}} \comp \tupling{\identity{}}{\interpret{M}}}
	\end{align}
\end{definition}

As expected, the denotational semantics is sound with respect to the equational theory of $\EffectHandlerCalculus$.
\begin{theorem}[soundness]\label{thm:soundness}
	If $\Gamma \vdash V = W : A$ is derivable, then $\interpret{V} = \interpret{W}$ for any $\EffectHandlerCalculus$-model.
	Similarly, if $\Gamma \vdash M = N : C$ or $\Gamma \vdash H = H' : \Sigma \Rightarrow C$ is derivable, then $\interpret{M} = \interpret{N}$ or $\interpret{H} = \interpret{H'}$, respectively.
	\qed
\end{theorem}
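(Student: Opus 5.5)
The argument is by induction on the derivation of the equality judgement, carried out simultaneously for values, computations and handlers. Reflexivity, symmetry and transitivity hold because equality of morphisms is an equivalence relation. Each congruence rule holds because the interpretation of a term constructor is a fixed composite of the interpretations of its immediate subterms; for binders such as $\lambda$, $\mathtt{let}$ and handle-with, the subterm interpretations are simply precomposed or curried. The handler congruence rule \textsc{Handler-Congr} is immediate, since $\interpret{H}$ is the tupling of the morphisms $\Lambda\interpret{M_{\mathtt{op}}}$. What remains is the axioms, which I verify one at a time.

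The first step is a \emph{substitution lemma}, proved by simultaneous induction on typing derivations. For a substitution $\sigma : \Gamma \to \Delta$ it states $\interpret{V[\sigma]} = \interpret{V} \comp \interpret{\sigma}$, where $\interpret{\sigma} = \langle \interpret{\sigma(x)} \rangle_{x \in \Delta}$. The analogous statements $\interpret{M[\sigma]} = \interpret{M}\comp\interpret{\sigma}$ and $\interpret{H[\sigma]} = \interpret{H}\comp\interpret{\sigma}$ hold for computations and handlers. Weakening is the special case where $\sigma$ is a projection. The cases for $\lambda$ and the handler clauses use naturality of $\Lambda$. The handle-with case uses naturality of strength together with the fact that precomposition commutes with $\mathbf{handle} \comp \langle \interpret{H}, -\rangle$.

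With this lemma, the $\beta$/$\eta$ rules and the monad laws go exactly as in the standard soundness proof of $\lambda_c$-models for fine-grain call-by-value \cite{LevyInformationandComputation2003}. Each $T_S$ is a strong monad with Kleisli exponentials, so this part is routine. The operation case reduces to $\interpret{\mathtt{op}(V)} = \interpret{\mathtt{op}}\comp\interpret{V}$ and needs nothing further.

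The substantive work is the three handler equations, each matched to one axiom of Definition~\ref{def:effect-handler-model}.
\begin{itemize}
\item \eqref{eq:handle-return}: naturality of strength with respect to $\eta$ gives $T\interpret{M}\comp\strength\comp\langle\identity{},\eta\comp\interpret{V}\rangle = \eta\comp\interpret{M}\comp\langle\identity{},\interpret{V}\rangle$. Then \eqref{eq:handle-unit} together with the substitution lemma yields $\interpret{M[V/x]}$.
\item \eqref{eq:handle-let}: the interpretation of $\letin{x}{L}{M}$ involves $\mu$ applied to $T_S(\interpret{M})$ after strength. Pushing $T_S\interpret{N}$ through, I rewrite the left-hand side as $\mathbf{handle}\comp(\identity{}\times\mu)$ applied to an element of $T_S^2K X$. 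Applying \eqref{eq:handle-multiplication} turns this into an outer $\mathbf{handle}$ of $T_S(\mathbf{handle}\comp\ldots)$. Finally, strength coherence (the strength laws for $\mu$, and the associativity of strength with products, which carries the environment $\interpret{\Gamma}$ and $\interpret{H}$ inside) identifies the result with the interpretation of the nested handle-with. In the nested term, $\interpret{H}$ is weakened to $\Gamma,x:A$, which the weakening lemma covers.
\item \eqref{eq:handle-op}: the left-hand side is $\mathbf{handle}\comp\langle\interpret{H}, T_S\interpret{M}\comp\strength\comp\langle\identity{},\interpret{\mathtt{op}}\comp\interpret{V}\rangle\rangle$. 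I factor $\interpret{M}$ as $\eval\comp(\Lambda\interpret{M}\times\identity{})$ and use naturality of strength to expose $\MixedAlgOp{S'}{X}{\interpret{\mathtt{op}}}\comp\langle\interpret{V},\interpret{\lambda x.M}\rangle$. Then \eqref{eq:handle-operation} gives $\eval\comp\langle\Lambda\interpret{M_{\mathtt{op}}},\langle\interpret{V},\interpret{\lambda x.M}\rangle\rangle = \interpret{M_{\mathtt{op}}}\comp\langle\identity{},\interpret{V},\interpret{\lambda x. M}\rangle$, which equals $\interpret{M_{\mathtt{op}}[V/x,\lambda x.M/k]}$ by the substitution lemma.
\end{itemize}

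I expect the main obstacle to be the \eqref{eq:handle-let} case. There, the diagram chase involving strength, $\mu$, the context object and the tupling with $\interpret{H}$ needs careful bookkeeping of the strength coherence axioms. The most promising approach is to phrase everything with strong Kleisli extension $(-)^\sharp$, which has the required associativity law built in.
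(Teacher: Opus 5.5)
Your proposal follows the paper's proof essentially step for step. The paper likewise proves weakening and substitution lemmas for the interpretation and defers the $\beta$/$\eta$ and monad laws to the standard fine-grain call-by-value argument. It then discharges \eqref{eq:handle-return}, \eqref{eq:handle-let} and \eqref{eq:handle-op} via \eqref{eq:handle-unit}, via \eqref{eq:handle-multiplication} plus a strength calculation, and via \eqref{eq:handle-operation} with the identity $T_\Sigma \interpret{M} \comp \strength \comp \tupling{\identity{}}{\interpret{\mathtt{op}} \comp \interpret{V}} = \MixedAlgOp{\Sigma'}{X}{\interpret{\mathtt{op}}} \comp \tupling{\interpret{V}}{\Lambda \interpret{M}}$, exactly as you describe.

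The one point to adjust is the order of the auxiliary lemmas: prove weakening first by its own induction, as the paper does. The binder cases of your substitution lemma need $\interpret{\Gamma, y : A \vdash \sigma(z)} = \interpret{\Gamma \vdash \sigma(z)} \comp \pi_1$ for the substituted terms $\sigma(z)$, so weakening cannot simply be read off afterwards as the projection special case.
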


By Proposition~\ref{prop:operational-equational-soundness}, we also have the soundness of the operational semantics with respect to the denotational semantics.
\begin{corollary}
	The operational semantics is sound with respect to the denotational semantics: if $M \leadsto N$, then $\interpret{M} = \interpret{N}$.
	\qed
\end{corollary}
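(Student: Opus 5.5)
The statement to prove is the Corollary: the operational semantics is sound for the denotational semantics. The plan is a direct composition of two earlier results. Let $M \leadsto N$ with $\vdash M : C$. First, Proposition~\ref{prop:operational-equational-soundness} gives that $\vdash M = N : C$ is derivable in the equational theory. Second, Theorem~\ref{thm:soundness} applied to this derivable equation gives $\interpret{M} = \interpret{N}$ in every $\EffectHandlerCalculus$-model.

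One point needs care. Proposition~\ref{prop:operational-equational-soundness} is stated for a typed judgement, whereas $\leadsto$ is defined on untyped closed computation terms. Both $M$ and $N$ must therefore be well typed at the same type $C$, or the statement $\interpret{M} = \interpret{N}$ is meaningless. This is supplied by the type-preservation half of the proof of Proposition~\ref{prop:type-soundness}: if $\vdash M : C$ and $M \leadsto N$, then $\vdash N : C$. So I will read the corollary as quantifying over well-typed closed $M$, take $\Gamma$ to be the empty context, and note that $N$ is then typed at $C$ as well.

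With these two ingredients the corollary is immediate, and no new categorical reasoning is needed. If I were reproving the ingredients, the real work would lie in two places. In Proposition~\ref{prop:operational-equational-soundness}, the handling rule \eqref{eq:operational-handle-operation} has to be reduced to \eqref{eq:handle-op}, \eqref{eq:handle-let} and \eqref{eq:handle-return} by induction on the evaluation context (Lemma~\ref{lem:operational-handle-operation-sound}). In Theorem~\ref{thm:soundness}, the equations (handle-unit), (handle-mult) and (handle-op) must validate those three rules, using the strength laws and a semantic substitution lemma. Both results are stated earlier, so I do not expect a genuine obstacle; the only point requiring attention is the typing bookkeeping above.
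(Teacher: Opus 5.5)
Your proposal is correct and takes the same route as the paper, which obtains the corollary by composing Proposition~\ref{prop:operational-equational-soundness} with Theorem~\ref{thm:soundness}. Your remark that type preservation, from the proof of Proposition~\ref{prop:type-soundness}, puts $N$ at the same closed type $C$ is bookkeeping the paper leaves implicit.
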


\subsection{Completeness}
The denotational semantics is also complete with respect to the equational theory of $\EffectHandlerCalculus$.
The proof is done by constructing a term model, as in the case of fine-grain call-by-value \cite{LevyInformationandComputation2003}.

\begin{definition}\label{def:term-model-category}
	We define a category $\TermModel{\EffectHandlerCalculus}$ as follows.
	\begin{itemize}
		\item Objects in $\TermModel{\EffectHandlerCalculus}$ are value types $A$.
		\item Morphisms $V : A \to B$ are value terms $x : A \vdash V : B$ modulo equations $x : A \vdash V = W : B$.
		For notational simplicity, we often simply write $V$ instead of its equivalence class $[V]$.
		\item Identity morphisms are given by $x : A \vdash x : A$.
		\item The composition of two morphisms $V : A \to B$ and $W : B \to C$ is defined by substitution $V \comp W = V[W/x]$.
	\end{itemize}
\end{definition}

It is straightforward to see that $\TermModel{\EffectHandlerCalculus}$ is a cartesian category where categorical products are given by product types.

\begin{definition}\label{def:term-model-monad}
	We define a $\EffectHandlerCalculus$-model structure on the cartesian category $\TermModel{\EffectHandlerCalculus}$ as follows.
	\begin{itemize}
		\item For each semantic signature $S \in \SemanticEffectType{\TermModel{\EffectHandlerCalculus}}$, we define a strong monad $T_{S}$ by the following strong Kleisli triples.
		\begin{itemize}
			\item The object map is given by $T_S A \coloneqq \UnitType \to A ! S$.
			Here, we naturally identify a semantic signature $S$ in $\TermModel{\EffectHandlerCalculus}$ with a syntactic signature.
			\item For each type $A$, the unit $\eta_A : A \to T_S A$ is given by
			$x : A \vdash\ \lambda y. \return{x}\ : \UnitType \to A ! S$.
			\item For $x : A_1 \times A_2 \vdash V : \UnitType \to B ! S$, the strong Kleisli extension $x' : A_1 \times (\UnitType \to A_2 ! S) \vdash V^{\sharp} : \UnitType \to B ! S$ is given by
			$\lambda y. \letin{x_2}{\pi_2\ x'\ \langle\rangle}{V[\langle \pi_1\ x', x_2 \rangle/x]\ \langle\rangle}$.
		\end{itemize}
		\item The Kleisli exponential $\KleisliExp{T_{S}}{A}{B}$ is given by $A \to B ! S$.
		\item The morphism $\mathbf{handle}_{S, S', A}$ is defined by
		\begin{align}
			&x : \HBundle{S}{A!S'} \times (\UnitType \to (\UnitType \to A ! S') ! S) \\
			&\vdash \lambda z. \handlewithto{\pi_2\ x\ \langle\rangle}{H}{r}{r\ \langle\rangle} \quad: \UnitType \to A ! S'
		\end{align}
		where $H$ is an effect handler induced from $\pi_1\ x : \HBundle{S}{A!S'}$.
		\[ H\ \coloneqq\ \{ \mathtt{op}(y, k) \mapsto \pi_{\mathtt{op}}\ (\pi_1\ x)\ \langle y, k \rangle \mid (\mathtt{op} : A_{\mathtt{op}} \rightarrowtriangle B_{\mathtt{op}}) \in S \} \]
		\item The interpretation of operations is given as
		$x : A_{\mathtt{op}} \vdash \lambda y. \mathtt{op}(x) : \UnitType \to B_{\mathtt{op}} ! S$.
	\end{itemize}
\end{definition}

\begin{theorem}\label{thm:term-model}
	Definition~\ref{def:term-model-monad} indeed gives a $\EffectHandlerCalculus$-model.
	\qed
\end{theorem}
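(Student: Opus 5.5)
The plan is to verify each component of Definition~\ref{def:effect-handler-model} in turn, always reasoning in the equational theory of $\EffectHandlerCalculus$. Morphisms of $\TermModel{\EffectHandlerCalculus}$ are equivalence classes of value terms, so every required equation between morphisms reduces to a derivable equation $x : A \vdash V = W : B$. Because every $T_S A$ is a function type $\UnitType \to A ! S$, we will routinely apply the $\eta$-law $V = \lambda y. (V\ y)$ together with $\beta$. This reduces an equation between thunks to an equation between the computations $V\ \langle\rangle$ and $W\ \langle\rangle$ in context $x$ (using $\eta$ for $\UnitType$ to replace any $y : \UnitType$ by $\langle\rangle$). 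The substitution lemma (Lemma~\ref{lem:subst-assoc}) justifies that composition is well-defined and associative. Products are product types, with their $\beta\eta$ laws.

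\textbf{Step 1: strong Kleisli triples and Kleisli exponentials.} For the three laws of Definition~\ref{def:strong-kleisli-triple}, I unfold the thunks and check each law against a monad law from Figure~\ref{fig:beta-eta-monad-equations}:
\begin{itemize}
\item the first law follows from Let-Ret;
\item the second law follows from Ret-Let;
\item associativity follows from Let-Let, together with $\beta$ for $\lambda y$ and the product $\beta\eta$ rules needed to reassociate the tuple $\langle \pi_1 x', x_2\rangle$.
\end{itemize}
For Kleisli exponentials, one might first try $\category{C}_T(J(X\times Y),Z) \cong \category{C}(X, Y \to Z ! S)$. Here a Kleisli morphism $X \times Y \to Z$ is a value $V : \UnitType \to Z ! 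S$ in context $x : X \times Y$, and it is sent to $\lambda y. V[\langle x, y\rangle/x]\ \langle\rangle$. The inverse applies the function and re-thunks. The round trips are identities by $\beta\eta$ for functions, $\UnitType$, and products, and naturality in $X$ is commutation with substitution. I also record the induced $\eval$ and the operation $\MixedAlgOp{S'}{X}{\interpret{\mathtt{op}}}$ in syntactic form. After simplification, $\MixedAlgOp{S'}{X}{\interpret{\mathtt{op}}}$ becomes $\lambda z. \letin{b}{\mathtt{op}(\pi_1 x)}{\return{(\lambda w. (\pi_2 x)\ b)}}$ on input $x : A_{\mathtt{op}} \times (B_{\mathtt{op}} \to A' ! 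S')$.

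\textbf{Step 2: the three handle equations.} Write $h$ for the term of $\mathbf{handle}$. For \eqref{eq:handle-unit}, precomposing with $\eta$ makes the handled computation $\return{(\pi_2 x)}$. By \eqref{eq:handle-return} the result is $\lambda z. (\pi_2 x)\ \langle\rangle$, which is $\pi_2 x$ by $\eta$.

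For \eqref{eq:handle-multiplication}, the left side handles $\letin{u}{(\pi_2 x)\ \langle\rangle}{u\ \langle\rangle}$, and \eqref{eq:handle-let} turns this into a nested handle. On the right side, $T_S \mathbf{handle} \comp \langle \pi_1, \strength\rangle$ produces, after simplification, $\lambda z. \letin{u}{(\pi_2 x)\ \langle\rangle}{\return{(\lambda z'. \handlewithto{u\ \langle\rangle}{H}{r}{r\ \langle\rangle})}}$. Handling this term and applying \eqref{eq:handle-let}, \eqref{eq:handle-return}, and $\beta$ yields the same nested handle. The handler $H$ must be the same in both, which holds because $\pi_1$ is preserved by the strength (product $\beta$).

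For \eqref{eq:handle-operation}, the handled computation is the $\mathtt{op}$-let computed in Step 1. Applying \eqref{eq:handle-let} and then \eqref{eq:handle-op} gives $\pi_{\mathtt{op}}(\pi_1 x)\ \langle \pi_1 p, \lambda b. \handlewithto{\return{(\lambda w.(\pi_2 p)\ b)}}{H}{r}{r\ \langle\rangle}\rangle$. Then \eqref{eq:handle-return}, $\beta$, and $\eta$ reduce the continuation to $\pi_2 p$, which is exactly $\eval \comp (\pi_{\mathtt{op}} \times \identity{})$.

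I expect the main obstacle to be the bookkeeping in \eqref{eq:handle-multiplication}. There the strength, the functor action $T_S\,\mathbf{handle}$ (derived from the Kleisli extension), and substitution into $H$ all interact. To control this, I would first prove small lemmas giving closed syntactic forms for $T_S f$, $\mu$, and $\strength$ in the term model, namely thunked let-expressions. With those in hand, the check reduces to a few applications of Let-Let and \eqref{eq:handle-let}.
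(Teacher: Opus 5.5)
Your proposal is correct and follows essentially the same route as the paper. You check the strong Kleisli triple laws via \textsc{Let-Ret}, \textsc{Ret-Let} and \textsc{Let-Let}. You build Kleisli exponentials by currying to $\lambda y. V[\langle x, y\rangle/x]\ \langle\rangle$; the paper phrases this through the isomorphism $\TermModel{\EffectHandlerCalculus}_{T_\Sigma} \cong \mathbf{Kl}_\Sigma$, which is only a cosmetic difference. You verify the three handle equations by unfolding and applying \eqref{eq:handle-return}, \eqref{eq:handle-let} and \eqref{eq:handle-op}, and you correctly note that the inner handler agrees with the outer one after the strength.

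One small correction: well-definedness of composition on equivalence classes does not follow from Lemma~\ref{lem:subst-assoc}, which only gives associativity. It needs closure of derivable equations under substitution, which is Lemmas~\ref{lem:subst-value-eq-1} and~\ref{lem:subst-value-eq-2}; you also use this implicitly in several of your round-trip calculations.
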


By interpreting terms in the term model, we can show the completeness of the denotational semantics.
The detailed proof is given in Appendix~\ref{sec:proof-of-completeness}.
\begin{theorem}[completeness]\label{thm:completeness}
	Let $\Gamma \vdash V : A$ and $\Gamma \vdash W : A$ be value terms such that $\interpret{V} = \interpret{W}$ holds for any $\EffectHandlerCalculus$-model.
	Then, $\Gamma \vdash V = W : A$ is derivable.
	We have similar statements for computation terms and handlers.
	\qed
\end{theorem}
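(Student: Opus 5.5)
The plan is the standard term-model argument, relying on Theorem~\ref{thm:term-model}. The term model $\TermModel{\EffectHandlerCalculus}$ of Definition~\ref{def:term-model-monad} is a $\EffectHandlerCalculus$-model, so by hypothesis $\interpret{V} = \interpret{W}$ holds in it. It therefore suffices to show that the interpretation of a term in the term model recovers the term itself up to provable equality. From this, $\Gamma \vdash V = W : A$ follows.

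First, I fix the interpretation of types in the term model and compare it with the identity. By induction on types, $\interpret{A}$ is a type canonically isomorphic to $A$ in $\TermModel{\EffectHandlerCalculus}$. Function types go to $A \to B ! \Sigma$ itself, since Kleisli exponentials are given by function types. Products go to product types. $\interpret{\Sigma}$ is identified with $\Sigma$, so $\interpret{A!\Sigma}$ gives $T_{\Sigma}\interpret{A} = \UnitType \to A ! \Sigma$. To avoid coherence trouble, I will arrange, by the identification of semantic and syntactic signatures, that $\interpret{A} = A$ on the nose.

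A context $\Gamma = x_1 : A_1, \dots, x_n : A_n$ is interpreted as the product type $A_1 \times \dots \times A_n$. I then set up substitutions $\sigma_\Gamma$ sending $x_i \mapsto \pi_i\, z$, where $z$ is the single variable of the term-model morphism.

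The key lemma, proved by simultaneous induction on typing derivations, has three parts:
\begin{itemize}
	\item for values, $z : \prod_i A_i \vdash \interpret{V} = V[\sigma_\Gamma] : A$ is derivable;
	\item for computations, $z : \prod_i A_i \vdash \interpret{M} = \lambda y. M[\sigma_\Gamma] : \UnitType \to C$ is derivable;
	\item for handlers, $\interpret{H}$ is provably equal to the tuple $\langle \lambda p. M_{\mathtt{op}}[\pi_1 p/x, \pi_2 p/k][\sigma_\Gamma] \rangle_{\mathtt{op}}$.
\end{itemize}

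Most cases follow the fine-grain call-by-value argument of \citet{LevyInformationandComputation2003}: currying $\Lambda$ is $\lambda$-abstraction, $\eval$ is application, and let uses the strong Kleisli extension of Definition~\ref{def:term-model-monad} together with $\beta$/$\eta$ for $\UnitType$ and products. The $\mathtt{op}(V)$ case is immediate from the definition of $\interpret{\mathtt{op}}$.

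The main obstacle is the handle-with case. Unfolding, $\interpret{\handlewithto{M}{H}{x}{N}}$ becomes
\[ \lambda z'. \handlewithto{(\letin{x}{M}{\return{\lambda y. N}})}{H'}{r}{r\,\langle\rangle}, \]
where $H'$ has clauses $\mathtt{op}(y,k) \mapsto (\lambda p. M_{\mathtt{op}}[\dots])\,\langle y,k\rangle$, after using the induction hypotheses together with the $\beta$-laws and \textsc{Let-Ret}. I then apply \eqref{eq:handle-let} and \eqref{eq:handle-return} to get $\handlewithto{M}{H'}{x}{(\lambda y. N)\,\langle\rangle}$, and $\beta$ turns this into $\handlewithto{M}{H'}{x}{N}$. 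Finally, \textsc{Handler-Congr} with $\beta$ for pairs and functions identifies $H'$ with $H$. The delicate point is bookkeeping of variable capture and of the strength in $T_{\interpret{\Sigma}}\interpret{N} \comp \strength \comp \langle \identity{}, \interpret{M}\rangle$, which in the term model unfolds to a let-binding. Substitution lemmas (Lemma~\ref{lem:subst-assoc}) and weakening handle this.

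With the lemma in hand, assume $\interpret{V} = \interpret{W}$ in the term model. Then $z \vdash V[\sigma_\Gamma] = W[\sigma_\Gamma]$. Substituting back $z \mapsto \langle x_1, \dots, x_n \rangle$ and using \textsc{Prod-Beta} and Lemma~\ref{lem:subst-assoc} yields $\Gamma \vdash V = W : A$. For computations I get $\lambda y. M = \lambda y. N$, apply both sides to $\langle\rangle$, and use \textsc{Abs-Beta} to obtain $M = N$. Handlers are treated analogously, componentwise, via application and \textsc{Handler-Congr}.
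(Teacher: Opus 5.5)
Your proposal is correct and follows essentially the same route as the paper: the paper also uses the term model of Definition~\ref{def:term-model-monad} and proves by simultaneous induction (Proposition~\ref{prop:term-model-interpretation}) that a term's interpretation is the term itself under a context-flattening substitution $s_{\Gamma}$, with computations taken up to the isomorphism with $\mathbf{Kl}_\Sigma$ and handlers as tuples of $\lambda\langle x,k\rangle.M_{\mathtt{op}}[s_\Gamma]$; it treats the handle-with case exactly via \eqref{eq:handle-let}, \eqref{eq:handle-return} and $\beta$, and then substitutes back. The differences are only cosmetic: you flatten contexts into an $n$-ary product with $x_i \mapsto \pi_i\,z$, which needs a routine reassociation at binders because the appendix interprets $\Gamma, x : A$ as $\interpret{\Gamma} \times \interpret{A}$, and you make explicit the \textsc{Prod-Beta} step that the paper glosses as ``$s_\Gamma[V_\Gamma/x^\times]$ is the identity substitution.''
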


\subsection{Relation to Eilenberg--Moore Algebras}\label{sec:em-algebras}

Assume that $\category{C}$ is a cartesian closed category with equalisers.
Then, we can rephrase $\EffectHandlerCalculus$-models using Eilenberg--Moore algebras.

\begin{definition}
	Let $T$ be a strong monad on $\category{C}$ and $X \in \category{C}$.
	We define $e : \mathbf{EM}(T; X) \to \exponential{T X}{X}$ as a joint equaliser of the following two diagrams.
	\begin{center}
		\begin{tikzcd}[row sep=small]
			\exponential{T X}{X} \ar[rd, swap, "\exponential{\eta}{X}"] \ar[r, "!"] & 1 \ar[d, "i"{inner sep=1ex}] \\
			& \exponential{X}{X} \\
		\end{tikzcd}

		\begin{tikzcd}[row sep=small]
			\exponential{T X}{X} \ar[rd, swap, "\exponential{\mu}{X}"] \ar[r, "{\tupling{\identity{}}{\lfloor T \rfloor}}"] & (\exponential{T X}{X}) \times (\exponential{T^2 X}{T X}) \ar[d, "m"{inner sep=1ex}] \\
			& \exponential{T^2 X}{X}
		\end{tikzcd}
	\end{center}
	Here, $i_X : 1 \to \exponential{X}{X}$ and $m_{X, Y, Z} : (\exponential{Y}{Z}) \times (\exponential{X}{Y}) \to \exponential{X}{Z}$ are identity and composition morphisms for internal homs, respectively; and $\lfloor T \rfloor : \exponential{X}{Y} \to \exponential{T X}{T Y}$ is the application of the strong functor $T$ to internal homs.
\end{definition}
Intuitively, $\mathbf{EM}(T; X)$ is a subobject of the internal hom $\exponential{T X}{X}$ that collects Eilenberg--Moore $T$-algebras on $X$.
If $\category{C} = \Set$, then $\mathbf{EM}(T; X)$ is the set of Eilenberg--Moore $T$-algebras on $X$.

\begin{proposition}\label{prop:model-via-em-algebras}
	If $\category{C}$ is a cartesian closed category with equalisers, then there is a bijective correspondence between $\EffectHandlerCalculus$-models on $\category{C}$ and the following data:
	\begin{itemize}
		\item A family of strong monads $\{ T_{S} \}_{S \in \SemanticEffectType{\category{C}}}$ on $\category{C}$.
		\item For each semantic signature $S \in \SemanticEffectType{\category{C}}$ and $(\mathtt{op} : A_{\mathtt{op}} \rightarrowtriangle B_{\mathtt{op}}) \in S$, a morphism $\interpret{\mathtt{op}}^T_S : A_{\mathtt{op}} \to T_S B_{\mathtt{op}}$ in $\category{C}$.
		\item For each semantic signature $S, S' \in \SemanticEffectType{\category{C}}$ and $X \in \category{C}_{T_{S'}}$, a morphism $h_{S, S', X} : \HBundle{S}{X} \to \mathbf{EM}(T_S; K X)$ such that
		$\pi_{\mathtt{op}} = (\exponential{\MixedAlgOp{S'}{X}{\interpret{\mathtt{op}}^T_S}}{K X}) \comp e \comp h$
		for each $(\mathtt{op} : A_{\mathtt{op}} \rightarrowtriangle B_{\mathtt{op}}) \in S$.
		\qed
	\end{itemize}
\end{proposition}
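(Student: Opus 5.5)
The plan is to observe that the first two items of data are literally shared by both sides. Kleisli exponentials exist automatically, since $\category{C}$ is cartesian closed, by taking $\KleisliExp{T}{Y}{Z} \coloneqq \exponential{Y}{T Z}$. So the whole task is a bijection, for fixed $S, S', X$, between morphisms $\mathbf{handle}_{S,S',X}$ satisfying \eqref{eq:handle-unit}, \eqref{eq:handle-multiplication}, \eqref{eq:handle-operation} and morphisms $h_{S,S',X} : \HBundle{S}{X} \to \mathbf{EM}(T_S; KX)$ satisfying the stated condition on $\pi_{\mathtt{op}}$.

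The bridge is currying in the cartesian closed category. A morphism $\mathbf{handle} : \HBundle{S}{X} \times T_S K X \to KX$ corresponds bijectively to its curry $\lambda(\mathbf{handle}) : \HBundle{S}{X} \to \exponential{T_S KX}{KX}$. Since $e$ is a (joint) equaliser, hence monic, morphisms $h$ into $\mathbf{EM}(T_S; KX)$ correspond bijectively to those $\lambda(\mathbf{handle})$ that equalise the two pairs of parallel arrows in the definition of $\mathbf{EM}$. The proof therefore reduces to three translation lemmas.

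\textbf{Unit.} First I show that $(\exponential{\eta}{KX}) \comp \lambda(\mathbf{handle}) = i \comp {!}$ is equivalent to \eqref{eq:handle-unit}. Uncurrying, the left-hand side is $\mathbf{handle} \comp (\identity{} \times \eta)$ and the right-hand side is the projection $\pi_2$.

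\textbf{Multiplication.} Next I show that $(\exponential{\mu}{KX}) \comp \lambda(\mathbf{handle}) = m \comp \tupling{\identity{}}{\lfloor T_S \rfloor} \comp \lambda(\mathbf{handle})$ is equivalent to \eqref{eq:handle-multiplication}. The left-hand side uncurries to $\mathbf{handle} \comp (\identity{} \times \mu)$. For the right-hand side I need the standard fact that, in a cartesian closed category, $\lfloor T \rfloor$ is defined as the curry of $T\,\mathrm{ev} \comp \strength$, so that $\mathrm{ev} \comp (\lfloor T\rfloor \comp \lambda f \times \identity{}) = T f \comp \strength \comp \ldots$; composition $m$ uncurries to sequential evaluation. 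Unfolding, the right-hand side becomes $\mathbf{handle} \comp (\identity{} \times T_S\mathbf{handle}) \comp \langle \pi_1, \strength\rangle$, where the diagonal on $\HBundle{S}{X}$ arises from $\tupling{\identity{}}{\lfloor T_S\rfloor}$. This step is the main obstacle: it needs careful bookkeeping of the naturality of the strength and of the definition of $\lfloor T\rfloor$ and $m$, checked by uncurrying and comparing both sides as morphisms $\HBundle{S}{X} \times T_S^2 KX \to KX$.

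\textbf{Operations.} Finally, the condition $\pi_{\mathtt{op}} = (\exponential{\MixedAlgOp{S'}{X}{\interpret{\mathtt{op}}}}{KX}) \comp e \comp h$ uncurries to $\eval \comp (\pi_{\mathtt{op}} \times \identity{}) = \mathbf{handle} \comp (\identity{} \times \MixedAlgOp{S'}{X}{\interpret{\mathtt{op}}})$, which is exactly \eqref{eq:handle-operation}. This uses $e \comp h = \lambda(\mathbf{handle})$ and naturality of currying in the domain, together with the identification of $\eval$ with ordinary evaluation under the choice of Kleisli exponentials.

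Since each translation is an equivalence and currying and factorisation through the monic $e$ are bijective, the two assignments $\mathbf{handle} \mapsto h$ and $h \mapsto \mathrm{ev} \comp (e \comp h \times \identity{})$ are mutually inverse. This gives the claimed correspondence.
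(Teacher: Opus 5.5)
Your proposal is correct and is essentially the paper's own argument: curry $\mathbf{handle}_{S,S',X}$, show that the three model axioms are equivalent to the two equations cut out by the joint equaliser $e$ together with the stated condition on $\pi_{\mathtt{op}}$, then factor through $e$, with the inverse given by uncurrying $e \circ h$. Your explicit unfolding of the multiplication case, using $\lfloor T \rfloor$ as the curry of $T$ applied to evaluation composed with the strength, fills in the step the paper only calls straightforward.
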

\begin{appendixproof}[Proof of Proposition~\ref{prop:model-via-em-algebras}]
	Since $\category{C}$ is cartesian closed, $\category{C}$ has Kleisli exponentials for any strong monad on $\category{C}$.
	It remains to show the correspondence between $\mathbf{handle}_{E, E', X}$ and $h_{E, E', X}$.
	It is straightforward to show that the three equations in Definition~\ref{def:effect-handler-model} are equivalent to the following equations on $\Lambda \mathbf{handle}_{E, E', X} : \HBundle{E}{X} \to \exponential{T_E K X}{K X}$.
	\begin{align}
		&(\exponential{\eta}{K X}) \comp \Lambda \mathbf{handle}_{E, E', X} \quad=\quad i_{K X} \comp {!} \\
		&(\exponential{\mu}{K X}) \comp \Lambda \mathbf{handle}_{E, E', X}\ =\ m \comp \tupling{\identity{}}{\lfloor T \rfloor} \comp \Lambda \mathbf{handle}_{E, E', X} \\
		&(\exponential{\MixedAlgOp{E'}{X}{\interpret{\mathtt{op}}^T_E}}{K X}) \comp \Lambda \mathbf{handle}_{E, E', X} \quad=\quad \pi_{\mathtt{op}}
	\end{align}
	Thus, given $\mathbf{handle}_{E, E', X}$, we can factor $\Lambda \mathbf{handle}_{E, E', X}$ through the equaliser $e : \mathbf{EM}(T_E; K X) \to \exponential{T_E K X}{K X}$ to obtain $h_{E, E', X} : \HBundle{E}{X} \to \mathbf{EM}(T_E; K X)$.
	Conversely, given $h_{E, E', X}$, we can define $\mathbf{handle}_{E, E', X}$ by $\Lambda^{-1}(e \comp h_{E, E', X})$.
\end{appendixproof}

\subsection{Adding Base Types and Sum Types}\label{sec:calculus-extensions}

In Section~\ref{sec:syntax}, the language $\EffectHandlerCalculus$ is intentionally kept minimal to focus on the core concepts of effect handlers.
To make the language more practical, we consider a couple of extensions here.

\paragraph{Base types and primitive operations}
It is straightforward to extend $\EffectHandlerCalculus$ with base types such as integer and boolean, along with constants and primitive operations:
\[ A \coloneqq \cdots \mid b \qquad\qquad V \coloneqq \cdots \mid f(V_1, \dots, V_n) \]
Here, $b \in \mathbf{Base}$ ranges over base types and $f : \prod_{i = 1}^n A_i \to B$ ranges over constants and primitive operations.
Here, constants can be seen as primitive operations of type $\UnitType \to B$.
To interpret this extended language, we add interpretations of these base types and constants to the definition of $\EffectHandlerCalculus$-models.
A \emph{model} of $\EffectHandlerCalculus$ extended with base types and primitive operations is a $\EffectHandlerCalculus$-model $(\category{C}, T, \interpret{{-}}, \mathbf{handle})$ where $\interpret{-}$ is extended so that it assigns $\interpret{b} \in \category{C}$ for each base type $b \in \mathbf{Base}$ and a morphism $\interpret{f} : \prod_{i = 1}^n \interpret{A_i} \to \interpret{B}$ in $\category{C}$ for each primitive operation $f : \prod_{i = 1}^n A_i \to B$, in addition to interpretations of operations as in Definition~\ref{def:effect-handler-model}.
Then, it is straightforward to extend the soundness and completeness theorems to this setting.

\paragraph{Sum types}
Adding sum types is also straightforward.
We define $\EffectHandlerCalculus^{+}$ as the extension of $\EffectHandlerCalculus$ with sum types:
\begin{align}
	A &\coloneqq \cdots \mid A + B \qquad\qquad
	V \coloneqq \cdots \mid \iota_1\ V \mid \iota_2\ V \\
	M &\coloneqq \cdots \mid \caseof{V}{\casepattern{\iota_1\ x}{M};\ \casepattern{\iota_2\ y}{N}}
\end{align}
For simplicity, we only consider binary sum types here, but the extension to $n$-ary sum types $A_1 + \dots + A_n$ is straightforward.
We extend the equational theory of $\EffectHandlerCalculus$ by the following $\beta$/$\eta$-laws for sum types.
\begin{align}
	\forall i \in \{1, 2\}, \quad \caseof{\iota_i\ V}{\casepattern{\iota_1\ x}{M_1};\ \casepattern{\iota_2\ x}{M_2}} &= M_i[V/x] \\
	\caseof{V}{\casepattern{\iota_1\ y}{M[\iota_1\ y/x]};\ \casepattern{\iota_2\ z}{M[\iota_2\ z/x]}} &= M[V/x]
\end{align}
No additional equations are required for the interaction between sum types and handle-with expressions, since they are derivable from the existing equations.
\begin{proposition}
	The following equation is derivable in $\EffectHandlerCalculus^{+}$.
	\begin{align}
		&\handlewithto{\caseof{V}{\casepattern{\iota_1\ x_1}{M_1};\ \casepattern{\iota_2\ x_2}{M_2}}}{H}{y}{N} \\
		&= \caseof{V}{\casepattern{\iota_1\ x_1}{\handlewithto{M_1}{H}{y}{N}};\\
		&\qquad\qquad\qquad \casepattern{\iota_2\ x_2}{\handlewithto{M_2}{H}{y}{N}}}
	\end{align}
\end{proposition}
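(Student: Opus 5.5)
We derive the equation using only the $\eta$-law for sum types together with the congruence rules. The idea is to view the handle-with expression as a computation term $P$ that depends on a variable $x$ of sum type. We then apply the $\eta$-law with $M := P$, choosing the variable $x$ to be fresh.

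Concretely, fix a fresh variable $z$ of type $A_1 + A_2$ (the type of $V$). Define
\[ P \;\coloneqq\; \handlewithto{\caseof{z}{\casepattern{\iota_1\ x_1}{M_1};\ \casepattern{\iota_2\ x_2}{M_2}}}{H}{y}{N}. \]
By weakening (Lemma~\ref{lem:weakening}) we may assume $\Gamma, z : A_1 + A_2 \vdash P : C$. Since $z$ does not occur in $H$ or $N$, the term $P[V/z]$ is syntactically the left-hand side of the proposition. The $\eta$-law for sum types, instantiated with $M := P$ and $x := z$, gives
\[ P[V/z] \;=\; \caseof{V}{\casepattern{\iota_1\ x_1}{P[\iota_1\ x_1/z]};\ \casepattern{\iota_2\ x_2}{P[\iota_2\ x_2/z]}}. \]
We rename the bound variables by $\alpha$-conversion so that they are $x_1$ and $x_2$, choosing $x_1, x_2$ fresh for $H$ and $N$.

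It then suffices to show $P[\iota_i\ x_i/z] = \handlewithto{M_i}{H}{y}{N}$ in context $\Gamma, x_i : A_i$, and to close the argument by the congruence rule for $\mathtt{case}$. By definition,
\[ P[\iota_i\ x_i/z] \;=\; \handlewithto{\caseof{\iota_i\ x_i}{\casepattern{\iota_1\ x_1}{M_1};\ \casepattern{\iota_2\ x_2}{M_2}}}{H}{y}{N}. \]
The $\beta$-law for sums rewrites the inner case-expression to $M_i[x_i/x_i] = M_i$. The congruence rule for handle-with then yields the claim.

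The only real care needed is with variable hygiene: $z$ must be fresh, and the bound variables $x_i$ must not capture free variables of $H$ or $N$, so that the substitutions behave as described. Beyond that, the argument is routine bookkeeping.
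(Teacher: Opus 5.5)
Your proposal is correct and follows the paper's proof essentially verbatim: the paper also abstracts the scrutinee as a fresh variable $z$ and applies the $\eta$-law for sums to $L[V/z]$. It then reduces each branch $L[\iota_i\ x_i/z]$ to $\handlewithto{M_i}{H}{y}{N}$; you additionally spell out the $\beta$-law and congruence steps that the paper leaves implicit in its last equality.
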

\begin{proof}
	Let $L \coloneqq \handlewithto{\caseof{z}{\casepattern{\iota_1\ x_1}{M_1};\ \casepattern{\iota_2\ x_2}{M_2}}}{H}{y}{N}$ where $z$ is a fresh variable.
	\begin{align}
		&\handlewithto{\caseof{V}{\casepattern{\iota_1\ x_1}{M_1};\ \casepattern{\iota_2\ x_2}{M_2}}}{H}{y}{N} \\
		&= L[V/z] \\
		&= \caseof{V}{\casepattern{\iota_1\ x_1}{L[\iota_1\ x_1/z]};\ \casepattern{\iota_2\ x_2}{L[\iota_2\ x_2/z]}} \\
		&= \caseof{V}{\casepattern{\iota_1\ x_1}{\handlewithto{M_1}{H}{y}{N}};\\
		&\qquad\qquad\qquad \casepattern{\iota_2\ x_2}{\handlewithto{M_2}{H}{y}{N}}}
		\qedhere
	\end{align}
\end{proof}

In $\EffectHandlerCalculus^{+}$, injections $\iota_i\ V$ are value constructors, whereas case analysis is a computation constructor.
Consequently, we interpret sum types using a structure that slightly differs from ordinary categorical coproducts, which we call \emph{Kleisli coproducts}.
This situation is analogous to that of Kleisli exponentials, which are used to interpret function types:
since lambda abstraction is a value constructor while application is a computation constructor, Kleisli exponentials are defined using the adjunction between $\category{C}$ and the Kleisli category $\category{C}_T$.
Similarly, we define Kleisli coproducts using the same adjunction.
\begin{definition}
	A \emph{Kleisli binary coproduct} in a cartesian category $\category{C}$ with a strong monad $T$ is a diagram
	$X_1 \xrightarrow{\iota_1} X_1 + X_2 \xleftarrow{\iota_2} X_2$
	in $\category{C}$ such that the image under $J^T : \category{C} \to \category{C}_T$ is a coproduct diagram in $\category{C}_T$, i.e., for any $Y$ and morphisms $f_1 : X_1 \to T Y$ and $f_2 : X_2 \to T Y$ in $\category{C}$, there exists a unique morphism $[f_1, f_2] : X_1 + X_2 \to T Y$ such that $[f_1, f_2] \comp \iota_1 = f_1$ and $[f_1, f_2] \comp \iota_2 = f_2$.
	A Kleisli binary coproduct $X_1 \xrightarrow{\iota_1} X_1 + X_2 \xleftarrow{\iota_2} X_2$ is \emph{distributive} if for any $Z \in \category{C}$, $Z \times ({-})$ maps the diagram to a Kleisli binary coproduct diagram $Z \times X_1 \xrightarrow{\identity{Z} \times \iota_1} Z \times (X_1 + X_2) \xleftarrow{\identity{Z} \times \iota_2} Z \times X_2$.
\end{definition}
If $\category{C}$ has binary coproducts, then it has Kleisli binary coproducts for any strong monad $T$ because the left adjoint $J^T$ preserves colimits.
If moreover, $\category{C}$ is distributive, then the Kleisli binary coproducts are distributive.
On the other hand, Kleisli binary coproducts are not necessarily binary coproducts in $\category{C}$: for example, if $T = 1$ is the monad defined by a terminal object $1 \in \category{C}$, then any diagram of the form $X_1 \to Z \leftarrow X_2$ is a Kleisli binary coproduct diagram since every hom-set $\category{C}_T(X, Y)$ is a singleton set.

In a $\EffectHandlerCalculus$-model, we have a family of strong monads, but the interpretation of sum types should not depend on the choice of a strong monad.
To capture this, we define the following notion.
\begin{definition}
	Let $\{T_i\}_{i \in I}$ be a family of strong monads on a cartesian category $\category{C}$.
	We say that a diagram $X_1 \xrightarrow{\iota_1} X_1 + X_2 \xleftarrow{\iota_2} X_2$ is a \emph{joint Kleisli binary coproduct} if it is a Kleisli binary coproduct for each strong monad $T_i$.
	We define \emph{joint distributive Kleisli binary coproduct} similarly.
\end{definition}

We extend $\EffectHandlerCalculus$-models as follows.
\begin{definition}
	A \emph{$\EffectHandlerCalculus^{+}$-model} is a $\EffectHandlerCalculus$-model $(\category{C}, T)$ such that $\category{C}$ has joint distributive Kleisli binary coproducts for $\{ T_S \}_{S \in \SemanticEffectType{\category{C}}}$.
\end{definition}

\begin{proposition}
	The calculus $\EffectHandlerCalculus^{+}$ is sound and complete with respect to $\EffectHandlerCalculus^{+}$-models.
	\qed
\end{proposition}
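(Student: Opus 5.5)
The plan is to extend the proofs of Theorems~\ref{thm:soundness}, \ref{thm:term-model} and~\ref{thm:completeness} with one new case for each of the three new constructors ($\iota_1$, $\iota_2$, case analysis) and one for the sum type former.

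\emph{Interpretation and soundness.} First extend the interpretation. We set $\interpret{A + B} \coloneqq \interpret{A} + \interpret{B}$ using the joint Kleisli coproduct, and $\interpret{\iota_i\ V} \coloneqq \iota_i \comp \interpret{V}$. For $\Gamma \vdash \caseof{V}{\casepattern{\iota_1\ x}{M_1};\ \casepattern{\iota_2\ y}{M_2}} : B ! \Sigma$ we use distributivity for $T_{\interpret{\Sigma}}$. Distributivity makes $\interpret{\Gamma} \times (\interpret{A_1} + \interpret{A_2})$ a Kleisli coproduct of the $\interpret{\Gamma} \times \interpret{A_i}$, so the copairing $[\interpret{M_1}, \interpret{M_2}] : \interpret{\Gamma} \times (\interpret{A_1} + \interpret{A_2}) \to T_{\interpret{\Sigma}} \interpret{B}$ exists, and we precompose it with $\tupling{\identity{}}{\interpret{V}}$. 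Soundness then only needs the two new axioms.
\begin{itemize}
\item The $\beta$-law follows from the copairing equations together with the substitution lemma $\interpret{M[V/x]} = \interpret{M} \comp \tupling{\identity{}}{\interpret{V}}$, which is already available from the soundness proof.
\item The $\eta$-law follows from uniqueness of the copairing. Both $[\interpret{M[\iota_1 y/x]}, \interpret{M[\iota_2 z/x]}]$ and $\interpret{M}$ (viewed on $\interpret{\Gamma}\times(\interpret{A_1}+\interpret{A_2})$) satisfy the defining equations, so they are equal; we then precompose with $\tupling{\identity{}}{\interpret{V}}$.
\end{itemize}
Congruence is immediate. Crucially, no handle-with equation is added, so nothing about $\mathbf{handle}$ needs re-checking.

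\emph{Completeness via the term model.} Reuse Definition~\ref{def:term-model-monad} for $\EffectHandlerCalculus^{+}$, with objects now including sum types. We take $A_1 + A_2$ with injections $x \vdash \iota_i\ x$ as the candidate joint distributive Kleisli coproduct. Since $T_S B = \UnitType \to B ! S$, a Kleisli morphism $Z \times A_i \to T_S B$ is a value $w : Z \times A_i \vdash V_i : \UnitType \to B ! S$. The copairing is
\[ w : Z \times (A_1 + A_2) \vdash \lambda u.\ \caseof{\pi_2\ w}{\casepattern{\iota_1\ a}{V_1[\langle \pi_1 w, a\rangle/w]\ \langle\rangle};\ \casepattern{\iota_2\ b}{V_2[\langle \pi_1 w, b\rangle/w]\ \langle\rangle}}. \]
The copairing equations follow from the sum $\beta$-law together with the function and product $\eta$-laws. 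Uniqueness comes from the sum $\eta$-law. Given any $U$ with the right composites, we $\eta$-expand $U$ to $\lambda u.\ U\ u$ and apply the sum $\eta$-law to the body $U\ \langle\rangle$, viewed as a computation with $\pi_2\ w$ abstracted; the product $\eta$-law lets us split $w$ as $\langle \pi_1 w, \pi_2 w\rangle$ first. Because the construction is uniform in $S$, the coproduct is joint, and building in the context $Z$ gives distributivity.

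The remaining step is to show that the interpretation in the term model sends each term to (the equivalence class of) itself up to the standard encodings, by induction. The case clause should reduce to $\beta$/$\eta$ bookkeeping. Completeness then follows exactly as in Theorem~\ref{thm:completeness}. I expect the main obstacle to be the uniqueness/distributivity check in the term model: the $\eta$-law for sums is stated for computations $M[V/x]$, while Kleisli morphisms here are values of thunk type. Bridging the two requires carefully combining the function $\eta$-law, the product $\eta$-law and the sum $\eta$-law under a context variable $w$ of product type.
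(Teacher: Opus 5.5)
Your proposal is correct and follows the route the paper intends. The paper omits this proof, but its completeness argument for $\EffectHandlerCalculusWithEquations$ explicitly reuses ``the term model of $\EffectHandlerCalculus^{+}$'' built as you describe: interpret case analysis via the distributive Kleisli copairing, verify the sum $\beta$/$\eta$-laws from the copairing equations and uniqueness, and show that $A_1 + A_2$ with $\iota_i$ is a joint distributive Kleisli coproduct in the term model, using the sum laws together with the function and product $\eta$-laws. One small correction: the weakening and substitution lemmas are not literally ``already available'', since each needs a new inductive case for case analysis. That case goes through because uniqueness of the copairing gives $[f_1, f_2] \comp (g \times \identity{}) = [f_1 \comp (g \times \identity{}), f_2 \comp (g \times \identity{})]$.
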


\begin{remark}
	We could alternatively consider slightly different calculus by adding case analysis as value constructors.
	\[ V \coloneqq \cdots \mid \iota_1\ V \mid \iota_2\ V \mid \caseof{V}{\casepattern{\iota_1\ x}{W_1};\ \casepattern{\iota_2\ y}{W_2}} \]
	In this case, models are defined simply as $\EffectHandlerCalculus$-models where $\category{C}$ is a distributive category.
	However, this makes the operational semantics less natural similarly to the case of pattern matching discussed in Remark~\ref{rem:pattern-matching-product}.	
\end{remark}

\begin{toappendix}
\section{Proof of Soundness}
\label{sec:proof-of-soundness}
\subsection{Details of the Definition of the Interpretation}

\begin{definition}
	The interpretation of value types $A$, computation types $C$, and signature $\Sigma$ is defined by simultaneous induction as follows.
	\begin{itemize}
		\item Value types $\interpret{A} \in \category{C}$:
		\[ \interpret{A \to C} \coloneqq \KleisliExp{T_{\interpret{\Sigma}}}{\interpret{A}}{\interpret{C}} \qquad
		\interpret{\prod_{i = 1}^n A_i} \coloneqq \prod_{i=1}^n \interpret{A_i} \]
		\item Computation types $\interpret{A ! \Sigma} \in \category{C}^{T_{\interpret{\Sigma}}}$:
		\[ \interpret{A ! \Sigma} \coloneqq J^{T_{\interpret{\Sigma}}} \interpret{A} \]
		\item Signatures $\interpret{\Sigma} \in \SemanticEffectType{\category{C}}$:
		\begin{gather}
			\interpret{\emptyset} \coloneqq \emptyset \qquad
			\interpret{\{ \mathtt{op} : A_{\mathtt{op}} \rightarrowtriangle B_{\mathtt{op}} \} \cup \Sigma} \coloneqq \{ \mathtt{op} : \interpret{A_{\mathtt{op}}} \rightarrowtriangle \interpret{B_{\mathtt{op}}} \} \cup \interpret{\Sigma}
		\end{gather}
	\end{itemize}
	Contexts $\interpret{\Gamma} \in \category{C}$ are interpreted as follows.
	\begin{gather}
		\interpret{\cdot} \coloneqq 1 \qquad
		\interpret{\Gamma, x : A} \coloneqq \interpret{\Gamma} \times \interpret{A}
	\end{gather}
\end{definition}

\begin{definition}
	The interpretation of terms and handlers is defined by simultaneous induction on typing derivations as follows.
	\begin{itemize}
		\item Value terms $\interpret{\Gamma \vdash V : A} : \interpret{\Gamma} \to \interpret{A}$ (often written as $\interpret{V}$ for brevity):
		\begin{align}
			\interpret{\Gamma, y : B \vdash x : A} \quad&\coloneqq\quad \begin{cases}
				\pi_2 & x = y \\
				\interpret{\Gamma \vdash x : A} \comp \pi_1 & x \neq y
			\end{cases} \\
			\interpret{\langle V_1, \dots, V_n \rangle} \quad&\coloneqq\quad \langle \interpret{V_1}, \dots, \interpret{V_n} \rangle \\
			\interpret{\pi_i\ V} \quad&\coloneqq\quad \pi_i \comp \interpret{V} \\
			\interpret{\lambda x : A. M} \quad&\coloneqq\quad \Lambda(\interpret{M})
		\end{align}
		\item Computation terms $\interpret{\Gamma \vdash M : A ! \Sigma} : \interpret{\Gamma} \to T_{\interpret{\Sigma}} \interpret{A}$:
		\begin{align}
			\interpret{V\ W} \quad&\coloneqq\quad \eval \comp \tupling{\interpret{V}}{\interpret{W}} \\
			\interpret{\mathtt{return}\ V} \quad&\coloneqq\quad \eta^{T_{\interpret{\Sigma}}} \comp \interpret{V} \\
			\interpret{\letin{x}{M}{N}} \quad&\coloneqq\quad \mu^{T_{\interpret{\Sigma}}} \comp T_{\interpret{\Sigma}} \interpret{N} \comp \strength^{T_{\interpret{\Sigma}}} \comp \tupling{\identity{}}{\interpret{M}}\ =\ \interpret{N}^{\sharp} \comp \tupling{\identity{}}{\interpret{M}} \\
			\interpret{\mathtt{op}(V)} \quad&\coloneqq\quad \interpret{\mathtt{op}}^T_{\interpret{\Sigma}} \comp \interpret{V} \\
			\interpret{\handlewithto{M}{H}{x}{N}} \quad&\coloneqq\quad \mathbf{handle}_{\interpret{\Sigma}, \interpret{\Sigma'}, J \interpret{B}} \comp \tupling{\interpret{H}}{T_{\interpret{\Sigma}} \interpret{N} \comp \strength^{T_{\interpret{\Sigma}}} \comp \tupling{\identity{}}{\interpret{M}}}
		\end{align}
		In the definition of $\interpret{\handlewithto{M}{H}{x}{N}}$, we assume $\Gamma \vdash \handlewithto{M}{H}{x}{N} : B ! \Sigma'$, and morphisms in the right-hand side are $\mathbf{handle}_{\interpret{\Sigma}, \interpret{\Sigma'}, J \interpret{B}} : \HBundle{\interpret{\Sigma}}{J \interpret{B}} \times T_{\interpret{\Sigma}} T_{\interpret{\Sigma'}} \interpret{B} \to T_{\interpret{\Sigma'}} \interpret{B}$ and
		\[ \interpret{\Gamma} \xrightarrow{\tupling{\identity{}}{\interpret{M}}} \interpret{\Gamma} \times T_{\interpret{\Sigma}} \interpret{A} \xrightarrow{\strength^{T_{\interpret{\Sigma}}}} T_{\interpret{\Sigma}} (\interpret{\Gamma} \times \interpret{A}) \xrightarrow{T_{\interpret{\Sigma}} \interpret{N}} T_{\interpret{\Sigma}} T_{\interpret{\Sigma'}} \interpret{B}. \]
		\item Handlers $\interpret{\Gamma \vdash H : \Sigma \Rightarrow C} : \interpret{\Gamma} \to \HBundle{\interpret{\Sigma}}{\interpret{C}}$:
		\[ \interpret{H} \quad\coloneqq\quad \langle \Lambda (\interpret{M_{\mathtt{op}}} \comp \associator^{-1}) \rangle_{(\mathtt{op} : A_{\mathtt{op}} \rightarrowtriangle B_{\mathtt{op}}) \in \Sigma} \]
		where $\associator_{X, Y, Z} : (X \times Y) \times Z \to X \times (Y \times Z)$ is the associator of the cartesian monoidal category $\category{C}$ and $\interpret{M_{\mathtt{op}}} : (\interpret{\Gamma} \times \interpret{A_{\mathtt{op}}}) \times (\KleisliExp{T_{\interpret{\Sigma'}}}{\interpret{B_{\mathtt{op}}}}{J \interpret{B}}) \to T_{\interpret{\Sigma'}} \interpret{B}$.
		That is, for each $\mathtt{op}$, we have a morphism
		\[ \Lambda (\interpret{M_{\mathtt{op}}} \comp \associator^{-1}) : \interpret{\Gamma} \to \KleisliExp{T_{\interpret{\Sigma'}}}{(\interpret{A_{\mathtt{op}}} \times (\KleisliExp{T_{\interpret{\Sigma'}}}{\interpret{B_{\mathtt{op}}}}{J \interpret{B}}))}{J \interpret{B}} \]
		and $\interpret{H}$ is the tupling of those morphisms.
	\end{itemize}
\end{definition}

\subsection{Proofs}
The non-trivial part is the three equations for effect handlers in Figure~\ref{fig:effect-handler-equations}.
Other equations (Figure~\ref{fig:beta-eta-monad-equations}) are standard ones for fine-grain call-by-value \cite{LevyInformationandComputation2003}.

\begin{lemma}[weakening]\label{lem:weakening-interpretation}
	Let $\Gamma, \Delta \vdash V : A$ be a well-typed term.
	The interpretation of $\Gamma, x : B, \Delta \vdash V : A$ is given by
	\[ \interpret{\Gamma, x : B, \Delta \vdash V : A} = \interpret{\Gamma, \Delta \vdash V : A} \comp \mathrm{proj}_{\Gamma; x : B; \Delta} \]
	where $\mathrm{proj}_{\Gamma; x : B; \Delta} : \interpret{\Gamma, x : B, \Delta} \to \interpret{\Gamma, \Delta}$ is the canonical projection defined by
	\[ \mathrm{proj}_{\Gamma; x : B; \cdot} = \pi_1 \qquad \mathrm{proj}_{\Gamma; x : B; \Delta, y : B'} = \mathrm{proj}_{\Gamma; x : B; \Delta} \times \identity{} \]
	We also have similar equations for computations and handlers.
\end{lemma}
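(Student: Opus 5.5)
We prove the weakening lemma by simultaneous induction on the typing derivations of value terms, computation terms and handlers, with $\Delta$ arbitrary in the induction hypothesis. Each case proceeds the same way. We unfold the definition of $\interpret{-}$ in the extended context $\Gamma, x : B, \Delta$ and apply the induction hypothesis to the immediate subterms. We then use naturality of the structure involved to pull $\mathrm{proj}_{\Gamma; x : B; \Delta}$ out to the right. We abbreviate $\mathrm{proj}_{\Gamma; x : B; \Delta}$ as $p$.

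\emph{Variables.} We argue by a secondary induction on $\Delta$.
\begin{itemize}
\item If $\Delta = \cdot$, then $y \neq x$, since the extended context must be well formed. Hence $\interpret{\Gamma, x : B \vdash y : A} = \interpret{\Gamma \vdash y : A} \comp \pi_1$, which is the definition with $p = \pi_1$.
\item If $\Delta = \Delta', z : B'$ and $y = z$, then both sides are $\pi_2$. This uses $\pi_2 \comp (p' \times \identity{}) = \pi_2$, where $p' = \mathrm{proj}_{\Gamma; x : B; \Delta'}$.
\item If $\Delta = \Delta', z : B'$ and $y \neq z$, the left side is $\interpret{y} \comp \pi_1$ in the context $\Gamma, x : B, \Delta'$. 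The inner hypothesis rewrites it as $\interpret{y} \comp p' \comp \pi_1 = \interpret{y} \comp \pi_1 \comp (p' \times \identity{})$.
\end{itemize}

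\emph{Tuples, projections, return, op, application.} These cases are immediate from the induction hypothesis and the universal property of products.

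\emph{Binders.} This is the only case needing care. For $\lambda y. M$, the induction hypothesis applied to the context $\Gamma, x : B, \Delta, y : A'$ gives $\interpret{M}$ precomposed with $p \times \identity{}$. Naturality of currying, $\Lambda(f \comp (g \times \identity{})) = \Lambda(f) \comp g$, which follows from the Kleisli exponential being a natural isomorphism in $X$, then closes the case.

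For $\letin{y}{M}{N}$, we combine the induction hypothesis for $M$ and $N$ with naturality of the strength, $\strength \comp (g \times \identity{}) = T(g \times \identity{}) \comp \strength$. This gives
\[ \mu \comp T(\interpret{N} \comp (p \times \identity{})) \comp \strength \comp \tupling{\identity{}}{\interpret{M} \comp p} = \mu \comp T\interpret{N} \comp \strength \comp \tupling{\identity{}}{\interpret{M}} \comp p. \]
The handle-with case is the same computation without the outer $\mu$, followed by $\mathbf{handle}$. Here the handler component $\interpret{H} \comp p$ factors through the tupling as well.

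\emph{Handlers.} For an operation clause, the induction hypothesis applies to the context $\Gamma, x : B, \Delta, y : A_{\mathtt{op}}, k : \dots$ and gives a precomposition with $(p \times \identity{}) \times \identity{}$. Naturality of $\associator^{-1}$ turns this into $\associator^{-1} \comp (p \times \identity{})$, and naturality of $\Lambda$ then pulls $p$ out. Tupling over all operations finishes the case.

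We expect no real obstacle; the only step requiring attention is the bookkeeping in the binder cases.
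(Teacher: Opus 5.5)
Your proposal is correct and follows the same approach as the paper, whose proof is just ``By straightforward induction''. You spell out that induction: you generalize over $\Delta$ so the binder cases go through, you rely on freshness of $x$ in the variable case, and you pull the projection out using naturality of $\Lambda$, of the strength, and of the associator, which are exactly the facts it needs.
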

\begin{proof}
	By straightforward induction.
\end{proof}

\begin{lemma}[substitution]\label{lem:subst-interpretation}
	For $\Gamma \vdash V_1 : A_1, \dots, \Gamma \vdash V_n : A_n$ and $x_1 : A_1, \dots, x_n : A_n \vdash W : B$, the interpretation of $\Gamma \vdash W[V_1/x_1, \dots, V_n/x_n] : B$ is given by
	\[ \interpret{W[V_1/x_1, \dots, V_n/x_n]} = \interpret{W} \comp \interpret{V_1, \dots, V_n} \]
	where $\interpret{V_1, \dots, V_n} : \interpret{\Gamma} \to \interpret{x_1 : A_1, \dots, x_n : A_n}$ is defined by
	\[ \interpret{\cdot} = {!} \qquad \interpret{V_1, \dots, V_n, V_{n+1}} = \tupling{\interpret{V_1, \dots, V_n}}{\interpret{V_{n+1}}} \]
	We also have similar equations for computation terms $x_1 : A_1, \dots, x_n : A_n \vdash M : C$ and handlers $x_1 : A_1, \dots, x_n : A_n \vdash H : E \Rightarrow C$.
\end{lemma}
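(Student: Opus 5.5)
The plan is to prove the statement by simultaneous induction on the typing derivation of $W$ (and of computation terms $M$ and handlers $H$). The claim is quantified over arbitrary $\Gamma$ and arbitrary $V_1, \dots, V_n$, so the induction hypothesis can be used under binders with an extended context. Write $\sigma = \interpret{V_1, \dots, V_n} : \interpret{\Gamma} \to \interpret{x_1 : A_1, \dots, x_n : A_n}$.

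\textbf{Base and value cases.} For a variable $x_i$, the projection $\interpret{x_i}$ composed with the tuple $\sigma$ gives $\interpret{V_i}$. This follows by induction on $n$, using the recursive definition of the variable interpretation. Tuples, projections and $\mathtt{return}$ follow directly from the induction hypothesis and the universal property of products. The application and $\mathtt{op}(V)$ cases are the same, since $\eval \comp \tupling{-}{-}$ and $\interpret{\mathtt{op}} \comp -$ are post-compositions.

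\textbf{Binder cases.} For $\lambda x. M$, let-expressions, handle-with, and handler clauses, we use the standard fact that substitution under a binder is $M[V_1/x_1, \dots, V_n/x_n, x/x]$ in context $\Gamma, x : A$. Its interpretation by the induction hypothesis is $\interpret{M} \comp \tupling{\interpret{V_1,\dots,V_n}'}{\pi_2}$. Here the primed $V_i$ are the weakenings of the $V_i$ to $\Gamma, x : A$. By Lemma~\ref{lem:weakening-interpretation}, $\interpret{V_1,\dots,V_n}' = \sigma \comp \pi_1$, so this map equals $\interpret{M} \comp (\sigma \times \identity{})$. Then:
\begin{itemize}
\item For $\lambda$, naturality of currying gives $\Lambda(f \comp (\sigma \times \identity{})) = \Lambda(f) \comp \sigma$.
\item For let, we use naturality of strength, $\strength \comp (\sigma \times \identity{}) = T(\sigma \times \identity{}) \comp \strength$, together with $\tupling{\identity{}}{\interpret{M} \comp \sigma} = (\sigma \times \identity{}) \comp \tupling{\identity{}}{\interpret{M}} \comp \sigma$. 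Equivalently, we use the corresponding naturality of $(-)^{\sharp}$ in its first argument.
\item For handle-with, the same strength-naturality argument applies to $T_{\interpret{\Sigma}}\interpret{N} \comp \strength \comp \tupling{\identity{}}{\interpret{M}}$. The $\interpret{H}$ component is handled by the handler case, and $\mathbf{handle}$ is just post-composed.
\item For handler clauses, $M_{\mathtt{op}}$ binds two variables. We need $\associator^{-1} \comp (\sigma \times \identity{}) = ((\sigma \times \identity{}) \times \identity{}) \comp \associator^{-1}$, which is naturality of the associator, followed by naturality of $\Lambda$ and tupling.
\end{itemize}

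\textbf{Main obstacle.} The main obstacle is bookkeeping rather than mathematics. Substitution is capture-avoiding and the context grows under binders, so the statement as given, with a simultaneous substitution for the whole context, must be applied with $\Gamma$ extended by the bound variable. This requires showing that the extended substitution's interpretation is $\sigma \times \identity{}$, which is precisely where weakening (Lemma~\ref{lem:weakening-interpretation}) is needed. I would isolate this as a small auxiliary claim proved first by induction on $n$. After that, every case reduces to a naturality property of $\Lambda$, $\strength$, or $\associator$.
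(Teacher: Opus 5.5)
Your proposal is correct and follows the paper's own proof: a straightforward simultaneous induction, with the weakening lemma (Lemma~\ref{lem:weakening-interpretation}) identifying the extended substitution under a binder with $\sigma \times \identity{}$, and every case reducing to naturality of $\Lambda$, $\strength$, or $\associator$. One small slip: in the let case the identity you need is $(\sigma \times \identity{}) \comp \tupling{\identity{}}{\interpret{M} \comp \sigma} = \tupling{\identity{}}{\interpret{M}} \comp \sigma$, whereas the right-hand side of your displayed version does not typecheck.
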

\begin{proof}
	By straightforward induction.
	We use Lemma~\ref{lem:weakening-interpretation}.
\end{proof}

\begin{lemma}\label{lem:handle-return-sound}
	\eqref{eq:handle-return} is sound.
\end{lemma}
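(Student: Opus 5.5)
We unfold the interpretation of the left-hand side $\handlewithto{\return{V}}{H}{x}{M}$, simplify it with naturality of the strength and the unit, and then apply \eqref{eq:handle-unit}. We finish with the substitution lemma (Lemma~\ref{lem:subst-interpretation}).

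Let $\Gamma \vdash \return{V} : A ! \Sigma$, $\Gamma \vdash H : \Sigma \Rightarrow B ! \Sigma'$ and $\Gamma, x : A \vdash M : B ! \Sigma'$. By definition, the interpretation of the left-hand side is
\[ \mathbf{handle}_{\interpret{\Sigma}, \interpret{\Sigma'}, J \interpret{B}} \comp \tupling{\interpret{H}}{T_{\interpret{\Sigma}} \interpret{M} \comp \strength^{T_{\interpret{\Sigma}}} \comp \tupling{\identity{}}{\eta^{T_{\interpret{\Sigma}}} \comp \interpret{V}}}. \]
First, rewrite $\tupling{\identity{}}{\eta \comp \interpret{V}} = (\identity{} \times \eta) \comp \tupling{\identity{}}{\interpret{V}}$. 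The strength law $\strength \comp (\identity{} \times \eta) = \eta$ then turns the second component into $T_{\interpret{\Sigma}} \interpret{M} \comp \eta \comp \tupling{\identity{}}{\interpret{V}}$. Next, naturality of $\eta^{T_{\interpret{\Sigma}}}$ rewrites this as $\eta \comp \interpret{M} \comp \tupling{\identity{}}{\interpret{V}}$. Hence the whole expression equals
\[ \mathbf{handle} \comp (\identity{} \times \eta) \comp \tupling{\interpret{H}}{\interpret{M} \comp \tupling{\identity{}}{\interpret{V}}}. \]

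By \eqref{eq:handle-unit}, $\mathbf{handle} \comp (\identity{} \times \eta) = \pi_2$, so the expression reduces to $\interpret{M} \comp \tupling{\identity{}}{\interpret{V}}$. It remains to identify this with $\interpret{M[V/x]}$. We apply Lemma~\ref{lem:subst-interpretation} to the substitution that is the identity on $\Gamma$ and sends $x \mapsto V$. Its interpretation is $\tupling{\interpret{x_1,\dots,x_n}}{\interpret{V}}$, and the tuple of variable interpretations is $\identity{\interpret{\Gamma}}$. That it is the identity follows from Lemma~\ref{lem:weakening-interpretation} by an easy induction on $\Gamma$.

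The only mildly delicate step is this bookkeeping for the identity substitution, since Lemma~\ref{lem:subst-interpretation} is stated for a full simultaneous substitution over the context $x_1 : A_1, \dots, x_n : A_n$. Everything else is a direct use of the strong monad laws followed by a single application of \eqref{eq:handle-unit}.
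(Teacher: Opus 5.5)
Your proposal is correct and follows essentially the same route as the paper: unfold the interpretation, use the strength--unit law and naturality of $\eta$ to reach $\eta \comp \interpret{M} \comp \tupling{\identity{}}{\interpret{V}}$, apply \eqref{eq:handle-unit}, and conclude with the substitution lemma. The paper leaves implicit your bookkeeping that the variables of $\Gamma$ interpret to $\identity{\interpret{\Gamma}}$ under Lemma~\ref{lem:subst-interpretation}; making it explicit is harmless.
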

\begin{proof}
	We use Lemma~\ref{lem:subst-interpretation}.
	Let $\Sigma$ be the signature such that $\Gamma \vdash H : \Sigma \Rightarrow C$.
	For simplicity, we write $\interpret{\Sigma}$ as $\Sigma$.
	Then, we have the following.
	\begin{align}
		&\interpret{\handlewithto{\mathbf{return}\ V}{H}{x}{M}} \\
		&= \mathbf{handle} \comp \tupling{\interpret{H}}{T_\Sigma \interpret{M} \comp \strength^{T_\Sigma} \comp \tupling{\identity{}}{\eta^{T_\Sigma} \comp \interpret{V}}} \\
		&= \mathbf{handle} \comp \tupling{\interpret{H}}{\eta^{T_\Sigma} \comp \interpret{M} \comp \tupling{\identity{}}{\interpret{V}}} \\
		&= \pi_2 \comp \tupling{\interpret{H}}{\interpret{M} \comp \tupling{\identity{}}{\interpret{V}}} \\
		&= \interpret{M} \comp \tupling{\identity{}}{\interpret{V}} \\
		&= \interpret{M[V/x]}
		\qedhere
	\end{align}
\end{proof}

\begin{lemma}\label{lem:handle-let-sound}
	\eqref{eq:handle-let} is sound.
\end{lemma}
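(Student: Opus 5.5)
We compute both sides of \eqref{eq:handle-let} and reduce them to a common morphism, using \eqref{eq:handle-multiplication} as the key step. We write $\Sigma$ for $\interpret{\Sigma}$, $T$ for $T_\Sigma$, $h \coloneqq \interpret{H}$, and $\mathbf{handle}$ for $\mathbf{handle}_{\Sigma, \Sigma', J\interpret{C}}$ (abusing notation for the result type).

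\textbf{Left-hand side.} By Definition~\ref{def:interpretation-lambda-eff} and the definition of $\interpret{\letin{x}{L}{M}}$, it is
\[ \mathbf{handle} \comp \tupling{h}{T\interpret{N} \comp \strength \comp \tupling{\identity{}}{\mu \comp T\interpret{M} \comp \strength \comp \tupling{\identity{}}{\interpret{L}}}}. \]
We push $T\interpret{N} \comp \strength$ past $\mu$ using naturality of $\mu$ and the strength–multiplication law $\strength \comp (\identity{} \times \mu) = \mu \comp T\strength \comp \strength$. This rewrites the second component as
\[ \mu \comp T\bigl(T\interpret{N} \comp \strength \comp \tupling{\pi_1}{\interpret{M}}\bigr) \comp \strength \comp \tupling{\identity{}}{\interpret{L}}, \]
where $\pi_1 : \interpret{\Gamma} \times \interpret{A} \to \interpret{\Gamma}$ is obtained via weakening (Lemma~\ref{lem:weakening-interpretation}). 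Applying \eqref{eq:handle-multiplication} then turns the left-hand side into
\[ \mathbf{handle} \comp (\identity{} \times T\,\mathbf{handle}) \comp \tupling{\pi_1}{\strength} \comp \tupling{h}{T(\cdots) \comp \strength \comp \tupling{\identity{}}{\interpret{L}}}. \]

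\textbf{Right-hand side.} By the same definition, it is
\[ \mathbf{handle} \comp \tupling{h}{T\interpret{N'} \comp \strength \comp \tupling{\identity{}}{\interpret{L}}}, \]
where $N' = \handlewithto{M}{H}{y}{N}$ is in context $\Gamma, x : A$. Hence
\[ \interpret{N'} = \mathbf{handle} \comp \tupling{h \comp \pi_1}{T\interpret{N} \comp \strength \comp \tupling{\identity{}}{\interpret{M}}}, \]
using weakening to express $\interpret{H}$ in the extended context as $h \comp \pi_1$.

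\textbf{Matching the two sides.} It suffices to show
\[ \strength \comp \tupling{h}{T g \comp \strength \comp \tupling{\identity{}}{\interpret{L}}} = T\tupling{h \comp \pi_1}{g} \comp \strength \comp \tupling{\identity{}}{\interpret{L}} \]
for $g = T\interpret{N} \comp \strength \comp \tupling{\identity{}}{\interpret{M}}$. Once this holds, postcomposing with $T\,\mathbf{handle}$ and then $\mathbf{handle}$ gives equality of the two sides.

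The equation follows from naturality of $\strength$ together with the strength–associativity law. Indeed, $\tupling{h}{\strength \comp \tupling{\identity{}}{\interpret{L}}}$ factors as $(h \times \identity{}) \comp \tupling{\identity{}}{\strength \comp \tupling{\identity{}}{\interpret{L}}}$. Moreover, $\strength \comp (\identity{\Gamma} \times \strength) \comp \tupling{\identity{}}{\cdots}$ equals $T\tupling{\pi_1}{\identity{}} \comp \strength \comp \tupling{\identity{}}{\interpret{L}}$, via the diagonal and the associativity coherence of strength.

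This strength bookkeeping is the main obstacle. We would handle it most cleanly by phrasing both sides in terms of strong Kleisli extensions (Definition~\ref{def:strong-kleisli-triple}), where the identities above become instances of the third strong Kleisli law and naturality. No other model axiom beyond \eqref{eq:handle-multiplication} is needed.
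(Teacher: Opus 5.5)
Your proposal is correct and follows essentially the same route as the paper. Both expand the two sides, apply \eqref{eq:handle-multiplication} to the left-hand side, and reduce to an identity about $\strength$ that follows from its naturality and from its associativity law combined with the diagonal. The only difference is organisational: you first rewrite the left-hand side's argument as $\mu \comp T g \comp \strength \comp \tupling{\identity{}}{\interpret{L}}$, whereas the paper postpones all the strength bookkeeping to one long final calculation.

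Two small corrections. First, that preliminary rewrite does not follow from naturality of $\mu$ and the strength--multiplication law alone. It also needs the associativity/diagonal identity $\strength \comp \tupling{\identity{}}{\strength \comp \tupling{\identity{}}{\interpret{L}}} = T\tupling{\pi_1}{\identity{}} \comp \strength \comp \tupling{\identity{}}{\interpret{L}}$, and this, not weakening, is where your $\pi_1$ comes from. Second, $\interpret{N}$ inside $g$ must be read as the weakened interpretation $\interpret{N} \comp (\pi_1 \times \identity{})$, so that the two occurrences of $g$ agree.
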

\begin{proof}
	We use Lemma~\ref{lem:weakening-interpretation}.
	For simplicity, we write $\interpret{\Sigma}$ as $\Sigma$.
	The left-hand side can be rewritten as follows.
	\allowdisplaybreaks
	\begin{align}
		&\interpret{\handlewithto{(\letin{x}{L}{M})}{H}{y}{N}} \\
		&= \mathbf{handle} \comp \tupling{\interpret{H}}{T_\Sigma \interpret{N} \comp \strength \comp \tupling{\identity{}}{\mu \comp T_\Sigma \interpret{M} \comp \strength \comp \tupling{\identity{}}{\interpret{L}}}} \\
		&= \mathbf{handle} \comp \tupling{\interpret{H}}{T_\Sigma \interpret{N} \comp \strength \comp (\identity{} \times \mu) \comp \tupling{\identity{}}{T_\Sigma \interpret{M} \comp \strength \comp \tupling{\identity{}}{\interpret{L}}}} \\
		&= \mathbf{handle} \comp \tupling{\interpret{H}}{T_\Sigma \interpret{N} \comp \mu \comp T_\Sigma \strength \comp \strength \comp \tupling{\identity{}}{T_\Sigma \interpret{M} \comp \strength \comp \tupling{\identity{}}{\interpret{L}}}} \\
		&= \mathbf{handle} \comp \tupling{\interpret{H}}{\mu \comp (T_\Sigma)^2 \interpret{N} \comp T_\Sigma \strength \comp \strength \comp \tupling{\identity{}}{T_\Sigma \interpret{M} \comp \strength \comp \tupling{\identity{}}{\interpret{L}}}} \\
		&= \mathbf{handle} \comp (\identity{} \times \mu) \comp \tupling{\interpret{H}}{(T_\Sigma)^2 \interpret{N} \comp T_\Sigma \strength \comp \strength \comp \tupling{\identity{}}{T_\Sigma \interpret{M} \comp \strength \comp \tupling{\identity{}}{\interpret{L}}}} \\
		&= \mathbf{handle} \comp (\identity{} \times T_\Sigma \mathbf{handle}) \comp \tupling{\pi_1}{\strength} \comp \tupling{\interpret{H}}{(T_\Sigma)^2 \interpret{N} \comp T_\Sigma \strength \comp \strength \comp \tupling{\identity{}}{T_\Sigma \interpret{M} \comp \strength \comp \tupling{\identity{}}{\interpret{L}}}} \\
		&= \mathbf{handle} \comp (\identity{} \times T_\Sigma \mathbf{handle}) \comp \tupling{\interpret{H}}{\strength \comp \tupling{\interpret{H}}{(T_\Sigma)^2 \interpret{N} \comp T_\Sigma \strength \comp \strength \comp \tupling{\identity{}}{T_\Sigma \interpret{M} \comp \strength \comp \tupling{\identity{}}{\interpret{L}}}}}
	\end{align}
	On the other hand, the right-hand side can be rewritten as follows.
	\begin{align}
		&\interpret{\handlewithto{L}{H}{x}{\handlewithto{M}{H}{y}{N}}} \\
		&= \mathbf{handle} \comp \tupling{\interpret{H}}{T_\Sigma \mathbf{handle} \comp T_\Sigma \tupling{\interpret{H} \comp \pi_1}{T_\Sigma (\interpret{N} \comp (\pi_1 \times \identity{})) \comp \strength \comp \tupling{\identity{}}{\interpret{M}}} \comp \strength \comp \tupling{\identity{}}{\interpret{L}}} \\
		&= \mathbf{handle} \comp (\identity{} \times T_\Sigma \mathbf{handle}) \comp \tupling{\interpret{H}}{T_\Sigma \tupling{\interpret{H} \comp \pi_1}{T_\Sigma (\interpret{N} \comp (\pi_1 \times \identity{})) \comp \strength \comp \tupling{\identity{}}{\interpret{M}}} \comp \strength \comp \tupling{\identity{}}{\interpret{L}}}
	\end{align}
	Thus, it suffices to show that the following equation holds.
	\begin{align}
		&\strength \comp \tupling{\interpret{H}}{(T_\Sigma)^2 \interpret{N} \comp T_\Sigma \strength \comp \strength \comp \tupling{\identity{}}{T_\Sigma \interpret{M} \comp \strength \comp \tupling{\identity{}}{\interpret{L}}}} \\
		&= T_\Sigma \tupling{\interpret{H} \comp \pi_1}{T_\Sigma \interpret{N} \comp T_\Sigma (\pi_1 \times \identity{}) \comp \strength \comp \tupling{\identity{}}{\interpret{M}}} \comp \strength \comp \tupling{\identity{}}{\interpret{L}}
	\end{align}

	This is shown as follows.
	\begin{align}
		&\strength \comp \tupling{\interpret{H}}{(T_\Sigma)^2 \interpret{N} \comp T_\Sigma \strength \comp \strength \comp \tupling{\identity{}}{T_\Sigma \interpret{M} \comp \strength \comp \tupling{\identity{}}{\interpret{L}}}} \\
		&= (T_\Sigma (\interpret{H} \times T_\Sigma \interpret{N})) \comp \strength \comp \tupling{\identity{}}{T_\Sigma \strength \comp \strength \comp \tupling{\identity{}}{T_\Sigma \interpret{M} \comp \strength \comp \tupling{\identity{}}{\interpret{L}}}} \\
		&= (T_\Sigma (\interpret{H} \times T_\Sigma \interpret{N})) \comp T_\Sigma (\identity{} \times \strength) \comp \strength \comp (\identity{} \times \strength) \comp \associator \comp \tupling{\tupling{\identity{}}{\identity{}}}{T_\Sigma \interpret{M} \comp \strength \comp \tupling{\identity{}}{\interpret{L}}} \\
		&= (T_\Sigma (\interpret{H} \times T_\Sigma \interpret{N})) \comp T_\Sigma (\identity{} \times \strength) \comp T_\Sigma \associator \comp \strength \comp (\tupling{\identity{}}{\identity{}} \times \identity{}) \comp \tupling{\identity{}}{T_\Sigma \interpret{M} \comp \strength \comp \tupling{\identity{}}{\interpret{L}}} \\
		&= (T_\Sigma (\interpret{H} \times T_\Sigma \interpret{N})) \comp T_\Sigma (\identity{} \times \strength) \comp T_\Sigma \tupling{\pi_1}{\identity{}} \comp \strength \comp \tupling{\identity{}}{T_\Sigma \interpret{M} \comp \strength \comp \tupling{\identity{}}{\interpret{L}}} \\
		&= (T_\Sigma (\interpret{H} \times T_\Sigma \interpret{N})) \comp T_\Sigma \tupling{\pi_1}{\strength} \comp \strength \comp \tupling{\identity{}}{T_\Sigma \interpret{M} \comp \strength \comp \tupling{\identity{}}{\interpret{L}}} \\
		&= (T_\Sigma (\interpret{H} \times T_\Sigma \interpret{N})) \comp T_\Sigma \tupling{\pi_1}{\strength} \comp T_\Sigma (\identity{} \times \interpret{M}) \comp \strength \comp \tupling{\identity{}}{\strength \comp \tupling{\identity{}}{\interpret{L}}} \\
		&= (T_\Sigma (\interpret{H} \times T_\Sigma \interpret{N})) \comp T_\Sigma \tupling{\pi_1}{\strength \comp (\identity{} \times \interpret{M})} \comp \strength \comp \tupling{\identity{}}{\strength \comp \tupling{\identity{}}{\interpret{L}}} \\
		&= (T_\Sigma (\interpret{H} \times T_\Sigma \interpret{N})) \comp T_\Sigma \tupling{\pi_1}{\strength \comp(\identity{} \times \interpret{M})} \comp \strength \comp (\identity{} \times \strength) \comp \associator \comp \tupling{\tupling{\identity{}}{\identity{}}}{\interpret{L}} \\
		&= (T_\Sigma (\interpret{H} \times T_\Sigma \interpret{N})) \comp T_\Sigma \tupling{\pi_1}{\strength \comp (\identity{} \times \interpret{M})} \comp T_\Sigma \associator \comp \strength \comp \tupling{\tupling{\identity{}}{\identity{}}}{\interpret{L}} \\
		&= (T_\Sigma (\interpret{H} \times T_\Sigma \interpret{N})) \comp T_\Sigma \tupling{\pi_1}{\strength \comp (\identity{} \times \interpret{M})} \comp T_\Sigma \associator \comp T_\Sigma (\tupling{\identity{}}{\identity{}} \times \identity{}) \comp \strength \comp \tupling{\identity{}}{\interpret{L}} \\
		&= (T_\Sigma (\interpret{H} \times T_\Sigma \interpret{N})) \comp T_\Sigma \tupling{\pi_1}{\strength \comp (\identity{} \times \interpret{M})} \comp T_\Sigma \tupling{\pi_1}{\identity{}} \comp \strength \comp \tupling{\identity{}}{\interpret{L}} \\
		&= (T_\Sigma (\interpret{H} \times T_\Sigma \interpret{N})) \comp T_\Sigma \tupling{\pi_1}{\strength \comp \tupling{\pi_1}{\interpret{M}}} \comp \strength \comp \tupling{\identity{}}{\interpret{L}} \\
		&= T_\Sigma \tupling{\interpret{H} \comp \pi_1}{T_\Sigma \interpret{N} \comp \strength \comp (\pi_1 \times \identity{}) \comp \tupling{\identity{}}{\interpret{M}}} \comp \strength \comp \tupling{\identity{}}{\interpret{L}} \\
		&= T_\Sigma \tupling{\interpret{H} \comp \pi_1}{T_\Sigma \interpret{N} \comp T_\Sigma (\pi_1 \times \identity{}) \comp \strength \comp \tupling{\identity{}}{\interpret{M}}} \comp \strength \comp \tupling{\identity{}}{\interpret{L}}
		\qedhere
	\end{align}
\end{proof}

\begin{lemma}\label{lem:handle-op-aux}
	Let $\Gamma \vdash V : A$ be a well-typed value term, $\Gamma, x : B \vdash M : C$ be a well-typed computation term, and $\mathtt{op} : A \rightarrowtriangle B$ be an operation symbol in $\Sigma$.
	Then, the following equation holds.
	\[ T_\Sigma \interpret{M} \comp \strength \comp \tupling{\identity{}}{\interpret{\mathtt{op}} \comp \interpret{V}} \quad=\quad \MixedAlgOp{\Sigma'}{X}{\interpret{\mathtt{op}}} \comp \tupling{\interpret{V}}{\Lambda \interpret{M}} \]
\end{lemma}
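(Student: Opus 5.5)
We unfold the right-hand side using the definition of $\MixedAlgOp{\Sigma'}{X}{\interpret{\mathtt{op}}}$ and then move $\interpret{\mathtt{op}}$ past the strength by naturality. Here $X = J\interpret{B'}$ is the object with $\interpret{C} = J \interpret{B'}$, and $\eval$ is the evaluation morphism for $\KleisliExp{T_{\Sigma'}}{B}{X}$.

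Concretely, the right-hand side is
\[ T_\Sigma \eval \comp \strength \comp \braiding \comp (\interpret{\mathtt{op}} \times \identity{}) \comp \tupling{\interpret{V}}{\Lambda \interpret{M}} . \]
The step $(\interpret{\mathtt{op}} \times \identity{}) \comp \tupling{\interpret{V}}{\Lambda\interpret{M}} = \tupling{\interpret{\mathtt{op}} \comp \interpret{V}}{\Lambda \interpret{M}}$ is immediate. Composing with $\braiding$ then gives $\tupling{\Lambda \interpret{M}}{\interpret{\mathtt{op}} \comp \interpret{V}}$.

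Next we factor this pair as
\[ \tupling{\Lambda \interpret{M}}{\interpret{\mathtt{op}} \comp \interpret{V}} = (\Lambda \interpret{M} \times \identity{}) \comp \tupling{\identity{}}{\interpret{\mathtt{op}} \comp \interpret{V}} , \]
where $\Lambda\interpret{M} \times \identity{}$ is a map $\interpret{\Gamma} \times T_\Sigma \interpret{B} \to (\KleisliExp{T_{\Sigma'}}{\interpret{B}}{X}) \times T_\Sigma \interpret{B}$. Naturality of $\strength^{T_\Sigma}$ in its first argument moves $\Lambda \interpret{M}$ past the strength:
\[ \strength \comp (\Lambda\interpret{M} \times \identity{}) = T_\Sigma(\Lambda \interpret{M} \times \identity{}) \comp \strength . \]
The right-hand side therefore becomes
\[ T_\Sigma\bigl(\eval \comp (\Lambda\interpret{M} \times \identity{})\bigr) \comp \strength \comp \tupling{\identity{}}{\interpret{\mathtt{op}} \comp \interpret{V}} . \]

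Finally, the universal property of the Kleisli exponential gives $\eval \comp (\Lambda \interpret{M} \times \identity{}) = \interpret{M}$, since $\Lambda^{-1}$ is natural in the first argument and $\eval = \Lambda^{-1}(\identity{})$. This yields exactly the left-hand side.

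The only point needing care is the direction of $\braiding$ and of the strength. $\strength_{X,Y} : X \times TY \to T(X\times Y)$ has the pure argument on the left, so after the braiding the continuation $\Lambda\interpret{M}$ must sit on the left. We must also check that $\eval$ is typed with the function argument first, so that $\eval \comp (\Lambda f \times \identity{}) = f$ is applied in the right orientation. Apart from this bookkeeping the argument is two naturality steps and one $\beta$-law for $\Lambda$.
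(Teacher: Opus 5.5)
Your proposal is correct and follows the paper's proof essentially step for step. Like the paper, you unfold $\MixedAlgOp{\Sigma'}{X}{\interpret{\mathtt{op}}}$, push the pairing through $\interpret{\mathtt{op}} \times \identity{}$ and $\braiding$, use naturality of the strength to move $\Lambda\interpret{M}$ under $T_\Sigma$, and finish with $\eval \comp (\Lambda\interpret{M} \times \identity{}) = \interpret{M}$.
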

\begin{proof}
	\begin{align}
		&\MixedAlgOp{\Sigma'}{X}{\interpret{\mathtt{op}}} \comp \tupling{\interpret{V}}{\Lambda \interpret{M}} \\
		&= T \eval \comp \strength \comp \braiding \comp (\interpret{\mathtt{op}} \times \identity{}) \comp \tupling{\interpret{V}}{\Lambda \interpret{M}} \\
		&= T \eval \comp \strength \comp \tupling{\Lambda \interpret{M}}{\interpret{\mathtt{op}} \comp \interpret{V}} \\
		&= T \eval \comp T (\Lambda \interpret{M} \times \identity{}) \comp \strength \comp \tupling{\identity{}}{\interpret{\mathtt{op}} \comp \interpret{V}} \\
		&= T \interpret{M} \comp \strength \comp \tupling{\identity{}}{\interpret{\mathtt{op}} \comp \interpret{V}}
		\qedhere
	\end{align}
\end{proof}

\begin{lemma}\label{lem:handle-op-sound}
	\eqref{eq:handle-op} is sound.
\end{lemma}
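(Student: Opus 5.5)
We prove soundness of \eqref{eq:handle-op} by a direct calculation. The plan is to rewrite the left-hand side into the form required by \eqref{eq:handle-operation}, and then recognise the result as the interpretation of the substituted operation clause via Lemma~\ref{lem:subst-interpretation}. We fix $\Gamma \vdash V : A_{\mathtt{op}}$, $\Gamma, x : B_{\mathtt{op}} \vdash M : C$ with $C = B ! \Sigma'$, and a handler $\Gamma \vdash H : \Sigma \Rightarrow C$ whose clause for $\mathtt{op}$ is $M_{\mathtt{op}}$. As in the previous lemmas, we write $\Sigma$ for $\interpret{\Sigma}$ and set $X = J\interpret{B}$.

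\textbf{Step 1: unfold the left-hand side.} By Definition~\ref{def:interpretation-lambda-eff} and $\interpret{\mathtt{op}(V)} = \interpret{\mathtt{op}} \comp \interpret{V}$, the left-hand side is
\[ \mathbf{handle} \comp \tupling{\interpret{H}}{T_\Sigma \interpret{M} \comp \strength \comp \tupling{\identity{}}{\interpret{\mathtt{op}} \comp \interpret{V}}}. \]
Lemma~\ref{lem:handle-op-aux} replaces the second component by $\MixedAlgOp{\Sigma'}{X}{\interpret{\mathtt{op}}} \comp \tupling{\interpret{V}}{\Lambda \interpret{M}}$. The whole expression then factors as
\[ \mathbf{handle} \comp (\identity{} \times \MixedAlgOp{\Sigma'}{X}{\interpret{\mathtt{op}}}) \comp \tupling{\interpret{H}}{\tupling{\interpret{V}}{\Lambda\interpret{M}}}. \]

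\textbf{Step 2: apply \eqref{eq:handle-operation}.} This turns the expression into
\[ \eval \comp (\pi_{\mathtt{op}} \times \identity{}) \comp \tupling{\interpret{H}}{\tupling{\interpret{V}}{\Lambda\interpret{M}}} = \eval \comp \tupling{\pi_{\mathtt{op}} \comp \interpret{H}}{\tupling{\interpret{V}}{\Lambda\interpret{M}}}. \]
By definition of $\interpret{H}$, we have $\pi_{\mathtt{op}} \comp \interpret{H} = \Lambda(\interpret{M_{\mathtt{op}}} \comp \associator^{-1})$. The currying law $\eval \comp (\Lambda f \times \identity{}) = f$ then gives
\[ \interpret{M_{\mathtt{op}}} \comp \associator^{-1} \comp \tupling{\identity{}}{\tupling{\interpret{V}}{\Lambda\interpret{M}}} = \interpret{M_{\mathtt{op}}} \comp \tupling{\tupling{\identity{}}{\interpret{V}}}{\Lambda \interpret{M}}. \]

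\textbf{Step 3: identify the right-hand side.} By Lemma~\ref{lem:subst-interpretation}, applied to the substitution $\sigma = [\text{identity on }\Gamma, V/x, \lambda x. M/k]$, we get
\[ \interpret{M_{\mathtt{op}}[V/x, \lambda x.M/k]} = \interpret{M_{\mathtt{op}}} \comp \tupling{\tupling{\identity{}}{\interpret{V}}}{\interpret{\lambda x. M}}, \]
and $\interpret{\lambda x.M} = \Lambda\interpret{M}$ by definition. This matches the result of Step 2, which completes the proof.

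\textbf{Expected obstacle.} Lemma~\ref{lem:subst-interpretation} is phrased for a substitution of all variables. Here the variables of $\Gamma$ are kept and only $x, k$ are replaced, so we must note that the identity part of $\sigma$ on $\Gamma$ interprets as the identity morphism (via the projections, using Lemma~\ref{lem:weakening-interpretation}). We must also check that the associativity bracketing of $\interpret{\Gamma} \times \interpret{A_{\mathtt{op}}} \times (\KleisliExp{T_{\Sigma'}}{\interpret{B_{\mathtt{op}}}}{X})$ agrees on both sides. Beyond this bookkeeping, the only other check is naturality of $\Lambda$ in its first argument in Step 2; everything else is a direct rewrite.
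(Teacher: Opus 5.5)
Your proof is correct and follows the paper's argument step for step. You unfold the interpretation, rewrite via Lemma~\ref{lem:handle-op-aux}, apply \eqref{eq:handle-operation}, use $\pi_{\mathtt{op}} \comp \interpret{H} = \Lambda(\interpret{M_{\mathtt{op}}} \comp \associator^{-1})$ with the currying law, and conclude with Lemma~\ref{lem:subst-interpretation}; your remark that this lemma must be applied to a substitution that is the identity on $\Gamma$ is bookkeeping the paper leaves implicit, and you handle it correctly.
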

\begin{proof}
	We use Lemma~\ref{lem:subst-interpretation}.
	\begin{align}
		&\interpret{\handlewithto{\mathtt{op}(V)}{H}{x}{M}} \\
		&= \mathbf{handle} \comp \tupling{\interpret{H}}{T_\Sigma \interpret{M} \comp \strength \comp \tupling{\identity{}}{\interpret{\mathtt{op}} \comp \interpret{V}}} \\
		&= \mathbf{handle} \comp (\identity{} \times \MixedAlgOp{\Sigma'}{X}{\interpret{\mathtt{op}}}) \comp \tupling{\interpret{H}}{\tupling{\interpret{V}}{\Lambda \interpret{M}}} \\
		&= \eval \comp (\pi_{\mathtt{op}} \times \identity{}) \comp \tupling{\interpret{H}}{\tupling{\interpret{V}}{\Lambda \interpret{M}}} \\
		&= \eval \comp (\Lambda(\interpret{M_{\mathrm{op}}} \comp \associator^{-1}) \times \identity{}) \comp \tupling{\identity{}}{\tupling{\interpret{V}}{\Lambda \interpret{M}}} \\
		&= \interpret{M_{\mathrm{op}}} \comp \associator^{-1} \comp \tupling{\identity{}}{\tupling{\interpret{V}}{\Lambda \interpret{M}}} \\
		&= \interpret{M_{\mathrm{op}}} \comp \tupling{\tupling{\identity{}}{\interpret{V}}}{\Lambda \interpret{M}} \\
		&= \interpret{M_{\mathrm{op}}[V/x, \lambda x. M/k]}
		\qedhere
	\end{align}
\end{proof}

\section{Proof of Completeness}
\label{sec:proof-of-completeness}
\subsection{Category}

\begin{lemma}\label{lem:term-model-category}
	Definition~\ref{def:term-model-category} defines a category.
\end{lemma}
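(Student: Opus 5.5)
The argument is routine: verify the category axioms for $\TermModel{\EffectHandlerCalculus}$ directly from the equational theory and the substitution lemmas of Appendix~\ref{sec:basic-properties}. There are four things to check: that composition is well defined on equivalence classes, that identities are neutral, that composition is associative, and that the result of composing is well typed. For typing, given $x : A \vdash V : B$ and $x : B \vdash W : C$, the composite is a substitution instance of one term into the other. The substitution lemma for typing (admissibility of $\Gamma \vdash V[\sigma] : B$ for $\sigma : \Gamma \to \Delta$) then gives $x : A \vdash W[V/x] : C$. I read the composite as ``first $V$, then $W$'', so it is $W[V/x]$; the paper's notation $V \comp W = V[W/x]$ fixes the order in the other convention, and the argument is symmetric.

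\textbf{Well-definedness.} This is the only step needing a genuine lemma. We need: if $x : A \vdash V = V' : B$ and $x : B \vdash W = W' : C$, then $x : A \vdash W[V/x] = W'[V'/x] : C$. I would split this into two parts by transitivity.
\begin{itemize}
\item \emph{Substitution of equal terms.} $W[V/x] = W[V'/x]$, proved by induction on the typing derivation of $W$, using the congruence rules at each constructor. Binders are handled by extending the substitution with $z \mapsto z$ and using weakening (Lemma~\ref{lem:weakening}) on the premise $V = V'$, as in the proof of Lemma~\ref{lem:subst-assoc}.
\item \emph{Closure of equality under substitution.} $W = W'$ implies $W[V/x] = W'[V/x]$, proved by induction on the derivation of the equation. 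Reflexivity, symmetry, transitivity and the congruence rules are immediate. For each axiom (the $\beta$/$\eta$ rules, the monad laws, \textsc{Handle-Ret}, \textsc{Handle-Let}, \textsc{Handle-Op}) the substituted instance is again an instance of the same axiom. This uses the commutation $M[U/y][V/x] = M[V/x][U[V/x]/y]$, which follows from Lemma~\ref{lem:subst-assoc}, together with freshness of bound variables (e.g.\ for \textsc{Abs-Eta}).
\end{itemize}
Both inductions must be carried out simultaneously for value terms, computation terms and handlers, since these are mutually defined; the handler case uses \textsc{Handler-Congr}.

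\textbf{Identity and associativity.} For identities, $x[V/x] = V$ holds syntactically, and $V[x/x] = V$ holds syntactically by induction. Associativity, $(U[V/x])[W/x] = U[V[W/x]/x]$, is literally Lemma~\ref{lem:subst-assoc} for single-variable substitutions, so it already holds on representatives and therefore on classes.

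I expect the only real work to be the second half of the well-definedness lemma: checking that each handler axiom is stable under substitution. In \textsc{Handle-Op}, the substitution must commute with $M_{\mathtt{op}}[V/x, \lambda x. M/k]$. Once bound variables are $\alpha$-renamed away from $x$ and the free variables of the substituted term, this is a direct application of the substitution-composition lemma. The rest is bookkeeping.
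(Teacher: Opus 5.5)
Your proposal is correct and follows the paper's proof essentially step for step. You split well-definedness into the same two admissibility results: substituting equal terms into a term by induction on the term, and closure of derivable equations under substitution by induction on the derivation, with each axiom instance stable under substitution. Identities are immediate, and associativity comes from the substitution-composition lemma exactly as in the paper; your explicit well-typedness check on the composite and your remark about the composition-order convention are details the paper leaves implicit.
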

\begin{proof}
	This is divided into several lemmas.
	\begin{itemize}
		\item The composition is well-defined by Lemma~\ref{lem:subst-value-eq-1},\ref{lem:subst-value-eq-2}.
		\item The left/right-unit law is immediate.
		\item Associativity follows from Lemma~\ref{lem:subst-assoc}.
		\[ U \comp (V \comp W) = U[V[W/x]/x] = U[V/x][W/x] = (U \comp V) \comp W \qedhere \]
	\end{itemize}
\end{proof}

\begin{lemma}[substitution for equations]\label{lem:subst-value-eq-1}
	For any value term $W$, the following rule is admissible:
	\begin{mathpar}
		\inferrule{
			x_1 : A_1, \dots, x_n : A_n \vdash W : B \\
			\forall i = 1, \dots, n.\ \Gamma \vdash V_i = V'_i : A_i
		}{
			\Gamma \vdash W[V_1/x_1, \dots, V_n/x_n] = W[V'_1/x_1, \dots, V'_n/x_n] : B
		}
	\end{mathpar}
	The same holds for computation terms and handlers.
\end{lemma}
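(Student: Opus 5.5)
The plan is a simultaneous induction on the typing derivation of $W$ (and, in parallel, of computation terms $M$ and handlers $H$), with the context $x_1 : A_1, \dots, x_n : A_n$ allowed to vary. To make the binder cases go through, I will first generalise the statement to arbitrary substitutions: for $\sigma, \sigma' : \Gamma \to \Delta$ with $\Gamma \vdash \sigma(x) = \sigma'(x) : A$ for every $x : A$ in $\Delta$, and any $\Delta \vdash W : B$, we have $\Gamma \vdash W[\sigma] = W[\sigma'] : B$. The same statement is proved for computation terms and for handlers. The lemma is then the special case $\Delta = x_1 : A_1, \dots, x_n : A_n$.

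The induction proceeds constructor by constructor.

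\begin{itemize}
\item For a variable $x_i$, the conclusion is exactly the hypothesis $\Gamma \vdash V_i = V'_i : A_i$.
\item For tuples, projections, application, $\mathtt{return}$ and $\mathtt{op}(V)$, I substitute componentwise, apply the induction hypothesis to each immediate subterm, and close with the congruence rule for that constructor.
\item For $\lambda x. M$, $\letin{x}{M}{N}$, handle-with (binding $x$ in the return clause) and handler clauses (binding $x, k$), I extend the substitutions to $\sigma[z \mapsto z], \sigma'[z \mapsto z] : \Gamma, z : B \to \Delta, z : B$, choosing $z$ fresh up to $\alpha$-conversion.
\item In the binder case, the hypotheses on the extended substitutions come from two facts: the old premises weakened to $\Gamma, z : B$, and reflexivity on $z$. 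I will then apply the induction hypothesis to the body and finish with \textsc{Abs-Congr}, the let/handle congruence rules, or \textsc{Handler-Congr}.
\end{itemize}

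The main obstacle is the weakening step just used. I need that derivable equations are stable under context extension: if $\Gamma \vdash V = V' : A$ then $\Gamma, z : B \vdash V = V' : A$. Lemma~\ref{lem:weakening} only covers typing judgements, so I will prove this separately by induction on the derivation of the equation. Each axiom (\textsc{Abs-Beta}, \textsc{Handle-Op}, and so on) is preserved because its side premises are typing judgements, which weaken by Lemma~\ref{lem:weakening}, and weakening commutes with the substitutions occurring in the axioms. The congruence, reflexivity, symmetry and transitivity rules are immediate. With this in hand, the binder cases close, and the generalised lemma specialises to the stated rule for values, computations and handlers.
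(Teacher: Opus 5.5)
Your proof is correct and is essentially the paper's: simultaneous induction on $W$ for values, computations and handlers, closing each case with the corresponding congruence rule and extending the substitution by $z/z$ at binders; your generalisation to arbitrary $\sigma, \sigma' : \Gamma \to \Delta$ and the explicit weakening of equational judgements make precise what the paper leaves implicit. One small fix: state that auxiliary weakening with the new variable inserted anywhere in the context ($\Gamma, \Delta \vdash U = U' : A$ implies $\Gamma, z : B, \Delta \vdash U = U' : A$), as Lemma~\ref{lem:weakening} does for typing. If $z$ may only be appended at the end, the binder congruence cases (\textsc{Abs-Congr}, \textsc{Handler-Congr}, and the let and handle-with congruences) produce $\Gamma, x : A, z : B$ instead of the required $\Gamma, z : B, x : A$, so the induction as you describe it would need an extra exchange lemma.
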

\begin{proof}
	We prove the statement for value terms, computation terms and handlers simultaneously.
	The proof is by induction on $W$.
	We apply congruence rules for each term constructor.
	Most cases are straightforward.
	If variable binding is involved, we extend $[V_1/x_1, \dots, V_n/x_n]$ to $[V_1/x_1, \dots, V_n/x_n, z/z]$ where $z$ is a variable bound in $W$.
\end{proof}

\begin{lemma}\label{lem:subst-value-eq-2}
	For any value terms $W$ and $W'$, the following rule is admissible:
	\begin{mathpar}
		\inferrule{
			x_1 : A_1, \dots, x_n : A_n \vdash W = W' : B \\
			\forall i = 1, \dots, n.\ \Gamma \vdash V_i : A_i
		}{
			\Gamma \vdash W[V_1/x_1, \dots, V_n/x_n] = W'[V_1/x_1, \dots, V_n/x_n] : B
		}
	\end{mathpar}
	The same holds for computation terms and handlers.
\end{lemma}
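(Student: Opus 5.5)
The argument is a simultaneous induction on the derivation of the equation $x_1 : A_1, \dots, x_n : A_n \vdash W = W' : B$, with the analogous statements for computation terms and handlers proved in the same induction. Write $\sigma$ for the substitution $[V_1/x_1, \dots, V_n/x_n]$, so that $\sigma : \Gamma \to (x_1 : A_1, \dots, x_n : A_n)$. We case-split on the last rule used in the derivation.

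Reflexivity, symmetry and transitivity follow immediately from the induction hypotheses, since the same rules are available in context $\Gamma$.

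For congruence rules on constructors that bind no variables (tupling, projection, application, $\mathtt{return}$, $\mathtt{op}(-)$), we apply the induction hypothesis to each premise. We then reassemble the result with the same congruence rule, using that substitution commutes with these constructors by definition.

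For binders, we first use $\alpha$-conversion to make the bound variable $z$ fresh for $\Gamma$ and for every $V_i$. These binders are the $\lambda$-abstraction, the $\mathtt{let}$, the return clause $x$ in handle-with, and the bound variables $x,k$ of the operation clauses in \textsc{Handler-Congr}. The premise then lives in the extended context $x_1 : A_1, \dots, x_n : A_n, z : C$. We apply the induction hypothesis with the extended substitution $\sigma[z \mapsto z] : \Gamma, z : C \to (x_1 : A_1, \dots, z : C)$. For this to be a valid substitution we need $\Gamma, z : C \vdash V_i : A_i$, which follows from weakening (Lemma~\ref{lem:weakening}). The congruence rule for that binder then closes the case.

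The axiom cases are the $\beta$/$\eta$ laws, the monad laws, and \textsc{Handle-Ret}, \textsc{Handle-Let}, \textsc{Handle-Op}. In each case the substituted left and right sides are again an instance of the same axiom in context $\Gamma$. To see this we check two things:
\begin{itemize}
\item the typing side-conditions are preserved, by the substitution lemma for typing;
\item substitution commutes with the substitution inside the axiom.
\end{itemize}
The second point is the one real obstacle, and it appears in \textsc{Abs-Beta}, \textsc{Ret-Let}, \textsc{Handle-Ret} and \textsc{Handle-Op}. There we need $M[V/x][\sigma] = M[\sigma][V[\sigma]/x]$ for fresh $x$. In \textsc{Handle-Op} the analogous identity is needed for the simultaneous substitution $[V/x, \lambda x.M/k]$, and the handler clause $M_{\mathtt{op}}$ must also be substituted under the binders $x,k$.

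I would derive these identities from Lemma~\ref{lem:subst-assoc}. Both sides of $M[V/x][\sigma] = M[\sigma][V[\sigma]/x]$ equal $M[\tau]$, where $\tau = (\sigma \comp \dots)$ is the composite sending $x_i \mapsto V_i$ and $x \mapsto V[\sigma]$. This holds because $x$ is not free in any $V_i$, which is guaranteed by the earlier freshness convention. The $\eta$-rules have the side condition that $x$ is fresh. Substitution preserves this condition, again because bound variables were chosen fresh for the $V_i$. With these commutation facts in hand, every axiom case is an instance of the same axiom, and the induction goes through.
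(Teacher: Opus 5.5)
Your proposal is correct and follows essentially the same proof as the paper. The paper also inducts on the derivation of the equation, treats reflexivity, symmetry, transitivity and congruence as routine, and observes that substituting into an axiom instance yields another instance of the same axiom. You make explicit what the paper leaves implicit: extending $\sigma$ under binders via weakening, and the commutation $M[V/x][\sigma] = M[\sigma][V[\sigma]/x]$, which follows from Lemma~\ref{lem:subst-assoc} together with the freshness convention.
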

\begin{proof}
	We prove the statement for value terms, computation terms and handlers simultaneously.
	The proof is by induction on derivation of $x_1 : A_1, \dots, x_n : A_n \vdash W = W' : B$.
	\begin{itemize}
		\item If $W = W'$ is derived by one of reflexivity, symmetry, transitivity, congruence rules, then the proof is straightforward.
		\item Axioms ($\beta$-/$\eta$-laws, monad laws, and equations for effect handlers) are closed under substitution: if $W = W'$ is an instance of one of those axioms, then $W[V_1/x_1, \dots, V_n/x_n] = W'[V_1/x_1, \dots, V_n/x_n]$ is also an instance of the same axiom.
		\qedhere
	\end{itemize}
\end{proof}

\subsection{Finite Products}

\begin{lemma}\label{lem:term-model-products}
	Finite products in $\TermModel{\EffectHandlerCalculus}$ are given as follows.
	\begin{itemize}
		\item product: $A_1 \times \dots \times A_n$
		\item projection: $x : A_1 \times \dots \times A_n \vdash \pi_i\ x : A_i$
		\item tupling: Given $x : A \vdash V_i : B_i$ for each $i = 1, \dots, n$, we have
		\[ x : A \vdash \langle V_1, \dots, V_n \rangle : B_1 \times \dots \times B_n. \]
	\end{itemize}
\end{lemma}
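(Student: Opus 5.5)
The plan is to check the universal property of finite products directly in $\TermModel{\EffectHandlerCalculus}$. Morphisms are equivalence classes of value terms, so this amounts to deriving certain equations in the theory of Section~\ref{sec:equational-theory-summary} and checking that the constructions respect the quotient. There are three steps.

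\emph{Well-definedness.} The projection is a single term and needs nothing. For tupling, suppose $x : A \vdash V_i = V'_i : B_i$ for each $i$. Then $x : A \vdash \langle V_1, \dots, V_n \rangle = \langle V'_1, \dots, V'_n \rangle$ follows from the congruence rule for tuples. So tupling descends to equivalence classes. Composition is already well defined by Lemma~\ref{lem:term-model-category}.

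\emph{Existence (the $\beta$-equations).} In the category, composing the tupling $\langle V_1, \dots, V_n \rangle : A \to \prod_i B_i$ with the projection $\pi_i\ x$ is the substitution $(\pi_i\ x)[\langle V_1, \dots, V_n \rangle / x] = \pi_i\ \langle V_1, \dots, V_n \rangle$. This equals $V_i$ by \textsc{Prod-Beta}.

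\emph{Uniqueness (the $\eta$-equation).} Let $x : A \vdash U : \prod_i B_i$ satisfy $\pi_i\ U = V_i$ for every $i$. Then
\[ U = \langle \pi_1\ U, \dots, \pi_n\ U \rangle = \langle V_1, \dots, V_n \rangle \]
by \textsc{Prod-Eta} followed by the congruence rule for tuples.

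The case $n = 0$ gives the terminal object $\UnitType$. Here uniqueness reads $U = \langle\rangle$ for every $x : A \vdash U : \UnitType$, which is exactly \textsc{Prod-Eta} with no components.

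I expect no real obstacle. The only subtle point is composition order: the paper writes the composite of $V : A \to B$ and $W : B \to C$ as the substitution of $V$ into $W$, so each composite must be read as substituting into the projection term. Once that is fixed, every step is a single axiom or congruence rule.
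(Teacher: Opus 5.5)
Your proposal is correct and matches the paper's proof: \textsc{Prod-Beta} gives $\pi_i \comp \langle V_1, \dots, V_n \rangle = V_i$, and \textsc{Prod-Eta} together with the congruence rule for tuples gives uniqueness. The well-definedness check for tupling and the explicit $n = 0$ case are harmless additions that the paper leaves implicit.
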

\begin{proof}
	We show the universal property.
	The tupling satisfies the following equation by the $\beta$-law for products:
	\[ \pi_i \comp \langle V_1, \dots, V_n \rangle = V_i \]
	The uniqueness of such morphism also follows from the $\eta$-law for products.
	If $\pi_i W = V_i$ for each $i$, then
	\[ W = \langle \pi_1\ W, \dots, \pi_n\ W \rangle = \langle V_1, \dots, V_n \rangle. \]
\end{proof}

\subsection{Strong Monads}

\begin{lemma}\label{lem:term-model-monad}
	Definition~\ref{def:term-model-monad} defines a strong monad on $\TermModel{\EffectHandlerCalculus}$.
\end{lemma}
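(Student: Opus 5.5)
We verify the three laws of a strong Kleisli triple (Definition~\ref{def:strong-kleisli-triple}) for the data of Definition~\ref{def:term-model-monad}, working in $\TermModel{\EffectHandlerCalculus}$ where morphisms are value terms modulo the equational theory. Then we invoke the equivalence between strong Kleisli triples and strong monads \cite{McDermottMSFP2022}. Before checking the laws, we confirm that $V \mapsto V^{\sharp}$ is well defined on equivalence classes, which follows from the congruence rules together with Lemma~\ref{lem:subst-value-eq-1} and Lemma~\ref{lem:subst-value-eq-2}. Composition in the term model is substitution, so each law reduces to an equation between value terms of type $\UnitType \to B ! S$. By the $\eta$-law for functions, both sides are $\lambda$-abstractions up to equality. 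By \textsc{Abs-Congr} and the $\eta$-law for $\UnitType$ (every value of type $\UnitType$ equals $\langle\rangle$), it therefore suffices to compare the bodies applied to $\langle\rangle$.

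\emph{First law.} Take $\leftunitor = \pi_2\ x'$. Then $(\eta \comp \leftunitor)^{\sharp}$ is $\lambda y.\,\letin{x_2}{\pi_2\ x'\ \langle\rangle}{(\lambda y'.\return{\pi_2\langle \pi_1 x', x_2\rangle})\ \langle\rangle}$. Applying \textsc{Abs-Beta} and \textsc{Prod-Beta} reduces this to $\lambda y.\,\letin{x_2}{\pi_2\ x'\ \langle\rangle}{\return{x_2}}$. By \textsc{Let-Ret} this equals $\lambda y.\,\pi_2\ x'\ \langle\rangle$. Since $y : \UnitType$ equals $\langle\rangle$, this is $\lambda y.\,\pi_2\ x'\ y$, which is $\pi_2\ x'$ by \textsc{Abs-Eta}.

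\emph{Second law.} Substituting $\langle \pi_1 x, \lambda y.\return{\pi_2 x}\rangle$ for $x'$ gives $\lambda y.\,\letin{x_2}{\return{\pi_2 x}}{V[\langle \pi_1 x, x_2\rangle/x]\ \langle\rangle}$. Applying \textsc{Ret-Let}, then the product $\eta$-law $\langle \pi_1 x,\pi_2 x\rangle = x$, then \textsc{Abs-Eta} yields $V$.

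\emph{Third law (associativity).} This is the main bookkeeping obstacle, because of the associator and the nested substitutions. We unfold both sides into terms of the form $\lambda y.\,\letin{x_2}{\pi_2(\pi_2 w)\ \langle\rangle}{\cdots}$. On the left-hand side, the inner $f^{\sharp}$ appears as a thunk applied to $\langle\rangle$. We first $\beta$-reduce that application, obtaining a nested let of the form $\letin{z}{(\letin{x_2}{\cdots}{f[\cdots]\ \langle\rangle})}{g[\cdots]\ \langle\rangle}$. Then \textsc{Let-Let} reassociates it. On the right-hand side, the outer $\sharp$ produces $\letin{x_2}{\cdots}{(\lambda y.\,\letin{z}{f[\cdots]\langle\rangle}{g[\cdots]\langle\rangle})\ \langle\rangle}$, and \textsc{Abs-Beta} makes it syntactically match the left-hand side after \textsc{Prod-Beta} simplifies the projections introduced by the associator. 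Throughout, Lemma~\ref{lem:subst-assoc} justifies commuting substitutions, and Lemma~\ref{lem:weakening} handles the fresh bound variables.

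Finally, the strong monad obtained this way has unit $\eta$ as given. Its functor action and strength are the derived ones, $T f = (\eta \comp f \comp \pi_2)^{\sharp} \comp \langle !, \identity{}\rangle$ and $\strength = \eta^{\sharp}$. No further verification is needed, since the equivalence of \cite{McDermottMSFP2022} supplies the monad and strength laws.
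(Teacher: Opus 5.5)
Your proposal is correct and follows the paper's proof step for step. Like the paper, you check the three strong Kleisli triple laws directly in the term model using \textsc{Abs-Beta}, \textsc{Prod-Beta}, \textsc{Let-Ret}, \textsc{Ret-Let}, \textsc{Let-Let} and \textsc{Abs-Eta}. Your unfolding of the associativity law also matches the paper's: $\beta$-reducing the thunk and applying \textsc{Let-Let} on the left, and \textsc{Abs-Beta} with projection simplification on the right.

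Your additions make explicit steps the paper leaves implicit: well-definedness of $({-})^{\sharp}$ on equivalence classes, and \textsc{Prod-Eta} at $n = 0$ to justify $\lambda y.\,V\ \langle\rangle = \lambda y.\,V\ y$. The only blemish is cosmetic. Once the associator's projections are simplified, the common first computation in the third law is $\pi_2\ w\ \langle\rangle$ for $w$ of type $(W \times X) \times T Y$, not $\pi_2(\pi_2\ w)\ \langle\rangle$.
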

\begin{proof}
	We show that the axioms for strong Kleisli triples hold.
	\allowdisplaybreaks
	\begin{itemize}
		\item $(\eta^T_A \comp \leftunitor)^{\sharp} = \leftunitor$
		\begin{align}
			&(\eta^T_A \comp \leftunitor)^{\sharp} \\
			&=\lambda y. \letin{x_2}{\pi_2\ x'\ \langle\rangle}{(\lambda y. \return{x})[\pi_2\ x/x][\langle \pi_1\ x', x_2 \rangle/x]\ \langle\rangle} \\
			&= \lambda y. \letin{x_2}{\pi_2\ x'\ \langle\rangle}{\return{x_2}} \\
			&= \lambda y. \pi_2\ x'\ \langle\rangle \\
			&= \lambda y. \pi_2\ x'\ y \\
			&= \pi_2\ x' \\
			&= \leftunitor
		\end{align}
		\item Given
		\[ x : A_1 \times A_2 \vdash V : \UnitType \to B ! \Sigma \]
		we show
		\[ x : A_1 \times A_2 \vdash\quad V^{\sharp} \comp (A_1 \times \eta^T_{A_2}) = V \quad: \UnitType \to B ! \Sigma \]
		\begin{align}
			&V^{\sharp} \comp (A_1 \times \eta^T_{A_2}) \\
			&=(\lambda y. \letin{x_2}{\pi_2\ x'\ \langle\rangle}{V[\langle \pi_1\ x', x_2 \rangle/x]\ \langle\rangle})[\langle \pi_1\ x, \lambda y. \return{\pi_2\ x} \rangle/x'] \\
			&= \lambda y. \letin{x_2}{\return{\pi_2\ x}}{V[\langle \pi_1\ x, x_2 \rangle/x]\ \langle\rangle} \\
			&= \lambda y. V[\langle \pi_1\ x, \pi_2\ x \rangle/x]\ \langle\rangle \\
			&= \lambda y. V\ \langle\rangle \\
			&= V
		\end{align}
		\item Given
		\[ x : A_2 \times A_3 \vdash V : \UnitType \to B_1 ! \Sigma \qquad\qquad y : A_1 \times A_2 \vdash W : \UnitType \to B_2 ! \Sigma \]
		we show
		\[ x : (A_1 \times A_2) \times (\UnitType \to A_3 ! \Sigma) \vdash\quad W^{\sharp} \comp (A_1 \times V^{\sharp}) \comp \associator = (W^{\sharp} \comp (A_1 \times V) \comp \associator)^{\sharp} \quad: \UnitType \to B_2 ! \Sigma \]
		The left-hand side:
		\begin{align}
			&W^{\sharp} \comp (A_1 \times V^{\sharp}) \comp \associator \\
			&= (\lambda z. \letin{y_2}{\pi_2\ y'\ \langle\rangle}{W[\langle \pi_1\ y', y_2 \rangle/y]\ \langle\rangle}) \\
			&\qquad [\langle \pi_1\ x, (\lambda z. \letin{x_2}{\pi_2\ x'\ \langle\rangle}{V[\langle \pi_1\ x', x_2 \rangle/x]\ \langle\rangle})[\pi_2\ x/x'] \rangle/y'] \\
			&\qquad [\langle \pi_1\ \pi_1\ x, \langle \pi_2\ \pi_1\ x, \pi_2\ x \rangle \rangle/x] \\
			&= (\lambda z. \letin{y_2}{(\letin{x_2}{\pi_2\ \pi_2\ x\ \langle\rangle}{V[\langle \pi_1\ \pi_2\ x, x_2 \rangle/x]\ \langle\rangle})}{W[\langle \pi_1\ x, y_2 \rangle/y]\ \langle\rangle}) \\
			&\qquad [\langle \pi_1\ \pi_1\ x, \langle \pi_2\ \pi_1\ x, \pi_2\ x \rangle \rangle/x] \\
			&= \lambda z. \letin{y_2}{(\letin{x_2}{\pi_2\ x\ \langle\rangle}{V[\langle \pi_2\ \pi_1\ x, x_2 \rangle/x]\ \langle\rangle})}{W[\langle \pi_1\ \pi_1\ x, y_2 \rangle/y]\ \langle\rangle} \\
			&= \lambda z. \letin{x_2}{\pi_2\ x\ \langle\rangle}{\letin{y_2}{V[\langle \pi_2\ \pi_1\ x, x_2 \rangle/x]\ \langle\rangle}{W[\langle \pi_1\ \pi_1\ x, y_2 \rangle/y]\ \langle\rangle}}
		\end{align}
		The right-hand side:
		\begin{align}
			&W^{\sharp} \comp (A_1 \times V) \comp \associator \\
			&= (\lambda z. \letin{y_2}{\pi_2\ y'\ \langle\rangle}{W[\langle \pi_1\ y', y_2 \rangle/y]\ \langle\rangle})[\langle \pi_1\ x, V[\pi_2\ x/x] \rangle/y'] \\
			&\qquad [\langle \pi_1\ \pi_1\ x, \langle \pi_2\ \pi_1\ x, \pi_2\ x \rangle \rangle/x] \\
			&= (\lambda z. \letin{y_2}{V[\pi_2\ x/x]\ \langle\rangle}{W[\langle \pi_1\ x, y_2 \rangle/y]\ \langle\rangle}) \\
			&\qquad [\langle \pi_1\ \pi_1\ x, \langle \pi_2\ \pi_1\ x, \pi_2\ x \rangle \rangle/x] \\
			&= (\lambda z. \letin{y_2}{V[\langle \pi_2\ \pi_1\ x, \pi_2\ x \rangle/x]\ \langle\rangle}{W[\langle \pi_1\ \pi_1\ x, y_2 \rangle/y]\ \langle\rangle}) \\
			&(W^{\sharp} \comp (A_1 \times V) \comp \associator)^{\sharp} \\
			&= \lambda z. \letin{x_2}{\pi_2\ x\ \langle\rangle}{(W^{\sharp} \comp (A_1 \times V) \comp \associator)[\langle \pi_1\ x, x_2 \rangle/x]\ \langle\rangle} \\
			&= \lambda z. \letin{x_2}{\pi_2\ x\ \langle\rangle}{\letin{y_2}{V[\langle \pi_2\ \pi_1\ x, x_2 \rangle/x]\ \langle\rangle}{W[\langle \pi_1\ \pi_1\ x, y_2 \rangle/y]\ \langle\rangle}}
			\qedhere
		\end{align}
	\end{itemize}
\end{proof}

\subsection{Kleisli Exponentials}

\begin{definition}[Kleisli category]
	For each $\Sigma \in \SemanticEffectType{\TermModel{\EffectHandlerCalculus}}$, we define a category $\mathbf{Kl}_\Sigma$ as follows.
	\begin{itemize}
		\item object: types $A$
		\item morphism: computations $x : A \vdash M : B ! \Sigma$ modulo equations
		\item identity: $x : A \vdash \return{x} : A ! \Sigma$
		\item composition: $M \comp N = \letin{x}{N}{M}$
	\end{itemize}
\end{definition}
\begin{proof}
	We show that this is indeed a category.
	\begin{itemize}
		\item Well-defined-ness of the composition follows from the congruence rule for let.
		\item The left/right-identity law and associativity follows from the monad laws.
		\qedhere
	\end{itemize}
\end{proof}

\begin{lemma}\label{lem:term-model-kleisli-category}
	The Kleisli category $\TermModel{\EffectHandlerCalculus}_{T_\Sigma}$ is isomorphic to $\mathbf{Kl}_\Sigma$.
\end{lemma}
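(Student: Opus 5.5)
We construct an explicit identity-on-objects functor $F : \mathbf{Kl}_\Sigma \to \TermModel{\EffectHandlerCalculus}_{T_\Sigma}$ and its inverse $G$, and check that they are mutually inverse and functorial. Both categories have value types as objects, so only the hom-sets need attention. A morphism $A \to B$ in $\TermModel{\EffectHandlerCalculus}_{T_\Sigma}$ is a value term $x : A \vdash V : \UnitType \to B ! \Sigma$ modulo equations. A morphism in $\mathbf{Kl}_\Sigma$ is a computation $x : A \vdash M : B ! \Sigma$ modulo equations. We set
\[ F(M) \coloneqq \lambda y. M \quad (y \text{ fresh}), \qquad G(V) \coloneqq V\ \langle\rangle . \]
Both maps respect the equational theory by the congruence rules for abstraction and application, so they are well defined on equivalence classes.

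To see that $F$ and $G$ are mutually inverse:
\begin{itemize}
\item $G(F(M)) = (\lambda y. M)\ \langle\rangle = M$ by \textsc{Abs-Beta}, since $y$ is not free in $M$.
\item $F(G(V)) = \lambda y. V\ \langle\rangle$. To identify this with $V$, we use the $\eta$-law for the unit type. Prod-Eta with $n = 0$ gives $y = \langle\rangle$ for $y : \UnitType$. Then $\lambda y. V\ \langle\rangle = \lambda y. V\ y = V$ by congruence and \textsc{Abs-Eta}.
\end{itemize}
This yields a bijection on each hom-set.

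It remains to check that $F$ preserves identities and composition; functoriality of $G$ then follows. For identities, the identity on $A$ in $\TermModel{\EffectHandlerCalculus}_{T_\Sigma}$ is $\eta_A = \lambda y. \return{x}$, which is exactly $F(\return{x})$.

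Composition is the main step. In the Kleisli category, $g \circ f$ for $f : A \to T_\Sigma B$ and $g : B \to T_\Sigma C$ is $g^\sharp \comp \langle !, f\rangle$ up to the left unitor; equivalently it is $\mu \comp T g \comp f$, which we express through the strong Kleisli extension. So I would compute $(g \comp \pi_2)^\sharp \comp \langle \langle\rangle, f \rangle$ in the term model using Definition~\ref{def:term-model-monad}, with $f = \lambda y. N$ and $g = \lambda y. M$. Unfolding the definition of $({-})^\sharp$ and performing the substitutions gives
\[ \lambda z. \letin{x_2}{(\lambda y. N)\ \langle\rangle}{(\lambda y. M)[x_2/x]\ \langle\rangle}. \]
Two applications of \textsc{Abs-Beta}, together with Prod-Beta for the projections, reduce this to $\lambda z. \letin{x}{N}{M} = F(M \comp N)$.

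The main care point is bookkeeping: we must make sure that the Kleisli composition obtained from the strong Kleisli triple agrees with $\mu \comp T g \comp f$ as the paper defines it, and that the projections and unitor reduce correctly via the product laws. This is routine but error-prone substitution work, similar in style to the proof of Lemma~\ref{lem:term-model-monad}.
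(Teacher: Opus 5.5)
Your proposal is correct and follows the paper's proof: the same pair of maps $M \mapsto \lambda y. M$ and $V \mapsto V\ \langle\rangle$, with functoriality checked on identities and composites. The differences are cosmetic. You verify functoriality of $F$ rather than of $V \mapsto V\ \langle\rangle$, which is equivalent since the two maps are inverse bijections. You compute Kleisli composition directly as $(g \comp \pi_2)^\sharp \comp \langle \langle\rangle, f \rangle$ rather than as $\mu \comp T_\Sigma g \comp f$. And you spell out the unit-type $\eta$ step, via \textsc{Prod-Eta} with $n = 0$ and \textsc{Abs-Eta}, which the paper leaves as ``straightforward''.
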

\begin{proof}
	The isomorphism $\TermModel{\EffectHandlerCalculus}_{T_\Sigma}(A, B) \cong \mathbf{Kl}_\Sigma(A, B)$ is given as follows.
	\begin{mathpar}
		\inferrule{
			x : A \vdash M : B ! \Sigma
		}{
			x : A \vdash \lambda y. M : \UnitType \to B ! \Sigma
		}
		\and
		\inferrule{
			x : A \vdash V : \UnitType \to B ! \Sigma
		}{
			x : A \vdash V\ \langle\rangle : B ! \Sigma
		}
	\end{mathpar}
	It is straightforward to check that they are mutually inverse.
	It remains to show that the above map $\TermModel{\EffectHandlerCalculus}_{T_\Sigma}(A, B) \to \mathbf{Kl}_\Sigma(A, B)$ is functorial.
	\begin{itemize}
		\item Identities $x : A \vdash \lambda y. \return{x} : \UnitType \to A ! \Sigma$ in $\TermModel{\EffectHandlerCalculus}_{T_\Sigma}$ are mapped to $x : A \vdash (\lambda y. \return{x})\ \langle\rangle = \return{x} : \UnitType \to A ! \Sigma$.
		\item Composition: Let $V$ and $W$ be morphisms in $\TermModel{\EffectHandlerCalculus}_{T_\Sigma}$.
		Their composite in $\TermModel{\EffectHandlerCalculus}_{T_\Sigma}$ is given as follows.
		\begin{align}
			\mu \comp T_\Sigma V \comp W &= \lambda y. \letin{z}{(\lambda y. \letin{x}{W\ \langle\rangle}{\return{V}})\ \langle\rangle}{z\ \langle\rangle} \\
			&= \lambda y. \letin{z}{(\letin{x}{W\ \langle\rangle}{\return{V}})}{z\ \langle\rangle} \\
			&= \lambda y. \letin{x}{W\ \langle\rangle}{\letin{z}{\return{V}}{z\ \langle\rangle}} \\
			&= \lambda y. \letin{x}{W\ \langle\rangle}{V\ \langle\rangle}
			\qedhere
		\end{align}
	\end{itemize}
\end{proof}

By Lemma~\ref{lem:term-model-kleisli-category}, we explain Kleisli exponentials using $\mathbf{Kl}_\Sigma$ in place of $\TermModel{\EffectHandlerCalculus}_{T_\Sigma}$.
Note that the adjunction $J \dashv K : \mathbf{Kl}_\Sigma \to \TermModel{\EffectHandlerCalculus}$ is given as follows.
\begin{itemize}
	\item $J : \TermModel{\EffectHandlerCalculus} \to \mathbf{Kl}_\Sigma$ is given by $J A = A$ for each object $A$ and $J V = \return{V}$ for each morphism $V$.
	\item $K : \mathbf{Kl}_\Sigma \to \TermModel{\EffectHandlerCalculus}$ is given by $K A = T_\Sigma A = \UnitType \to A ! \Sigma$ for each object $A$ and $K M = \lambda y. \letin{x}{x\ \langle\rangle}{M}$ for each morphism $M$.
\end{itemize}

\begin{lemma}[Kleisli exponential]\label{lem:term-model-kleisli-exponential}
	Kleisli exponentials for the strong monad in Definition~\ref{def:term-model-monad} are given as follows.
	\begin{itemize}
		\item $\KleisliExp{T_\Sigma}{A}{B} = A \to B ! \Sigma$
		\item The counit $\epsilon\in \mathbf{Kl}_\Sigma(J(\KleisliExp{T_\Sigma}{A}{B}) \times A, B)$ is defined as follows.
		\begin{equation}
			x : (A \to B ! \Sigma) \times A \vdash (\pi_1\ x)\ (\pi_2\ x) : B ! \Sigma
			\label{eq:term-model-eval}
		\end{equation}
		\item Universal property: for $x : A' \times A \vdash M : B ! \Sigma$, there exists a unique morphism $\overline{M} : A' \to \KleisliExp{T_\Sigma}{A}{B}$ in $\TermModel{\EffectHandlerCalculus}$ such that the following diagram commutes.
		\begin{center}
			\begin{tikzcd}
				J(A' \times A) \ar[rd, "M"] \ar[d, dashed, swap, "J(\overline{M} \times A)"] \\
				J((\KleisliExp{T_\Sigma}{A}{B}) \times A) \ar[r, "\epsilon"] & B
			\end{tikzcd}
			in $\mathbf{Kl}_\Sigma$
		\end{center}
		The unique morphism $\overline{M}$ is given by
		\begin{equation}
			x : A' \vdash \lambda y. M[\langle x, y \rangle/x] : A \to B ! \Sigma.
			\label{eq:term-model-currying}
		\end{equation}
	\end{itemize}
\end{lemma}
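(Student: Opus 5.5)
We must show existence (the given $\overline{M}$ makes the triangle commute) and uniqueness, working in $\mathbf{Kl}_\Sigma$, which Lemma~\ref{lem:term-model-kleisli-category} lets us use in place of $\TermModel{\EffectHandlerCalculus}_{T_\Sigma}$. First we make the morphisms explicit. In $\mathbf{Kl}_\Sigma$, $J(\overline{M} \times A)$ is the computation $x : A' \times A \vdash \return{\langle \overline{M}[\pi_1\ x/x], \pi_2\ x \rangle}$. Composition in $\mathbf{Kl}_\Sigma$ is $\mathtt{let}$, so the composite $\epsilon \comp J(\overline{M} \times A)$ is
\[ \letin{z}{\return{\langle \overline{M}[\pi_1\ x/x], \pi_2\ x\rangle}}{(\pi_1\ z)\ (\pi_2\ z)}. \]

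For commutativity, we rewrite this composite with the \textsc{Ret-Let} rule and then \textsc{Prod-Beta}, which gives $\overline{M}[\pi_1\ x/x]\ (\pi_2\ x)$. Substituting the definition \eqref{eq:term-model-currying} yields $(\lambda y. M[\langle \pi_1\ x, y\rangle/x])\ (\pi_2\ x)$. By \textsc{Abs-Beta} this equals $M[\langle \pi_1\ x, \pi_2\ x\rangle/x]$. Finally \textsc{Prod-Eta}, combined with Lemma~\ref{lem:subst-value-eq-1}, turns this into $M$.

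For uniqueness, suppose $x : A' \vdash V : A \to B ! \Sigma$ also makes the diagram commute. The same computation as above shows $M = V[\pi_1\ x/x]\ (\pi_2\ x)$ in context $x : A' \times A$. We substitute $\langle x, y\rangle$ for $x$ in context $x : A', y : A$, using Lemma~\ref{lem:subst-value-eq-2} and then \textsc{Prod-Beta}. This gives $M[\langle x, y\rangle/x] = V\ y$. Applying \textsc{Abs-Congr} and then \textsc{Abs-Eta}, we get $\overline{M} = \lambda y. V\ y = V$.

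We also need well-definedness: $\overline{M}$ must respect equality of $M$ modulo the theory, which follows from Lemma~\ref{lem:subst-value-eq-2} together with congruence. The map $M \mapsto \overline{M}$ must also be natural in $A'$, although naturality is in fact implied by the universal property once the latter is established.

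The main obstacle is bookkeeping rather than anything conceptual. Substitution must be handled carefully: the context is a single variable of product type, whereas after substitution we work with two variables $x, y$. The substitution lemmas require these contexts to be compatible. Here Lemma~\ref{lem:subst-assoc} lets us compose the substitutions $[\pi_1\ x/x]$ and $[\langle x, y\rangle/x]$ and then simplify them by \textsc{Prod-Beta}.
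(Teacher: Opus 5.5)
Your proposal is correct and follows the paper's proof essentially step for step. You compute $\epsilon \circ J(\overline{M} \times A) = \overline{M}[\pi_1\ x/x]\ (\pi_2\ x)$ via \textsc{Ret-Let} and \textsc{Prod-Beta}, obtain existence from \textsc{Abs-Beta} and \textsc{Prod-Eta}, and obtain uniqueness by substituting $\langle x, y\rangle$ and applying \textsc{Abs-Eta}; your extra remarks on well-definedness and naturality are sound but go beyond what the paper checks.
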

\begin{proof}
	We show the universal property.
	Note that we have the following equation.
	\begin{align}
		\epsilon \comp J(\overline{M} \times A) &= \letin{x}{\return{\langle \overline{M}[\pi_1\ x/x], \pi_2\ x \rangle}}{(\pi_1\ x)\ (\pi_2\ x)} \\
		&= \overline{M}[\pi_1\ x/x]\ (\pi_2\ x)
	\end{align}
	\begin{itemize}
		\item Existence: We show that $\overline{M}$ defined in \eqref{eq:term-model-currying} makes the diagram commute.
		\begin{align}
			\epsilon \comp J(\overline{M} \times A) &= \overline{M}[\pi_1\ x/x]\ (\pi_2\ x) \\
			&= (\lambda y. M[\langle \pi_1\ x, y \rangle/x])\ (\pi_2\ x) \\
			&= M[\langle \pi_1\ x, \pi_2\ x \rangle/x] \\
			&= M
		\end{align}
		\item Uniqueness: If $\epsilon \comp J(\overline{M} \times A) = M$, then we have $\overline{M}[\pi_1\ x/x]\ (\pi_2\ x) = M$, which implies
		\[ \lambda y. M[\langle x, y \rangle/x] = \lambda y. \overline{M}\ y = \overline{M}. \qedhere \]
	\end{itemize}
\end{proof}

Note that~\eqref{eq:term-model-eval} and~\eqref{eq:term-model-currying} correspond to the evaluation $\eval$ and currying $\Lambda$ modulo the isomorphism in Lemma~\ref{lem:term-model-kleisli-category}.

Now, it is straightforward to see that the interpretation of types and signatures in the term model $\TermModel{\EffectHandlerCalculus}$ are themselves.
\begin{lemma}\label{lem:term-model-type-interpretation}
	\begin{itemize}
		\item For any value type $A$, its interpretation $\interpret{A}$ in the term model $\TermModel{\EffectHandlerCalculus}$ is $A$ itself.
		\item For any signature $\Sigma$, its interpretation $\interpret{\Sigma}$ in the term model $\TermModel{\EffectHandlerCalculus}$ is $\Sigma$ itself.
	\end{itemize}
\end{lemma}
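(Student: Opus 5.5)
We prove both items simultaneously by induction on the structure of types. The signature $\Sigma$ and the value types occurring in it are strictly smaller than any function type $A \to B ! \Sigma$ that mentions them, so the statement for signatures can be proved as part of the same structural induction. We identify semantic signatures in $\TermModel{\EffectHandlerCalculus}$ with syntactic signatures, as in Definition~\ref{def:term-model-monad}.

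\emph{Product types.} By Lemma~\ref{lem:term-model-products}, the chosen finite products in $\TermModel{\EffectHandlerCalculus}$ are the product types. So $\interpret{\prod_{i=1}^n A_i} = \prod_{i=1}^n \interpret{A_i}$, and by the induction hypothesis this is $\prod_{i=1}^n A_i$. The case $n = 0$ gives $\UnitType$ for the terminal object.

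\emph{Signatures.} By induction on the list defining $\Sigma$. We have $\interpret{\emptyset} = \emptyset$. For the step, $\interpret{\{ \mathtt{op} : A_{\mathtt{op}} \rightarrowtriangle B_{\mathtt{op}} \} \cup \Sigma} = \{ \mathtt{op} : \interpret{A_{\mathtt{op}}} \rightarrowtriangle \interpret{B_{\mathtt{op}}} \} \cup \interpret{\Sigma}$. The induction hypotheses for the smaller types $A_{\mathtt{op}}, B_{\mathtt{op}}$ and for the smaller signature $\Sigma$ turn this into the syntactic signature itself.

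\emph{Function types.} We have $\interpret{A \to B ! \Sigma} = \KleisliExp{T_{\interpret{\Sigma}}}{\interpret{A}}{J\interpret{B}}$. By the induction hypothesis, $\interpret{\Sigma} = \Sigma$, $\interpret{A} = A$ and $\interpret{B} = B$, and $J$ is the identity on objects. By Lemma~\ref{lem:term-model-kleisli-exponential}, the chosen Kleisli exponential $\KleisliExp{T_\Sigma}{A}{B}$ is $A \to B ! \Sigma$, which is what we need.

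The only delicate point is that the interpretation uses whatever \emph{chosen} products and Kleisli exponentials the model provides. We must therefore fix these choices to be exactly the ones in Lemmas~\ref{lem:term-model-products} and~\ref{lem:term-model-kleisli-exponential}, rather than arbitrary isomorphic objects. With these choices fixed, each case holds on the nose, not merely up to isomorphism, and no real obstacle remains.
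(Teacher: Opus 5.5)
Your proposal is correct and follows the same route as the paper, which proves the lemma by simultaneous induction on types and signatures and calls it straightforward. Your remark that the products and Kleisli exponentials of the term model must be chosen exactly as in Lemmas~\ref{lem:term-model-products} and~\ref{lem:term-model-kleisli-exponential} makes explicit an assumption the paper leaves implicit, and that is what makes the equalities hold on the nose.
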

\begin{proof}
	By simultaneous induction on types and signatures.
	The proof is straightforward.
\end{proof}

\subsection{Handlers}

\begin{lemma}\label{lem:term-model-effect-handling-monad}
	Definition~\ref{def:term-model-monad} satisfies \eqref{eq:handle-unit}, \eqref{eq:handle-multiplication}, and \eqref{eq:handle-operation}.
\end{lemma}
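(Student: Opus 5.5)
We verify the three equations directly in the term model, working with open terms modulo the equational theory. To keep the calculations short, we first move everything along the isomorphism $\TermModel{\EffectHandlerCalculus}_{T_\Sigma} \cong \mathbf{Kl}_\Sigma$ of Lemma~\ref{lem:term-model-kleisli-category}. A morphism into $T_S B = \UnitType \to B ! S$ is then handled through its ``thunk-body'' $V\ \langle\rangle$, and two values of type $\UnitType \to B ! S$ are equal whenever their bodies are equal, by the $\eta$-law for functions together with the $\eta$-law for $\UnitType$.

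Next we compute the strong-monad data explicitly from the Kleisli triple of Definition~\ref{def:term-model-monad}:
\begin{itemize}
\item $\eta$ is $\lambda y. \return{x}$;
\item $\mu$ on $x : T_S T_S A$ is $\lambda y. \letin{z}{x\ \langle\rangle}{z\ \langle\rangle}$;
\item $\strength$ on $x : A \times T_S B$ is $\lambda y. \letin{b}{\pi_2\ x\ \langle\rangle}{\return{\langle \pi_1\ x, b\rangle}}$;
\item $T_S f$ is $\lambda y. \letin{a}{x\ \langle\rangle}{\return{f[a/x]}}$.
\end{itemize}
Evaluation and currying are given by \eqref{eq:term-model-eval} and \eqref{eq:term-model-currying}. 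Everything then reduces to equalities of computation terms of the form $\handlewithto{\cdots}{H}{r}{r\ \langle\rangle}$, where $H$ is the handler built from $\pi_1\ x$.

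For \eqref{eq:handle-unit}, the composite has body
\[ \handlewithto{(\lambda y. \return{\pi_2\ x})\ \langle\rangle}{H}{r}{r\ \langle\rangle}. \]
This becomes $\handlewithto{\return{\pi_2\ x}}{H}{r}{r\ \langle\rangle}$ by the $\beta$-law, then $(\pi_2\ x)\ \langle\rangle$ by \eqref{eq:handle-return}, and re-abstracting and applying $\eta$ gives $\pi_2\ x$, as required.

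For \eqref{eq:handle-operation}, we unfold $\MixedAlgOp{S'}{X}{\interpret{\mathtt{op}}}$ applied to $\langle a, k\rangle$. Its body simplifies, by the monad laws, to $\letin{b}{\mathtt{op}(a)}{\return{k\ b}}$. Applying \eqref{eq:handle-let}, then \eqref{eq:handle-return}, turns the body into $\handlewithto{\mathtt{op}(a)}{H}{b}{k\ b}$. By \eqref{eq:handle-op} this equals $\pi_{\mathtt{op}}(\pi_1\ x)\ \langle a, \lambda b. k\ b\rangle$, which is $\eval \comp (\pi_{\mathtt{op}} \times \identity{})$ after $\eta$-reducing $\lambda b. k\ b$.

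The main obstacle is \eqref{eq:handle-multiplication}, since it involves a nested handle and the strength.
\begin{itemize}
\item The left side has body $\handlewithto{(\letin{z}{w\ \langle\rangle}{z\ \langle\rangle})}{H}{r}{r\ \langle\rangle}$, where $w = \pi_2\ x$. By \eqref{eq:handle-let} this becomes $\handlewithto{w\ \langle\rangle}{H}{z}{\handlewithto{z\ \langle\rangle}{H}{r}{r\ \langle\rangle}}$.
\item On the right side, $\langle\pi_1, \strength\rangle$ followed by $T_S\,\mathbf{handle}$ yields the thunk $\lambda y. \letin{z}{w\ \langle\rangle}{\return{\lambda y'. \handlewithto{z\ \langle\rangle}{H}{r}{r\ \langle\rangle}}}$. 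Here the handler inside is rebuilt from $\pi_1\ x$ after substitution, so it is syntactically the same $H$.
\item The outer $\mathbf{handle}$ then gives, by \eqref{eq:handle-let} and \eqref{eq:handle-return}, the body $\handlewithto{w\ \langle\rangle}{H}{z}{(\lambda y'. \handlewithto{z\ \langle\rangle}{H}{r}{r\ \langle\rangle})\ \langle\rangle}$, and a final $\beta$-step matches the left side.
\end{itemize}
The care needed in this step is bookkeeping: correctly unfolding $\strength$ and $T_S\,\mathbf{handle}$ through substitutions, which relies on Lemmas~\ref{lem:subst-value-eq-1} and~\ref{lem:subst-value-eq-2}, and checking that the projections $\pi_1$ of the pair agree, so that the same handler $H$ appears in both nested positions.
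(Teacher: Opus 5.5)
Your proposal is correct and follows essentially the same route as the paper: a direct calculation in the term model that unfolds $\eta$, $\mu$, the strength and $T_S$ through substitution, closes each case with \eqref{eq:handle-let}, \eqref{eq:handle-return}, \eqref{eq:handle-op} and $\beta$/$\eta$, and uses that the handler built from $\pi_1\ x$ is unchanged (modulo the product $\beta$-law) under the relevant substitutions. One small slip: in the \eqref{eq:handle-operation} case the body should be $\letin{b}{\mathtt{op}(a)}{\return{\lambda z.\, k\ b}}$, since the returned value must be a thunk and $k\ b$ is a computation; after \eqref{eq:handle-return} you therefore need one extra $\beta$-step on $(\lambda z.\, k\ b)\ \langle\rangle$ to reach $\handlewithto{\mathtt{op}(a)}{H}{b}{k\ b}$.
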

\begin{proof}
	Firstly, note that the effect handler $H$ contains variable $x$ only in the form of $\pi_1\ x$, and hence we have $H[\langle \pi_1\ x, V \rangle/x] = H$ for any value term $V$.
	\allowdisplaybreaks
	\begin{itemize}
		\item We prove \eqref{eq:handle-unit}.
		\begin{align}
			&\mathbf{handle} \comp (\identity{} \times \eta) \\
			&= (\lambda z. \handlewithto{\pi_2\ x\ \langle\rangle}{H}{r}{r\ \langle\rangle})[\langle \pi_1\ x, \lambda y. \return{\pi_2\ x}\rangle/x] \\
			&= \lambda z. \handlewithto{\return{\pi_2\ x}}{H}{r}{r\ \langle\rangle} \\
			&= \lambda z. \pi_2\ x\ \langle\rangle \\
			&= \pi_2\ x
		\end{align}
		\item We prove \eqref{eq:handle-multiplication}.
		The left-hand side:
		\begin{align}
			&\mathbf{handle} \comp (\identity{} \times \mu) \\
			&= (\lambda z. \handlewithto{\pi_2\ x\ \langle\rangle}{H}{r}{r\ \langle\rangle})[\langle \pi_1\ x, \lambda z. \letin{y}{\pi_2\ x\ \langle\rangle}{y\ \langle\rangle}\rangle/x] \\
			&= \lambda z. \handlewithto{(\letin{y}{\pi_2\ x\ \langle\rangle}{y\ \langle\rangle})}{H}{r}{r\ \langle\rangle} \\
			&= \lambda z. \handlewithto{\pi_2\ x\ \langle\rangle}{H}{y}{\handlewithto{y\ \langle\rangle}{H}{r}{r\ \langle\rangle}}
		\end{align}
		The right-hand side:
		\begin{align}
			&\mathbf{handle} \comp (\identity{} \times T_\Sigma \mathbf{handle}) \\
			&= (\lambda z. \handlewithto{\pi_2\ x\ \langle\rangle}{H}{r}{r\ \langle\rangle})\\
			&\qquad [\langle \pi_1\ x, \lambda z. \letin{y}{\pi_2\ x\ \langle\rangle}{\return{(\lambda z. \handlewithto{\pi_2\ y\ \langle\rangle}{H}{r}{r\ \langle\rangle})}}\rangle/x] \\
			&= \lambda z. \handlewithto{\letin{y}{\pi_2\ x\ \langle\rangle}{\return{(\lambda z. \handlewithto{\pi_2\ y\ \langle\rangle}{H}{r}{r\ \langle\rangle})}}}{H}{r}{r\ \langle\rangle} \\
			&= \lambda z. \handlewithto{\pi_2\ x\ \langle\rangle}{H}{y}{\\ &\qquad \handlewithto{\return{(\lambda z. \handlewithto{\pi_2\ y\ \langle\rangle}{H}{r}{r\ \langle\rangle})}}{H}{r}{r\ \langle\rangle}} \\
			&= \lambda z. \handlewithto{\pi_2\ x\ \langle\rangle}{H}{y}{\handlewithto{\pi_2\ y\ \langle\rangle}{H}{r}{r\ \langle\rangle}} \\
			&\mathbf{handle} \comp (\identity{} \times T_\Sigma \mathbf{handle}) \comp \tupling{\pi_1}{\strength} \\
			&= (\lambda z. \handlewithto{\pi_2\ x\ \langle\rangle}{H}{y}{\handlewithto{\pi_2\ y\ \langle\rangle}{H}{r}{r\ \langle\rangle}})\\
			&\qquad [\langle \pi_1\ x, \lambda z. \letin{x_2}{\pi_2\ x\ \langle\rangle}{\return{(\pi_1\ x, x_2)}} \rangle/x] \\
			&= \lambda z. \handlewithto{\letin{x_2}{\pi_2\ x\ ()}{\return{(\pi_1\ x, x_2)}}}{H}{y}{\handlewithto{\pi_2\ y\ \langle\rangle}{H}{r}{r\ \langle\rangle}} \\
			&= \lambda z. \handlewithto{\pi_2\ x\ \langle\rangle}{H}{x_2}{\\&\qquad \handlewithto{\return{(\pi_1\ x, x_2)}}{H}{x}{\handlewithto{\pi_2\ x\ \langle\rangle}{H}{r}{r\ \langle\rangle}}} \\
			&= \lambda z. \handlewithto{\pi_2\ x\ \langle\rangle}{H}{x_2}{\handlewithto{x_2\ \langle\rangle}{H}{r}{r\ \langle\rangle}} \\
			&= \lambda z. \handlewithto{\pi_2\ x\ \langle\rangle}{H}{y}{\handlewithto{y\ \langle\rangle}{H}{r}{r\ \langle\rangle}}
		\end{align}
		\item We prove \eqref{eq:handle-operation}.
		The left-hand side:
		\begin{align}
			&\mathbf{handle} \comp (\identity{} \times T_\Sigma \eval) \\
			&= (\lambda z. \handlewithto{\pi_2\ x\ \langle\rangle}{H}{r}{r\ \langle\rangle})\\
			&\qquad [\langle \pi_1\ x, \lambda z. \letin{y}{\pi_2\ x\ \langle\rangle}{\return{(\lambda z. (\pi_1\ y)\ (\pi_2\ y))}} \rangle/x] \\
			&= \lambda z. \handlewithto{\letin{y}{\pi_2\ x\ \langle\rangle}{\return{(\lambda z. (\pi_1\ y)\ (\pi_2\ y))}}}{H}{r}{r\ \langle\rangle} \\
			&= \lambda z. \handlewithto{\pi_2\ x\ \langle\rangle}{H}{y}{(\pi_1\ y)\ (\pi_2\ y)} \\
			&\mathbf{handle} \comp (\identity{} \times T_\Sigma \eval) \comp (\identity{} \times \strength) \\
			&= (\lambda z. \handlewithto{\pi_2\ x\ \langle\rangle}{H}{y}{(\pi_1\ y)\ (\pi_2\ y)}) \\
			&\qquad [\langle \pi_1\ x, \lambda z. \letin{x_2}{\pi_2\ (\pi_2\ x)\ \langle\rangle}{\return{(\pi_1\ (\pi_2\ x), x_2)}} \rangle/x] \\
			&= \lambda z. \handlewithto{\letin{x_2}{\pi_2\ (\pi_2\ x)\ \langle\rangle}{\return{(\pi_1\ (\pi_2\ x), x_2)}}}{H}{x}{(\pi_1\ x)\ (\pi_2\ x)} \\
			&= \lambda z. \handlewithto{\pi_2\ (\pi_2\ x)\ \langle\rangle}{H}{x_2}{(\pi_1\ (\pi_2\ x))\ x_2} \\
			&\mathbf{handle} \comp (\identity{} \times T_\Sigma \eval) \comp (\identity{} \times \strength) \comp (\identity{} \times (\identity{} \times \interpret{\mathtt{op}})) \\
			&= (\lambda z. \handlewithto{\pi_2\ (\pi_2\ x)\ \langle\rangle}{H}{x_2}{(\pi_1\ (\pi_2\ x))\ x_2}) \\
			&\qquad [\langle \pi_1\ x, \langle \pi_1\ (\pi_2\ x), \lambda z. \mathtt{op}(\pi_2\ (\pi_2\ x)) \rangle \rangle/x] \\
			&= \lambda z. \handlewithto{\mathtt{op}(\pi_2\ (\pi_2\ x))}{H}{x_2}{(\pi_1\ (\pi_2\ x))\ x_2} \\
			&= \lambda z. \pi_{\mathtt{op}}\ (\pi_1\ x)\ \langle \pi_2\ (\pi_2\ x), \lambda x_2. (\pi_1\ (\pi_2\ x))\ x_2 \rangle \\
			&= \lambda z. \pi_{\mathtt{op}}\ (\pi_1\ x)\ \langle \pi_2\ (\pi_2\ x), \pi_1\ (\pi_2\ x) \rangle \\
			&\mathbf{handle} \comp (\identity{} \times T_\Sigma \eval) \comp (\identity{} \times \strength) \comp (\identity{} \times \braiding) \comp (\identity{} \times (\interpret{\mathtt{op}} \times \identity{})) \\
			&= \mathbf{handle} \comp (\identity{} \times T_\Sigma \eval) \comp (\identity{} \times \strength) \comp (\identity{} \times (\identity{} \times \interpret{\mathtt{op}})) \comp (\identity{} \times \braiding) \\
			&= \mathbf{handle} \comp (\identity{} \times T_\Sigma \eval) \comp (\identity{} \times \strength) \comp (\identity{} \times \braiding) \comp (\identity{} \times (\interpret{\mathtt{op}} \times \identity{})) \\
			&\qquad [\langle \pi_1\ x, \langle \pi_2\ (\pi_2\ x), \pi_1\ (\pi_2\ x) \rangle \rangle/x] \\
			&= \lambda z. \pi_{\mathtt{op}}\ (\pi_1\ x)\ \langle \pi_1\ (\pi_2\ x), \pi_2\ (\pi_2\ x) \rangle \\
			&= \lambda z. \pi_{\mathtt{op}}\ (\pi_1\ x)\ (\pi_2\ x)
		\end{align}
		The right-hand side:
		\begin{align}
			&\eval \comp (\pi_{\mathtt{op}} \times \identity{}) \\
			&= (\lambda z. (\pi_1\ x)\ (\pi_2\ x))[\langle \pi_{\mathtt{op}}\ (\pi_1\ x), \pi_2\ x \rangle/x] \\
			&= \lambda z. \pi_{\mathtt{op}}\ (\pi_1\ x)\ (\pi_2\ x)
			\qedhere
		\end{align}
	\end{itemize}
\end{proof}

\subsection{Interpretation in the Term Model}

We show that the interpretation of a term in $\TermModel{\EffectHandlerCalculus}$ is essentially equal to the term itself.
Here, ``essentially'' means that the equality holds up to two isomorphisms.
\begin{itemize}
	\item The interpretation of a computation term $\Gamma \vdash M : A ! \Sigma$ is given as a value term of the form $\Gamma \vdash \interpret{M} : \UnitType \to A ! \Sigma$.
	Thus, we need to identify value terms of type $\UnitType \to A ! \Sigma$ with computation terms of type $A ! \Sigma$ using the isomorphism in Lemma~\ref{lem:term-model-kleisli-category}.
	\item A morphism in $\TermModel{\EffectHandlerCalculus}$ is defined as a value term $x : A \vdash V : B$ whose context contains only one variable $x$.
	When we interpret a term whose context contains multiple variables, we need to modify the context so that it contains only one variable.
	This is done by using the following substitution $s_{\Gamma}$.
\end{itemize}

\begin{definition}
	Given a context $\Gamma$, we define $\Gamma^{\times}$ as $x^{\times} : A_{\Gamma}$ where $A_{\Gamma}$ is the product type of the types in $\Gamma$.
	\[ A_{\emptyctx} \coloneqq \UnitType \qquad A_{\Gamma, y : B} \coloneqq A_{\Gamma} \times B \]
	We also define a substitution $s_{\Gamma}$ inductively as follows.
	\[ s_{\Gamma, y : B}(z) \quad\coloneqq\quad \begin{cases}
		s_{\Gamma}(z)[\pi_1\ x^{\times}/x^{\times}] & \text{$z$ is a variable in $\Gamma$} \\
		\pi_2\ x^{\times} & \text{$z$ is $y$}
	\end{cases} \]
	We have $\Gamma^{\times} \vdash s_{\Gamma}(x) : A$ for each $(x : A) \in \Gamma$.
\end{definition}

We can define the ``inverse'' of $s_{\Gamma}$ as follows.
Given a context $\Gamma$, we define a term $\Gamma \vdash V_{\Gamma} : A_{\Gamma}$ as follows.
\[ V_{\emptyctx} \quad\coloneqq\quad \langle\rangle \qquad V_{\Gamma, y : B} \quad\coloneqq\quad \langle V_{\Gamma}, y \rangle \]
Then, $s_{\Gamma}[V_{\Gamma} / x^{\times}]$ is the identity substitution on $\Gamma$.

\begin{proposition}\label{prop:term-model-interpretation}
	The interpretation of terms in the term model $\TermModel{\EffectHandlerCalculus}$ is given as follows.
	\begin{itemize}
		\item For each value term $\Gamma \vdash V : B$, its interpretation $\interpret{V} : \interpret{\Gamma} \to \interpret{B}$ in the term model $\TermModel{\EffectHandlerCalculus}$ is $\Gamma^{\times} \vdash V[s_{\Gamma}] : B$.
		\item For each computation term $\Gamma \vdash M : B ! \Sigma$, its interpretation $\interpret{M} : \interpret{\Gamma} \to T_\Sigma \interpret{B}$ in the term model $\TermModel{\EffectHandlerCalculus}$ is $\Gamma^{\times} \vdash M[s_{\Gamma}] : B ! \Sigma$ up to the isomorphism $\TermModel{\EffectHandlerCalculus}_{T_\Sigma}(\interpret{\Gamma}, \interpret{B}) \cong \mathbf{Kl}_\Sigma(\interpret{\Gamma}, \interpret{B})$ in Lemma~\ref{lem:term-model-kleisli-category}.
		\item For each handler $\Gamma \vdash H : \Sigma \Rightarrow C$, its interpretation $\interpret{H} : \interpret{\Gamma} \to \HBundle{\interpret{\Sigma}}{\interpret{C}}$ in the term model $\TermModel{\EffectHandlerCalculus}$ is given as follows.
		\[ \Gamma^{\times} \vdash \langle \lambda \langle x, k \rangle. M_{\mathtt{op}}[s_{\Gamma}] \mid (\mathtt{op}(x, k) \mapsto M_{\mathtt{op}}) \in H \rangle : \prod_{(\mathtt{op} : A_{\mathtt{op}} \rightarrowtriangle B_{\mathtt{op}}) \in \Sigma} A_{\mathtt{op}} \times (B_{\mathtt{op}} \to C) \to C \]
	\end{itemize}
\end{proposition}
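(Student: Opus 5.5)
We prove the three statements simultaneously by induction on the typing derivations of $V$, $M$, and $H$, following the clauses of the interpretation in Appendix~\ref{sec:proof-of-soundness}. Throughout, we use Lemma~\ref{lem:term-model-type-interpretation} to identify $\interpret{A}$ with $A$ and $\interpret{\Sigma}$ with $\Sigma$. We use Lemmas~\ref{lem:term-model-products}, \ref{lem:term-model-kleisli-category}, and~\ref{lem:term-model-kleisli-exponential} to read off the structural morphisms concretely:
\begin{itemize}
	\item projections are $\pi_i\ x$ and tuplings are $\langle \dots \rangle$;
	\item composition is substitution;
	\item Kleisli composition corresponds to $\mathtt{let}$;
	\item $\eval$ is $(\pi_1\ x)\ (\pi_2\ x)$ and $\Lambda M$ is $\lambda y. M[\langle x, y\rangle/x]$.
\end{itemize}
We also use Lemma~\ref{lem:subst-assoc} to commute substitutions, and Lemmas~\ref{lem:subst-value-eq-1} and~\ref{lem:subst-value-eq-2} so that everything is stable modulo the equational theory.

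For value terms, the variable case is an induction on $\Gamma$: $\pi_2$ is $\pi_2\ x^{\times} = s_{\Gamma, y:B}(y)$, and precomposing with $\pi_1$ is the substitution $[\pi_1\ x^{\times}/x^{\times}]$, which matches the definition of $s_{\Gamma,y:B}$ on older variables. Tuples and projections are immediate. For $\lambda x. M$, the term-model $\Lambda$ applied to $M[s_{\Gamma,x:A}]$ gives $\lambda y. M[s_{\Gamma,x:A}][\langle x^{\times}, y\rangle/x^{\times}]$. Since $s_{\Gamma,x:A}$ followed by $[\langle x^{\times},y\rangle/x^{\times}]$ equals $s_\Gamma$ extended with $x \mapsto y$ (using $\pi_1\langle\cdot,\cdot\rangle$ and $\pi_2\langle\cdot,\cdot\rangle$ $\beta$-laws), this is $(\lambda x. M)[s_\Gamma]$ up to $\alpha$-renaming.

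For computations, application uses $\eval \comp \langle \interpret{V},\interpret{W}\rangle$ and $\beta$ for products. $\mathtt{return}$ is $\lambda y.\return{V[s_\Gamma]}$. The $\mathtt{let}$ case uses $\interpret{N}^\sharp \comp \langle \identity{}, \interpret{M}\rangle$ with the explicit Kleisli extension of Definition~\ref{def:term-model-monad}, reducing via the $\beta$-laws to $\letin{x}{M[s_\Gamma]}{N[s_{\Gamma,x:A}][\langle x^{\times}, x\rangle/x^{\times}]}$, and the same substitution identity as in the $\lambda$ case finishes it. Operations are immediate from the definition of $\interpret{\mathtt{op}}$. Handlers are the tupling of $\Lambda(\interpret{M_{\mathtt{op}}}\comp\associator^{-1})$, and unfolding $\Lambda$ together with the associator gives $\lambda\langle x,k\rangle. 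M_{\mathtt{op}}[s_\Gamma]$ in the stated form. Here $\lambda\langle x,k\rangle$ is read as $\lambda p$ with $\pi_1 p, \pi_2 p$ substituted.

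The main obstacle is the handle-with case. Here $\interpret{\handlewithto{M}{H}{x}{N}}$ is $\mathbf{handle}\comp\langle\interpret{H}, T_\Sigma\interpret{N}\comp\strength\comp\langle\identity{},\interpret{M}\rangle\rangle$. Unfolding $\strength$ and $T_\Sigma$ via the Kleisli triple, the second component becomes $\lambda z.\letin{x}{M[s_\Gamma]}{\return{(\lambda w. N[s_\Gamma, x/x])}}$. Substituting into the term-model $\mathbf{handle}$ yields
\[
\handlewithto{(\letin{x}{M[s_\Gamma]}{\return{(\lambda w. N')}})}{H'}{r}{r\ \langle\rangle},
\]
where $N' = N[s_\Gamma, x/x]$ and $H'$ is the handler obtained from $\pi_{\mathtt{op}}$ of the interpreted handler. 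We then apply \eqref{eq:handle-let} and \eqref{eq:handle-return}, followed by the $\beta$-law for $\lambda w$, to obtain $\handlewithto{M[s_\Gamma]}{H'}{x}{N'}$. Finally, each clause of $H'$ is $\mathtt{op}(y,k)\mapsto (\lambda p. M_{\mathtt{op}}[s_\Gamma][\pi_1 p/x,\pi_2 p/k])\ \langle y,k\rangle$. This $\beta$-reduces to $M_{\mathtt{op}}[s_\Gamma]$ with $x,k$ renamed, so \textsc{Handler-Congr} gives $H' = H[s_\Gamma]$. The delicate bookkeeping is that the extra variable $x^{\times}$ in $\mathbf{handle}$ appears only through $\pi_1\,x$ in $H$, as noted at the start of the proof of Lemma~\ref{lem:term-model-effect-handling-monad}.
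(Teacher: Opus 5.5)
Your proposal is correct and follows the paper's proof essentially step for step. Both argue by simultaneous induction on typing derivations, treat binders via the identity $s_{\Gamma,x:A}[\langle x^{\times},y\rangle/x^{\times}] = s_\Gamma[y/x]$ modulo product $\beta$, and close the handle-with case by rewriting the second component as a $\mathtt{let}$ that returns a thunk of $N$, then applying \eqref{eq:handle-let} and \eqref{eq:handle-return} and using the induction hypothesis to identify the induced handler with $H[s_\Gamma]$; the handler case is closed by unfolding $\Lambda$ and the associator, and the remaining cases (application, $\mathtt{return}$, $\mathtt{let}$, operations), which the paper leaves as straightforward, are the ones you spell out.
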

\begin{proof}
	Recall that by Lemma~\ref{lem:term-model-type-interpretation}, the interpretation of types in the term model $\TermModel{\EffectHandlerCalculus}$ are themselves.
	It is straightforward to see that the interpretations of context $\Gamma$ is $\interpret{\Gamma} = A_1 \times \dots \times A_n$.
	Thus, it suffices to show that the interpretations of terms are as stated.
	We prove the statement by induction.
	Most cases are straightforward.
	We only show some interesting cases.
	\begin{itemize}
		\item It is straightforward to show that the interpretation of variables $\Gamma \vdash x : A$ is given by $s_{\Gamma}(x)$.
		\item We show how variable bindings are treated in the proof.
		As an example, consider lambda abstraction $\Gamma \vdash \lambda y : A. M : C$.
		By the induction hypothesis, the interpretation of $M$ is given by $M[s_{\Gamma, y : A}]$.
		Thus, the interpretation of $\lambda y : A. M$ is given by
		\begin{align}
			\interpret{\lambda y : A. M} &= \Lambda(\interpret{M}) \\
			&= \Gamma^{\times} \vdash \lambda y. (M[s_{\Gamma, y : A}][\langle x^{\times}, y \rangle/x^{\times}]) : A \to C \\
			&= \Gamma^{\times} \vdash \lambda y. M[s_{\Gamma}] : A \to C \\
			&= \Gamma^{\times} \vdash (\lambda y. M)[s_{\Gamma}] : A \to C
		\end{align}
		\item $\Gamma \vdash \handlewithto{M}{H}{x}{N} : C$: We have
		\begin{align}
			&\interpret{\handlewithto{M}{H}{x}{N}} \\
			&= \mathbf{handle}_{\interpret{\Sigma}, \interpret{\Sigma'}, J \interpret{B}} \comp \tupling{\interpret{H}}{T_{\interpret{\Sigma}} \interpret{N} \comp \strength^{T_{\interpret{\Sigma}}} \comp \tupling{\identity{}}{\interpret{M}}} \\
			&= \lambda z. \handlewithto{V\ \langle\rangle}{H'}{r}{r\ \langle\rangle}
		\end{align}
		where
		\begin{align}
			V &= T_{\interpret{\Sigma}} \interpret{N} \comp \strength^{T_{\interpret{\Sigma}}} \comp \tupling{\identity{}}{\interpret{M}} \\
			H' &= \{ \mathtt{op}(y, k) \mapsto \pi_{\mathtt{op}}\ \interpret{H}\ \langle y, k \rangle \mid (\mathtt{op} : A_{\mathtt{op}} \rightarrowtriangle B_{\mathtt{op}}) \in \Sigma \}.
		\end{align}
		It is straightforward to see that $H' = H[s_{\Gamma}]$ by IH.
		On the other hand, $V$ is calculated as follows.
		\begin{align}
			&T_{\interpret{\Sigma}} \interpret{N} \comp \strength^{T_{\interpret{\Sigma}}} \comp \tupling{\identity{}}{\interpret{M}} \\
			&= (\eta \comp \interpret{N})^{\sharp} \comp \tupling{\identity{}}{\interpret{M}} \\
			&= \lambda z. \letin{x}{M[s_{\Gamma}]}{\return{\lambda z. N[s_{\Gamma}]}}
		\end{align}
		Therefore, we have the following.
		\begin{align}
			&\interpret{\handlewithto{M}{H}{x}{N}} \\
			&= \lambda z. \handlewithto{(\letin{x}{M[s_{\Gamma}]}{\return{\lambda z. N[s_{\Gamma}]}})}{H[s_{\Gamma}]}{r}{r\ \langle\rangle} \\
			&= \lambda z. \handlewithto{M[s_{\Gamma}]}{H[s_{\Gamma}]}{x}{\handlewithto{\return{\lambda z. N[s_{\Gamma}]}}{H[s_{\Gamma}]}{r}{r\ \langle\rangle}} \\
			&= \lambda z. \handlewithto{M[s_{\Gamma}]}{H[s_{\Gamma}]}{x}{N[s_{\Gamma}]}
		\end{align}
		\item $\Gamma \vdash H : \Sigma \Rightarrow C$:
		For each $\Gamma, x : A_{\mathtt{op}}, k : B_{\mathtt{op}} \to C \vdash M_{\mathtt{op}} : C$, we have the following.
		\begin{align}
			\interpret{M_{\mathtt{op}}} &= \lambda z. M_{\mathtt{op}}[s_{\Gamma, x : A_{\mathtt{op}}, k : B_{\mathtt{op}} \to C}] \\
			&= \lambda z. M_{\mathtt{op}}[s_{\Gamma}[\pi_1\ (\pi_1\ x^{\times})], \pi_2\ (\pi_1\ x^{\times}) / x, \pi_2\ x^{\times} / k] \\
			\interpret{M_{\mathtt{op}}} \comp \associator^{-1} &= \lambda z. M_{\mathtt{op}}[s_{\Gamma}[\pi_1\ x^{\times}], \pi_1\ (\pi_2\ x^{\times}) / x, \pi_2\ (\pi_2\ x^{\times}) / k] \\
			&= \lambda z. M_{\mathtt{op}}[s_{\Gamma}[\pi_1\ x^{\times}], \pi_2\ x^{\times} / (x, k)] \\
			\Lambda (\interpret{M_{\mathtt{op}}} \comp \associator^{-1}) &= \lambda \langle x, k \rangle. M_{\mathtt{op}}[s_{\Gamma}]
			\qedhere
		\end{align}
	\end{itemize}
\end{proof}

\begin{proof}[Proof of Theorem~\ref{thm:completeness}]
	Suppose that we have $\interpret{V} = \interpret{W}$ for any $\EffectHandlerCalculus$-model where $V$ and $W$ are value terms.
	Then, by Proposition~\ref{prop:term-model-interpretation}, we have $V[s_{\Gamma}] = W[s_{\Gamma}]$ as the interpretation in the term model $\TermModel{\EffectHandlerCalculus}$.
	Since $s_{\Gamma}[V_{\Gamma} / x^{\times}]$ is the identity substitution, we have $V = V[s_{\Gamma}][V_{\Gamma} / x^{\times}] = W[s_{\Gamma}][V_{\Gamma} / x^{\times}] = W$.
	The case for computation terms and handlers can be shown similarly.
\end{proof}

\end{toappendix}

\section{Examples of Models}
\label{sec:examples}

In this section, we present examples of $\EffectHandlerCalculus$-models.
The first one is the traditional model based on free monads, and the second one is a novel model based on continuation-passing style (CPS) semantics for effect handlers \cite{HillerstromFSCD2017}.

\subsection{Free Monads}

Below, we show that free monads on $\Set$ give rise to a $\EffectHandlerCalculus$-model.

\begin{definition}
	A \emph{free monad} on a category $\category{C}$ is a free object with respect to a forgetful functor from the category of monads on $\category{C}$ to the category of endofunctors on $\category{C}$.
	Concretely, given an endofunctor $F : \category{C} \to \category{C}$, a free monad for $F$ is a monad $T$ together with a natural transformation $\zeta : F \to T$ such that for any monad $T'$ and natural transformation $\phi : F \to T'$, there exists a unique monad morphism $\hat{\phi} : T \to T'$ satisfying $\phi = \hat{\phi} \comp \zeta$.
\end{definition}

Every elementary topos with a natural number object admits free monads for polynomial endofunctors \cite{MoerdijkAnnalsofPureandAppliedLogic2000}.
Specifically, $\Set$ admits free monads for all polynomial endofunctors.

\begin{definition}
	An \emph{algebraically-free monad} for an endofunctor $F : \category{C} \to \category{C}$ is a monad $T$ such that there exists an isomorphism between the category of $F$-algebras $\mathbf{Alg}(F)$ and the category of Eilenberg--Moore $T$-algebras $\mathbf{EMAlg}(T)$ that commutes with the forgetful functors to $\category{C}$.
\end{definition}

Any algebraically-free monad is a free monad.
Conversely, if $T$ is a free monad on a locally small and complete category $\category{C}$, then $T$ is an algebraically-free monad \cite[Section~22]{KellyBullAustralMathSoc1980}.
Specifically, free monads on $\Set$ are algebraically-free.
Below, we restrict our attention to the category $\Set$.

\begin{proposition}\label{prop:free-monad-effect-handling}
	For each semantic signature $S \in \SemanticEffectType{\Set}$, we define a polynomial functor $F_S : \Set \to \Set$ by
	\[ F_S X \quad\coloneqq\quad \sum_{(\mathtt{op} : A_{\mathtt{op}} \rightarrowtriangle B_{\mathtt{op}}) \in S} A_{\mathtt{op}} \times (\exponential{B_{\mathtt{op}}}{X}). \]
	Let $T_S$ be the free monad for $F_S$.
	Then, $\{ T_S \}_{S \in \SemanticEffectType{\Set}}$ together with the following structures is a $\EffectHandlerCalculus$-model on $\Set$.
	\setlength{\leftmargini}{20pt}
	\begin{description}
		\item[Operation] Since $T_S$ is free, we have a natural transformation $\zeta : F_S \to T_S$. Thus, we define $\interpret{op}$ as the following composite where $\iota_{\mathtt{op}}$ is the coproduct injection.
		\[ A_{\mathtt{op}} \to A_{\mathtt{op}} \times (\exponential{B_{\mathtt{op}}}{B_{\mathtt{op}}}) \xrightarrow{\iota_{\mathtt{op}}} F_S B_{\mathtt{op}} \xrightarrow{\zeta} T_S B_{\mathtt{op}} \]
		\item[Handle] For each $X \in \Set_{T_{S'}}$, $h \in \HBundle{S}{X}$ is (equivalent to) an $F_S$-algebra structure on $K_{T_{S'}} X$.
		Thus, we define a function $\mathbf{handle}_{S, S', X} : \HBundle{S}{X} \times T_S K_{T_{S'}} X \to K_{T_{S'}} X$ so that for each $h$, $\mathbf{handle}(h, {-})$ is given as the Eilenberg--Moore $T_S$-algebra such that $\mathbf{handle}(h, {-}) \comp \zeta = h'$ where $h'$ is the $F_S$-algebra structure corresponding to $h$.
		\qed
	\end{description}
\end{proposition}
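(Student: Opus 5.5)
We verify the three conditions of Definition~\ref{def:effect-handler-model} for the free-monad data. Since $\Set$ is cartesian closed, Kleisli exponentials exist for every $T_S$, given by $\KleisliExp{T_{S'}}{Y}{X} = \exponential{Y}{T_{S'} X}$. So the work lies in constructing $\mathbf{handle}$ and checking \eqref{eq:handle-unit}, \eqref{eq:handle-multiplication} and \eqref{eq:handle-operation}. Proposition~\ref{prop:model-via-em-algebras} lets us reduce this to a pointwise statement: it suffices to give, for each $h \in \HBundle{S}{X}$, an Eilenberg--Moore $T_S$-algebra $\alpha_h : T_S K X \to K X$ whose precomposition with $\MixedAlgOp{S'}{X}{\interpret{\mathtt{op}}}$ equals $\pi_{\mathtt{op}}(h)$. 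Since $\Set$ is well-pointed, this also checks the equations elementwise in $h$, and the dependence on $h$ is automatically a function $\HBundle{S}{X} \to \mathbf{EM}(T_S; KX)$.

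First, construct $\alpha_h$. An element $h$ consists of functions $h_{\mathtt{op}} : A_{\mathtt{op}} \times (\exponential{B_{\mathtt{op}}}{T_{S'}X}) \to T_{S'} X$, one for each $\mathtt{op}$. Copairing them gives an $F_S$-algebra $h' : F_S(K X) \to K X$. Because $T_S$ is algebraically free, as recalled above via \cite{KellyBullAustralMathSoc1980}, $h'$ corresponds to a unique Eilenberg--Moore algebra $\alpha_h$ with $\alpha_h \comp \zeta_{KX} = h'$. Set $\mathbf{handle}(h, t) \coloneqq \alpha_h(t)$. Then \eqref{eq:handle-unit} and \eqref{eq:handle-multiplication} hold: they are exactly the EM-algebra laws $\alpha_h \comp \eta = \identity{}$ and $\alpha_h \comp \mu = \alpha_h \comp T_S \alpha_h$, read pointwise. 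The strength in \eqref{eq:handle-multiplication} only carries the fixed $h$ along, since strength on $\Set$ is $(h, T_S f\text{-shape}) \mapsto T_S(\langle h,-\rangle)$.

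The main obstacle is \eqref{eq:handle-operation}, which amounts to computing $\MixedAlgOp{S'}{X}{\interpret{\mathtt{op}}}(a, k)$. Unfolding the definition gives $T_S k\,(\interpret{\mathtt{op}}(a))$, where $k : B_{\mathtt{op}} \to K X$. By the definition of $\interpret{\mathtt{op}}$, $\interpret{\mathtt{op}}(a) = \zeta_{B_{\mathtt{op}}}(\iota_{\mathtt{op}}(a, \identity{}))$. Naturality of $\zeta$ and of the injection then give
\[ T_S k\,(\zeta(\iota_{\mathtt{op}}(a,\identity{}))) = \zeta(F_S k(\iota_{\mathtt{op}}(a,\identity{}))) = \zeta(\iota_{\mathtt{op}}(a, k)). \]
Applying $\alpha_h$ and using $\alpha_h \comp \zeta = h'$ yields $h'(\iota_{\mathtt{op}}(a,k)) = h_{\mathtt{op}}(a,k)$. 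The right-hand side of \eqref{eq:handle-operation} is $\eval(\pi_{\mathtt{op}} h, (a,k)) = h_{\mathtt{op}}(a,k)$, so the two sides agree.

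It remains to note that the assignment $h \mapsto \alpha_h$ defines a genuine function $\HBundle{S}{X} \times T_S K X \to K X$. In $\Set$ this is immediate, with no naturality in $h$ required.
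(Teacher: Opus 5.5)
Your proposal is correct and follows essentially the same route as the paper: the unit and multiplication equations are the Eilenberg--Moore laws for the algebra induced by $h$, and the operation equation follows by identifying the mixed algebraic operation with $\zeta \circ \iota_{\mathtt{op}}$ via naturality of $\zeta$. The only difference is that you carry out this identification elementwise in $\Set$, whereas the paper does it point-free using that $\zeta$ is automatically strong; your detour through Proposition~\ref{prop:model-via-em-algebras} is harmless but not needed.
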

\begin{appendixproof}[Proof of Proposition~\ref{prop:free-monad-effect-handling}]
	The equation~\eqref{eq:handle-unit} and~\eqref{eq:handle-multiplication} follow from the fact that $\mathbf{handle}_{S, S', X}(h, {-})$ is an Eilenberg--Moore algebra.
	Moreover, we have $\MixedAlgOp{S'}{X}{\interpret{\mathtt{op}}^T_S} = \zeta \comp \iota_{\mathtt{op}}$, which implies~\eqref{eq:handle-operation}.
	Note that in $\Set$, any functor has a canonical strength, and any natural transformation between functors is automatically strong.
	\begin{align}
		&\MixedAlgOp{S'}{X}{\interpret{\mathtt{op}}^T_S} \\
		&= T_S \eval \comp \strength^{T_S} \comp \braiding \comp ((\zeta \comp \iota_{\mathtt{op}} \comp \tupling{\identity{}}{\Lambda \pi_2}) \times \identity{}) \\
		&= T_S \eval \comp \zeta \comp \strength^{F_S} \comp \braiding \comp ((\iota_{\mathtt{op}} \comp \tupling{\identity{}}{\Lambda \pi_2}) \times \identity{}) \\
		&= \zeta \comp F_S \eval \comp \strength^{F_S} \comp \braiding \comp ((\iota_{\mathtt{op}} \comp \tupling{\identity{}}{\Lambda \pi_2}) \times \identity{}) \\
		&= \zeta \comp \iota_{\mathtt{op}} \comp (A_{\mathtt{op}} \times (\exponential{B_{\mathtt{op}}}{\eval})) \comp \strength^{(A_{\mathtt{op}} \times (\exponential{B_{\mathtt{op}}}{{-}}))} \comp \braiding \comp ((\tupling{\identity{}}{\Lambda \pi_2}) \times \identity{}) \\
		&= \zeta \comp \iota_{\mathtt{op}}
	\end{align}
\end{appendixproof}

\subsection{CPS Semantics for Effect Handlers}

It is known that the continuation-passing style (CPS) transformation for computational lambda calculus ($\lambda_c$-calculus) corresponds to the interpretation by a continuation monad \cite{FuhrmannInformationandComputation2004}.
We show that a similar result holds for the CPS transformation for $\EffectHandlerCalculus$.

\begin{definition}
	The \emph{typed CPS transformation} $\CPS{-}$ for $\EffectHandlerCalculus$~\cite{HillerstromFSCD2017,KammarICFP2013}\footnote{We slightly modify the definition of the CPS transformation in \cite{HillerstromFSCD2017,KammarICFP2013}, as their definition does not satisfy the equations for handlers.} is a syntactic transformation of the following form.
	\begin{align}
		&\Gamma \vdash V : A &&\mapsto  &&\CPS{\Gamma} \vdash \CPS{V} : \CPS{A} \\
		&\Gamma \vdash M : A ! \Sigma &&\mapsto &&\CPS{\Gamma} \vdash \CPS{M} : \CPS{A ! \Sigma} \\
		&\Gamma \vdash H : \Sigma \Rightarrow A ! \Sigma' &&\mapsto &&\CPS{\Gamma} \vdash \CPS{H} : \mathcal{H}_\Sigma[\CPS{A ! \Sigma'}]
	\end{align}
	Here, the source language is $\EffectHandlerCalculus$, and the target language is System F, i.e., a polymorphic lambda calculus with universal quantification.
	The type $\mathcal{H}_\Sigma[\beta]$ is defined as follows.
	\[ \mathcal{H}_\Sigma[\beta] \quad\coloneqq\quad \prod_{(\mathtt{op} : A_{\mathtt{op}} \rightarrowtriangle B_{\mathtt{op}}) \in \Sigma} \Big(\CPS{A_{\mathtt{op}}} \times (\CPS{B_{\mathtt{op}}} \to \beta) \to \beta\Big) \]
	Concretely, the transformation $\CPS{-}$ is defined as follows.
	\begin{itemize}
		\item Value types and terms are transformed homomorphically.
		\item Computation types are transformed as follows.
		\begin{align}
			\CPS{A ! \Sigma} \quad&\coloneqq\quad \forall \beta. (X \to \beta) \to \mathcal{H}_\Sigma[\beta] \to \beta
		\end{align}
		\item Computation terms are transformed as follows.
		\begin{align}
			&\CPS{V\ W} \quad\coloneqq\quad \CPS{V}\ \CPS{W} \\
			&\CPS{\return{V}} \quad\coloneqq\quad \forall \beta. \lambda k. \lambda h. k\ \CPS{V} \\
			&\CPS{\letin{x}{M}{N}} \quad\coloneqq\quad \forall \beta. \lambda k. \lambda h. \CPS{M}\ \beta\ (\lambda x. \CPS{N}\ \beta\ k\ h)\ h \\
			&\CPS{\mathtt{op}(V)} \quad\coloneqq\quad \forall \beta. \lambda k. \lambda h. \pi_{\mathtt{op}}\ h\ \langle \CPS{V}, k \rangle \\
			&\CPS{\handlewithto{M}{H}{x}{N}} \quad\coloneqq\quad \CPS{M}\ \CPS{A ! \Sigma}\ (\lambda x. \CPS{N})\ \CPS{H} \\
			&\qquad\qquad \text{where $N : A ! \Sigma$}
		\end{align}
		\item Handlers are transformed as follows.
		\[ \CPS{\{ \mathtt{op}(x, k) \mapsto M_{\mathtt{op}} \mid \mathtt{op} \in \Sigma \}} \quad\coloneqq\quad \langle \lambda \langle x, k \rangle. \CPS{M_{\mathrm{op}}} \mid \mathtt{op} \in \Sigma \rangle \]
	\end{itemize}
\end{definition}

Below, we explain how the CPS transformation defined above can be understood at the semantic level.
We briefly recall categorical models of System F, which is given by $\lambda{\to}{\forall}$-fibrations.
More details on this notion can be found in the literature~\cite{Jacobs2001}.

\begin{definition}[$\lambda{\to}{\forall}$-fibration]
	A \emph{polymorphic fibration} is a fibration with a generic object $\Omega$, fibred finite products, and finite products in the base category.
	A \emph{$\lambda{\to}$-fibration} is a polymorphic fibration with fibred exponentials.
	A \emph{$\lambda{\to}{\forall}$-fibration} is a $\lambda{\to}$-fibration with simple $\Omega$-products $\prod_{\Omega} : \category{E}_{I \times \Omega} \to \category{E}_{I}$.
\end{definition}

To capture the CPS transformation for effect handlers, we use the $\lambda{\to}{\forall}$-fibration defined by the term model for System F.

\begin{definition}
	The term model $\TermModel{\lambda{\to}{\forall}}$ for System F is defined as follows.
	We consider the following types for System F.
	\[ A, B \quad\coloneqq\quad X \mid \UnitType \mid A \times B \mid A \to B \mid \forall X. A \]
	Let $\Delta = X_1, \dots, X_m$ be a type variable context and $\Gamma = x_1 : A_1, \dots, x_n : A_n$ be a term variable context.
	Well-formed types are denoted by $\Delta \vdash A$, and well-typed terms are denoted by $\Delta \mid \Gamma \vdash V : A$.
	Then, the following construction gives a $\lambda{\to}{\forall}$-fibration.
	\setlength{\leftmargini}{20pt}
	\begin{description}
		\item[Base category] Objects are type variable contexts $\Delta$.
		A morphism $\Delta \to \Delta'$ is a substitution, i.e., an assignment of $\Delta' \vdash A_i$ for each $X_i \in \Delta'$.
		This gives a cartesian category where products are given by concatenation of type variable contexts.
		\item[Total category] An object over $\Delta$ is a well-formed type $\Delta \vdash A$.
		A vertical morphism $\Delta \vdash A \to \Delta \vdash B$ is a term $\Delta \mid x : A \vdash V : B$ modulo $\beta$/$\eta$-equivalence.
		For convenience, we often identify $\Delta \mid x_1 : A_1, \dots, x_n : A_n \vdash V : B$ with $\Delta \mid x : A_1 \times \dots \times A_n \vdash V[\pi_1\ x/ x_1, \dots, \pi_n\ x / x_n] : B$.
	\end{description}
\end{definition}

By definition of $\lambda{\to}{\forall}$-fibrations, the fibre category $\TermModel{\lambda{\to}{\forall}}_1$ over the terminal object $1$ is a cartesian closed category.
Since the terminal object $1$ corresponds to the empty type variable context, the fibre category $\TermModel{\lambda{\to}{\forall}}_1$ can be seen as the term model restricted to types without free type variables.
We can construct a $\EffectHandlerCalculus$-model on $\TermModel{\lambda{\to}{\forall}}_1$ that gives a category-theoretic interpretation of the CPS transformation for effect handlers.

\begin{definition}\label{def:multi-barrelled-monad}
	We define a strong monad $\{ D_S \}_{S \in \SemanticEffectType{\TermModel{\lambda{\to}{\forall}}_1}}$ on the fibre category $\TermModel{\lambda{\to}{\forall}}_1$ as follows.
	\setlength{\leftmargini}{20pt}
	\begin{description}
		\item[Object map] For each $S$ and $X$, we define $D_S$ as follows.
		\[ X : \Type \quad\mapsto\quad \forall \beta. (X \to \beta) \to \mathcal{H}_S[\beta] \to \beta : \Type \]
		Here, $\mathcal{H}_S[\beta]$ is a type defined as follows.
		\[ \mathcal{H}_S[\beta] \quad\coloneqq\quad \prod_{(\mathtt{op} : A_{\mathtt{op}} \rightarrowtriangle B_{\mathtt{op}}) \in S} (A_{\mathtt{op}} \times (B_{\mathtt{op}} \to \beta) \to \beta) \]
		\item[Unit] For each $A$, the unit is defined as follows.
		\[ x : A \vdash\quad \forall \beta.\ \lambda k : A \to \beta.\ \lambda h : \mathcal{H}_S[\beta].\ k\ x \quad: D_S A \]
		\item[Strong Kleisli extension] For each $x_1 : A_1, x_2 : A_2 \vdash M : D_S B$, its strong Kleisli extension $x_1 : A_1, x_2 : D_S A_2 \vdash M^{\sharp} : D_S B$ is defined as follows.
		\[ M^{\sharp}\ \coloneqq\ \forall \beta.\ \lambda k : B \to \beta.\ \lambda h : \mathcal{H}_S[\beta].\ x_2\ \beta\ (\lambda x_2 : A_2.\ M\ \beta\ k\ h)\ h \]
	\end{description}
\end{definition}

\begin{proposition}\label{prop:polymorphic-multi-barrelled-monad}
	The strong monad in Definition~\ref{def:multi-barrelled-monad} together with the following structures is a $\EffectHandlerCalculus$-model.
	\setlength{\leftmargini}{20pt}
	\begin{description}
		\item[Operation] For each $S$ and $\mathtt{op} : A \rightarrowtriangle B$ in $S$, we define $\interpret{\mathtt{op}} : A \to D_S B$ as follows.
		\[ x : A_{\mathtt{op}} \vdash \forall \beta.\ \lambda k : B_{\mathtt{op}} \to \beta.\ \lambda h : \mathcal{H}_S[\beta].\ \pi_{\mathtt{op}}\ h\ \langle x, k \rangle : D_S B_{\mathtt{op}} \]
		\item[Handle] We define $\mathbf{handle}_{S, S', A}$ as follows.
		\begin{align}
			&h : \mathcal{H}_S[D_{S'} A],\quad x : D_S D_{S'} A \\
			&\quad\vdash\quad x\ (D_{S'} A)\ (\lambda y : D_{S'} A.\ y)\ h \quad: D_{S'} A \tag*{\qed}
		\end{align}
	\end{description}
\end{proposition}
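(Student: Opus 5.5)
Since every object involved is a System F type and morphisms are terms modulo $\beta\eta$, the plan is to verify every required equation by direct calculation in the term model $\TermModel{\lambda{\to}{\forall}}_1$. This is the same style of argument as Lemma~\ref{lem:term-model-monad} and Lemma~\ref{lem:term-model-effect-handling-monad}. First I would check that Definition~\ref{def:multi-barrelled-monad} gives a strong Kleisli triple, using the three laws of Definition~\ref{def:strong-kleisli-triple}:
\begin{itemize}
	\item $(\eta \comp \leftunitor)^{\sharp} = \leftunitor$ reduces to $\forall \beta. \lambda k. \lambda h. x\ \beta\ (\lambda y. k\ y)\ h$, which equals $x$ by $\eta$ for $\lambda$ and $\forall$.
	\item $M^{\sharp} \comp (\identity{} \times \eta) = M$ follows by two $\beta$-steps and then $\eta$.
	\item The associativity law holds because both sides unfold to the same term $x\ \beta\ (\lambda x_2. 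M\ \beta\ (\lambda y. N\ \beta\ k\ h)\ h)\ h$.
\end{itemize}
Kleisli exponentials exist because the fibre over $1$ is cartesian closed, so $\KleisliExp{D_S}{Y}{Z} = Y \to D_S Z$. The object $\HBundle{S}{X}$ is then exactly $\mathcal{H}_S[D_{S'} X]$, and $\mathbf{handle}$ is well typed.

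Next I would verify the three handle equations, writing $\mathbf{handle}(h, x) = x\ (D_{S'}A)\ \mathrm{id}\ h$.

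For \eqref{eq:handle-unit}, we have $\eta(y)\ (D_{S'}A)\ \mathrm{id}\ h = \mathrm{id}\ y = y$.

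For \eqref{eq:handle-operation}, I first compute $\MixedAlgOp{S'}{X}{\interpret{\mathtt{op}}}(a, f)$. It is $D_S\eval$ applied to the strength of $\langle f, \interpret{\mathtt{op}}(a)\rangle$. Unfolding the strength and the functor action via the Kleisli extension gives $\forall\beta. \lambda k. \lambda h. \pi_{\mathtt{op}}\ h\ \langle a, \lambda b. k\ (f\ b)\rangle$. Applying $\mathbf{handle}$ with $\beta = D_{S'}A$ and $k = \mathrm{id}$ yields $\pi_{\mathtt{op}}\ h\ \langle a, \lambda b. f\ b\rangle$. This equals $\eval(\pi_{\mathtt{op}} h, \langle a, f\rangle)$ after $\eta$-contracting $\lambda b. f\ b$.

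Equation \eqref{eq:handle-multiplication} is where I expect the main difficulty. Unfolding $\mu$ via the Kleisli extension, the left-hand side is $X\ (D_{S'}A)\ (\lambda y. y\ (D_{S'}A)\ \mathrm{id}\ h)\ h$ for $X : D_S D_S D_{S'} A$. On the right-hand side, $D_S\mathbf{handle} \comp \langle \pi_1, \strength\rangle$ produces $\forall \beta. \lambda k. \lambda h'. X\ \beta\ (\lambda y. k\ (y\ (D_{S'}A)\ \mathrm{id}\ h))\ h'$. Applying $\mathbf{handle}$ then instantiates $\beta := D_{S'}A$, $k := \mathrm{id}$ and $h' := h$, so after $\beta$-reduction the two sides coincide.

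The delicate point is the bookkeeping of the strength and functorial action expressed through $({-})^{\sharp}$. The key fact to track is that the handler $h$ inside the strength is the same $h$ that is passed at the outer level, which is what the $\langle \pi_1, \strength\rangle$ pairing guarantees. Notably, no parametricity argument is needed, since every step is a $\beta\eta$-conversion.

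Finally, the operations $\interpret{\mathtt{op}}$ are well-typed morphisms $A_{\mathtt{op}} \to D_S B_{\mathtt{op}}$, so all the data of Definition~\ref{def:effect-handler-model} are present.
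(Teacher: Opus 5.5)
Your proposal is correct and follows the paper's proof essentially step for step. You check the three strong Kleisli triple laws by $\beta\eta$-calculation in $\TermModel{\lambda{\to}{\forall}}_1$, then verify \eqref{eq:handle-unit}, \eqref{eq:handle-multiplication} (unfolding $\mu$ and $D_S\mathbf{handle} \comp \strength = (\eta \comp \mathbf{handle})^{\sharp}$ through $({-})^{\sharp}$), and \eqref{eq:handle-operation} (via the normal form $\forall\beta.\lambda k.\lambda h.\pi_{\mathtt{op}}\ h\ \langle a, \lambda b.\,k\ (f\ b)\rangle$ of the mixed algebraic operation), arriving at the same terms as the paper, while also making explicit two points the paper leaves implicit: Kleisli exponentials come from the cartesian closed fibre, and $\HBundle{S}{X}$ coincides with $\mathcal{H}_S[D_{S'}X]$.
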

\begin{appendixproof}[Proof of Proposition~\ref{prop:polymorphic-multi-barrelled-monad}]
	First, we show that $D_S$ is a strong monad.
	\begin{itemize}
		\item $(\eta \comp \leftunitor)^{\sharp} = \leftunitor$: The left-hand side is the strong Kleisli extension of
		\[ u : 1, x : A \vdash \forall \beta. \lambda k. \lambda h. k\ x : D_S A \]
		\begin{align}
			&(\eta \comp \leftunitor)^{\sharp} \\
			&= \forall \beta. \lambda k. \lambda h. x\ \beta\ (\lambda x. (\forall \beta. \lambda k. \lambda h. k\ x)\ \beta\ k\ h)\ h \\
			&= \forall \beta. \lambda k. \lambda h. x\ \beta\ (\lambda x. k\ x)\ h \\
			&= \forall \beta. \lambda k. \lambda h. x\ \beta\ k\ h \\
			&= x
		\end{align}
		\item $V^{\sharp} \comp (A_1 \times \eta_{A_2}) = V$: Let $V$ be a term of type
		$x_1 : A_1, x_2 : A_2 \vdash V : B$.
		\begin{align}
			&V^{\sharp} \comp (A_1 \times \eta_{A_2}) \\
			&= \forall \beta. \lambda k. \lambda h. (\forall \beta. \lambda k. \lambda h. k\ x_2)\ \beta\ (\lambda x_2. V\ \beta\ k\ h)\ h \\
			&= \forall \beta. \lambda k. \lambda h. (\lambda x_2. V\ \beta\ k\ h)\ x_2 \\
			&= \forall \beta. \lambda k. \lambda h. V\ \beta\ k\ h \\
			&= V
		\end{align}
		\item $W^{\sharp} \comp (A_1 \times V^{\sharp}) \comp \associator = (W^{\sharp} \comp (A_1 \times V) \comp \associator)^{\sharp}$: Let $V$ and $W$ be terms of type
		\[ x_2 : A_2, x_3 : A_3 \vdash V : D_S B_1 \qquad x_1 : A_1, y : B_1 \vdash W : D_S B_2 \]
		We show 
		\[ x_1 : A_1, x_2 : A_2, x_3 : D_S A_3 \vdash W^{\sharp} \comp (A_1 \times V^{\sharp}) \comp \associator = (W^{\sharp} \comp (A_1 \times V) \comp \associator)^{\sharp} : D_S B_2 \]
		\begin{align}
			W^{\sharp} &= \forall \beta. \lambda k. \lambda h. y\ \beta\ (\lambda y. W\ \beta\ k\ h)\ h \\
			V^{\sharp} &= \forall \beta. \lambda k. \lambda h. x_3\ \beta\ (\lambda x_3. V\ \beta\ k\ h)\ h
		\end{align}
		\begin{align}
			&W^{\sharp} \comp (A_1 \times V^{\sharp}) \comp \associator \\
			&= \forall \beta. \lambda k. \lambda h. (\forall \beta. \lambda k. \lambda h. x_3\ \beta\ (\lambda x_3. V\ \beta\ k\ h)\ h)\ \beta\ (\lambda y. W\ \beta\ k\ h)\ h \\
			&= \forall \beta. \lambda k. \lambda h. x_3\ \beta\ (\lambda x_3. V\ \beta\ (\lambda y. W\ \beta\ k\ h)\ h)\ h \\
			&(W^{\sharp} \comp (A_1 \times V) \comp \associator)^{\sharp} \\
			&= \forall \beta. \lambda k. \lambda h. x_3\ \beta\ (\lambda x_3. (\forall \beta. \lambda k. \lambda h. V\ \beta\ (\lambda y. W\ \beta\ k\ h)\ h)\ \beta\ k\ h)\ h \\
			&= \forall \beta. \lambda k. \lambda h. x_3\ \beta\ (\lambda x_3. V\ \beta\ (\lambda y. W\ \beta\ k\ h)\ h)\ h
		\end{align}
	\end{itemize}
	We show axioms for effect-handling strong monads.
	\begin{itemize}
		\item $\mathbf{handle}_{S, S', X} \comp (\identity{} \times \eta) = \pi_2$
		The right-hand side is
		\[ h : \mathcal{H}_S[D_{S'} A], x : D_{S'} A \vdash x : D_{S'} A \]
		and the left-hand side is
		\begin{align}
			&\mathbf{handle}_{S, S', X} \comp (\identity{} \times \eta) \\
			&= (\forall \beta. \lambda k. \lambda h. k\ x)\ (D_{S'} A)\ (\lambda y. y)\ h \\
			&= x
		\end{align}
		\item $\mathbf{handle}_{S, S', X} \comp (\identity{} \times \mu) = \mathbf{handle}_{S, S', X} \comp (\identity{} \times T_S \mathbf{handle}_{S, S', X}) \comp \langle \pi_1, \strength \rangle$
		This is equivalent to
		\[ \mathbf{handle}_{S, S', X} \comp (\identity{} \times \mu) = \mathbf{handle}_{S, S', X} \comp \langle \pi_1, (\eta \comp \mathbf{handle}_{S, S', X})^{\sharp} \rangle \]
		Since $\mu = \leftunitor^{\sharp} \comp \leftunitor^{-1}$, $\mu$ is given as the following term.
		\[ x : (D_S)^2 D_{S'} A \vdash \forall \beta.\ \lambda k.\ \lambda h.\ x\ \beta\ (\lambda x.\ x\ \beta\ k\ h)\ h : D_S D_{S'} A \]
		The left-hand side $h : \mathcal{H}_S[D_{S'} A], x : (D_S)^2 D_{S'} A \vdash \mathbf{handle}_{S, S', X} \comp (\identity{} \times \mu) : D_{S'} A$ is given as follows.
		\begin{align}
			&\mathbf{handle}_{S, S', X} \comp (\identity{} \times \mu) \\
			&= (\forall \beta.\ \lambda k.\ \lambda h.\ x\ \beta\ (\lambda x.\ x\ \beta\ k\ h)\ h)\ (D_{S'} A)\ (\lambda y.\ y)\ h \\
			&= x\ (D_{S'} A)\ (\lambda x.\ x\ (D_{S'} A)\ (\lambda y.\ y)\ h)\ h
		\end{align}
		The right-hand side is given as follows.
		\begin{itemize}
			\item $h : \mathcal{H}_S[D_{S'} A], x : D_S D_{S'} A \vdash \eta \comp \mathbf{handle}_{S, S', X} : D_S D_{S'} A$ is
			\[ \forall \beta.\ \lambda k.\ \lambda h'.\ k\ (x\ (D_{S'} A)\ (\lambda y.\ y)\ h) \]
			\item $h : \mathcal{H}_S[D_{S'} A], x : (D_S)^2 D_{S'} A \vdash (\eta \comp \mathbf{handle}_{S, S', X})^{\sharp} : D_S D_{S'} A$ is
			\begin{align}
				&\forall \beta.\ \lambda k.\ \lambda h'.\ x\ \beta\ (\lambda x.\ (\forall \beta.\ \lambda k.\ \lambda h'.\ k\ (x\ (D_{S'} A)\ (\lambda y.\ y)\ h))\ \beta\ k\ h')\ h' \\
				&= \forall \beta.\ \lambda k.\ \lambda h'.\ x\ \beta\ (\lambda x.\ k\ (x\ (D_{S'} A)\ (\lambda y.\ y)\ h))\ h'
			\end{align}
		\end{itemize}
		\begin{align}
			&\mathbf{handle}_{S, S', X} \comp \langle \pi_1, (\eta \comp \mathbf{handle}_{S, S', X})^{\sharp} \rangle \\
			&= (\forall \beta.\ \lambda k.\ \lambda h'.\ x\ \beta\ (\lambda x.\ k\ (x\ (D_{S'} A)\ (\lambda y.\ y)\ h))\ h')\ (D_{S'} A)\ (\lambda y.\ y)\ h \\
			&= x\ (D_{S'} A)\ (\lambda x.\ (\lambda y.\ y)\ (x\ (D_{S'} A)\ (\lambda y.\ y)\ h))\ h \\
			&= x\ (D_{S'} A)\ (\lambda x.\ x\ (D_{S'} A)\ (\lambda y.\ y)\ h)\ h
		\end{align}
		\item $\mathbf{handle}_{S, S', X} \comp (\identity{} \times \MixedAlgOp{S'}{X}{\interpret{\mathtt{op}}_S}) = \eval \comp (\pi_{\mathtt{op}} \times \identity{})$
		Using strong Kleisli extension, we can rewrite $\MixedAlgOp{S'}{X}{\interpret{\mathtt{op}}_S}$ as follows.
		\begin{align}
			&\MixedAlgOp{S'}{X}{\interpret{\mathtt{op}}_S} \\
			&= D_S \eval \comp \strength^{T_S} \comp \braiding \comp (\interpret{\mathtt{op}}_S \times \identity{}) \\
			&= (\eta \comp \eval)^{\sharp} \comp \braiding \comp (\interpret{\mathtt{op}}_S \times \identity{})
		\end{align}
		Then, we unfold
		\[ x : A_{\mathtt{op}}, c : B_{\mathtt{op}} \to D_{S'} C \vdash \MixedAlgOp{S'}{X}{\interpret{\mathtt{op}}_S} : D_{S} D_{S'} C \]
		as follows.
		\begin{itemize}
			\item $x : A_{\mathtt{op}} \vdash \interpret{\mathtt{op}}_S : D_S B_{\mathtt{op}}$ is $\forall \beta. \lambda k. \lambda h. \pi_{\mathtt{op}}\ h\ \langle x, k \rangle$.
			\item $c : B_{\mathtt{op}} \to D_{S'} C, b  : B_{\mathtt{op}} \vdash \eta \comp \eval : D_S D_{S'} C$ is $\forall \beta. \lambda k. \lambda h. k\ (c\ b)$.
			\item $\MixedAlgOp{S'}{X}{\interpret{\mathtt{op}}_S}$ is given by
			\begin{align}
				&\MixedAlgOp{S'}{X}{\interpret{\mathtt{op}}_S} \\
				&= (\eta \comp \eval)^{\sharp} \comp \braiding \comp (\interpret{\mathtt{op}}_S \times \identity{}) \\
				&= \forall \beta. \lambda k. \lambda h. (\forall \beta. \lambda k. \lambda h. \pi_{\mathtt{op}}\ h\ \langle x, k \rangle)\ \beta\ (\lambda b. (\forall \beta. \lambda k. \lambda h. k\ (c\ b))\ \beta\ k\ h)\ h \\
				&= \forall \beta. \lambda k. \lambda h. (\forall \beta. \lambda k. \lambda h. \pi_{\mathtt{op}}\ h\ \langle x, k \rangle)\ \beta\ (\lambda b. k\ (c\ b))\ h \\
				&= \forall \beta. \lambda k. \lambda h. \pi_{\mathtt{op}}\ h\ \langle x, \lambda b. k\ (c\ b) \rangle
			\end{align}
		\end{itemize}
		Thus, the left-hand side
		\[ h : \mathcal{H}_S[D_{S'} C], x : A_{\mathtt{op}}, c : B_{\mathtt{op}} \to D_{S'} C \vdash \mathbf{handle}_{S, S', X} \comp (\identity{} \times \MixedAlgOp{S'}{X}{\interpret{\mathtt{op}}_S}) : D_{S'} C \]
		is given as follows.
		\begin{align}
			&\mathbf{handle}_{S, S', X} \comp (\identity{} \times \MixedAlgOp{S'}{X}{\interpret{\mathtt{op}}_S}) \\
			&= (\forall \beta. \lambda k. \lambda h. \pi_{\mathtt{op}}\ h\ \langle x, \lambda b. k\ (c\ b) \rangle)\ (D_{S'} A)\ (\lambda y. y)\ h \\
			&= \pi_{\mathtt{op}}\ h\ \langle x, \lambda b. (\lambda y. y)\ (c\ b) \rangle \\
			&= \pi_{\mathtt{op}}\ h\ \langle x, \lambda b. c\ b \rangle \\
			&= \pi_{\mathtt{op}}\ h\ \langle x, c \rangle
		\end{align}
		On the other hand, the right-hand side is
		\begin{align}
			&\eval \comp (\pi_{\mathtt{op}} \times \identity{}) \\
			&= \pi_{\mathtt{op}}\ h\ \langle x, c \rangle
		\end{align}
	\end{itemize}
\end{appendixproof}

\begin{theorem}
	The interpretation of a term $M$ in the $\EffectHandlerCalculus$-model in Proposition~\ref{prop:polymorphic-multi-barrelled-monad} coincides with the interpretation of the CPS-transformed term $\CPS{M}$ in the $\lambda{\to}{\forall}$-fibration $\TermModel{\lambda{\to}{\forall}}$, i.e., the equivalence class of $\CPS{M}$ module $\beta$/$\eta$-equivalence.
	\qed
\end{theorem}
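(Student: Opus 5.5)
We prove the statement by simultaneous induction on typing derivations of value terms, computation terms and handlers. The claim for each judgement is the following. For $\Gamma \vdash V : A$, the morphism $\interpret{V}$ in $\TermModel{\lambda{\to}{\forall}}_1$ is the class of $\CPS{\Gamma} \vdash \CPS{V} : \CPS{A}$. For a computation term, $\interpret{M}$ is the class of $\CPS{M}$. For a handler, $\interpret{H}$ is the class of $\CPS{H}$.

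Two preliminary facts are needed first. The first is that types agree: $\interpret{A} = \CPS{A}$ and $D_{\interpret{\Sigma}}\interpret{A} = \CPS{A ! \Sigma}$. This is by induction on types, using that $\TermModel{\lambda{\to}{\forall}}_1$ is cartesian closed. As a consequence, the Kleisli exponential can be taken to be the ordinary exponential $\interpret{A} \to D_S \interpret{B}$, and $\HBundle{S}{J\interpret{B}}$ is literally $\mathcal{H}_S[D_{S'}\interpret{B}]$. The second is the explicit form of the strong-monad structure of $D_S$ as System F terms. From the strong Kleisli extension of Definition~\ref{def:multi-barrelled-monad} we read off that $\strength$, $T_S f$ and $\mu$ are the evident $\beta\eta$-normal terms, as was already done in the proof of Proposition~\ref{prop:polymorphic-multi-barrelled-monad}. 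In particular, $\interpret{\letin{x}{M}{N}} = \interpret{N}^{\sharp} \comp \tupling{\identity{}}{\interpret{M}}$ unfolds to $\forall\beta.\lambda k.\lambda h.\ \CPS{M}\ \beta\ (\lambda x.\CPS{N}\ \beta\ k\ h)\ h$.

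With these facts the inductive cases are mostly direct. Variables, tuples, projections, $\lambda$ and application are interpreted exactly as in a cartesian closed category, so they match the homomorphic CPS. The case $\return{V}$ is given by the unit. The case $\mathtt{op}(V)$ is $\interpret{\mathtt{op}} \comp \interpret{V}$, which is literally $\forall\beta.\lambda k.\lambda h.\ \pi_{\mathtt{op}}\ h\ \langle \CPS{V}, k\rangle$. For a handler, $\Lambda(\interpret{M_{\mathtt{op}}} \comp \associator^{-1})$ is $\lambda\langle x,k\rangle.\CPS{M_{\mathtt{op}}}$ by the induction hypothesis, since currying in the term model is $\lambda$-abstraction up to the identification of multi-variable contexts with product types.

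The main step is handle-with. We first compute $T_S\interpret{N} \comp \strength \comp \tupling{\identity{}}{\interpret{M}} = (\eta \comp \interpret{N})^{\sharp} \comp \tupling{\identity{}}{\interpret{M}}$, which gives $\forall\beta.\lambda k.\lambda h.\ \CPS{M}\ \beta\ (\lambda x. k\ \CPS{N})\ h$. Next we apply $\mathbf{handle}$, which instantiates $\beta := D_{S'}\interpret{B}$, $k := \lambda y.y$ and $h := \CPS{H}$. After $\beta$-reduction this leaves $\CPS{M}\ \CPS{B ! \Sigma'}\ (\lambda x.\CPS{N})\ \CPS{H}$, which is exactly $\CPS{\handlewithto{M}{H}{x}{N}}$. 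The only delicate point is the bookkeeping of contexts through $\strength$, i.e.\ checking that the substitution of $\Gamma$-variables commutes with these reductions. We handle this with the substitution and weakening lemmas (Lemmas~\ref{lem:subst-interpretation} and~\ref{lem:weakening-interpretation}) together with $\eta$-equivalence in System F. We expect this to be the main obstacle, because the equalities hold only modulo $\beta\eta$ and type-level instantiation must be tracked carefully; the remaining cases are routine.
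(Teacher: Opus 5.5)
Your proposal is correct and is the natural proof. The paper gives no written proof of this theorem; the intended argument is exactly your simultaneous structural induction, and it mirrors the term-model computation in Proposition~\ref{prop:term-model-interpretation}, including the rewriting $T_S\interpret{N}\comp\strength\comp\tupling{\identity{}}{\interpret{M}} = (\eta\comp\interpret{N})^{\sharp}\comp\tupling{\identity{}}{\interpret{M}}$ in the handle-with case. Your handle-with computation is correct: you instantiate $\beta := D_{S'}\interpret{B}$, $k := \lambda y.\,y$ and $h := \CPS{H}$, then $\beta$-reduce. One small correction: the context bookkeeping only needs that composition in the fibre is substitution, once multi-variable contexts are identified with product types. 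The source-calculus weakening and substitution lemmas you cite are not what that step requires.
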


By the soundness of the denotational semantics (Theorem~\ref{thm:soundness}), it follows immediately that the CPS transformation is correct in the following sense.
\begin{corollary}
	If $\Gamma \vdash M = N : A ! \Sigma$ in $\EffectHandlerCalculus$, then $\CPS{M}$ and $\CPS{N}$ are equal modulo $\beta$-/$\eta$-equivalence in System F.
	In particular, if $M \leadsto N$ in $\EffectHandlerCalculus$, then $\CPS{M}$ and $\CPS{N}$ are equal modulo $\beta$/$\eta$-equivalence.
	\qed
\end{corollary}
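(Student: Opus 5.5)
The corollary follows by combining three results already in the paper: the preceding theorem, soundness (Theorem~\ref{thm:soundness}) applied to the CPS model of Proposition~\ref{prop:polymorphic-multi-barrelled-monad}, and Proposition~\ref{prop:operational-equational-soundness}. No new calculation should be needed.

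For the first claim, suppose $\Gamma \vdash M = N : A ! \Sigma$ is derivable in $\EffectHandlerCalculus$. Proposition~\ref{prop:polymorphic-multi-barrelled-monad} makes the family $\{D_S\}_S$ on the fibre $\TermModel{\lambda{\to}{\forall}}_1$ a $\EffectHandlerCalculus$-model. Theorem~\ref{thm:soundness} therefore gives $\interpret{M} = \interpret{N}$ in this model, as morphisms $\interpret{\Gamma} \to D_{\interpret{\Sigma}} \interpret{A}$ of $\TermModel{\lambda{\to}{\forall}}_1$.

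By the preceding theorem, $\interpret{M}$ is the $\beta$/$\eta$-equivalence class of $\CPS{M}$, and likewise $\interpret{N}$ is the class of $\CPS{N}$. Two facts make this match up. First, $\interpret{\Gamma}$ coincides with (the product type of) $\CPS{\Gamma}$. Second, $D_{\interpret{\Sigma}}\interpret{A}$ is syntactically $\CPS{A ! \Sigma}$, since both unfold to $\forall\beta.(\CPS{A} \to \beta) \to \mathcal{H}_\Sigma[\beta] \to \beta$ once value types are translated homomorphically. Under the paper's identification of multi-variable contexts with a single product-typed variable, equality of these morphisms is exactly equality of $\CPS{M}$ and $\CPS{N}$ modulo $\beta$/$\eta$.

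The one point to check is the context convention. The preceding theorem already compares $\interpret{M}$ with $\CPS{M}$ in the total category, read through the convention identifying $x_1{:}A_1,\dots,x_n{:}A_n \vdash V$ with $x : \prod A_i \vdash V[\pi_i x/x_i]$. This identification is a bijection on $\beta$/$\eta$-classes by the product $\beta$/$\eta$ laws, so equality transfers back to terms in the original context $\CPS{\Gamma}$. I expect this bookkeeping to be the only delicate step, and it is routine.

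For the second claim, let $M \leadsto N$ in $\EffectHandlerCalculus$. Proposition~\ref{prop:operational-equational-soundness} gives $\vdash M = N : C$ for the type $C$ of $M$; $N$ has the same type by type preservation (Proposition~\ref{prop:type-soundness}). Applying the first claim yields that $\CPS{M}$ and $\CPS{N}$ are $\beta$/$\eta$-equal.
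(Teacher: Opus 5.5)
Your proposal is correct and follows the paper's own argument. You apply soundness (Theorem~\ref{thm:soundness}) in the CPS model of Proposition~\ref{prop:polymorphic-multi-barrelled-monad}, use the preceding theorem to identify each interpretation with the $\beta$/$\eta$-class of the CPS-transformed term, and derive the reduction case from Proposition~\ref{prop:operational-equational-soundness}; your extra bookkeeping on the product-context convention is routine and harmless.
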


\section{Incorporating Effect Theories}
\label{sec:with-effect-theories}
The idea of effect handlers originates from the study of algebraic theories \cite{PlotkinESOP2009,PlotkinLMCS2013}, which are specified by operations and equational axioms between operations.
However, the effect handler calculus $\EffectHandlerCalculus^{+}$ that we presented in Section~\ref{sec:syntax} and~\ref{sec:semantics} does not take equational axioms between operations into consideration.
In this section, we consider an extension of $\EffectHandlerCalculus^{+}$ that incorporates equational axioms between operations.
We provide categorical semantics of this extended calculus and show its soundness and completeness.

Before going into details, we give a brief technical overview of this section.
The syntax of the extended calculus, which we call $\EffectHandlerCalculusWithEquations$, is largely inspired by \cite{LuksicJFunctProg2020}.
The main difference from $\EffectHandlerCalculus^{+}$ is that we use computation types of the form $A ! \Sigma / \mathcal{E}$ where $\mathcal{E}$ is a set of equational axioms between operations in $\Sigma$.
This type is intended to represent computations that may perform operations in $\Sigma$ and are guaranteed to satisfy the equations in $\mathcal{E}$.
To type $\handlewithto{M}{H}{x}{N}$ where $M$ has type $A ! \Sigma / \mathcal{E}$, we require that the effect handler $H$ respects the equations in $\mathcal{E}$, as we will later explain in Figure~\ref{fig:typing-rules-effect-handlers-with-equations}.
We call a pair $(\Sigma, \mathcal{E})$ an \emph{effect theory}, following the terminology of \cite{LuksicJFunctProg2020,PlotkinLMCS2013}.

The semantics of $\EffectHandlerCalculusWithEquations$ is defined by adapting the semantics of $\EffectHandlerCalculus^{+}$ presented in Section~\ref{sec:semantics}.
Although the structure of the models is similar to that of $\EffectHandlerCalculus^{+}$, this arises a couple of technical difficulties.
One difficulty is caused by the modified typing rule for $\handlewithto{M}{H}{x}{N}$, which introduces a mutual dependence between typing rules and the derivation rules for equations.
To solve this, we define a partial interpretation of terms that does not depend on derivation rules for equations, and later show that this interpretation is defined for all well-typed terms in $\EffectHandlerCalculusWithEquations$ by mutual induction with the soundness theorem for $\EffectHandlerCalculusWithEquations$.
The other difficulty is to modify the definition of $\mathbf{handle}_{S, S', X}$ in Definition~\ref{def:effect-handler-model}.
Since $\handlewithto{M}{H}{x}{N}$ is only well-typed when $H$ respects the equations in the effect theory of $M$, the morphism $\mathbf{handle}_{S, S', X} : \HBundle{S}{X} \times T_S K^{T_{S'}} X \to K^{T_{S'}} X$ needs not be defined for any (generalized) element of $\HBundle{S}{X}$.
Thus, we define $\mathbf{handle}_{\mathcal{T}, \mathcal{T}', X} : H_{\mathcal{T}} \times \Yoneda T_{\mathcal{T}} K^{T_{\mathcal{T}'}} X \to \Yoneda K^{T_{\mathcal{T}'}} X$ as a morphism in the presheaf category $[\category{C}^{\op}, \Set]$ where $\Yoneda X = \category{C}({-}, X)$ is the Yoneda embedding and $H_{\mathcal{T}}$ is a subpresheaf of $\Yoneda \HBundle{S}{X}$ consisting of generalized elements that respect the equations in the effect theory $\mathcal{T} = (\Sigma, \mathcal{E})$.
Then, we show that this definition of models admits soundness (Theorem~\ref{thm:soundness-with-effect-theories}) and completeness (Theorem~\ref{thm:completeness-with-effect-theories}).

\subsection{Effect Theories}
\label{sec:effect-theories}
To define the syntax for effect theories $(\Sigma, \mathcal{E})$~\cite{LuksicJFunctProg2020,PlotkinLMCS2013}, we first consider the first-order fragment of value types and terms in $\EffectHandlerCalculus^{+}$, which we call \emph{ground types} and \emph{ground terms}, respectively.
\begin{align}
	\hat{A}, \hat{B} \quad&\coloneqq\quad b \mid \hat{A}_1 \times \dots \times \hat{A}_n \mid \hat{A}_1 + \dots + \hat{A}_n \\
	\hat{V}, \hat{W} \quad&\coloneqq\quad x \mid \langle \hat{V}_1, \dots, \hat{V}_n \rangle \mid \pi_i\ \hat{V} \mid \iota_i\ \hat{V}
\end{align}
Here, $b$ ranges over base types.
We write ground types and terms with hats to distinguish them from value types and terms in the effect handler calculus $\EffectHandlerCalculusWithEquations$, which we will define later.
Typing judgements for ground terms are of the form $\hat{\Delta} \vdash \hat{V} : \hat{A}$ where $\hat{\Delta}$ is a context containing only variables of ground types.
Typing rules are defined in the same way as in $\EffectHandlerCalculus^{+}$.

Equational axioms between operations are specified using \emph{effect theory terms} defined as follows.
\begin{align}
	\hat{M}, \hat{N} \quad&\coloneqq\quad \return{\hat{V}} \mid \letin{x}{\mathtt{op}(\hat{V})}{\hat{M}} \\
	&\qquad\quad\mid \caseof{\hat{V}}{\casepattern{\iota_i\ x_i}{\hat{M}_i}}_{i = 1}^n
\end{align}
The intuitive meaning of these terms is the same as computation terms in $\EffectHandlerCalculus^{+}$.
Note that application of operations is often written as $\mathtt{op}(\hat{V}; x. \hat{M})$ in the literature \cite{LuksicJFunctProg2020}, which is equivalent to our syntax $\letin{x}{\mathtt{op}(\hat{V})}{\hat{M}}$.
Typing judgements for these terms are written as $\hat{\Delta} \vdash \hat{M} : \hat{A} ! \Sigma$, and their derivation rules are defined in the similar way to $\EffectHandlerCalculus^{+}$.
Here, types appearing in \emph{operation signatures} (or simply \emph{signatures}) $\Sigma$ are restricted to ground types.
\[ \Sigma \quad\coloneqq\quad \emptyset \mid \{ \mathtt{op} : \hat{A} \rightarrowtriangle \hat{B} \} \cup \Sigma \]

\begin{toappendix}
Formally, the typing rules for effect theory terms are given as follows.
\begin{mathpar}
	\inferrule{
		\hat{\Delta} \vdash \hat{V} : \hat{A}
	}{
		\hat{\Delta} \vdash \return{\hat{V}} : \hat{A} ! E
	}
	\and
	\inferrule{
		\hat{\Delta} \vdash \hat{V} : \hat{A}_{\mathtt{op}} \\
		(\mathtt{op} : \hat{A}_{\mathtt{op}} \rightarrowtriangle \hat{B}_{\mathtt{op}}) \in \Sigma \\
		\hat{\Delta}, x : \hat{B}_{\mathtt{op}} \vdash \hat{M} : \hat{A} ! \Sigma
	}{
		\hat{\Delta} \vdash \letin{x}{\mathtt{op}(\hat{V})}{\hat{M}} : \hat{A} ! \Sigma
	}
	\and
	\inferrule{
		\hat{\Delta} \vdash \hat{V} : \hat{A}_1 + \dots + \hat{A}_n \\
		\forall i\quad \hat{\Delta}, x_i : \hat{A}_i \vdash \hat{M}_i : \hat{B} ! \Sigma
	}{
		\hat{\Delta} \vdash \caseof{\hat{V}}{\casepattern{\iota_i\ x_i}{\hat{M}_i}}_{i = 1}^n : \hat{B} ! \Sigma
	}
\end{mathpar}
\end{toappendix}

\begin{definition}
	An \emph{equational axiom} for a signature $\Sigma$ is a pair of well-typed terms $\hat{\Delta} \vdash \hat{M} : \hat{A} ! \Sigma$ and $\hat{\Delta} \vdash \hat{N} : \hat{A} ! \Sigma$ that have the same context and type.
	Equational axioms are often written as $\hat{\Delta} \vdash \hat{M} \sim \hat{N} : \hat{A} ! \Sigma$ for brevity.
	An \emph{effect theory} is a pair $(\Sigma, \mathcal{E})$ where $\Sigma$ is a signature and $\mathcal{E}$ is a set of equational axioms.
\end{definition}

\begin{example}
	The equational theory for nondeterminism is given as follows.
	Let $\Sigma = \{ \mathtt{choice} : \mathtt{unit} \rightarrowtriangle \mathtt{unit} + \mathtt{unit} \}$.
	For notational simplicity, we define the following syntactic sugar.
	\[ \hat{M} \mathrel{\mathtt{or}} \hat{N}\ \coloneqq\ \letin{x}{\mathtt{choice}\ \langle\rangle}{\caseof{x}{\casepattern{\iota_1\ \_}{\hat{M}}; \casepattern{\iota_2\ \_}{\hat{N}}}} \]
	Then the idempotence, commutativity, and associativity of $\mathtt{choice}$ are expressed as the following equational axioms.
	\begin{align}
		&\vdash \return{\langle\rangle} \mathrel{\mathtt{or}} \return{\langle\rangle} \sim \return{\langle\rangle} : \mathtt{unit} ! \Sigma \\
		&\vdash \return{\iota_1\ \langle\rangle} \mathrel{\mathtt{or}} \return{\iota_2\ \langle\rangle} \\
		&\qquad \sim \return{\iota_2\ \langle\rangle} \mathrel{\mathtt{or}} \return{\iota_1\ \langle\rangle} : \mathtt{unit} + \mathtt{unit} ! \Sigma \\
		&\vdash (\return{\iota_1\ \langle\rangle} \mathrel{\mathtt{or}} \return{\iota_2\ \langle\rangle}) \mathrel{\mathtt{or}} \return{\iota_3\ \langle\rangle} \\
		&\qquad \sim \return{\iota_1\ \langle\rangle} \mathrel{\mathtt{or}} (\return{\iota_2\ \langle\rangle} \mathrel{\mathtt{or}} \return{\iota_3\ \langle\rangle}) \\
		&\qquad\qquad : \mathtt{unit} + \mathtt{unit} + \mathtt{unit} ! \Sigma
	\end{align}
\end{example}

\subsection{Two Interpretations of Effect Theory Terms}

Let $T$ be a strong monad on a cartesian category $\category{C}$ such that $\category{C}$ has distributive Kleisli coproducts for $T$.
Given an effect theory $(\Sigma, \mathcal{E})$, suppose that we have an interpretation $\interpret{\mathtt{op}} : \interpret{\hat{A}_{\mathtt{op}}} \to T \interpret{\hat{B}_{\mathtt{op}}}$ for each operation $(\mathtt{op} : \hat{A}_{\mathtt{op}} \rightarrowtriangle \hat{B}_{\mathtt{op}}) \in \Sigma$.
Here, for each ground type $\hat{A}$, its interpretation $\interpret{\hat{A}} \in \category{C}$ is defined in the standard way as in $\EffectHandlerCalculus^{+}$.
We consider when $T$ satisfies the equations in $\mathcal{E}$.

\begin{definition}\label{def:operation-satisfies-equational-axioms}
	We say $\{ \interpret{\mathtt{op}} : \interpret{\hat{A}_{\mathtt{op}}} \to T \interpret{\hat{B}_{\mathtt{op}}} \}_{(\mathtt{op} : \hat{A}_{\mathtt{op}} \rightarrowtriangle \hat{B}_{\mathtt{op}}) \in \Sigma}$ \emph{satisfies equational axioms} in $\mathcal{E}$ if for each equational axiom $\hat{\Delta} \vdash \hat{M} \sim \hat{N} : \hat{A} ! \Sigma$ in $\mathcal{E}$, we have $\interpret{\hat{M}} = \interpret{\hat{N}}$ where $\interpret{\hat{M}} : \interpret{\hat{\Delta}} \to T \interpret{\hat{A}}$ is defined as follows.
	\begin{align}
		&\interpret{\return{\hat{V}}}\ \coloneqq\quad \eta \comp \interpret{\hat{V}} \\
		&\interpret{\letin{x}{\mathtt{op}(\hat{V})}{\hat{M}}}\ \coloneqq\quad \mu \comp T \interpret{\hat{M}} \comp \strength \comp \tupling{\identity{}}{\interpret{\mathtt{op}} \comp \interpret{\hat{V}}} \\
		&\interpret{\caseof{\hat{V}}{\casepattern{\iota_i\ x_i}{\hat{M}_i}}_{i = 1}^n}\ \coloneqq\quad [\interpret{\hat{M}_1}, \dots, \interpret{\hat{M}_n}] \comp \tupling{\identity{}}{\interpret{\hat{V}}}
	\end{align}
	Note that this interpretation is essentially the same as the one for computation terms in $\EffectHandlerCalculus^{+}$.
\end{definition}

In $\EffectHandlerCalculus^{+}$, the interpretation of an effect handler $\Gamma \vdash H : \Sigma \Rightarrow A ! \Sigma'$ is given by $\interpret{H} : \interpret{\Gamma} \to \HBundle{\Sigma}{J^{T_{\Sigma'}} \interpret{A}}$.
To define when $\interpret{H}$ satisfies the equations in $\mathcal{E}$, we use another interpretation of effect theory terms defined as follows.
\begin{definition}\label{def:algebra-satisfies-equational-axioms}
	For each $\hat{\Delta} \vdash \hat{M} : \hat{A} ! \Sigma$ and $X \in \category{C}_T$, we define
	\[ \AlgInt{\hat{M}}^T_X : \HBundle{\Sigma}{X} \to \KleisliExp{T}{\interpret{\hat{\Delta}} \times (\KleisliExp{T}{\interpret{\hat{A}}}{X})}{X} \]
	by induction on $\hat{M}$ as follows.
	\begin{align}
		&\Lambda^{-1} \AlgInt{\return{\hat{V}}}^T_X \quad\coloneqq\quad \eval \comp \braiding \comp (\interpret{\hat{V}} \times \identity{}) \comp \pi_2 \\
		&\Lambda^{-1} \AlgInt{\letin{x}{\mathtt{op}(\hat{V})}{\hat{M}}}^T_X \\
		&\coloneqq\ \eval \comp \tupling{\pi_{\mathtt{op}} \comp \pi_1}{\tupling{\interpret{\hat{V}} \comp \pi_1 \comp \pi_2}{\Lambda (\Lambda^{-1} \AlgInt{\hat{M}}^T_X \comp w)}} \\
		&\Lambda^{-1} \AlgInt{\caseof{\hat{V}}{\casepattern{\iota_i\ x_i}{\hat{M}_i}}_{i = 1}^n}^T_X \\
		&\coloneqq\ [\Lambda^{-1} \AlgInt{\hat{M}_1}^T_X \comp w, \dots, \Lambda^{-1} \AlgInt{\hat{M}_n}^T_X \comp w] \comp \tupling{\identity{}}{\interpret{\hat{V}} \comp \pi_1 \comp \pi_2}
	\end{align}
	Here,
	$w : (\HBundle{\Sigma}{X} \times (\interpret{\hat{\Delta}} \times (\KleisliExp{T}{\interpret{\hat{B}}}{X}))) \times \interpret{\hat{A}} \to \HBundle{\Sigma}{X} \times (\interpret{\hat{\Delta}, x : \hat{A}} \times (\KleisliExp{T}{\interpret{\hat{B}}}{X}))$ is the canonical isomorphism rearranging the components.
	We say that $f : Z \to \HBundle{\Sigma}{X}$ \emph{satisfies equational axioms} in $\mathcal{E}$ if for each equational axiom $\hat{\Delta} \vdash \hat{M} \sim \hat{N} : \hat{A} ! \Sigma$ in $\mathcal{E}$, we have $\AlgInt{\hat{M}}^T_X \comp f = \AlgInt{\hat{N}}^T_X \comp f$.
\end{definition}

As expected, these two interpretations are closely related.
It is known~\cite{PlotkinApplCategStruct2003} that there is a one-to-one correspondence between generic effects of the form $A \to T B$ and algebraic operations of the form $\KleisliExp{T}{B}{X} \to \KleisliExp{T}{A}{X}$.
Using this correspondence, we can relate the two interpretations above.
Given a morphism $f : A \to T B$, we define $A \times (\KleisliExp{T}{B}{X}) \to K X$ for each $X \in \category{C}_T$ by $\lfloor f \rfloor_X \coloneqq \mu \comp T \eval \comp \strength \comp \braiding \comp (f \times \identity{})$.
\begin{lemma}\label{lem:two-interpretations-of-effect-theory-terms}
	For each effect theory term $\hat{\Delta} \vdash \hat{M} : \hat{A} ! \Sigma$ and $X \in \category{C}_T$, we have the following equation relating $\AlgInt{\hat{M}}^T_X$ and $\interpret{\hat{M}}$.
	\begin{align}
		&\AlgInt{\hat{M}}^T_X \comp \langle \Lambda (\lfloor \interpret{\mathtt{op}} \rfloor \comp \pi_2) \rangle_{\mathtt{op} \in \Sigma} \quad=\quad \Lambda (\lfloor \interpret{\hat{M}} \rfloor_X \comp \pi_2) \\
		&\quad\qquad: 1 \to \KleisliExp{T}{(\interpret{\hat{\Delta}} \times (\KleisliExp{T}{\interpret{\hat{A}}}{X}))}{X}
		\tag*{\qed}
	\end{align}
\end{lemma}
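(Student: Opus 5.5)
We argue by structural induction on the effect theory term $\hat{M}$, with $\hat{\Delta}$, $\hat{A}$ and $X$ arbitrary throughout. Write $g \coloneqq \langle \Lambda (\lfloor \interpret{\mathtt{op}} \rfloor \comp \pi_2) \rangle_{\mathtt{op} \in \Sigma} : 1 \to \HBundle{\Sigma}{X}$ for the handler induced by the generic effects. Both sides of the claim are morphisms out of $1$ into a Kleisli exponential, so we uncurry. We then compare the two morphisms
\[ \Lambda^{-1}\AlgInt{\hat{M}}^T_X \comp (g \times \identity{}) \quad\text{and}\quad \lfloor \interpret{\hat{M}} \rfloor_X \comp \pi_2 \]
of type $1 \times (\interpret{\hat{\Delta}} \times (\KleisliExp{T}{\interpret{\hat{A}}}{X})) \to K X$. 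We use the naturality of $\Lambda$ in its first argument, $\Lambda(f) \comp u = \Lambda(f \comp (u \times \identity{}))$, without further comment.

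\textbf{Return case.} The definition of $\AlgInt{-}$ ignores the handler component. It gives $\eval \comp \langle \pi_2 \comp \pi_2, \interpret{\hat{V}} \comp \pi_1 \comp \pi_2 \rangle$, so the left-hand side is ``apply the continuation to $\interpret{\hat{V}}$''. On the right, $\lfloor \eta \comp \interpret{\hat{V}} \rfloor = \mu \comp T\eval \comp \strength \comp \braiding \comp ((\eta\comp\interpret{\hat{V}}) \times \identity{})$. Moving $\eta$ past $\strength$ and $T\eval$ by naturality and the strength unit law, and then applying $\mu \comp \eta = \identity{}$, leaves the same morphism.

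\textbf{Case analysis.} Let $\hat{M} = \caseof{\hat{V}}{\casepattern{\iota_i\ x_i}{\hat{M}_i}}$. We first establish an auxiliary lemma: $\lfloor [f_1,\dots,f_n] \comp \tupling{\identity{}}{\interpret{\hat{V}}} \rfloor$ equals $[\lfloor f_1\rfloor \comp w', \dots, \lfloor f_n \rfloor \comp w'] \comp \tupling{\identity{}}{\interpret{\hat{V}} \comp \pi_1}$ for suitable rearrangement isomorphisms $w'$. This holds because the operator $\lfloor - \rfloor$, which is precomposition inside the Kleisli category, commutes with Kleisli copairing. To justify it we use the distributivity of Kleisli coproducts, since the context $\KleisliExp{T}{\interpret{\hat{A}}}{X}$ is carried alongside. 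Combining this lemma with the induction hypotheses for the $\hat{M}_i$, precomposed with $g \times \identity{}$ and $w$, gives the case.

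\textbf{Operation case.} Let $\hat{M} = \letin{x}{\mathtt{op}(\hat{V})}{\hat{M}'}$. This is the main obstacle. After substituting $g$, the left side becomes
\[ \eval \comp \tupling{\Lambda(\lfloor \interpret{\mathtt{op}} \rfloor \comp \pi_2) \comp !}{\tupling{\interpret{\hat{V}}\comp \pi_1\comp\pi_2}{\Lambda(\ldots)}}, \]
which simplifies to $\lfloor \interpret{\mathtt{op}} \rfloor \comp \langle \interpret{\hat{V}} \comp \pi_1, \Lambda(\Lambda^{-1}\AlgInt{\hat{M}'} \comp w \comp (g\times\identity{})) \rangle$. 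By the induction hypothesis, the inner curried term is the continuation $b \mapsto \lfloor \interpret{\hat{M}'} \rfloor(\delta, b, k)$, where $\delta$ is the context and $k$ the outer continuation. It therefore suffices to show the algebraic-operation identity
\[ \lfloor \mu \comp T f \comp \strength \comp \tupling{\identity{}}{e} \rfloor(\delta, k) = \lfloor e \rfloor(\delta, \Lambda(\lfloor f \rfloor(\langle \delta, -\rangle, k))), \]
with $e = \interpret{\mathtt{op}} \comp \interpret{\hat{V}}$ and $f = \interpret{\hat{M}'}$. This is the Kleisli associativity of the correspondence between generic effects and algebraic operations~\cite{PlotkinApplCategStruct2003}. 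I would prove it in strong Kleisli triple form, $\lfloor f\rfloor = \eval^{\sharp}\comp\ldots$, using the third law of Definition~\ref{def:strong-kleisli-triple} together with the naturality of $\eval$. I expect the bookkeeping of the associators and of the isomorphism $w$ to be the tedious part.
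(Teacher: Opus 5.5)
Your proposal is correct and follows the paper's own route: induction on $\hat{M}$ after uncurrying, with the return case by the unit laws, and the operation case reduced, via the induction hypothesis on the continuation, to Kleisli associativity. The paper checks that associativity by a direct strength/multiplication calculation rather than isolating it as a lemma. Your case branch, which commutes $\lfloor - \rfloor$ with the distributive Kleisli copairing (justified by the uniqueness clause of its universal property), is if anything more careful than the paper's calculation. The paper pulls $\mu \comp T\eval \comp \strength \comp \braiding$ out of a copairing whose components land in $T\interpret{\hat{A}} \times (\KleisliExp{T}{\interpret{\hat{A}}}{X})$, which is not an object of the form $TY$.
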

\begin{appendixproof}[Proof of Lemma~\ref{lem:two-interpretations-of-effect-theory-terms}]
	\allowdisplaybreaks
	By induction on $\hat{M}$.
	For brevity, we define $h \coloneqq \langle \Lambda (\lfloor \interpret{\mathtt{op}} \rfloor \comp \pi_2) \rangle_{\mathtt{op} \in \Sigma}$.
	\begin{itemize}
		\item If $\hat{M} = \return{\hat{V}}$, then
		\begin{align}
			&\AlgInt{\hat{M}}^T_X \comp h \\
			&= \Lambda (\eval \comp \braiding \comp (\interpret{\hat{V}} \times \identity{}) \comp \pi_2) \comp h \\
			&= \Lambda (\eval \comp \braiding \comp (\interpret{\hat{V}} \times \identity{}) \comp \pi_2) \\
			&\Lambda (\lfloor \interpret{\hat{M}} \rfloor_X \comp \pi_2) \\
			&= \Lambda (\mu \comp T \eval \comp \strength \comp \braiding \comp ((\eta \comp \interpret{\hat{V}}) \times \identity{}) \comp \pi_2) \\
			&= \Lambda (\eval \comp \braiding \comp (\interpret{\hat{V}} \times \identity{}) \comp \pi_2)
		\end{align}
		\item If $\hat{M} = \letin{x}{\mathtt{op}(\hat{V})}{\hat{N}}$, then
		\begin{align}
			&\AlgInt{\hat{M}}^T_X \comp h \\
			&= \Lambda (\eval \comp \tupling{\pi_{\mathtt{op}} \comp \pi_1}{\tupling{\interpret{\hat{V}} \comp \pi_1 \comp \pi_2}{\Lambda (\Lambda^{-1} \AlgInt{\hat{N}}^T_X \comp w)}}) \comp h \\
			&= \Lambda (\eval \comp \tupling{\Lambda (\lfloor \interpret{\mathtt{op}} \rfloor \comp \pi_2) \comp \pi_1}{\tupling{\interpret{\hat{V}} \comp \pi_1 \comp \pi_2}{\Lambda (\Lambda^{-1} \AlgInt{\hat{N}}^T_X \comp w) \comp (h \times \identity{})}}) \\
			&= \Lambda (\lfloor \interpret{\mathtt{op}} \rfloor \comp \pi_2 \comp \tupling{\pi_1}{\tupling{\interpret{\hat{V}} \comp \pi_1 \comp \pi_2}{\Lambda (\Lambda^{-1} \AlgInt{\hat{N}}^T_X \comp w) \comp (h \times \identity{})}}) \\
			&= \Lambda (\lfloor \interpret{\mathtt{op}} \rfloor \comp \tupling{\interpret{\hat{V}} \comp \pi_1 \comp \pi_2}{\Lambda (\Lambda^{-1} \AlgInt{\hat{N}}^T_X \comp w) \comp (h \times \identity{})}) \\
			&= \Lambda (\lfloor \interpret{\mathtt{op}} \rfloor \comp \tupling{\interpret{\hat{V}} \comp \pi_1 \comp \pi_2}{\Lambda (\Lambda^{-1} \AlgInt{\hat{N}}^T_X \comp (h \times \identity{}) \comp w)}) \\
			&= \Lambda (\lfloor \interpret{\mathtt{op}} \rfloor \comp \tupling{\interpret{\hat{V}} \comp \pi_1 \comp \pi_2}{\Lambda (\Lambda^{-1} (\AlgInt{\hat{N}}^T_X \comp h) \comp w)}) \\
			&= \Lambda (\lfloor \interpret{\mathtt{op}} \rfloor \comp \tupling{\interpret{\hat{V}} \comp \pi_1 \comp \pi_2}{\Lambda (\lfloor \interpret{\hat{N}} \rfloor \comp \pi_2 \comp w)}) \\
			&= \Lambda (\mu \comp T \eval \comp \strength \comp \braiding \comp (\interpret{\mathtt{op}} \times \identity{}) \comp \tupling{\interpret{\hat{V}} \comp \pi_1 \comp \pi_2}{\Lambda (\lfloor \interpret{\hat{N}} \rfloor \comp \pi_2 \comp w)}) \\
			&= \Lambda (\mu \comp T \eval \comp \strength \comp \tupling{\Lambda (\lfloor \interpret{\hat{N}} \rfloor \comp \pi_2 \comp w)}{\interpret{\mathtt{op}} \comp \interpret{\hat{V}} \comp \pi_1 \comp \pi_2}) \\
			&= \Lambda (\mu \comp T (\lfloor \interpret{\hat{N}} \rfloor \comp \pi_2 \comp w) \comp \strength \comp \tupling{\identity{}}{\interpret{\mathtt{op}} \comp \interpret{\hat{V}} \comp \pi_1 \comp \pi_2}) \\
			&= \Lambda (\mu \comp T (\mu \comp T \eval \comp \strength \comp \braiding \comp (\interpret{\hat{N}} \times \identity{}) \comp \pi_2 \comp w) \comp \strength \comp \tupling{\identity{}}{\interpret{\mathtt{op}} \comp \interpret{\hat{V}} \comp \pi_1 \comp \pi_2}) \\
			&= \Lambda (\mu \comp T \mu \comp T (T \eval \comp \strength \comp (\identity{} \times \interpret{\hat{N}}) \comp \braiding \comp \pi_2 \comp w) \comp \strength \comp \tupling{\identity{}}{\interpret{\mathtt{op}} \comp \interpret{\hat{V}} \comp \pi_1 \comp \pi_2}) \\
			&= \Lambda (\mu \comp T \mu \comp T (T \eval \comp \strength \comp (\identity{} \times \interpret{\hat{N}}) \comp \associator \comp ((\braiding \comp \pi_2) \times \identity{})) \comp \strength \comp \tupling{\identity{}}{\interpret{\mathtt{op}} \comp \interpret{\hat{V}} \comp \pi_1 \comp \pi_2}) \\
			&= \Lambda (\mu \comp T \mu \comp T (T \eval \comp \strength \comp (\identity{} \times \interpret{\hat{N}}) \comp \associator) \comp \strength \comp ((\braiding \comp \pi_2) \times \identity{}) \comp \tupling{\identity{}}{\interpret{\mathtt{op}} \comp \interpret{\hat{V}} \comp \pi_1 \comp \pi_2}) \\
			&= \Lambda (\mu \comp T \mu \comp T (T \eval \comp \strength \comp (\identity{} \times \interpret{\hat{N}}) \comp \associator) \comp \strength \comp \tupling{\braiding}{\interpret{\mathtt{op}} \comp \interpret{\hat{V}} \comp \pi_1} \comp \pi_2) \\
			&= \Lambda (\mu \comp T \mu \comp T (T \eval \comp \strength \comp (\identity{} \times \interpret{\hat{N}})) \comp \strength \comp (\identity{} \times \strength) \comp \associator \comp \tupling{\braiding}{\interpret{\mathtt{op}} \comp \interpret{\hat{V}} \comp \pi_1} \comp \pi_2) \\
			&= \Lambda (\mu \comp T \mu \comp T (T \eval \comp \strength \comp (\identity{} \times \interpret{\hat{N}})) \comp \strength \comp (\identity{} \times \strength) \comp \braiding \comp (\tupling{\identity{}}{\interpret{\mathtt{op}} \comp \interpret{\hat{V}}} \times \identity{}) \comp \pi_2) \\
			&= \Lambda (\mu \comp \mu \comp T (T \eval \comp \strength \comp (\identity{} \times \interpret{\hat{N}})) \comp \strength \comp (\identity{} \times \strength) \comp \braiding \comp (\tupling{\identity{}}{\interpret{\mathtt{op}} \comp \interpret{\hat{V}}} \times \identity{}) \comp \pi_2)
		\end{align}
		\begin{align}
			&\Lambda (\lfloor \interpret{\hat{M}} \rfloor \comp \pi_2) \\
			&= \Lambda (\lfloor \mu \comp T \interpret{\hat{N}} \comp \strength \comp \tupling{\identity{}}{\interpret{\mathtt{op}} \comp \interpret{\hat{V}}} \rfloor \comp \pi_2) \\
			&= \Lambda (\mu \comp T \eval \comp \strength \comp \braiding \comp ((\mu \comp T \interpret{\hat{N}} \comp \strength \comp \tupling{\identity{}}{\interpret{\mathtt{op}} \comp \interpret{\hat{V}}}) \times \identity{}) \comp \pi_2) \\
			&= \Lambda (\mu \comp T \eval \comp \mu \comp T \strength \comp \strength \comp \braiding \comp ((T \interpret{\hat{N}} \comp \strength \comp \tupling{\identity{}}{\interpret{\mathtt{op}} \comp \interpret{\hat{V}}}) \times \identity{}) \comp \pi_2) \\
			&= \Lambda (\mu \comp T \eval \comp \mu \comp T \strength \comp T (\identity{} \times \interpret{\hat{N}}) \comp \strength \comp (\identity{} \times \strength) \comp \braiding \comp (\tupling{\identity{}}{\interpret{\mathtt{op}} \comp \interpret{\hat{V}}} \times \identity{}) \comp \pi_2)
		\end{align}
		\item If $\hat{M} = \caseof{\hat{V}}{\casepattern{\iota_i\ x_i}{\hat{M}_i}}_{i = 1}^n$, then
		\begin{align}
			&\AlgInt{\hat{M}}^T_X \comp h \\
			&= \Lambda ([\Lambda^{-1} \AlgInt{\hat{M}_1}^T_X \comp w, \dots, \Lambda^{-1} \AlgInt{\hat{M}_n}^T_X \comp w] \comp \tupling{\identity{}}{\interpret{\hat{V}} \comp \pi_1 \comp \pi_2}) \comp h \\
			&= \Lambda ([\Lambda^{-1} \AlgInt{\hat{M}_1}^T_X \comp w, \dots, \Lambda^{-1} \AlgInt{\hat{M}_n}^T_X \comp w] \comp \tupling{h \times \identity{}}{\interpret{\hat{V}} \comp \pi_1 \comp \pi_2}) \\
			&= \Lambda ([\Lambda^{-1} \AlgInt{\hat{M}_1}^T_X \comp (h \times \identity{}) \comp w, \dots, \Lambda^{-1} \AlgInt{\hat{M}_n}^T_X \comp (h \times \identity{}) \comp w] \comp \tupling{\identity{}}{\interpret{\hat{V}} \comp \pi_1 \comp \pi_2}) \\
			&= \Lambda ([\lfloor \interpret{\hat{M}_1} \rfloor_X \comp \pi_2 \comp w, \dots, \lfloor \interpret{\hat{M}_n} \rfloor_X \comp \pi_2 \comp w] \comp \tupling{\identity{}}{\interpret{\hat{V}} \comp \pi_1 \comp \pi_2}) \\
			&= \Lambda (\mu \comp T \eval \comp \strength \comp \braiding \comp [(\interpret{\hat{M}_1} \times \identity{}) \comp \pi_2 \comp w, \dots, (\interpret{\hat{M}_n} \times \identity{}) \comp \pi_2 \comp w] \comp \tupling{\identity{}}{\interpret{\hat{V}} \comp \pi_1 \comp \pi_2}) \\
			&= \Lambda (\mu \comp T \eval \comp \strength \comp \braiding \comp \tupling{[\interpret{\hat{M}_1} \comp \pi_1 \comp \pi_2 \comp w, \dots, \interpret{\hat{M}_n} \comp \pi_1 \comp \pi_2 \comp w]}{\pi_2 \comp \pi_2 \comp w} \comp \tupling{\identity{}}{\interpret{\hat{V}} \comp \pi_1 \comp \pi_2}) \\
			&= \Lambda (\mu \comp T \eval \comp \strength \comp \braiding \comp \tupling{[\interpret{\hat{M}_1}, \dots, \interpret{\hat{M}_n}] \comp ((\pi_1 \comp \pi_2) \times \identity{})}{\pi_2 \comp \pi_2 \comp \pi_1} \comp \tupling{\identity{}}{\interpret{\hat{V}} \comp \pi_1 \comp \pi_2}) \\
			&= \Lambda (\mu \comp T \eval \comp \strength \comp \braiding \comp \tupling{[\interpret{\hat{M}_1}, \dots, \interpret{\hat{M}_n}] \comp \tupling{\identity{}}{\interpret{\hat{V}}} \comp \pi_1 \comp \pi_2}{\pi_2 \comp \pi_2}) \\
			&= \Lambda (\mu \comp T \eval \comp \strength \comp \braiding \comp \tupling{[\interpret{\hat{M}_1}, \dots, \interpret{\hat{M}_n}] \comp \tupling{\identity{}}{\interpret{\hat{V}}} \comp \pi_1}{\pi_2} \comp \pi_2) \\
			&= \Lambda (\mu \comp T \eval \comp \strength \comp \braiding \comp (([\interpret{\hat{M}_1}, \dots, \interpret{\hat{M}_n}] \comp \tupling{\identity{}}{\interpret{\hat{V}}}) \times \identity{}) \comp \pi_2) \\
			&= \Lambda (\lfloor \interpret{\hat{M}} \rfloor \comp \pi_2)
			\qedhere
		\end{align}
	\end{itemize}
\end{appendixproof}

By Lemma~\ref{lem:two-interpretations-of-effect-theory-terms}, it follows that Definition~\ref{def:operation-satisfies-equational-axioms} can be reduced to Definition~\ref{def:algebra-satisfies-equational-axioms}, and thus $\AlgInt{\hat{M}}^T_X$ can be  regarded as a more basic interpretation of effect theory terms.

\begin{corollary}\label{cor:equational-axioms-two-interpretations}
	An interpretation $\{ \interpret{\mathtt{op}} \}_{\mathtt{op} \in \Sigma}$ satisfies equational axioms in the sense of Definition~\ref{def:operation-satisfies-equational-axioms} if and only if for each $X \in \category{C}_T$, $h \coloneqq \langle \Lambda (\lfloor \interpret{\mathtt{op}} \rfloor \comp \pi_2) \rangle_{\mathtt{op} \in \Sigma} : 1 \to \HBundle{\Sigma}{X}$ satisfies equational axioms in the sense of Definition~\ref{def:algebra-satisfies-equational-axioms}.
	\qed
\end{corollary}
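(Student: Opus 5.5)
The corollary should follow from Lemma~\ref{lem:two-interpretations-of-effect-theory-terms} together with the fact that $\Lambda$ and $\lfloor - \rfloor_X$ lose no information. Write $h$ for $\langle \Lambda (\lfloor \interpret{\mathtt{op}} \rfloor \comp \pi_2) \rangle_{\mathtt{op} \in \Sigma}$. For an axiom $\hat{\Delta} \vdash \hat{M} \sim \hat{N} : \hat{A} ! \Sigma$ in $\mathcal{E}$, the lemma gives
\[ \AlgInt{\hat{M}}^T_X \comp h = \Lambda (\lfloor \interpret{\hat{M}} \rfloor_X \comp \pi_2), \qquad \AlgInt{\hat{N}}^T_X \comp h = \Lambda (\lfloor \interpret{\hat{N}} \rfloor_X \comp \pi_2). \]
So $h$ satisfies the axiom at $X$ if and only if $\Lambda (\lfloor \interpret{\hat{M}} \rfloor_X \comp \pi_2) = \Lambda (\lfloor \interpret{\hat{N}} \rfloor_X \comp \pi_2)$. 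Since $\Lambda$ is a bijection and $\pi_2 : 1 \times Z \to Z$ is an isomorphism, this holds if and only if $\lfloor \interpret{\hat{M}} \rfloor_X = \lfloor \interpret{\hat{N}} \rfloor_X$. It therefore remains to show, for $f, g : \interpret{\hat{\Delta}} \to T\interpret{\hat{A}}$, that
\[ f = g \iff \lfloor f \rfloor_X = \lfloor g \rfloor_X \text{ for all } X \in \category{C}_T. \]

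The forward direction is immediate from the definition of $\lfloor - \rfloor_X$. For the converse, I would take $X = \interpret{\hat{A}}$ and precompose with
\[ \tupling{\identity{}}{\Lambda(\eta \comp \pi_2) \comp {!}} : \interpret{\hat{\Delta}} \to \interpret{\hat{\Delta}} \times (\KleisliExp{T}{\interpret{\hat{A}}}{\interpret{\hat{A}}}), \]
which pairs the input with the name of the Kleisli identity. Unfolding the definition,
\[ \lfloor f \rfloor_X \comp \tupling{\identity{}}{\Lambda(\eta \comp \pi_2) \comp {!}} = \mu \comp T\eval \comp \strength \comp \tupling{\Lambda(\eta \comp \pi_2) \comp {!}}{f}. \]
By naturality of $\strength$ in its first argument, the strength of a global element, and $\eval \comp (\Lambda(\eta \comp \pi_2) \times \identity{}) = \eta \comp \pi_2$, this becomes $\mu \comp T\eta \comp f = f$.

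The main obstacle is exactly this step: showing that $\lfloor - \rfloor_X$ is injective, which is essentially the correspondence between generic effects and algebraic operations of \cite{PlotkinApplCategStruct2003}. The delicate part is handling the strength applied to an element factoring through $1$, using $\strength \comp (c \times \identity{}) = T(c \times \identity{}) \comp \strength$ and the unit law $\strength_{1,Y}$ modulo $\leftunitor$. Everything else is bookkeeping with $\Lambda$ and $\pi_2$.
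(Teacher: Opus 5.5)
Your proposal is correct and follows essentially the same route as the paper. The only-if direction comes straight from Lemma~\ref{lem:two-interpretations-of-effect-theory-terms}. For the if direction, the paper likewise takes $X = \interpret{\hat{A}}$ and recovers $\interpret{\hat{M}}$ by supplying the Kleisli identity $\Lambda(\eta \comp \pi_2) \comp {!}$ as the continuation, using naturality of strength and the unit laws exactly as you do. Your detour through injectivity of $\lfloor - \rfloor_X$ (bijectivity of $\Lambda$ plus invertibility of $\pi_2$) is just a repackaging of the paper's single direct calculation.
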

\begin{appendixproof}[Proof of Corollary~\ref{cor:equational-axioms-two-interpretations}]
	The only-if part is immediate from Lemma~\ref{lem:two-interpretations-of-effect-theory-terms}.
	For the if-part, it suffices to prove the following equation.
	\begin{equation}
		\eval \comp \tupling{\AlgInt{\hat{M}}^T_{\interpret{\hat{A}}} \comp \langle \Lambda (\lfloor \interpret{\mathtt{op}} \rfloor \comp \pi_2) \rangle_{\mathtt{op} \in \Sigma} \comp {!}}{\tupling{\identity{}}{\Lambda(\eta \comp \pi_2) \comp {!}}} \quad=\quad \interpret{\hat{M}}
		: \interpret{\hat{\Delta}} \to T \interpret{\hat{A}}
	\end{equation}
	This is proved by Lemma~\ref{lem:two-interpretations-of-effect-theory-terms} as follows.
	\begin{align}
		&\eval \comp \tupling{\AlgInt{\hat{M}}^T_{\interpret{\hat{A}}} \comp \langle \Lambda (\lfloor \interpret{\mathtt{op}} \rfloor \comp \pi_2) \rangle_{\mathtt{op} \in \Sigma} \comp {!}}{\tupling{\identity{}}{\Lambda(\eta \comp \pi_2) \comp {!}}} \\
		&= \eval \comp \tupling{\Lambda (\lfloor \interpret{\hat{M}} \rfloor_X \comp \pi_2) \comp {!}}{\tupling{\identity{}}{\Lambda(\eta \comp \pi_2) \comp {!}}} \\
		&= \lfloor \interpret{\hat{M}} \rfloor_X \comp \pi_2 \comp \tupling{{!}}{\tupling{\identity{}}{\Lambda(\eta \comp \pi_2) \comp {!}}} \\
		&= \lfloor \interpret{\hat{M}} \rfloor_X \comp \tupling{\identity{}}{\Lambda(\eta \comp \pi_2) \comp {!}} \\
		&= \mu \comp T \eval \comp \strength \comp \braiding \comp (\interpret{\hat{M}} \times \identity{}) \comp \tupling{\identity{}}{\Lambda(\eta \comp \pi_2) \comp {!}} \\
		&= \mu \comp T (\eval \comp (\Lambda(\eta \comp \pi_2) \times \identity{})) \comp \strength \comp \braiding \comp \tupling{\interpret{\hat{M}}}{{!}} \\
		&= \mu \comp T (\eta \comp \pi_2) \comp \strength \comp \tupling{{!}}{\interpret{\hat{M}}} \\
		&= \pi_2 \comp \tupling{{!}}{\interpret{\hat{M}}} \\
		&= \interpret{\hat{M}}
		\qedhere
	\end{align}
\end{appendixproof}

\subsection{Effect Handler Calculus with Effect Theories}

We define the type system of $\EffectHandlerCalculusWithEquations$ by extending that of $\EffectHandlerCalculus^{+}$ with effect theories.
\emph{Value types} $A, B$, \emph{computation types} $C, D$, and \emph{handler types} $H$ are defined as follows.
\begin{align}
	A, B \quad&\coloneqq\quad b \mid A \to C \mid A_1 \times \dots \times A_n \mid A_1 + \dots + A_n \\
	C, D \quad&\coloneqq\quad A ! \Sigma / \mathcal{E} \qquad\qquad\qquad
	H \quad\coloneqq\quad \Sigma \Rightarrow C
\end{align}
Here, $b$ ranges over base types.
Note that, in contrast to $\EffectHandlerCalculus^{+}$, signatures $\Sigma$ are now restricted to ground types only, as we have explained in Section~\ref{sec:effect-theories}.
Terms are defined as in $\EffectHandlerCalculus^{+}$.

\begin{figure}
	\begin{mathpar}
		\inferrule{
			\Gamma \vdash H : \Sigma \Rightarrow C \\
			\Gamma, \hat{\Delta}, k : \hat{A} \to C \vdash \hat{M}[H, k] = \hat{N}[H, k] : C \quad\forall\ \hat{\Delta} \vdash \hat{M} \sim \hat{N} : \hat{A} ! \Sigma\ \in \mathcal{E}
		}{
			\Gamma \vdash H : \Sigma \Rightarrow C \ \mathbf{respects}\ \mathcal{E}
		}
		\and
		\inferrule{
				\Gamma \vdash M : A ! \Sigma / \mathcal{E} \\
				\Gamma \vdash H : \Sigma \Rightarrow C\ \mathbf{respects}\ \mathcal{E} \\
				\Gamma, x : A \vdash N : C
			}{
				\Gamma \vdash \handlewithto{M}{H}{x : A}{N} : C
			}
	\end{mathpar}
	\caption{Typing rules for $\EffectHandlerCalculusWithEquations$ related to effect handlers.
	We implicitly assume that variables in $\Gamma$ and $\hat{\Delta}$ are disjoint.}
	\label{fig:typing-rules-effect-handlers-with-equations}
\end{figure}

\paragraph{Typing rules}
Most of the typing rules of $\EffectHandlerCalculusWithEquations$ are the same as those of $\EffectHandlerCalculus^{+}$, except that computation types $A ! \Sigma$ are replaced with $A ! \Sigma / \mathcal{E}$.
One notable difference is the typing rule for handle-with (Figure~\ref{fig:typing-rules-effect-handlers-with-equations}), where we use the following additional judgement.
\[ \Gamma \vdash H : \Sigma \Rightarrow C\ \mathbf{respects}\ \mathcal{E}, \]
This judgement means that an effect handler $H$ has type $\Gamma \vdash H : \Sigma \Rightarrow C$ and respects the equational axioms in the effect theory $(\Sigma, \mathcal{E})$.
The derivation rule for this judgement is defined in Figure~\ref{fig:typing-rules-effect-handlers-with-equations}.
In the rule, we use the following auxiliary translation.
\begin{definition}
	For each effect theory term $\hat{M}$ and fresh variable $k$, we define $\hat{M}[H, k]$ inductively as follows.
	\begin{align}
		&(\return{\hat{V}})[H, k] \quad\coloneqq\quad k\ \hat{V} \\
		&(\letin{x}{\mathtt{op}(\hat{V})}{\hat{N}})[H, k] \quad\coloneqq\quad M_{\mathtt{op}}[\hat{V}/x, \lambda x. \hat{N}[H, k]/k'] \\
		&(\caseof{\hat{V}}{\casepattern{\iota_i\ x_i}{\hat{M}_i}}_{i = 1}^n)[H, k] \\
		&\quad\coloneqq\quad \caseof{\hat{V}}{\casepattern{\iota_i\ x_i}{\hat{M}_i[H, k]}}_{i = 1}^n
	\end{align}
	Here, $M_{\mathtt{op}}$ is the operation clause $(\mathtt{op}(x, k') \mapsto M_{\mathtt{op}}) \in H$ for $\mathtt{op}$.
\end{definition}

\begin{lemma}\label{lem:effect-handler-translation-typing}
	If $\hat{\Delta} \vdash \hat{M} : \hat{A} ! \Sigma$ and $\Gamma \vdash H : \Sigma \Rightarrow C$, then $\Gamma, \hat{\Delta}, k : \hat{A} \to C \vdash \hat{M}[H, k] : C$.
	\qed
\end{lemma}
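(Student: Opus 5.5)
We prove the statement by structural induction on the effect theory term $\hat{M}$, generalising over the context $\hat{\Delta}$ and the result type $\hat{A}$, with $\Gamma$, $H$, $C$ and the fresh variable $k$ fixed. Two preliminary facts are needed. First, ground types and ground terms form a fragment of $\EffectHandlerCalculusWithEquations$, so a derivation $\hat{\Delta} \vdash \hat{V} : \hat{A}$ transfers directly to a value-term judgement in $\EffectHandlerCalculusWithEquations$. Second, by weakening (Lemma~\ref{lem:weakening}), $\hat{V}$ is then also typable in the larger context $\Gamma, \hat{\Delta}, k : \hat{A} \to C$. As in Figure~\ref{fig:typing-rules-effect-handlers-with-equations}, we assume that the variables of $\Gamma$, $\hat{\Delta}$ and $k$ are pairwise distinct, renaming bound variables when necessary.

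The return and case cases are routine. In the case $\hat{M} = \return{\hat{V}}$ we have $\hat{\Delta} \vdash \hat{V} : \hat{A}$, so the application rule gives $\Gamma, \hat{\Delta}, k : \hat{A} \to C \vdash k\ \hat{V} : C$. In the case $\hat{M} = \caseof{\hat{V}}{\casepattern{\iota_i\ x_i}{\hat{M}_i}}_{i=1}^n$, the induction hypothesis for each branch in context $\hat{\Delta}, x_i : \hat{A}_i$ yields $\Gamma, \hat{\Delta}, x_i : \hat{A}_i, k : \hat{A} \to C \vdash \hat{M}_i[H, k] : C$. We reorder the context, which is admissible by exchange (proved like Lemma~\ref{lem:weakening}), and then apply the typing rule for case analysis.

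The main case is $\hat{M} = \letin{x}{\mathtt{op}(\hat{V})}{\hat{N}}$ with $(\mathtt{op} : \hat{A}_{\mathtt{op}} \rightarrowtriangle \hat{B}_{\mathtt{op}}) \in \Sigma$. Here $\hat{\Delta} \vdash \hat{V} : \hat{A}_{\mathtt{op}}$ and $\hat{\Delta}, x : \hat{B}_{\mathtt{op}} \vdash \hat{N} : \hat{A} ! \Sigma$.
\begin{itemize}
\item Inversion of the handler typing rules for $\Gamma \vdash H : \Sigma \Rightarrow C$ gives a clause $(\mathtt{op}(x, k') \mapsto M_{\mathtt{op}}) \in H$ with $\Gamma, x : \hat{A}_{\mathtt{op}}, k' : \hat{B}_{\mathtt{op}} \to C \vdash M_{\mathtt{op}} : C$. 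Making this inversion precise requires an easy induction on the handler typing derivation, since handlers are built clause by clause.
\item The induction hypothesis gives $\Gamma, \hat{\Delta}, x : \hat{B}_{\mathtt{op}}, k : \hat{A} \to C \vdash \hat{N}[H, k] : C$. After exchange, lambda abstraction gives $\Gamma, \hat{\Delta}, k : \hat{A} \to C \vdash \lambda x. \hat{N}[H, k] : \hat{B}_{\mathtt{op}} \to C$.
\item We weaken $M_{\mathtt{op}}$ to the context $\Gamma, \hat{\Delta}, k : \hat{A} \to C, x : \hat{A}_{\mathtt{op}}, k' : \hat{B}_{\mathtt{op}} \to C$.
\item Finally we apply the substitution lemma with the substitution that is the identity on $\Gamma, \hat{\Delta}, k$ and sends $x \mapsto \hat{V}$ and $k' \mapsto \lambda x. \hat{N}[H, k]$. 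This yields $\Gamma, \hat{\Delta}, k : \hat{A} \to C \vdash M_{\mathtt{op}}[\hat{V}/x, \lambda x. \hat{N}[H,k]/k'] : C$.
\end{itemize}

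The only real obstacle is variable hygiene in this last case. The clause variables $x, k'$ of $M_{\mathtt{op}}$ must not capture variables of $\hat{\Delta}$ or $k$. Also, since $M_{\mathtt{op}}$ is typed under $\Gamma$ alone, it must not mention variables of $\hat{\Delta}$. We handle both by $\alpha$-renaming the bound variables of the clause apart from $\hat{\Delta}$ and $k$ before substituting. With that done, the case reduces to the standard weakening, exchange and substitution lemmas.
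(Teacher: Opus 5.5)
Your proof is correct and follows essentially the same route as the paper: structural induction on $\hat{M}$ with the return, operation and case cases, where the operation case substitutes $\hat{V}$ and $\lambda x.\, \hat{N}[H, k]$ into the typed clause $M_{\mathtt{op}}$. Your extra care with inversion of handler typing, weakening, exchange and $\alpha$-renaming makes explicit what the paper leaves implicit, and you correctly type $\hat{N}$ in $\hat{\Delta}, x : \hat{B}_{\mathtt{op}}$ rather than $\Gamma, x : \hat{B}_{\mathtt{op}}$ as the paper's proof writes.
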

\begin{appendixproof}[Proof of Lemma~\ref{lem:effect-handler-translation-typing}]
	By induction on $\hat{M}$.
	\begin{itemize}
		\item If $\hat{M} = \return{\hat{V}}$, then $\Gamma, \hat{\Delta}, k : \hat{A} \to C \vdash k\ \hat{V} : C$.
		\item If $\hat{M} = \letin{x}{\mathtt{op}(\hat{V})}{\hat{N}}$, then by the typing rule for effect handlers, we have $\Gamma, x : \hat{B}_{\mathtt{op}} \vdash \hat{N} : \hat{A} ! \Sigma$ and $\Gamma, x : \hat{A}_{\mathtt{op}}, k' : \hat{B}_{\mathtt{op}} \to C \vdash M_{\mathtt{op}} : C$ where $(\mathtt{op}(x, k') \mapsto M_{\mathtt{op}}) \in H$.
		By the induction hypothesis, we have $\Gamma, x : \hat{B}_{\mathtt{op}}, \hat{\Delta}, k : \hat{A} \to C \vdash \hat{N}[H, k] : C$.
		Thus, by substituting $\hat{V}$ for $x$ and $\lambda x. \hat{N}[H, k]$ for $k'$, we have $\Gamma, \hat{\Delta}, k : \hat{A} \to C \vdash M_{\mathtt{op}}[\hat{V}/x, \lambda x. \hat{N}[H, k]/k'] : C$.
		\item If $\hat{M} = \caseof{\hat{V}}{\casepattern{\iota_i\ x_i}{\hat{N}_i}}_{i = 1}^n$, then by the induction hypothesis, we have $\Gamma, x : \hat{A}_i, \hat{\Delta}, k : \hat{A} \to C \vdash \hat{N}_i[H, k] : C$ for each $i$.
		Thus, by the typing rule for case expressions, we have $\Gamma, \hat{\Delta}, k : \hat{A} \to C \vdash \caseof{\hat{V}}{\casepattern{\iota_i\ x_i}{\hat{N}_i[H, k]}}_{i = 1}^n : C$.
		\qedhere
	\end{itemize}
\end{appendixproof}

The intuition of the above translation is that $\hat{M}[H, k]$ represents the computation term obtained by replacing each operation call in $\hat{M}$ with the corresponding clause in the effect handler $H$, and finally passing the result to the continuation $k$.
A similar translation is also used in \cite{LuksicJFunctProg2020,PlotkinLMCS2013}.

\paragraph{Equational theory}
In $\EffectHandlerCalculusWithEquations$, we define the typing rules and the derivation rules for equations $\Gamma \vdash M = N : C$ by simultaneous induction, since deriving the judgement $\Gamma \vdash H : \Sigma \Rightarrow C\ \mathbf{respects}\ \mathcal{E}$ requires the judgement for equality of computation terms.

The derivation rules for equations $\Gamma \vdash M = N : C$ are mostly the same as in $\EffectHandlerCalculus^{+}$, but we add a new rule for instantiating equational axioms in effect theories.

\begin{mathpar}
	\inferrule[\hypertarget{rule:Inst-Ax}{Inst-Ax}]{
		\hat{\Delta} \vdash \hat{M} \sim \hat{N} : \hat{A} ! \Sigma\ \in\ \mathcal{E} \\
		\sigma : \Gamma \to \hat{\Delta}
	}{
		\Gamma \vdash \hat{M}[\sigma] = \hat{N}[\sigma] : \hat{A} ! \Sigma / \mathcal{E}
	}
\end{mathpar}
Here, $\sigma : \Gamma \to \hat{\Delta}$ is a substitution such that for each $x : \hat{B} \in \hat{\Delta}$, $\sigma(x)$ is a value term of $\EffectHandlerCalculusWithEquations$ such that $\Gamma \vdash \sigma(x) : \hat{B}$, and $\hat{M}[\sigma]$ is the computation term of $\EffectHandlerCalculusWithEquations$ obtained by substituting each variable $x$ in $\hat{M}$ with $\sigma(x)$.

\subsection{Denotational Semantics}
Now, we extend the definition of $\EffectHandlerCalculus^{+}$-models to interpret $\EffectHandlerCalculusWithEquations$.

\begin{definition}
	A \emph{$\EffectHandlerCalculusWithEquations$-model} consists of the following.
	\begin{itemize}
		\item A cartesian category $\category{C}$ and a family of strong monads $T_{(\Sigma, \mathcal{E})}$ on $\category{C}$ indexed by effect theory $(\Sigma, \mathcal{E})$ such that $\category{C}$ has Kleisli exponentials and joint distributive Kleisli coproducts.
		\item For each effect theory $(\Sigma, \mathcal{E})$ and $(\mathtt{op} : \hat{A}_{\mathtt{op}} \rightarrowtriangle \hat{B}_{\mathtt{op}}) \in \Sigma$, a morphism $\interpret{\mathtt{op}}_{(\Sigma, \mathcal{E})} : \interpret{\hat{A}_{\mathtt{op}}} \to T_{(\Sigma, \mathcal{E})} \interpret{\hat{B}_{\mathtt{op}}}$ in $\category{C}$ that satisfies equational axioms in the sense of Definition~\ref{def:operation-satisfies-equational-axioms}.
		\item For each effect theory $\mathcal{T} = (\Sigma, \mathcal{E})$, $\mathcal{T}' = (\Sigma', \mathcal{E}')$ and object $X \in \category{C}_{T_{\mathcal{T}'}}$, a natural transformation
		\[ \mathbf{handle}_{\mathcal{T}, \mathcal{T}', X} : H_{\mathcal{T}} \times \Yoneda T_{\mathcal{T}} K^{T_{\mathcal{T}'}} X \to \Yoneda K^{T_{\mathcal{T}'}} X \]
		where $\Yoneda X = \category{C}({-}, X)$ is the Yoneda embedding and $H_{(\Sigma, \mathcal{E})} : \category{C}^{\op} \to \Set$ is a subobject of $\Yoneda \HBundle{\Sigma}{X}$ defined by
		\[ H_{(\Sigma, \mathcal{E})}(Z) \coloneqq \{ f \in \category{C}(Z, \HBundle{\Sigma}{X}) \mid \text{$f$ satisfies $\mathcal{E}$ (Definition~\ref{def:algebra-satisfies-equational-axioms})} \}. \]
		The naturality of $\mathbf{handle}_{\mathcal{T}, \mathcal{T}', X}$ means that we have the following equation.
		\[ \mathbf{handle}_{\mathcal{T}, \mathcal{T}', X}(h, f) \comp g \quad=\quad \mathbf{handle}_{\mathcal{T}, \mathcal{T}', X}(h \comp g, f \comp g) \]
		It is required that $\mathbf{handle}$ satisfies the following equations for any $h \in H_{\mathcal{T}}(Z)$.
		\begin{align}
			&\mathbf{handle}_{\mathcal{T}, \mathcal{T}', X} (h, \eta \comp f) = f \quad \text{for $f \in \category{C}(Z, K^{T_{\mathcal{T}'}} X)$} \label{eq:effect-theory-handle-unit} \\
			&\mathbf{handle}_{\mathcal{T}, \mathcal{T}', X} (h, \mu \comp f) \\
			&= \mathbf{handle}_{\mathcal{T}, \mathcal{T}', X} (h, T \mathbf{handle}_{\mathcal{T}, \mathcal{T}', X} (h \comp \pi_1, \pi_2) \comp \strength \comp \tupling{\identity{}}{f}) \\
			&\qquad\text{for $f \in \category{C}(Z, T_{\mathcal{T}}^2 T_{\mathcal{T}'} X)$} \label{eq:effect-theory-handle-multiplication} \\
			&\mathbf{handle}_{\mathcal{T}, \mathcal{T}', X} (h, \MixedAlgOp{\mathcal{T}'}{X}{\interpret{\mathtt{op}}^T_{\mathcal{T}}} \comp f) = \eval \comp \tupling{\pi_{\mathtt{op}} \comp h}{f} \\
			&\qquad\text{for $f \in \category{C}(Z, A_{\mathtt{op}} \times (\KleisliExp{T_{\mathcal{T}'}}{B_{\mathtt{op}}}{X}))$} \label{eq:effect-theory-handle-operation}
		\end{align}
	\end{itemize}
\end{definition}
The main differences from $\EffectHandlerCalculus^{+}$-models are that (a) $\interpret{\mathtt{op}}$ must satisfy equational axioms, and (b) $\mathbf{handle}$ is defined as a natural transformation between presheaves.
The latter requirement is a bit technical, but the intuition behind it is that $H_{(\Sigma, \mathcal{E})}$ in the above definition represents the interpretation of effect handlers $\Gamma \vdash H : \Sigma \Rightarrow C \ \mathbf{respects}\ \mathcal{E}$ that respect the equational axioms in $\mathcal{E}$, and thus $\mathbf{handle}$ corresponds to the typing rule for handle-with in Figure~\ref{fig:typing-rules-effect-handlers-with-equations}.

If $\category{C}$ is a sufficiently complete category, then similarly to the case of $\EffectHandlerCalculus^{+}$-models, $\mathbf{handle}$ can be represented as a morphism in $\category{C}$ rather than as a natural transformation between presheaves.

\begin{proposition}\label{prop:representing-handle-with}
	If $\category{C}$ has joint equalizers, then $H_{\mathcal{T}} : \category{C}^{\op} \to \Set$ is representable as $H_{\mathcal{T}} \cong \Yoneda Q$ for some object $Q \in \category{C}$, and $\mathbf{handle}_{\mathcal{T}, \mathcal{T}', X}$ can be represented as a morphism of the form
	$\mathbf{handle}'_{\mathcal{T}, \mathcal{T}', X} : Q \times T_{\mathcal{T}} K^{T_{\mathcal{T}'}} X \to K^{T_{\mathcal{T}'}} X$.
	\qed
\end{proposition}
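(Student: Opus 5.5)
The plan is to build $Q$ directly as a joint equaliser inside $\HBundle{\Sigma}{X}$, then transport $\mathbf{handle}$ across the resulting isomorphism with the Yoneda lemma.

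\emph{Constructing $Q$.} For each equational axiom $\hat{\Delta} \vdash \hat{M} \sim \hat{N} : \hat{A} ! \Sigma$ in $\mathcal{E}$, Definition~\ref{def:algebra-satisfies-equational-axioms} gives a parallel pair
\[ \AlgInt{\hat{M}}^T_X,\ \AlgInt{\hat{N}}^T_X : \HBundle{\Sigma}{X} \to \KleisliExp{T}{\interpret{\hat{\Delta}} \times (\KleisliExp{T}{\interpret{\hat{A}}}{X})}{X}. \]
Let $e : Q \to \HBundle{\Sigma}{X}$ be the joint equaliser of this family of pairs, indexed by $\mathcal{E}$. This is exactly where the hypothesis that $\category{C}$ has joint equalizers is used; since $\mathcal{E}$ may be infinite, the family is not assumed finite.

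\emph{Representability.} The universal property of the joint equaliser says that for every $Z$, postcomposition with $e$ gives a bijection
\[ \category{C}(Z, Q) \cong \{ f : Z \to \HBundle{\Sigma}{X} \mid \AlgInt{\hat{M}}^T_X \comp f = \AlgInt{\hat{N}}^T_X \comp f \text{ for all axioms in } \mathcal{E} \}. \]
The right-hand side is precisely $H_{\mathcal{T}}(Z)$. The bijection is natural in $Z$, because the satisfaction condition is stable under precomposition and postcomposition with $e$ commutes with precomposition. Hence $e \comp ({-}) : \Yoneda Q \to H_{\mathcal{T}}$ is a natural isomorphism, and $H_{\mathcal{T}} \cong \Yoneda Q$.

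\emph{Representing $\mathbf{handle}$.} The Yoneda embedding preserves products, so $\Yoneda Q \times \Yoneda T_{\mathcal{T}} K X \cong \Yoneda(Q \times T_{\mathcal{T}} K X)$. Composing with the isomorphism above, $\mathbf{handle}_{\mathcal{T}, \mathcal{T}', X}$ becomes a natural transformation $\Yoneda(Q \times T_{\mathcal{T}} K X) \to \Yoneda K X$. By the Yoneda lemma, this corresponds to a unique morphism
\[ \mathbf{handle}' \coloneqq \mathbf{handle}_{\mathcal{T}, \mathcal{T}', X}(e \comp \pi_1, \pi_2) : Q \times T_{\mathcal{T}} K X \to K X. \]
Conversely, naturality gives $\mathbf{handle}(h, f) = \mathbf{handle}' \comp \tupling{\bar h}{f}$, where $\bar h : Z \to Q$ is the factorisation of $h$ through $e$. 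These two assignments are mutually inverse.

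If the statement is also meant to carry the laws \eqref{eq:effect-theory-handle-unit}--\eqref{eq:effect-theory-handle-operation} over to $\mathbf{handle}'$, I would evaluate each law at the generic element $e \comp \pi_1$ on the appropriate product. For instance, the operation law becomes $\mathbf{handle}' \comp (\identity{} \times \MixedAlgOp{\mathcal{T}'}{X}{\interpret{\mathtt{op}}}) = \eval \comp ((\pi_{\mathtt{op}} \comp e) \times \identity{})$.

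\emph{Expected obstacle.} The only step needing care is the multiplication law. There the handler appears as $h \comp \pi_1$ inside $T\,\mathbf{handle}$, so I need $H_{\mathcal{T}}$ to be closed under precomposition. That closure is immediate, since the satisfaction condition is a family of equations stable under precomposition. Everything else is formal Yoneda bookkeeping.
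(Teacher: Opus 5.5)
Your proposal is correct and follows the paper's proof. You take $Q$ to be the joint equaliser $e : Q \to \HBundle{\Sigma}{X}$ of the pairs $\AlgInt{\hat{M}}^T_X, \AlgInt{\hat{N}}^T_X$ over $\mathcal{E}$, and use product preservation by $\Yoneda$ plus the Yoneda lemma to obtain $\mathbf{handle}'$ with $\mathbf{handle}' \comp \tupling{h'}{f} = \mathbf{handle}(e \comp h', f)$, which is your $\mathbf{handle}(e \comp \pi_1, \pi_2)$; the transported laws are a correct extra the paper does not spell out.
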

\begin{appendixproof}[Proof of Proposition~\ref{prop:representing-handle-with}]
	By Definition~\ref{def:algebra-satisfies-equational-axioms}, if a joint equalizer $e : Q \to \HBundle{\Sigma}{X}$ of the following morphisms exists, then it represents $H_{\mathcal{E}}$ as $\Yoneda Q$.
	\begin{equation}
		\begin{tikzcd}
			\HBundle{\Sigma}{X} \arrow[r, shift left=1.5, "\AlgInt{\hat{M}}"] \arrow[r, shift right=1.5, "\AlgInt{\hat{N}}"'] & \KleisliExp{T_{(\Sigma', \mathcal{E}')}}{(\interpret{\hat{\Delta}} \times (\KleisliExp{T_{(\Sigma', \mathcal{E}')}}{\interpret{\hat{A}}}{X}))}{X}
		\end{tikzcd}
		\qquad \text{for each } \hat{\Delta} \vdash \hat{M} \sim \hat{N} : \hat{A} ! \Sigma\ \in\ \mathcal{E}
	\end{equation}
	Since $\Yoneda$ preserves limits, $\mathbf{handle}_{\mathcal{T}, \mathcal{T}', X}$ is equivalent to a natural transformation of the form $\Yoneda (Q \times T_{\mathcal{T}} K^{T_{\mathcal{T}'}} X) \to \Yoneda K^{T_{\mathcal{T}'}} X$.
	By the Yoneda lemma, this natural transformation corresponds to a morphism $\mathbf{handle}'_{\mathcal{T}, \mathcal{T}', X} : Q \times T_{\mathcal{T}} K^{T_{\mathcal{T}'}} X \to K^{T_{\mathcal{T}'}} X$.
	Concretely, $\mathbf{handle}'_{\mathcal{T}, \mathcal{T}', X}$ is such that $\mathbf{handle}'_{\mathcal{T}, \mathcal{T}', X} \comp \tupling{h'}{f} = \mathbf{handle}_{\mathcal{T}, \mathcal{T}', X}(e \comp h', f)$.
\end{appendixproof}

However, there is no guarantee that such limits exist in general, especially when considering term models used to prove completeness.
For this reason, we work with presheaves in the general case.

Recall that typing rules and equations for $\EffectHandlerCalculusWithEquations$ are defined by simultaneous induction.
As a consequence, the interpretation of terms cannot be defined simply by induction on type derivations.
Instead, we first define a \emph{partial} interpretation of well-typed terms and later show that the interpretation is in fact defined for all well-typed terms.
\begin{definition}\label{def:interpretation-effect-handler-calculus-with-equations}
	We partially define the interpretation of well-typed terms in $\EffectHandlerCalculusWithEquations$ by induction on type derivations.
	\[ \Gamma \vdash M : A ! \Sigma / \mathcal{E} \quad\mapsto\quad \interpret{M} : \interpret{\Gamma} \to T_{(\Sigma, \mathcal{E})} \interpret{A} \]
	Most cases are similar to $\EffectHandlerCalculus^{+}$, but we define the interpretation of handle-with as follows.
	\begin{align}
		&\interpret{\handlewithto{M}{H}{x}{N}} \ \coloneqq \\
		&\begin{cases}
			\mathbf{handle}_{\mathcal{T}, \mathcal{T}', X}(\interpret{H}, T_{\mathcal{T}} \interpret{N} \comp \strength \comp \tupling{\identity{}}{\interpret{M}}) & \interpret{H} \in H_{\mathcal{T}}(\interpret{\Gamma}) \\
			\text{undefined} & \text{otherwise}
		\end{cases}
	\end{align}
	where $\mathcal{T} = (\Sigma, \mathcal{E})$ is such that $\Gamma \vdash M : A ! \Sigma / \mathcal{E}$, and $\mathcal{T}' = (\Sigma', \mathcal{E}')$ is such that $\Gamma, x : A \vdash N : B ! \Sigma' / \mathcal{E}'$.
\end{definition}

\begin{toappendix}
Similarly to Lemma~\ref{lem:weakening-interpretation}, Definition~\ref{def:interpretation-effect-handler-calculus-with-equations} satisfies the following weakening property.
\begin{lemma}[weakening]
	Let $\Gamma, \Delta \vdash V : A$ be a well-typed term.
	If $\interpret{\Gamma, \Delta \vdash V : A}$ is defined, then the interpretation of $\Gamma, x : B, \Delta \vdash V : A$ is defined and given by
	\[ \interpret{\Gamma, x : B, \Delta \vdash V : A} = \interpret{\Gamma, \Delta \vdash V : A} \comp \mathrm{proj}_{\Gamma; x : B; \Delta} \]
	where $\mathrm{proj}_{\Gamma; x : B; \Delta} : \interpret{\Gamma, x : B, \Delta} \to \interpret{\Gamma, \Delta}$ is the canonical projection.
	We also have similar equations for computations and handlers.
\end{lemma}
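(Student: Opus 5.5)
We prove the statement by simultaneous induction on the typing derivations of value terms, computation terms and handlers, following the proof of Lemma~\ref{lem:weakening-interpretation}. The claim for each term is an implication: if the interpretation in context $\Gamma, \Delta$ is defined, then the interpretation in context $\Gamma, x : B, \Delta$ is defined and equals the old one precomposed with $\mathrm{proj}_{\Gamma; x : B; \Delta}$. For variables, we argue directly on $\Delta$: if $\Delta$ is empty, the interpretation is $\interpret{\Gamma \vdash y : A} \comp \pi_1$ by definition. If $\Delta = \Delta', y' : B'$, we use the defining equation $\mathrm{proj}_{\Gamma; x:B; \Delta', y' : B'} = \mathrm{proj}_{\Gamma; x:B; \Delta'} \times \identity{}$ and the clauses for variables.

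All constructors other than handle-with are interpreted exactly as in $\EffectHandlerCalculus^{+}$, as total operations on the interpretations of their immediate subterms. Definedness therefore propagates directly from the induction hypotheses. The equations follow from naturality of the operations involved:
\begin{itemize}
	\item tupling satisfies $\tupling{f}{g} \comp p = \tupling{f \comp p}{g \comp p}$;
	\item currying satisfies $\Lambda(f) \comp p = \Lambda(f \comp (p \times \identity{}))$;
	\item the strong Kleisli extension satisfies $f^{\sharp} \comp (p \times \identity{}) = (f \comp (p \times \identity{}))^{\sharp}$;
	\item Kleisli coproduct copairing satisfies the analogous rule, using distributivity.
\end{itemize}
For binders such as $\lambda y. M$, $\mathtt{let}$, $\mathtt{case}$ and operation clauses, we apply the induction hypothesis with $\Delta$ extended by the bound variables. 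We then use $\mathrm{proj}_{\Gamma; x:B; \Delta, y:A} = \mathrm{proj}_{\Gamma; x:B; \Delta} \times \identity{}$, together with the associator for operation clauses in handlers.

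The only new case, and the main point, is handle-with. Suppose $\interpret{\handlewithto{M}{H}{y}{N}}$ is defined in context $\Gamma, \Delta$. Then $\interpret{M}$, $\interpret{H}$ and $\interpret{N}$ are defined and $\interpret{H} \in H_{\mathcal{T}}(\interpret{\Gamma, \Delta})$. By the induction hypothesis, the weakened handler is interpreted as $\interpret{H} \comp p$ with $p = \mathrm{proj}_{\Gamma; x:B; \Delta}$. Since $H_{\mathcal{T}}$ is a subpresheaf of $\Yoneda \HBundle{\Sigma}{X}$, it is closed under precomposition. Concretely, $\AlgInt{\hat{M}} \comp f = \AlgInt{\hat{N}} \comp f$ implies the same equation for $f \comp p$. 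Hence $\interpret{H} \comp p \in H_{\mathcal{T}}(\interpret{\Gamma, x : B, \Delta})$, so the weakened term's interpretation is defined.

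For the equation, naturality of $\mathbf{handle}_{\mathcal{T}, \mathcal{T}', X}$ gives
\[ \mathbf{handle}(\interpret{H}, g) \comp p = \mathbf{handle}(\interpret{H} \comp p, g \comp p), \]
where $g = T_{\mathcal{T}} \interpret{N} \comp \strength \comp \tupling{\identity{}}{\interpret{M}}$. It then remains to check
\[ g \comp p = T_{\mathcal{T}} \interpret{N'} \comp \strength \comp \tupling{\identity{}}{\interpret{M'}}, \]
where $M'$ and $N'$ are the weakened subterms. This follows from the induction hypotheses $\interpret{M'} = \interpret{M} \comp p$ and $\interpret{N'} = \interpret{N} \comp (p \times \identity{})$, together with naturality of $\strength$ in its first argument.

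The main obstacle we anticipate is bookkeeping rather than mathematics. The induction is on typing derivations, and those derivations themselves depend on equation judgements through the $\mathbf{respects}$ premise. We must be sure the weakened derivation exists syntactically; this is given, since the lemma assumes well-typedness in $\Gamma, x : B, \Delta$, or it follows from syntactic weakening as in Lemma~\ref{lem:weakening}. We also need that the partial interpretation is computed along the corresponding derivation. We handle this by treating the induction as being on the term's structure with its typing, noting that the interpretation does not depend on the equation derivations.
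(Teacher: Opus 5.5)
Your proposal is correct and matches the paper's proof: induction on typing derivations, with the remaining cases handled as in the earlier weakening lemma, and the handle-with case settled by closure of $H_{\mathcal{T}}$ under precomposition together with naturality of $\mathbf{handle}$ and of the strength. Your remark that the weakened derivation (including its $\mathbf{respects}$ premise) must exist syntactically is a point the paper leaves implicit.
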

\begin{proof}
	By induction on the type derivation.
	Most cases are similar to Lemma~\ref{lem:weakening-interpretation}.
	For the case of handle-with, we have the following by the naturality of $\mathbf{handle}$ and the induction hypothesis.
	\begin{align}
		&\interpret{\Gamma, x : B, \Delta \vdash \handlewithto{M}{H}{y}{N} : C} \\
		&= \mathbf{handle}(\interpret{\Gamma, x : B, \Delta \vdash H : \Sigma \Rightarrow C}, T_{\mathcal{T}} \interpret{\Gamma, x : B, \Delta, y : A \vdash N : C} \comp \strength \comp \tupling{\identity{}}{\interpret{\Gamma, x : B, \Delta \vdash M : A ! \Sigma}}) \\
		&= \mathbf{handle}(\interpret{H} \comp \mathrm{proj}_{\Gamma; x : B; \Delta}, T_{\mathcal{T}} (\interpret{N} \comp \mathrm{proj}_{\Gamma; x : B; \Delta, y : A}) \comp \strength \comp \tupling{\identity{}}{\interpret{M} \comp \mathrm{proj}_{\Gamma; x : B; \Delta}}) \\
		&= \mathbf{handle}(\interpret{H} \comp \mathrm{proj}_{\Gamma; x : B; \Delta}, T_{\mathcal{T}} \interpret{N} \comp \strength \comp \tupling{\identity{}}{\interpret{M}} \comp \mathrm{proj}_{\Gamma; x : B; \Delta}) \\
		&= \mathbf{handle}(\interpret{H}, T_{\mathcal{T}} \interpret{N} \comp \strength \comp \tupling{\identity{}}{\interpret{M}}) \comp \mathrm{proj}_{\Gamma; x : B; \Delta} \\
		&= \interpret{\Gamma, \Delta \vdash \handlewithto{M}{H}{y}{N} : C} \comp \mathrm{proj}_{\Gamma; x : B; \Delta}
	\end{align}
	Here, we used the fact that if $\interpret{H} \in H_{\mathcal{T}}(\interpret{\Gamma, \Delta})$, then $\interpret{H} \comp \mathrm{proj}_{\Gamma; x : B; \Delta} \in H_{\mathcal{T}}(\interpret{\Gamma, x : B, \Delta})$.
\end{proof}

We also have the following substitution property similar to Lemma~\ref{lem:subst-interpretation}.
\begin{lemma}[substitution]
	For $\sigma : \Gamma \to \Delta$ and $\Delta \vdash W : B$, if $\interpret{\sigma} : \interpret{\Gamma} \to \interpret{\Delta}$ and $\interpret{W} : \interpret{\Delta} \to \interpret{B}$ are defined, then the interpretation of $\Gamma \vdash W[\sigma] : B$ is defined and given by
	\[ \interpret{W[\sigma]} = \interpret{W} \comp \interpret{\sigma}. \]
	We also have similar equations for computation terms and handlers.
\end{lemma}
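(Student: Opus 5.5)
The proof goes by simultaneous induction on the typing derivation of $W$ (value terms, computation terms and handlers together), in the same way as Lemma~\ref{lem:subst-interpretation}, with definedness carried along. Here $\interpret{\sigma} : \interpret{\Gamma} \to \interpret{\Delta}$ is the tupling $\tupling{\interpret{\sigma(x_1)}, \dots}{\interpret{\sigma(x_n)}}$ over the variables of $\Delta$.

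\textbf{Routine cases.} For a variable $x \in \Delta$ the claim is just the definition of $\interpret{\sigma}$ followed by a projection. For tuples, projections, application, $\mathtt{return}$, $\mathtt{op}(V)$, let and case, the interpretation is a fixed composite built from the interpretations of the immediate subterms. The induction hypothesis gives both definedness and the equation for each subterm, and the claim follows by precomposing with $\interpret{\sigma}$, using naturality of the strength, products and Kleisli coproducts.

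\textbf{Binders.} For $\lambda y. M$, let, case, and handler clauses $\mathtt{op}(x,k) \mapsto M_{\mathtt{op}}$, we extend $\sigma$ to $\sigma[y \mapsto y] : \Gamma, y : A \to \Delta, y : A$. By the weakening lemma just proved, its interpretation is $\interpret{\sigma} \times \identity{}$, and it is defined whenever $\interpret{\sigma}$ is. Applying the induction hypothesis to the body then gives $\interpret{M[\sigma[y\mapsto y]]} = \interpret{M} \comp (\interpret{\sigma} \times \identity{})$. The remaining step is the standard naturality of currying, $\Lambda(f) \comp g = \Lambda(f \comp (g \times \identity{}))$; for handler clauses we additionally need to commute past the associator.

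\textbf{Handle-with (the only new case).} Suppose $\interpret{\handlewithto{M}{H}{x}{N}}$ is defined. Then $\interpret{H}$, $\interpret{M}$ and $\interpret{N}$ are all defined, and $\interpret{H} \in H_{\mathcal{T}}(\interpret{\Delta})$. By the induction hypothesis, $\interpret{H[\sigma]} = \interpret{H} \comp \interpret{\sigma}$.

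The key observation is that $H_{\mathcal{T}}$ is a subpresheaf of $\Yoneda \HBundle{\Sigma}{X}$: if $f$ satisfies $\AlgInt{\hat{M}} \comp f = \AlgInt{\hat{N}} \comp f$, then so does $f \comp g$. Hence $\interpret{H} \comp \interpret{\sigma} \in H_{\mathcal{T}}(\interpret{\Gamma})$, and the interpretation of the substituted term is defined.

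For the equation, the induction hypothesis together with strength naturality gives
\[ T_{\mathcal{T}}\interpret{N[\sigma[x\mapsto x]]} \comp \strength \comp \tupling{\identity{}}{\interpret{M[\sigma]}} = T_{\mathcal{T}}\interpret{N} \comp \strength \comp \tupling{\identity{}}{\interpret{M}} \comp \interpret{\sigma}. \]
Naturality of $\mathbf{handle}$ then gives $\mathbf{handle}(h \comp g, f \comp g) = \mathbf{handle}(h, f) \comp g$, which is exactly the required equation.

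\textbf{Main obstacle.} The delicate point is that the typing derivation of the substituted term uses a $\mathbf{respects}$ judgement for $H[\sigma]$, which depends on equational derivations. This is harmless here: the partial interpretation of Definition~\ref{def:interpretation-effect-handler-calculus-with-equations} checks membership in $H_{\mathcal{T}}$ semantically rather than through that judgement, so the subpresheaf argument above suffices, and no mutual induction with soundness is needed for this lemma.
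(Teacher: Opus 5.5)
Your proposal is correct and matches the paper's proof: induction on the typing derivation as in the earlier substitution lemma for the calculus without equations, with the handle-with case settled by naturality of $\mathbf{handle}$. For definedness you add closure of $H_{\mathcal{T}}$ under precomposition, the same fact the paper invokes explicitly in its weakening lemma. One small caveat on your last paragraph: the semantic check only secures definedness of the interpretation, whereas derivability of $\Gamma \vdash W[\sigma] : B$ itself (that $H[\sigma]$ still satisfies the \textbf{respects} premise) is a separate, purely syntactic fact about closure of equational derivations under substitution, which the statement presupposes.
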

\begin{proof}
	By induction on the type derivation.
	Most cases are similar to Lemma~\ref{lem:subst-interpretation}.
	For the case of handle-with, we use the naturality of $\mathbf{handle}$.
\end{proof}
\end{toappendix}

To show that $\Gamma \vdash H : \Sigma \Rightarrow C\ \mathbf{respects}\ \mathcal{E}$ is interpreted as $\interpret{H} \in H_{\mathcal{T}}(\interpret{\Gamma})$, we need the following lemma.
\begin{lemma}\label{lem:interpretation-of-handle-with}
	Let $\hat{\Delta} \vdash \hat{M} : \hat{A} ! \Sigma$.
	If $\interpret{H} : \interpret{\Gamma} \to \HBundle{\interpret{\Sigma}}{\interpret{C}}$ is defined, then $\interpret{\hat{M}[H, k]} : \interpret{\Gamma, \hat{\Delta}, k : \hat{A} \to C} \to \interpret{C}$ is defined and equal to $\Lambda^{-1}(\AlgInt{\hat{M}} \comp \interpret{H}) \comp s$ where $s$ is the canonical isomorphism $s : \interpret{\Gamma, \hat{\Delta}, k : \hat{A} \to C} \cong \interpret{\Gamma} \times \interpret{\hat{\Delta}, k : \hat{A} \to C}$.
	Specifically, we have $\interpret{\hat{M}[H, k]} = \interpret{\hat{N}[H, k]}$ if and only if $\AlgInt{\hat{M}} \comp \interpret{H} = \AlgInt{\hat{N}} \comp \interpret{H}$.
	\qed
\end{lemma}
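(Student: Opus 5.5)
The plan is to prove the first claim by structural induction on the effect theory term $\hat{M}$, following the three clauses that define $\hat{M}[H, k]$ and $\AlgInt{\hat{M}}$. The target equation is
\[ \interpret{\hat{M}[H, k]} \;=\; \Lambda^{-1}(\AlgInt{\hat{M}} \comp \interpret{H}) \comp s . \]
Definedness is carried along the induction. The term $\hat{M}[H, k]$ contains no handle-with beyond those already occurring inside the clauses of $H$. These appear only through substitution into $M_{\mathtt{op}}$, so the substitution lemma for the partial interpretation (stated just before this lemma) guarantees that every interpretation we form is defined whenever $\interpret{H}$ is. The auxiliary fact used throughout is the naturality of $\Lambda$, namely $\Lambda^{-1}(\Lambda f \comp g) = f \comp (g \times \identity{})$, together with the substitution lemma $\interpret{W[\sigma]} = \interpret{W} \comp \interpret{\sigma}$.

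The three cases proceed as follows.

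In the \emph{return} case, $\hat{M}[H, k] = k\ \hat{V}$. Its interpretation is $\eval \comp \tupling{\interpret{k}}{\interpret{\hat{V}}}$. Unfolding $\Lambda^{-1}\AlgInt{\return{\hat{V}}} = \eval \comp \braiding \comp (\interpret{\hat{V}} \times \identity{}) \comp \pi_2$ and precomposing with $\interpret{H} \times \identity{}$ and $s$ gives the same morphism; $\interpret{H}$ is discarded by $\pi_2$.

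In the \emph{operation} case, $\hat{M}[H, k] = M_{\mathtt{op}}[\hat{V}/x, \lambda x. \hat{N}[H, k]/k']$. By the substitution lemma, its interpretation is
\[ \interpret{M_{\mathtt{op}}} \comp \tupling{\tupling{\pi_\Gamma}{\interpret{\hat{V}}}}{\Lambda \interpret{\hat{N}[H, k]}} . \]
By the definition of $\interpret{H}$, $\interpret{M_{\mathtt{op}}} \comp \associator^{-1} = \eval \comp ((\pi_{\mathtt{op}} \comp \interpret{H}) \times \identity{})$. The induction hypothesis rewrites $\Lambda\interpret{\hat{N}[H, k]}$ as $\Lambda(\Lambda^{-1}(\AlgInt{\hat{N}} \comp \interpret{H}) \comp s')$, where $s'$ is the corresponding isomorphism for the extended context. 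The remaining task is to match $s'$ against $(\interpret{H} \times \identity{}) \comp w$ in the clause for $\AlgInt{\letin{x}{\mathtt{op}(\hat{V})}{\hat{N}}}$. This is pure bookkeeping of canonical isomorphisms, and I expect it to be the main obstacle, though only notationally. I will handle it by checking that both sides project the same way onto each factor $\interpret{\Gamma}$, $\interpret{\hat{\Delta}}$, $\interpret{\hat{B}_{\mathtt{op}}}$, $\interpret{k}$.

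In the \emph{case} clause, both sides are copairings $[\ldots]$ precomposed with $\tupling{\identity{}}{\interpret{\hat{V}} \comp \cdots}$. Here we use the distributivity of the Kleisli coproducts so that precomposition with $(\interpret{H} \times \identity{})$ and $s$ commutes with the copairing; this is the same step used for the case clause in the proof of Lemma~\ref{lem:two-interpretations-of-effect-theory-terms}. The claim then follows branchwise from the induction hypothesis.

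For the \emph{specifically} part, $s$ is an isomorphism and $\Lambda$ is a bijection. Hence $\interpret{\hat{M}[H, k]} = \interpret{\hat{N}[H, k]}$ iff $\Lambda^{-1}(\AlgInt{\hat{M}} \comp \interpret{H}) = \Lambda^{-1}(\AlgInt{\hat{N}} \comp \interpret{H})$, iff $\AlgInt{\hat{M}} \comp \interpret{H} = \AlgInt{\hat{N}} \comp \interpret{H}$.
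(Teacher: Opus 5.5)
Your proposal is correct and follows the paper's proof essentially step for step. Both argue by induction on $\hat{M}$ with the same three cases. In the operation case, both use the substitution lemma together with $\pi_{\mathtt{op}} \comp \interpret{H} = \Lambda(\interpret{M_{\mathtt{op}}} \comp \associator^{-1})$. In the case clause, both pull $s$ and $\interpret{H} \times \identity{}$ through the copairing via the uniqueness property of the distributive Kleisli coproducts, which the paper uses only implicitly. The final equivalence follows in both from the invertibility of $s$ and $\Lambda$; your explicit treatment of definedness through the substitution lemma for the partial interpretation is a minor addition the paper leaves unstated.
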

\begin{appendixproof}[Proof of Lemma~\ref{lem:interpretation-of-handle-with}]
	We prove the former statement by induction on $\hat{M}$.
	The latter statement immediately follows from the former.
	\begin{itemize}
		\item If $\hat{M} = \return{\hat{V}}$:
		\begin{align}
			\interpret{\hat{M}[H, k]} &= \interpret{k\ \hat{V}} \\
			&= \eval \comp \tupling{\pi_2 \comp \pi_2 \comp s}{\interpret{\hat{V}} \comp \pi_1 \comp \pi_2 \comp s} \\
			&= \eval \comp \tupling{\pi_2}{\interpret{\hat{V}} \comp \pi_1} \comp \pi_2 \comp s \\
			&= \eval \comp \braiding \comp (\interpret{\hat{V}} \times \identity{}) \comp \pi_2 \comp s \\
			\Lambda^{-1}(\AlgInt{\hat{M}} \comp \interpret{H}) &= \Lambda^{-1}\AlgInt{\hat{M}} \comp (\interpret{H} \times \identity{}) \\
			&= \eval \comp \braiding \comp (\interpret{\hat{V}} \times \identity{}) \comp \pi_2 \comp (\interpret{H} \times \identity{}) \\
			&= \eval \comp \braiding \comp (\interpret{\hat{V}} \times \identity{}) \comp \pi_2
		\end{align}
		\item If $\hat{M} = \letin{x}{\mathtt{op}(\hat{V})}{\hat{N}}$:
		\begin{align}
			\interpret{\hat{M}[H, k]} &= \interpret{M_{\mathtt{op}}[\hat{V}/x, \lambda x. \hat{N}[H, k]/k']} \\
			&= \interpret{M_{\mathtt{op}}} \comp \tupling{\tupling{\pi_1 \comp s}{\interpret{\hat{V}} \comp \pi_1 \comp \pi_2 \comp s}}{\Lambda (\interpret{\hat{N}[H, k]} \comp \associator^{-1} \comp (\identity{} \times \braiding) \comp \associator)} \\
			&= \interpret{M_{\mathtt{op}}} \comp \tupling{\tupling{\pi_1}{\interpret{\hat{V}} \comp \pi_1 \comp \pi_2}}{\Lambda (\Lambda^{-1} (\AlgInt{\hat{N}}^T_X \comp \interpret{H}) \comp w)} \comp s \\
			\Lambda^{-1}(\AlgInt{\hat{M}} \comp \interpret{H}) &= \eval \comp \tupling{\pi_{\mathtt{op}} \comp \pi_1}{\tupling{\interpret{\hat{V}} \comp \pi_1 \comp \pi_2}{\Lambda (\Lambda^{-1} \AlgInt{\hat{N}}^T_X \comp w)}} \comp (\interpret{H} \times \identity{}) \\
			&= \eval \comp \tupling{\pi_{\mathtt{op}} \comp \interpret{H} \comp \pi_1}{\tupling{\interpret{\hat{V}} \comp \pi_1 \comp \pi_2}{\Lambda (\Lambda^{-1} (\AlgInt{\hat{N}}^T_X \comp \interpret{H}) \comp w)}} \\
			&= \eval \comp \tupling{\Lambda (\interpret{M_{\mathtt{op}}} \comp \associator^{-1}) \comp \pi_1}{\tupling{\interpret{\hat{V}} \comp \pi_1 \comp \pi_2}{\Lambda (\Lambda^{-1} (\AlgInt{\hat{N}}^T_X \comp \interpret{H}) \comp w)}} \\
			&= \interpret{M_{\mathtt{op}}} \comp \associator^{-1} \comp \tupling{\pi_1}{\tupling{\interpret{\hat{V}} \comp \pi_1 \comp \pi_2}{\Lambda (\Lambda^{-1} (\AlgInt{\hat{N}}^T_X \comp \interpret{H}) \comp w)}} \\
			&= \interpret{M_{\mathtt{op}}} \comp \tupling{\tupling{\pi_1}{\interpret{\hat{V}} \comp \pi_1 \comp \pi_2}}{\Lambda (\Lambda^{-1} (\AlgInt{\hat{N}}^T_X \comp \interpret{H}) \comp w)}
		\end{align}
		\item If $\hat{M} = \caseof{\hat{V}}{\casepattern{\iota_i\ x_i}{\hat{M}_i}}_{i = 1}^n$:
		\begin{align}
			\interpret{\hat{M}[H, k]} &= \interpret{\caseof{\hat{V}}{\casepattern{\iota_i\ x_i}{\hat{M}_i[H, k]}}_{i = 1}^n} \\
			&= [\interpret{\hat{M}_1[H, k]} \comp \associator^{-1} \comp (\identity{} \times \braiding) \comp \associator, \dots, \interpret{\hat{M}_n[H, k]} \comp \associator^{-1} \comp (\identity{} \times \braiding) \comp \associator] \\
			&\qquad \comp \tupling{\identity{}}{\interpret{\hat{V}} \comp \pi_1 \comp \pi_2 \comp s} \\
			&= [\Lambda^{-1}(\AlgInt{\hat{M}_1} \comp \interpret{H}) \comp w \comp (s \times \identity{}), \dots, \Lambda^{-1}(\AlgInt{\hat{M}_n} \comp \interpret{H}) \comp w \comp (s \times \identity{})] \\
			&\qquad \comp \tupling{\identity{}}{\interpret{\hat{V}} \comp \pi_1 \comp \pi_2 \comp s} \\
			&= [\Lambda^{-1}(\AlgInt{\hat{M}_1} \comp \interpret{H}) \comp w, \dots, \Lambda^{-1}(\AlgInt{\hat{M}_n} \comp \interpret{H}) \comp w] \\
			&\qquad \comp (s \times \identity{}) \comp \tupling{\identity{}}{\interpret{\hat{V}} \comp \pi_1 \comp \pi_2 \comp s} \\
			&= [\Lambda^{-1}(\AlgInt{\hat{M}_1} \comp \interpret{H}) \comp w, \dots, \Lambda^{-1}(\AlgInt{\hat{M}_n} \comp \interpret{H}) \comp w] \comp \tupling{\identity{}}{\interpret{\hat{V}} \comp \pi_1 \comp \pi_2} \comp s \\
			&= [\Lambda^{-1}\AlgInt{\hat{M}_1} \comp w, \dots, \Lambda^{-1}\AlgInt{\hat{M}_n} \comp w] \comp \tupling{\identity{}}{\interpret{\hat{V}} \comp \pi_1 \comp \pi_2} \comp (\interpret{H} \times \identity{}) \comp s \\
			&= \Lambda^{-1}\AlgInt{\hat{M}} \comp (\interpret{H} \times \identity{}) \comp s \\
			&= \Lambda^{-1}(\AlgInt{\hat{M}} \comp \interpret{H}) \comp s
			\qedhere
		\end{align}
	\end{itemize}
\end{appendixproof}

\begin{toappendix}
\begin{lemma}\label{lem:subst-interpretation-effect-theories}
	Suppose that we have $\Delta \vdash \hat{M} : \hat{A} ! \Sigma$ and $\sigma : \Gamma \to \hat{\Delta}$.
	If $\interpret{\sigma}$ is defined, then $\interpret{\hat{M}[\sigma]} = \interpret{\hat{M}} \comp \interpret{\sigma}$.
\end{lemma}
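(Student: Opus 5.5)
The plan is to argue by induction on the structure of the effect theory term $\hat{M}$, following the three clauses of Definition~\ref{def:operation-satisfies-equational-axioms}. On the left, $\interpret{\hat{M}}$ is the interpretation of $\hat{M}$ as a $\EffectHandlerCalculusWithEquations$ computation term (via Definition~\ref{def:interpretation-effect-handler-calculus-with-equations}). On the right, $\interpret{\hat{M}}$ is the ground-level interpretation $\interpret{\hat{\Delta}} \to T\interpret{\hat{A}}$ from Definition~\ref{def:operation-satisfies-equational-axioms}.

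The first step is to note that effect theory terms contain no handle-with. Hence $\interpret{\hat{M}}$ is always defined in the partial interpretation: only handle-with can be undefined. The same holds for $\hat{M}[\sigma]$, whose only new subterms are the value terms $\sigma(x)$, whose interpretations are defined by the assumption that $\interpret{\sigma}$ is defined. So the statement reduces to an equation between defined morphisms. I would also observe, by a one-line induction, that the interpretation of a ground value term $\hat{V}$ in the calculus coincides with its ground interpretation. Consequently the substitution lemma for value terms (the analogue of Lemma~\ref{lem:subst-interpretation} established above for $\EffectHandlerCalculusWithEquations$) gives $\interpret{\hat{V}[\sigma]} = \interpret{\hat{V}} \comp \interpret{\sigma}$.

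The inductive cases then go as follows.
\begin{itemize}
\item For $\return{\hat{V}}$, we have $\interpret{(\return{\hat{V}})[\sigma]} = \eta \comp \interpret{\hat{V}[\sigma]} = \eta \comp \interpret{\hat{V}} \comp \interpret{\sigma}$.
\item For $\letin{x}{\mathtt{op}(\hat{V})}{\hat{N}}$, the substituted term is $\letin{x}{\mathtt{op}(\hat{V}[\sigma])}{\hat{N}[\sigma']}$, where $\sigma' = \sigma[x \mapsto x] : \Gamma, x : \hat{B} \to \hat{\Delta}, x : \hat{B}$ and $\interpret{\sigma'} = \interpret{\sigma} \times \identity{}$ by weakening. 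The inductive hypothesis gives $\interpret{\hat{N}[\sigma']} = \interpret{\hat{N}} \comp (\interpret{\sigma} \times \identity{})$. It remains to push $\interpret{\sigma}$ through $\mu \comp T(-) \comp \strength \comp \tupling{\identity{}}{-}$, which is just naturality of $\strength$ in its first argument.
\item For case analysis, the inductive hypothesis applies to each branch with $\sigma[x_i \mapsto x_i]$. It remains to check that precomposing the copairing $[\interpret{\hat{M}_1}, \dots, \interpret{\hat{M}_n}]$ by $\interpret{\sigma} \times \identity{}$ gives $[\interpret{\hat{M}_i} \comp (\interpret{\sigma} \times \identity{})]_i$.
\end{itemize}

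The main obstacle I expect is the case-analysis step. The copairing here is the one from joint distributive Kleisli coproducts rather than genuine coproducts in $\category{C}$, so its compatibility with $\interpret{\sigma} \times \identity{}$ must be derived from the uniqueness part of the Kleisli coproduct universal property, applied to $\interpret{\Gamma} \times (\hat{A}_1 + \dots + \hat{A}_n)$. Concretely, both sides agree after precomposition with $\identity{} \times \iota_i$, and the maps $\identity{} \times \iota_i$ form a Kleisli coproduct diagram by distributivity, so the two sides are equal. A secondary bookkeeping concern is ensuring that the calculus-level and ground-level interpretations of case analysis are literally the same construction. Once that is fixed, the remaining cases are routine.
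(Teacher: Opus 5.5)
Your proposal is correct and takes essentially the same route as the paper: induction on $\hat{M}$ over its three term formers. It uses $\interpret{\hat{V}[\sigma]} = \interpret{\hat{V}} \comp \interpret{\sigma}$, naturality of the strength for the let-case, and compatibility of the copairing with $\interpret{\sigma} \times \identity{}$ for case analysis. Your justification of that last step via uniqueness for the distributive Kleisli coproduct, and your remark on definedness, make explicit details the paper leaves implicit.
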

\begin{proof}
	It is straightforward to show that $\interpret{\hat{V}[\sigma]} = \interpret{\hat{V}} \comp \interpret{\sigma}$ holds for any $\hat{V}$.
	Then, we prove the statement by induction on $\hat{M}$.
	\begin{itemize}
		\item If $\hat{M} = \return{\hat{V}}$:
		\begin{align}
			\interpret{\hat{M}[\sigma]} &= \interpret{\return{\hat{V}[\sigma]}} \\
			&= \eta \comp \interpret{\hat{V}[\sigma]} \\
			&= \eta \comp \interpret{\hat{V}} \comp \interpret{\sigma} \\
			&= \interpret{\hat{M}} \comp \interpret{\sigma}
		\end{align}
		\item If $\hat{M} = \letin{x}{\mathtt{op}(\hat{V})}{\hat{N}}$:
		\begin{align}
			\interpret{\hat{M}[\sigma]} &= \interpret{\letin{x}{\mathtt{op}(\hat{V}[\sigma])}{\hat{N}[\sigma]}} \\
			&= \mu \comp T (\interpret{\hat{N}} \comp (\interpret{\sigma} \times \identity{})) \comp \strength \comp \tupling{\identity{}}{\interpret{\mathtt{op}} \comp \interpret{\hat{V}} \comp \interpret{\sigma}} \\
			&= \mu \comp T \interpret{\hat{N}} \comp \strength \comp \tupling{\identity{}}{\interpret{\mathtt{op}} \comp \interpret{\hat{V}}} \comp \interpret{\sigma} \\
			&= \interpret{\hat{M}} \comp \interpret{\sigma}
		\end{align}
		\item If $\hat{M} = \caseof{\hat{V}}{\casepattern{\iota_i\ x_i}{\hat{M}_i}}_{i = 1}^n$:
		\begin{align}
			\interpret{\hat{M}[\sigma]} &= \interpret{\caseof{\hat{V}[\sigma]}{\casepattern{\iota_i\ x_i}{\hat{M}_i[\sigma]}}_{i = 1}^n} \\
			&= [\interpret{\hat{M}_1[\sigma]}, \dots, \interpret{\hat{M}_n[\sigma]}] \comp \tupling{\identity{}}{\interpret{\hat{V}[\sigma]}} \\
			&= [\interpret{\hat{M}_1} \comp (\interpret{\sigma} \times \identity{}), \dots, \interpret{\hat{M}_n} \comp (\interpret{\sigma} \times \identity{})] \comp \tupling{\identity{}}{\interpret{\hat{V}} \comp \interpret{\sigma}} \\
			&= [\interpret{\hat{M}_1}, \dots, \interpret{\hat{M}_n}] \comp \tupling{\identity{}}{\interpret{\hat{V}}} \comp \interpret{\sigma} \\
			&= \interpret{\hat{M}} \comp \interpret{\sigma}
			\qedhere
		\end{align}
	\end{itemize}
\end{proof}
\end{toappendix}

After establishing necessary lemmas using the partial interpretation, we can show that the interpretation is in fact defined for all well-typed terms.
This is proved by simultaneous induction with the soundness theorem for $\EffectHandlerCalculusWithEquations$.

\begin{proposition}\label{prop:interpretation-with-effect-theories}
	If $\Gamma \vdash M : A ! \Sigma / \mathcal{E}$, then $\interpret{M} : \interpret{\Gamma} \to T_{(\Sigma, \mathcal{E})} \interpret{A}$ is defined.
	Similarly for value terms and handlers.
\end{proposition}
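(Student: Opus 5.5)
The plan is to prove the statement simultaneously with soundness, by a single induction on derivations. The typing judgements, the judgement $\Gamma \vdash H : \Sigma \Rightarrow C\ \mathbf{respects}\ \mathcal{E}$, and the equational judgements are generated together, so the induction runs over all three kinds of derivation at once. The combined claim is:
\begin{enumerate}
\item[(a)] for every typing derivation of a value term, computation term or handler, the partial interpretation of Definition~\ref{def:interpretation-effect-handler-calculus-with-equations} is defined;
\item[(b)] for every derivation of $\Gamma \vdash H : \Sigma \Rightarrow C\ \mathbf{respects}\ \mathcal{E}$, we have $\interpret{H} \in H_{(\Sigma,\mathcal{E})}(\interpret{\Gamma})$;
\item[(c)] for every derived equation $\Gamma \vdash M = N : C$ (and likewise for values and handlers), both sides have defined interpretations and these are equal.
\end{enumerate}

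For (a), every typing rule except handle-with defines $\interpret{-}$ by a total operation applied to the interpretations of the premises. Definedness therefore follows directly from the induction hypothesis. For handle-with, the premises give $\interpret{M}$ and $\interpret{N}$ defined by (a) and $\interpret{H} \in H_{\mathcal{T}}(\interpret{\Gamma})$ by (b). This is exactly the side condition under which Definition~\ref{def:interpretation-effect-handler-calculus-with-equations} applies $\mathbf{handle}_{\mathcal{T},\mathcal{T}',X}$, so the term is interpreted.

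For (b), the premises of the respects rule are a typing derivation of $H$ and, for each axiom $\hat{\Delta} \vdash \hat{M} \sim \hat{N} : \hat{A} ! \Sigma \in \mathcal{E}$, an equation derivation $\Gamma, \hat{\Delta}, k : \hat{A} \to C \vdash \hat{M}[H,k] = \hat{N}[H,k] : C$. By (c) we get $\interpret{\hat{M}[H,k]} = \interpret{\hat{N}[H,k]}$. Lemma~\ref{lem:interpretation-of-handle-with} then converts this into $\AlgInt{\hat{M}} \comp \interpret{H} = \AlgInt{\hat{N}} \comp \interpret{H}$. That is precisely the statement that $\interpret{H}$ satisfies $\mathcal{E}$ in the sense of Definition~\ref{def:algebra-satisfies-equational-axioms}, i.e.\ $\interpret{H} \in H_{\mathcal{T}}(\interpret{\Gamma})$.

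For (c), we redo the soundness argument of Appendix~\ref{sec:proof-of-soundness}, now using the weakening and substitution lemmas for the partial interpretation. The handler equations \eqref{eq:handle-return}, \eqref{eq:handle-let} and \eqref{eq:handle-op} are checked by the same calculations as in Lemmas~\ref{lem:handle-return-sound}, \ref{lem:handle-let-sound} and \ref{lem:handle-op-sound}. Each step there is rewritten via naturality of $\mathbf{handle}$ together with the pointwise equations \eqref{eq:effect-theory-handle-unit}, \eqref{eq:effect-theory-handle-multiplication} and \eqref{eq:effect-theory-handle-operation}. The one extra check is that every generalized element fed to $\mathbf{handle}$ still lies in $H_{\mathcal{T}}$. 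The elements used have the form $\interpret{H}\comp g$, and $H_{\mathcal{T}}$ is a subpresheaf, so this holds.

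The new rule \hyperlink{rule:Inst-Ax}{Inst-Ax} is handled by Lemma~\ref{lem:subst-interpretation-effect-theories}. It gives $\interpret{\hat{M}[\sigma]} = \interpret{\hat{M}}\comp\interpret{\sigma}$, and similarly for $\hat{N}$. Since the model's operations satisfy $\mathcal{E}$ (Definition~\ref{def:operation-satisfies-equational-axioms}), $\interpret{\hat{M}} = \interpret{\hat{N}}$, and the two sides agree.

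The main obstacle is making this mutual induction well-founded and covering the congruence rule for handle-with. The induction measure is the height of the derivation in the jointly generated system, so any premise, whether a typing or an equation derivation, is strictly smaller than the conclusion. For the congruence rule, the two sides may carry different respects derivations. However, $\interpret{H}=\interpret{H'}$ by the induction hypothesis, and membership in $H_{\mathcal{T}}$ depends only on the morphism. Hence both applications of $\mathbf{handle}$ are defined and receive equal arguments.
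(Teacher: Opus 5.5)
Your proposal is correct and follows the paper's own proof. The paper likewise argues by induction on derivations simultaneously with Theorem~\ref{thm:soundness-with-effect-theories}, treating handle-with as the only non-trivial case: the induction hypothesis (soundness for the equation premises of the $\mathbf{respects}$ judgement) together with Lemma~\ref{lem:interpretation-of-handle-with} gives $\interpret{H} \in H_{\mathcal{T}}(\interpret{\Gamma})$, and your clauses (b) and (c), including the subpresheaf closure for $\interpret{H} \comp g$ and the congruence case, just make explicit the bookkeeping the paper leaves implicit.
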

\begin{appendixproof}[Proof of Proposition~\ref{prop:interpretation-with-effect-theories}]
	By induction on type derivations simultaneously with Theorem~\ref{thm:soundness-with-effect-theories}.
	Most cases are the same as in $\EffectHandlerCalculus^{+}$.
	The only non-trivial case is the case of handle-with.
	By Lemma~\ref{lem:interpretation-of-handle-with} and IH, if $\Gamma \vdash H : \Sigma \Rightarrow C\ \mathbf{respects}\ \mathcal{E}$, then we have $\interpret{H} \in H_{\mathcal{T}}(\interpret{\Gamma})$.
	Thus, $\interpret{\handlewithto{M}{H}{x}{N}}$ is defined.
\end{appendixproof}

\begin{theorem}\label{thm:soundness-with-effect-theories}
	If $\Gamma \vdash M = N : A ! \Sigma / \mathcal{E}$, then $\interpret{M}$ and $\interpret{N}$ are defined, and $\interpret{M} = \interpret{N}$ holds.
\end{theorem}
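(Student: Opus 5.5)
The proof proceeds by induction on the derivation of $\Gamma \vdash M = N : A ! \Sigma / \mathcal{E}$ (and the analogous judgements for values and handlers), simultaneously with Proposition~\ref{prop:interpretation-with-effect-theories}. The induction is on the combined derivation tree of typing and equality judgements, since the two are defined by mutual induction. For every equation judgement we show two things: both sides are well-typed with strictly smaller derivations, so Proposition~\ref{prop:interpretation-with-effect-theories} (applied to these subderivations) makes their interpretations defined; and the interpretations coincide. Conversely, when Proposition~\ref{prop:interpretation-with-effect-theories} treats a handle-with whose $\mathbf{respects}$ premise contains equations $\hat{M}[H,k] = \hat{N}[H,k]$, it invokes the soundness part of the induction hypothesis on these smaller equation derivations. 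Lemma~\ref{lem:interpretation-of-handle-with} then turns this into $\AlgInt{\hat{M}} \comp \interpret{H} = \AlgInt{\hat{N}} \comp \interpret{H}$, i.e.\ $\interpret{H} \in H_{\mathcal{T}}(\interpret{\Gamma})$.

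Most cases are handled as follows.
\begin{itemize}
\item Reflexivity, symmetry, transitivity and congruence are immediate once definedness is known. For the congruence rule of handle-with, note that if $H = H'$ then $\interpret{H} = \interpret{H'}$, so membership in $H_{\mathcal{T}}$ transfers from one side to the other.
\item The $\beta$/$\eta$ laws, monad laws and sum laws are checked exactly as for $\EffectHandlerCalculus^{+}$. We use the partial-interpretation versions of the weakening and substitution lemmas stated above, whose handle-with cases rely on naturality of $\mathbf{handle}$.
\item The three handler equations are checked as in Lemmas~\ref{lem:handle-return-sound}, \ref{lem:handle-let-sound} and~\ref{lem:handle-op-sound}. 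We replace the diagrammatic axioms \eqref{eq:handle-unit}, \eqref{eq:handle-multiplication}, \eqref{eq:handle-operation} by their presheaf counterparts \eqref{eq:effect-theory-handle-unit}, \eqref{eq:effect-theory-handle-multiplication}, \eqref{eq:effect-theory-handle-operation}, each instantiated at $h = \interpret{H} \in H_{\mathcal{T}}(\interpret{\Gamma})$, which is available because the left-hand side is well-typed.
\end{itemize}

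The Handle-Let case needs extra care. The right-hand side nests a handle-with inside a $T_{\mathcal{T}}$-image, and we need $\interpret{H}\comp\pi_1 \in H_{\mathcal{T}}$ at the extended context. This holds because $H_{\mathcal{T}}$ is a subpresheaf, so it is closed under precomposition. With that, the computation of Lemma~\ref{lem:handle-let-sound} goes through verbatim, with naturality replacing the implicit composition with $\langle\pi_1,\strength\rangle$. Handle-Op similarly uses \eqref{eq:effect-theory-handle-operation} with $f = \tupling{\interpret{V}}{\Lambda\interpret{M}}$, together with Lemma~\ref{lem:handle-op-aux}.

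The genuinely new case, and the one I expect to be the main obstacle, is \hyperlink{rule:Inst-Ax}{Inst-Ax}. Here we must show $\interpret{\hat{M}[\sigma]} = \interpret{\hat{N}[\sigma]}$. First, Lemma~\ref{lem:subst-interpretation-effect-theories} reduces this to $\interpret{\hat{M}} \comp \interpret{\sigma} = \interpret{\hat{N}}\comp\interpret{\sigma}$. It then suffices to show that the interpretation of the effect-theory term $\hat{M}$, viewed as a computation term of $\EffectHandlerCalculusWithEquations$, agrees with the interpretation of Definition~\ref{def:operation-satisfies-equational-axioms} in $T_{(\Sigma,\mathcal{E})}$. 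This agreement should follow by a short induction on $\hat{M}$: let of an operation is $\mu\comp T\interpret{\hat{M}}\comp\strength\comp\tupling{\identity{}}{\interpret{\mathtt{op}}\comp\interpret{\hat V}}$ in both, and case uses the Kleisli coproducts in both. The required equality $\interpret{\hat{M}} = \interpret{\hat{N}}$ is then exactly the model assumption that $\interpret{\mathtt{op}}_{(\Sigma,\mathcal{E})}$ satisfies the equational axioms. The delicate point is bookkeeping: we must ensure the ground-type interpretations and the context ordering in $\interpret{\hat{\Delta}}$ match.
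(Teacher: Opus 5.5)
Your proposal is correct and follows the paper's proof. It uses the same simultaneous induction with Proposition~\ref{prop:interpretation-with-effect-theories}, the same treatment of the three handler equations via the presheaf axioms at $h = \interpret{H}$ (with naturality and precomposition-closure of $H_{\mathcal{T}}$ for \textsc{Handle-Let}), and the same \textsc{Inst-Ax} argument via Lemma~\ref{lem:subst-interpretation-effect-theories} plus the requirement that $\interpret{\mathtt{op}}_{(\Sigma,\mathcal{E})}$ satisfies $\mathcal{E}$. One wording fix: the typing derivations of the two sides (e.g.\ of $M[V/x]$ in \textsc{Handle-Ret}) are not subderivations of the equation derivation, so get definedness compositionally from the premises and the partial substitution lemma rather than by applying the proposition to them.
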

\begin{proofsketch}
	Proposition~\ref{prop:interpretation-with-effect-theories} and Theorem~\ref{thm:soundness-with-effect-theories} are proved by simultaneous induction.
\end{proofsketch}
\begin{appendixproof}[Proof of Theorem~\ref{thm:soundness-with-effect-theories}]
	By simultaneous induction on the derivation of $\Gamma \vdash M = N : A ! \Sigma / \mathcal{E}$ with Proposition~\ref{prop:interpretation-with-effect-theories}.
	Most cases are the same as in $\EffectHandlerCalculus^{+}$.
	The non-trivial cases are the case for handle-with equations (Figure~\ref{fig:effect-handler-equations}) and the case for instantiating equational axioms.
	The latter follows from Lemma~\ref{lem:subst-interpretation-effect-theories}.
	For the former, we can show the equations by using the equations for $\mathbf{handle}$ in the definition of $\EffectHandlerCalculusWithEquations$-model.
	\begin{itemize}
		\item \eqref{eq:handle-return}:
		\begin{align}
			&\interpret{\handlewithto{\return{V}}{H}{x}{M}} \\
			&= \mathbf{handle}(\interpret{H}, T_{\mathcal{T}} \interpret{M} \comp \strength \comp \tupling{\identity{}}{\eta \comp \interpret{V}}) \\
			&= \mathbf{handle}(\interpret{H}, \eta \comp \interpret{M} \comp \tupling{\identity{}}{\interpret{V}}) \\
			&= \interpret{M} \comp \tupling{\identity{}}{\interpret{V}} \\
			&= \interpret{M[V/x]}
		\end{align}
		\item \eqref{eq:handle-let}:
		\begin{align}
			&\interpret{\handlewithto{\letin{x}{L}{M}}{H}{y}{N}} \\
			&= \mathbf{handle}(\interpret{H}, T_{\mathcal{T}} \interpret{N} \comp \strength \comp \tupling{\identity{}}{\mu \comp T \interpret{M} \comp \strength \comp \tupling{\identity{}}{\interpret{L}}}) \\
			&= \mathbf{handle}(\interpret{H}, \mu \comp T_{\mathcal{T}}^2 \interpret{N} \comp T \strength \comp \strength \comp \tupling{\identity{}}{T \interpret{M} \comp \strength \comp \tupling{\identity{}}{\interpret{L}}}) \\
			&= \mathbf{handle} (\interpret{H}, T \mathbf{handle} (\interpret{H} \comp \pi_1, \pi_2) \comp \strength \comp \tupling{\identity{}}{T_{\mathcal{T}}^2 \interpret{N} \comp T \strength \comp \strength \comp \tupling{\identity{}}{T \interpret{M} \comp \strength \comp \tupling{\identity{}}{\interpret{L}}}}) \\
			&= \mathbf{handle} (\interpret{H}, T \mathbf{handle} (\interpret{H} \comp \pi_1, T_{\mathcal{T}} \interpret{N} \comp \strength \comp \pi_2) \comp \strength \comp \tupling{\identity{}}{\strength \comp \tupling{\identity{}}{T \interpret{M} \comp \strength \comp \tupling{\identity{}}{\interpret{L}}}}) \\
			&= \mathbf{handle} (\interpret{H}, T \mathbf{handle} (\interpret{H} \comp \pi_1, T_{\mathcal{T}} \interpret{N} \comp \strength \comp (\identity{} \times \interpret{M}) \comp \pi_2) \comp \strength \comp \tupling{\identity{}}{\strength \comp \tupling{\identity{}}{\strength \comp \tupling{\identity{}}{\interpret{L}}}})
		\end{align}
		Here, we use the following equations.
		\begin{align}
			&\strength \comp \tupling{\identity{}}{\strength \comp \tupling{\identity{}}{\strength \comp \tupling{\identity{}}{\interpret{L}}}} \\
			&= \strength \comp (\identity{} \times \strength) \comp \associator \comp \tupling{\tupling{\identity{}}{\identity{}}}{\strength \comp \tupling{\identity{}}{\interpret{L}}} \\
			&= T \associator \comp \strength \comp \tupling{\tupling{\identity{}}{\identity{}}}{\strength \comp \tupling{\identity{}}{\interpret{L}}} \\
			&= T \tupling{\pi_1}{\identity{}} \comp \strength \comp \tupling{\identity{}}{\strength \comp \tupling{\identity{}}{\interpret{L}}} \\
			&= T \tupling{\pi_1}{\identity{}} \comp T \tupling{\pi_1}{\identity{}} \comp \strength \comp \tupling{\identity{}}{\interpret{L}}
		\end{align}
		Continuing the calculation:
		\begin{align}
			&\interpret{\handlewithto{\letin{x}{L}{M}}{H}{y}{N}} \\
			&= \mathbf{handle} (\interpret{H}, T \mathbf{handle} (\interpret{H} \comp \pi_1, T_{\mathcal{T}} \interpret{N} \comp \strength \comp (\identity{} \times \interpret{M}) \comp \pi_2) \comp T \tupling{\pi_1}{\identity{}} \comp T \tupling{\pi_1}{\identity{}} \comp \strength \comp \tupling{\identity{}}{\interpret{L}}) \\
			&= \mathbf{handle} (\interpret{H}, T \mathbf{handle} (\interpret{H} \comp \pi_1, T_{\mathcal{T}} \interpret{N} \comp \strength \comp (\identity{} \times \interpret{M}) \comp \tupling{\pi_1}{\identity{}}) \comp \strength \comp \tupling{\identity{}}{\interpret{L}}) \\
			&= \mathbf{handle} (\interpret{H}, T \mathbf{handle} (\interpret{H} \comp \pi_1, T_{\mathcal{T}} (\interpret{N} \comp (\pi_1 \times \identity{})) \comp \strength \comp \tupling{\identity{}}{\interpret{M}})\comp \strength \comp \tupling{\identity{}}{\interpret{L}}) \\
			&= \interpret{\handlewithto{L}{H}{x}{\handlewithto{M}{H}{y}{N}}}
		\end{align}
		\item \eqref{eq:handle-op}: We use Lemma~\ref{lem:handle-op-aux}.
		\begin{align}
			&\interpret{\handlewithto{\mathtt{op}(V)}{H}{x}{M}} \\
			&= \mathbf{handle}(\interpret{H}, T_{\mathcal{T}} \interpret{M} \comp \strength \comp \tupling{\identity{}}{\interpret{\mathtt{op}} \comp \interpret{V}}) \\
			&= \mathbf{handle}(\interpret{H}, \MixedAlgOp{E'}{X}{\interpret{\mathtt{op}}} \comp \tupling{\interpret{V}}{\Lambda \interpret{M}}) \\
			&= \eval \comp \tupling{\pi_{\mathtt{op}} \comp \interpret{H}}{\tupling{\interpret{V}}{\Lambda \interpret{M}}} \\
			&= \eval \comp \tupling{\Lambda (\interpret{M_{\mathtt{op}}} \comp \associator^{-1})}{\tupling{\interpret{V}}{\Lambda \interpret{M}}} \\
			&= \interpret{M_{\mathtt{op}}} \comp \associator^{-1} \comp \tupling{\identity{}}{\tupling{\interpret{V}}{\Lambda \interpret{M}}} \\
			&= \interpret{M_{\mathtt{op}}[V/x, \lambda x. M/k]}
			\qedhere
		\end{align}
	\end{itemize}
\end{appendixproof}

By constructing the term model of $\EffectHandlerCalculusWithEquations$, we can also show the completeness theorem.
\begin{theorem}[completeness]\label{thm:completeness-with-effect-theories}
	If $\interpret{M} = \interpret{N}$, then $\Gamma \vdash M = N : A ! E / \mathcal{E}$.
	\qed
\end{theorem}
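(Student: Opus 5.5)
The proof follows the completeness argument for $\EffectHandlerCalculus$ (Theorem~\ref{thm:completeness}): we build a term model of $\EffectHandlerCalculusWithEquations$ and show that every term is interpreted there as itself.

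\textbf{Term model.} The base category has value types of $\EffectHandlerCalculusWithEquations$ as objects. Its morphisms are value terms $x : A \vdash V : B$ modulo the equational theory, which now includes \hyperlink{rule:Inst-Ax}{Inst-Ax}.
\begin{itemize}
\item For each effect theory $\mathcal{T} = (\Sigma, \mathcal{E})$, we set $T_{\mathcal{T}} A \coloneqq \UnitType \to A ! \Sigma / \mathcal{E}$, with the strong Kleisli triple of Definition~\ref{def:term-model-monad}.
\item Kleisli exponentials are $A \to B ! \Sigma/\mathcal{E}$.
\item Joint distributive Kleisli coproducts are sum types, justified by the $\beta$/$\eta$-laws for case analysis.
\item Operations are interpreted as $\lambda y. \mathtt{op}(x)$, as before.
\end{itemize}
The verifications of Lemmas~\ref{lem:term-model-category}--\ref{lem:term-model-kleisli-exponential} carry over verbatim. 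The substitution lemmas for equations now also need closure of \hyperlink{rule:Inst-Ax}{Inst-Ax} under substitution, which holds because composing $\sigma$ with a further substitution is again a substitution into $\hat{\Delta}$.

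\textbf{Operations satisfy $\mathcal{E}$.} We must check that $\interpret{\mathtt{op}}$ satisfies the axioms in the sense of Definition~\ref{def:operation-satisfies-equational-axioms}. By induction on $\hat{M}$, the interpretation $\interpret{\hat{M}}$ in the term model is (up to the isomorphism of Lemma~\ref{lem:term-model-kleisli-category}) the term $\hat{M}[s_{\hat{\Delta}}]$. For the \texttt{let} case we use the monad laws, and for \texttt{case} the sum laws. Then \hyperlink{rule:Inst-Ax}{Inst-Ax} with $\sigma = s_{\hat{\Delta}}$ gives $\interpret{\hat{M}} = \interpret{\hat{N}}$.

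\textbf{The main obstacle: defining $\mathbf{handle}$.} We define $\mathbf{handle}_{\mathcal{T}, \mathcal{T}', A}$ as a natural transformation. Given $h \in H_{\mathcal{T}}(Z)$ and $f : Z \to T_{\mathcal{T}} T_{\mathcal{T}'} A$, we set
\[ \mathbf{handle}(h, f) \coloneqq z : Z \vdash \lambda u. \handlewithto{f\ \langle\rangle}{H_h}{r}{r\ \langle\rangle}, \]
where $H_h = \{ \mathtt{op}(y,k) \mapsto \pi_{\mathtt{op}}\ h\ \langle y, k\rangle \}$. This term is well typed only if $z : Z \vdash H_h \ \mathbf{respects}\ \mathcal{E}$ is derivable, so we must show that $h \in H_{\mathcal{T}}(Z)$ implies derivability of $\hat{M}[H_h,k] = \hat{N}[H_h,k]$ for each axiom. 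We first prove a syntactic analogue of Lemma~\ref{lem:interpretation-of-handle-with}: $\Lambda^{-1}(\AlgInt{\hat{M}} \comp h)$, computed in the term model, is provably equal to $\hat{M}[H_h, k]$ up to the context isomorphism $s$. This goes by induction on $\hat{M}$, unfolding $\eval$ and $\Lambda$ via Lemma~\ref{lem:term-model-kleisli-exponential} and using the $\beta$/$\eta$-laws for functions. Since equality of morphisms in the term model means provable equality, the hypothesis $\AlgInt{\hat{M}} \comp h = \AlgInt{\hat{N}} \comp h$ transports to the required judgement.

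\textbf{Remaining checks and conclusion.} Naturality of $\mathbf{handle}$ is immediate from the substitution lemma. The equations \eqref{eq:effect-theory-handle-unit}--\eqref{eq:effect-theory-handle-operation} follow by the same calculations as in Lemma~\ref{lem:term-model-effect-handling-monad}. As in Proposition~\ref{prop:term-model-interpretation}, every well-typed term is interpreted as itself up to $s_\Gamma$. For handle-with we note that the handler is well typed, hence respects $\mathcal{E}$, hence its interpretation lies in $H_{\mathcal{T}}$ by soundness (Theorem~\ref{thm:soundness-with-effect-theories}), so the partial interpretation is defined. Completeness then follows exactly as in the proof of Theorem~\ref{thm:completeness}, by substituting back with $V_\Gamma$.
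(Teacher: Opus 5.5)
Your proposal is correct and follows the paper's proof essentially step for step. It uses the same term model with $T_{(\Sigma,\mathcal{E})}A = \UnitType \to A ! \Sigma/\mathcal{E}$ and the same definition of $\mathbf{handle}(h,f)$ as the thunked handle-with of $f\ \langle\rangle$ by the handler induced from $h$. Its key observation is also the paper's: in the term model, $\AlgInt{\hat{M}}$ composed with $h$ is provably $\lambda\langle\tilde{x},k\rangle.\,\hat{M}[H_h,k]$, so membership in $H_{\mathcal{T}}$ yields the $\mathbf{respects}$ judgement. You also use \textsc{Inst-Ax} in the same way to show the operations satisfy $\mathcal{E}$, and you are slightly more explicit than the paper about closure of \textsc{Inst-Ax} under substitution and about why the partial interpretation is defined in the term model.
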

\begin{appendixproof}[Proof of Theorem~\ref{thm:completeness-with-effect-theories}]
	We prove the completeness by constructing the term model of $\EffectHandlerCalculusWithEquations$.
	We define the category $\TermModel{\EffectHandlerCalculusWithEquations}$ whose objects are value types and morphisms are value terms modulo the equational theory as in the case of $\EffectHandlerCalculus$ and $\EffectHandlerCalculus^{+}$.
	We also define the strong monad $T_{(\Sigma, \mathcal{E})}$ by $T_{(\Sigma, \mathcal{E})} A \coloneqq \UnitType \to A ! \Sigma / \mathcal{E}$.
	Then, this gives a model of $\EffectHandlerCalculusWithEquations$.
	Most parts of the proof are similar to the term model of $\EffectHandlerCalculus^{+}$.
	The non-trivial parts are (1) to show that we can define $\mathbf{handle}$ satisfying the required equations and (2) to show that the interpretation of operations satisfies equational axioms.
	For the former, given $x : A \vdash H : \prod_{\mathtt{op} : A_{\mathtt{op}} \rightarrowtriangle B_{\mathtt{op}} \in \Sigma} (A_{\mathtt{op}} \times (B_{\mathtt{op}} \to B ! \Sigma' / \mathcal{E'}) \to B ! \Sigma' / \mathcal{E'})$ such that $H \in H_{(\Sigma, \mathcal{E})}(A)$ and $x : A \vdash V : T_{(\Sigma, \mathcal{E})} T_{(\Sigma', \mathcal{E}')} B$, we define $x : A \vdash \mathbf{handle}(H, V) : T_{(\Sigma', \mathcal{E}')} B$ as follows.
	\[ \mathbf{handle}(H, V) \quad\coloneqq\quad \lambda z. \handlewithto{V\ \langle\rangle}{H}{y}{y\ \langle\rangle} \]
	Here, we identify $H$ with the effect handler $\{ \mathtt{op}(x, k) \mapsto \pi_{\mathtt{op}}\ H\ \langle x, k \rangle \mid \mathtt{op} \in \Sigma \}$ induced by $H$.
	By definition of $H \in H_{(\Sigma, \mathcal{E})}(A)$, we have $\AlgInt{\hat{M}} \comp H = \AlgInt{\hat{N}} \comp H$ for each equational axiom $\hat{\Delta} \vdash \hat{M} \sim \hat{N} : \hat{A} ! \Sigma$ in $\mathcal{E}$.
	In $\TermModel{\EffectHandlerCalculusWithEquations}$, $h : \HBundle{\Sigma}{C} \vdash \AlgInt{\hat{M}} : (\prod \hat{\Delta} \times (\hat{A} \to C)) \to C$ is given as follows.
	\begin{align}
		\AlgInt{\return{\hat{V}}} &= \lambda \langle \tilde{x}, k \rangle. k\ \hat{V} \\
		\AlgInt{\letin{y}{\mathtt{op}(\hat{V})}{\hat{M}}} &= \lambda \langle \tilde{x}, k \rangle. \pi_{\mathtt{op}}\ h\ \langle \hat{V}, \lambda y. \AlgInt{\hat{M}}\ \langle \tilde{x}, y, k \rangle \rangle \\
		\AlgInt{\caseof{\hat{V}}{\casepattern{\iota_i\ x_i}{\hat{M}_i}}_{i = 1}^m} &= \lambda \langle \tilde{x}, k \rangle. \caseof{\hat{V}}{\casepattern{\iota_i\ x_i}{\AlgInt{\hat{M}_i}\ \langle \tilde{x}, x_i, k \rangle}}_{i = 1}^m
	\end{align}
	Here, we let $\hat{\Delta} = x_1 : \hat{B}_1, \dots, x_n : \hat{B}_n$ and $\tilde{x} = \langle x_1, \dots, x_n \rangle$.
	This implies that $\AlgInt{\hat{M}} \comp H$ is given by $\lambda \langle \tilde{x}, k \rangle. \hat{M}[H, k]$, and hence the corresponding effect handler respects $\mathcal{E}$.
	It is straightforward to show that the interpretation $\interpret{\handlewithto{M}{H}{x}{N}}$ in $\TermModel{\EffectHandlerCalculusWithEquations}$ is the equivalence class of the term $\lambda z : \UnitType. \handlewithto{M}{H}{x}{N}$.
	\begin{align}
		&\interpret{\handlewithto{M}{H}{x}{N}} \\
		&= \mathbf{handle}(\interpret{H}, T_{(\Sigma, \mathcal{E})} \interpret{N} \comp \strength \comp \tupling{\identity{}}{\interpret{M}}) \\
		&= \lambda z. \handlewithto{\letin{x}{M}{\return{\lambda z. N}}}{H}{y}{y\ \langle\rangle} \\
		&= \lambda z. \handlewithto{M}{H}{y}{N}
	\end{align}
	Now, we show that $\mathbf{handle}$ satisfies the required equations.
	\begin{itemize}
		\item Naturality:
			\begin{align}
				&\mathbf{handle}(H, V) \comp W \\
				&= (\lambda z. \handlewithto{V\ \langle\rangle}{H}{y}{y\ \langle\rangle})[W/x] \\
				&= \lambda z. \handlewithto{V[W/x]\ \langle\rangle}{H[W/x]}{y}{y\ \langle\rangle} \\
				&= \mathbf{handle}(H \comp W, V \comp W)
			\end{align}
		\item Unit~\eqref{eq:effect-theory-handle-unit}:
		\begin{align}
			&\mathbf{handle}_{\mathcal{T}, \mathcal{T}', X} (H, \eta \comp V) \\
			&= \lambda z. \handlewithto{\return{V}}{H}{y}{y\ \langle\rangle} \\
			&= \lambda z. V\ \langle\rangle \\
			&= \lambda z. V\ z \\
			&= V
		\end{align}
		\item Multiplication~\eqref{eq:effect-theory-handle-multiplication}:
		\begin{align}
			&\mathbf{handle}_{\mathcal{T}, \mathcal{T}', X} (H, \mu \comp V) \\
			&= \lambda z. \handlewithto{\letin{y_1}{V\ \langle\rangle}{y_1\ \langle\rangle}}{H}{y_2}{y_2\ \langle\rangle} \\
			&= \lambda z. \handlewithto{V\ \langle\rangle}{H}{y_1}{\handlewithto{y_1\ \langle\rangle}{H}{y_2}{y_2\ \langle\rangle}} \\
			&\mathbf{handle}_{\mathcal{T}, \mathcal{T}', X} (H \comp \pi_1, \pi_2) \\
			&= \lambda z. \handlewithto{\pi_2\ x\ \langle\rangle}{H[\pi_1\ x/x]}{y}{y\ \langle\rangle} \\
			&T \mathbf{handle}_{\mathcal{T}, \mathcal{T}', X} (H \comp \pi_1, \pi_2) \comp \strength \comp \tupling{\identity{}}{V} \\
			&= \lambda z. \letin{y}{V\ \langle\rangle}{\return{(\lambda z. \mathbf{handle}_{\mathcal{T}, \mathcal{T}', X} (H \comp \pi_1, \pi_2))[\langle x, y \rangle / x]}} \\
			&= \lambda z. \letin{y}{V\ \langle\rangle}{\return{\lambda z. \handlewithto{y\ \langle\rangle}{H}{y}{y\ \langle\rangle}}} \\
			&\mathbf{handle}_{\mathcal{T}, \mathcal{T}', X} (H, T \mathbf{handle}_{\mathcal{T}, \mathcal{T}', X} (H \comp \pi_1, \pi_2) \comp \strength \comp \tupling{\identity{}}{V}) \\
			&= \lambda z. \handlewithto{(\letin{y}{V\ \langle\rangle}{\return{\lambda z. \handlewithto{y\ \langle\rangle}{H}{y}{y\ \langle\rangle}}})}{H}{y}{y\ \langle\rangle} \\
			&= \lambda z. \handlewithto{V\ \langle\rangle}{H}{y}{\handlewithto{(\return{\lambda z. \handlewithto{y\ \langle\rangle}{H}{y}{y\ \langle\rangle}})}{H}{y}{y\ \langle\rangle}} \\
			&= \lambda z. \handlewithto{V\ \langle\rangle}{H}{y}{\handlewithto{y\ \langle\rangle}{H}{y}{y\ \langle\rangle}} \\
		\end{align}
		\item Operation~\eqref{eq:effect-theory-handle-operation}:
		\begin{align}
			&\mathbf{handle}_{\mathcal{T}, \mathcal{T}', X} (H, \MixedAlgOp{\mathcal{T}'}{X}{\interpret{\mathtt{op}}^T_{\mathcal{T}}} \comp \langle V, W \rangle) \\
			&= \lambda z. \handlewithto{\letin{y}{\mathtt{op}\ V}{\return{\lambda z. W\ y}}}{H}{y}{y\ \langle\rangle} \\
			&= \lambda z. \handlewithto{\mathtt{op}\ V}{H}{y}{\handlewithto{(\return{\lambda z. W\ y})}{H}{y}{y\ \langle\rangle}} \\
			&= \lambda z. \handlewithto{\mathtt{op}\ V}{H}{y}{W\ y} \\
			&= \lambda z. M_{\mathtt{op}}[V/x, W/k] \\
			&= \eval \comp \tupling{\pi_{\mathtt{op}} \comp H}{\langle V, W \rangle}
		\end{align}
	\end{itemize}
	Lastly, we show that the interpretation $\interpret{\mathtt{op}}$ of operations satisfies equational axioms.
	For each effect term $\hat{\Delta} \vdash \hat{M} : \hat{A} ! \Sigma$, its interpretation $\interpret{\hat{M}}$ in $\TermModel{\EffectHandlerCalculusWithEquations}$ is given by the equivalence class of the computation term $\lambda z. \hat{M}$.
	Thus, for each equational axiom $\hat{\Delta} \vdash \hat{M} \sim \hat{N} : \hat{A} ! \Sigma$ in $\mathcal{E}$, we have $\interpret{\hat{M}} = \interpret{\hat{N}}$ by \hyperlink{rule:Inst-Ax}{\textsc{Inst-Ax}}.
\end{appendixproof}

\begin{example}
	We show that traditional free model monads gives a $\EffectHandlerCalculusWithEquations$-model.
	For simplicity, consider the category $\Set$.
	Assuming that all base types are interpreted as countable sets, any ground type is also interpreted as a countable set.
	Let $T_{(\Sigma, \mathcal{E})}$ be the free model monad of the effect theory $(\Sigma, \mathcal{E})$, i.e., $T_{(\Sigma, \mathcal{E})} X$ is the underlying set of a free $(\Sigma, \mathcal{E})$-algebra generated by $X \in \Set$.
	Then, this naturally gives a $\EffectHandlerCalculusWithEquations$-model.
	Here, by Proposition~\ref{prop:representing-handle-with}, $\mathbf{handle}$ is defined by a function $\mathbf{handle}'_{\mathcal{T}, \mathcal{T}', X} : Q \times T_{\mathcal{T}} K^{T_{\mathcal{T}'}} X \to K^{T_{\mathcal{T}'}} X$ where $Q$ is the set of $(\Sigma, \mathcal{E})$-algebras on $K^{T_{\mathcal{T}'}} X$.
	By the freeness of $T_{\mathcal{T}}$, we can define $\mathbf{handle}'_{\mathcal{T}, \mathcal{T}', X}(h, {-}) : T_{\mathcal{T}} K^{T_{\mathcal{T}'}} X \to K^{T_{\mathcal{T}'}} X$ as the Eilenberg-Moore algebra corresponding to $h \in Q$.
	It is straightforward to show that this $\mathbf{handle}$ satisfies the required equations.
	\qed
\end{example}

\section{Related Work}
\label{sec:related-work}
\paragraph{Categorical semantics of effect handlers}
An early categorical semantics of effect handlers is given by free model monads for algebraic theories \cite{PlotkinESOP2009,PlotkinLMCS2013,LuksicJFunctProg2020}, building on the theory of algebraic and Lawvere theories \cite{PlotkinFoSSaCS2002,PlotkinApplCategStruct2003,HylandENTCS2007}.
Subsequent work has often adopted free monads generated by algebraic operations without equational axioms as semantic models \cite{BauerLMCS2014,KiselyovJFunctProg2021}.
Since free monads $T$ of a functor $F$ is given as the least fixed point $T X = \mu Y. X + F Y$, such models are typically formulated in domain-theoretic settings.

A denotational model based on realizability is proposed in~\cite{YangPOPL2026}.
Their effect handler calculus differs from ours in that it is based on \emph{raw monads}, which are not required to satisfy the monad laws, resulting in an equational theory different from ours.
In particular, they explicitly exclude \eqref{eq:handle-op} from their equational theory.
As a result, their models and ours are incomparable.

While these works typically study equational theories and establish soundness results, no completeness results have been obtained.

\paragraph{Variants of effect handler calculi}
Several variants of effect handlers have been studied in the literature.
Here, we compare our calculus with some of them.
In this paper, we focus on \emph{deep effect handlers}, as opposed to \emph{shallow effect handlers}~\cite{HillerstromAPLAS2018}.
Deep and shallow effect handlers can simulate each other using recursion~\cite[Section~3]{HillerstromAPLAS2018}, which suggests that our categorical models could potentially be adapted to shallow effect handlers.
We leave a detailed investigation of shallow effect handlers to future work.

Our effect system is based on~\cite{BauerLMCS2014}; however, for simplicity, we omit \emph{subtyping} induced by the subset relation $\Sigma \subseteq \Sigma'$ between signatures.
We believe that subtyping can be incorporated into our framework without much difficulty by introducing monad morphisms $T_{\Sigma} \to T_{\Sigma'}$ for $\Sigma \subseteq \Sigma'$.
Some effect handler calculi supports \emph{answer type modification}~\cite{SekiyamaPOPL2023,KawamataPOPL2024}, which allows the type system to track changes of the answer type.
From a semantic perspective, this would correspond to parameterized monads~\cite[Section~3.2.1]{AtkeyJFunctProg2009}.

\emph{Higher-order effects} are known to exhibit ``non-algebraic'' behaviors~\cite{WuHaskell2014}, and calculi for such effects have been studied~\cite{BosmanLMCS2024,YangPOPL2026}.
Here, higher-order effects refer to operations whose input and output types may involve non-ground types.
In our effect handler calculus $\EffectHandlerCalculus$, signatures for operations are \emph{not} restricted to ground types, which potentially gives rise to a mutual dependence between semantics of computations and operation signatures.
We address this issue by defining the semantics using monads indexed by \emph{semantic} signatures, thereby collapsing this potentially circular dependence.
By contrast, in $\EffectHandlerCalculusWithEquations$, we restrict attention to signatures with ground types, so that effect theories do not involve higher-order effects.

\section{Conclusions and Future Work}
\label{sec:conclusion}

We have presented sound and complete categorical semantics for two variants of effect handler calculi, with and without equational axioms for operations.
We have also illustrated the expressiveness of our framework by showing that both classical free model monads and CPS semantics can be captured as models of our calculi.

As future work, we plan to investigate fibrational logical relations for effect handlers.
Specifically, we would like to extend categorical $\top\top$-lifting~\cite{KatsumataCSL2005,KatsumataInfComput2013} to our setting.
We expect that this will require solving mixed-variance fixed-point equations in order to lift the $\mathbf{handle}$ morphism from the base category to the total category of a fibration, potentially using classical domain-theoretic techniques or step-indexing.

\bibliographystyle{ACM-Reference-Format}
\bibliography{effect-handler}

\end{document}